\newcommand{\red}{\rightarrowtriangle}
\newcommand{\rtred}{\red^\ast}
\newcommand{\Suc}{\mathsf{S}}
\newcommand{\boolrec}{{\mathsf{rec}_\tbool}}
\newcommand{\unitrec}{{\mathsf{rec}_\tunit}}
\newcommand{\botrec}{{\mathsf{rec}_\tempt}}
\newcommand{\fpr}{.1}
\newcommand{\spr}{.2}
\newcommand{\natrec}{{\mathsf{rec}_\tnat}}
\newcommand{\MLTT}{\mathrm{MLTT}}
\newcommand{\MP}{\mathrm{MP}}
\newcommand{\iszero}{\mathsf{IsZero}}
\newcommand{\f}{\mathsf{f}}
\newcommand{\w}{\mathsf{w}}
\newcommand{\mw}{\mathsf{mw}}
\newcommand{\tnat}{\mathsf{N}}
\newcommand{\tempt}{{\mathsf{N}_0}}
\newcommand{\tunit}{{\mathsf{N}_1}}
\newcommand{\tbool}{{\mathsf{N}_2}}
\newcommand{\tuniv}{\mathsf{U}}
\newcommand{\lfor}{\mathrm{F}}
\newcommand{\tMP}{\mathsf{MP}}
\newcommand{\zer}{\mathsf{0}}
\newcommand{\one}{\mathsf{1}}
\newcommand{\nats}{\mathbb{N}}
\newcommand{\covt}{\vartriangleleft}
\newcommand{\dom}[1]{\textup{dom(}#1\textup{)}}
\newcommand{\forces}{\Vdash}
\newcommand{\nforces}{\nVdash}
\newcommand{\typ}{\!:\!}
\newsavebox\myboxA
\newsavebox\myboxB
\newlength\mylenA
\newcommand*\xoverline[2][0.75]{%
    \sbox{\myboxA}{$\m@th#2$}%
    \setbox\myboxB\null
    \ht\myboxB=\ht\myboxA%
    \dp\myboxB=\dp\myboxA%
    \wd\myboxB=#1\wd\myboxA
    \sbox\myboxB{$\m@th\overline{\copy\myboxB}$}
    \setlength\mylenA{\the\wd\myboxA}
    \addtolength\mylenA{-\the\wd\myboxB}%
    \ifdim\wd\myboxB<\wd\myboxA%
       \rlap{\hskip 0.5\mylenA\usebox\myboxB}{\usebox\myboxA}%
    \else
        \hskip -0.5\mylenA\rlap{\usebox\myboxA}{\hskip 0.5\mylenA\usebox\myboxB}%
    \fi}
\begin{document}

\title[The Independence of Markov's Principle in Type Theory]{The Independence of Markov's Principle in Type Theory}

\author{Thierry Coquand}	
\address{University of Gothenburg, Sweden}	
\email{thierry.coquand@cse.gu.se}  

\author{Bassel Mannaa}	
\address{IT University of Copenhagen, Denmark}	
\email{basm@itu.dk}  



\keywords{Forcing, Dependent type theory, Markov’s Principle}
\subjclass{F.4.1 Mathematical Logic}
\titlecomment{This is an extended version of the paper by the same name published in the LIPIcs proceedings of FSCD'16}


\begin{abstract}
In this paper, we show that Markov's principle is not derivable in dependent type theory
with natural numbers and one universe. One way to prove this would be to remark
that Markov's principle does not hold in a sheaf model of type theory over Cantor space, since
Markov's principle does not hold for the generic point of this model \cite{CoquandMR17}. Instead we design an extension
of type theory, which intuitively extends type theory by the addition of a generic point of Cantor space. We
then show the consistency of this extension by a normalization argument. Markov's principle
does not hold in this extension, and it follows that it cannot be proved in type theory.
\end{abstract}
\maketitle

\section*{Introduction}
Markov's principle has a special status in constructive mathematics. One way to formulate this
principle is that if it is impossible that a given algorithm does not terminate, then it does
terminate. It is equivalent to the fact (Post's theorem) that if a set of natural number and its complement are
both computably enumerable, then this set is decidable \cite[Ch4]{vandalen}. This form is often used in recursion
theory. This principle was first formulated by Markov, who called it ``Leningrad's principle'', 
and founded a branch of constructive mathematics around this principle \cite{Margenstern1995}.

This principle is also equivalent to the fact that if a given real number is {\em not} equal to $0$
then this number is {\em apart} from $0$ (that is this number is $<-r$ or $>r$ for some 
rational number $r>0$). On this form, it was explicitly {\em refuted} by Brouwer in intuitionistic
mathematics, who gave an example of a real number (well defined intuitionistically) which is not
equal to $0$, but also not apart from $0$. (The motivation of Brouwer for this example was to show
the necessity of using {\em negation} in intuitionistic mathematics \cite{brouwer_negative}.) 
The idea of Brouwer can be represented formally using topological models \cite{vanDalen19781}.

In a neutral approach to mathematics, such as Bishop's \cite{Bishop67foundationsof}, Markov's principle is simply
left undecided. We also expect to be able to prove that Markov's principle is {\em not} provable
in formal system in which we can express Bishop's mathematics. For instance, Kreisel \cite{Kreisel1959-KREIOA} introduced {\em modified realizability} to show that Markov's principle is not derivable in the formal
system $HA^{\omega}$. Similarly, one would expect that Markov's principle is {\em not} derivable
in Martin-L\"of type theory \cite{martinloftypes}, but, as far as we know, such a result has not been 
established yet. \footnote{The paper \cite{Hyland93modifiedrealizability} presents a model of the calculus of constructions using the idea of modified
realizability, and it seems possible to use also this technique to interpret the type theory
we consider and prove in this way the independence of Markov's principle.}

We say that a statement $A$ is \emph{independent} of some formal system if $A$ cannot be derived in that system. A statement in the formal system of Martin-L\"of type theory ($\MLTT$) is represented by a closed type. A statement/type $A$ is derivable if it is inhabited by some term $t$ (written $\MLTT\vdash t\typ A$). This is the so-called propositions-as-types principle. Correspondingly we say that a statement $A$ (represented as a type) is independent of $\MLTT$ if there is no term $t$ such that $\MLTT\vdash t\typ A$.

The main result of this paper is to show that Markov's principle is independent
of Martin-L\"of type theory.\footnote{Some authors define independence in the stronger sense ``A statement is independent of a formal system if neither the statement nor its negation is provable in the system'', e.g.\ \cite{kunen}. We will also establish the independence of Markov's principle in this stronger sense.}

The main idea for proving this independence is to follow Brouwer's argument. We want to extend
type theory with a ``generic'' infinite sequence of $0$ and $1$ and establish that it is both absurd that this
generic sequence is never $0$, but also that we cannot show that it {\em has to} take the value $0$.
To add such a generic sequence is exactly like adding a {\em Cohen real} \cite{cohen} in forcing
extension of set theory. A natural attempt for doing this will be to consider a {\em topological
model} of type theory (sheaf model over Cantor space), extending the work \cite{vanDalen19781} to type
theory. However, while it is well understood how to represent universes in {\em presheaf} model
\cite{liftinggrothendieck}, it has turned out to be surprisingly difficult to represent universes
in {\em sheaf} models, see \cite{sheafuniverse}, \cite{streicher_Universe_in_toposes}. Also see \cite{CoquandMR17} for a possible solution. Our approach is here instead a purely {\em syntactical} description of a forcing extension of type theory (refining previous work of \cite{coquand2010note}),
which contains a formal symbol for the generic sequence and a proof that it is absurd that this generic
sequence is never $0$, together with a {\em normalization} theorem, from which we can deduce that we
{\em cannot} prove that this generic sequence has to take the value $0$. Since this formal system is
an extension of type theory, the independence of Markov's principle follows.

As stated in \cite{Kopylov2001}, which describes an elegant generalization of this principle in type theory,
Markov's principle is an important technical tool for proving termination of computations, and thus
can play a crucial role if type theory is extended with general recursion as in \cite{constablepartial}.

This paper is organized as follows. We first describe the rules of the version of type theory
we are considering. This version can be seen as a simplified version of type theory as represented
in the system Agda \cite{norell:thesis}, and in particular, contrary to the work \cite{coquand2010note}, we allow
$\eta$-conversion, and we express conversion as {\em judgment}. 
Markov's principle can be formulated in a natural way in this formal system.
We describe then the forcing extension of type theory, where we add a Cohen real. For proving
normalization, we follow Tait's computability method \cite{Tait1967-TAIIIO, martinloftypes}, but we have to consider an extension of this with a computability {\em relation} in order to interpret the conversion judgment.
This can be seen as a forcing extension of the technique used in \cite{abelScherer:types10}. Using this computability argument, it is then possible to show that we cannot show that the generic sequence has to take the value $0$.
We end by a refinement of this method, giving a consistent extension of type theory where the
{\em negation} of Markov's principle is provable.

\section{Type theory and forcing extension}
\label{SEC:Forcingextensionoftypetheory}

The syntax of our type theory is given by the grammar:
\begin{align*}
 t,u,A,B  \coloneqq &\, x \mid \botrec (\lambda x. A) \mid \unitrec (\lambda x. A)\,t 
                  \mid \boolrec(\lambda x. A)\,t\,u \mid \natrec (\lambda x. A)\,t\,u  \\
                  & \mid \tuniv \mid \tnat\mid  \tempt \mid \tunit\mid \tbool \mid \zer \mid \one \mid \Suc\,t
                  \mid \Pi (x\typ A) B \mid \lambda x. t \mid t\,u \mid \Sigma (x\typ A)  B\mid (t, u) \mid t\fpr \mid t\spr 
\end{align*}

We use the notation $\xoverline{n}$ as a short hand for the term $\Suc^n\,\zer$, where $\Suc$ is the successor constructor. We will use the symbol $\coloneqq$ for definitional equality in the metatheory.

\subsection{Type system}
We describe a type theory with one universe \`{a} la Russell, natural numbers, functional extensionality and surjective pairing, hereafter referred to as $\MLTT$.\footnote{This is a type system similar to Martin-l{\"o}f's \cite{martinloftypes} except that we have $\eta$-conversion and surjective pairing.}\bigskip

 
\noindent
\textbf{Natural numbers:}
\begin{center} 	
\AxiomC{$\Gamma \vdash$ }
\UnaryInfC{$\Gamma \vdash \tnat$}
\DisplayProof
\,
\AxiomC{$\Gamma \vdash$}
\UnaryInfC{$\Gamma \vdash \zer \typ  \tnat$}
\DisplayProof
\,
\AxiomC{$\Gamma \vdash n \typ  \tnat$}
\UnaryInfC{$\Gamma \vdash \Suc\, n \typ  \tnat$}
\DisplayProof 
\,
\AxiomC{$\Gamma \vdash n = m\typ  \tnat$}
\UnaryInfC{$\Gamma \vdash \Suc\, n = \Suc\, m\typ  \tnat$}
\DisplayProof \medskip

\begin{prooftree}
\AxiomC{$\Gamma, x\typ \tnat \vdash F$}
\AxiomC{$\Gamma \vdash a_0 \typ  F[\zer]$}
\AxiomC{$\Gamma \vdash g\typ  \Pi (x\typ \tnat) (F[x]\rightarrow F[\Suc\,x])$}
\TrinaryInfC{$\Gamma \vdash \natrec(\lambda x. F)\,a_0\, g\typ  \Pi (x\typ \tnat) F$}
\end{prooftree}

\begin{prooftree}
\AxiomC{$\Gamma, x\typ \tnat \vdash F$}
\AxiomC{$\Gamma \vdash a_0 \typ  F[\zer]$}
\AxiomC{$\Gamma \vdash g \typ  \Pi (x\typ \tnat) (F[x] \rightarrow F[\Suc\,x])$}
\TrinaryInfC{$\Gamma \vdash \natrec (\lambda x. F)\,a_0\, g \,\zer = a_0 \typ  F[\zer]$}
\end{prooftree}

\begin{prooftree}
\AxiomC{$\Gamma, x\typ \tnat \vdash F$}
\AxiomC{$\Gamma \vdash a_0 \typ  F[\zer]$}
\AxiomC{$\Gamma \vdash n \typ  \tnat$}
\AxiomC{$\Gamma \vdash g \typ  \Pi (x\typ \tnat) (F[x] \rightarrow F[\Suc \,x])$}
\QuaternaryInfC{$\Gamma \vdash \natrec (\lambda x. F)\,a_0\, g \,(\Suc\, n) = g\,n \,(\natrec (\lambda x. F)\,a_0\,g\,n) \typ  F[\Suc\,n]$}
\end{prooftree}

\begin{prooftree}
\AxiomC{$\Gamma, x\typ \tnat\vdash F = G$}
\AxiomC{$\Gamma \vdash a_0=b_0 \typ  F[\zer]$}
\AxiomC{$\Gamma \vdash g=h \typ  \Pi (x\typ \tnat) (F[x] \rightarrow F[\Suc \,x])$}
\TrinaryInfC{$\Gamma \vdash \natrec (\lambda x. F)\,a_0\, g = \natrec (\lambda x. G)\,b_0\, h\typ  \Pi(x\typ \tnat) F$}
\end{prooftree}
\end{center}\medskip

\noindent
\textbf{Booleans:}

\begin{center}
\AxiomC{$\Gamma \vdash$ }
\UnaryInfC{$\Gamma \vdash \tbool$}
\DisplayProof
\,
\AxiomC{$\Gamma \vdash$}
\UnaryInfC{$\Gamma \vdash \zer \typ  \tbool$}
\DisplayProof
\,
\AxiomC{$\Gamma \vdash$}
\UnaryInfC{$\Gamma \vdash \one \typ  \tbool$}
\DisplayProof \,
\AxiomC{$\Gamma, x\typ \tbool \vdash F$}
\AxiomC{$\Gamma \vdash a_0 \typ  F[\zer]$}
\AxiomC{$\Gamma \vdash a_1 \typ  F[\one]$}
\TrinaryInfC{$\Gamma \vdash \boolrec(\lambda x. F)\,a_0\, a_1\typ  \Pi (x\typ \tbool) F$}
\DisplayProof \bigskip

\AxiomC{$\Gamma, x\typ \tbool \vdash F$}
\AxiomC{$\Gamma \vdash a_0 \typ  F[\zer]$}
\AxiomC{$\Gamma \vdash a_1 \typ  F[\one]$}
\TrinaryInfC{$\Gamma \vdash \boolrec (\lambda x. F)\,a_0\, a_1 \,\zer = a_0 \typ  F[\zer]$}
\DisplayProof
\,
\AxiomC{$\Gamma, x\typ \tbool \vdash F$}
\AxiomC{$\Gamma \vdash a_0 \typ  F[\zer]$}
\AxiomC{$\Gamma \vdash a_1 \typ  F[\one]$}
\TrinaryInfC{$\Gamma \vdash \boolrec (\lambda x. F)\,a_0\, a_1 \,\one = a_1 \typ  F[\one]$}
\DisplayProof \bigskip

\noindent
\AxiomC{$\Gamma, x\typ \tbool\vdash F = G$}
\AxiomC{$\Gamma \vdash a_0=b_0 \typ  F[\zer]$}
\AxiomC{$\Gamma \vdash a_1=b_1 \typ  F[\one]$}
\TrinaryInfC{$\Gamma \vdash \boolrec (\lambda x. F)\,a_0\, a_1 = \boolrec (\lambda x. G)\,b_0\, b_1\typ  \Pi(x\typ \tbool) F$}
\DisplayProof \bigskip
\end{center}\medskip

\noindent
\textbf{Unit Type:}

\begin{center}
\AxiomC{$\Gamma \vdash$ }
\UnaryInfC{$\Gamma \vdash \tunit$}
\DisplayProof
\,
\AxiomC{$\Gamma \vdash$}
\UnaryInfC{$\Gamma \vdash \zer \typ  \tunit$}
\DisplayProof
\,
\AxiomC{$\Gamma, x\typ \tunit \vdash F$}
\AxiomC{$\Gamma \vdash a \typ  F[\zer]$}
\BinaryInfC{$\Gamma \vdash \unitrec(\lambda x. F)\,a\typ  \Pi (x\typ \tunit) F$}
\DisplayProof \bigskip

\noindent
\AxiomC{$\Gamma, x\typ \tunit \vdash F$}
\AxiomC{$\Gamma \vdash a \typ  F[\zer]$}
\BinaryInfC{$\Gamma \vdash \unitrec (\lambda x. F)\,a \,\zer = a \typ  F[\zer]$}
\DisplayProof
\,
\AxiomC{$\Gamma, x\typ \tunit\vdash F = G$}
\AxiomC{$\Gamma \vdash a=b \typ  F[\zer]$}
\BinaryInfC{$\Gamma \vdash \unitrec (\lambda x. F)\,a = \unitrec (\lambda x. G)\,b\typ  \Pi(x\typ \tunit) F$}
\DisplayProof 
\end{center}\medskip

\noindent
\textbf{Empty type:}\smallskip

\begin{center}
\AxiomC{$\Gamma \vdash$ }
\UnaryInfC{$\Gamma \vdash \tempt$}
\DisplayProof
\,
\noindent
\AxiomC{$\Gamma, x\typ \tempt \vdash F$}
\UnaryInfC{$\Gamma \vdash \botrec(\lambda x. F)\typ  \Pi (x\typ \tempt) F$}
\DisplayProof
\,
\noindent
\AxiomC{$\Gamma, x\typ \tempt\vdash_p F = G$}
\UnaryInfC{$\Gamma \vdash  \botrec(\lambda x. F) =  \botrec(\lambda x. G)\typ  \Pi (x\typ \tempt) F$}
\DisplayProof
\end{center}
\medskip

\noindent
\textbf{Dependent functions:}\smallskip

\begin{center}
\AxiomC{$\Gamma \vdash F$ }
\AxiomC{$\Gamma, x\typ  F \vdash G$}
\BinaryInfC{$\Gamma \vdash \Pi (x\typ  F) G$}
\DisplayProof\,
\noindent
\AxiomC{$\Gamma\vdash F= H$}
\AxiomC{$\Gamma, x\typ  F \vdash G =E $}
\BinaryInfC{$\Gamma \vdash \Pi (x\typ F) G =\Pi (x\typ H) E$}
\DisplayProof\,
\AxiomC{$\Gamma, x \typ  F \vdash t \typ  G$ }
\UnaryInfC{$\Gamma \vdash \lambda x.t \typ  \Pi (x\typ F) G$}
\DisplayProof\bigskip

\noindent
\AxiomC{$\Gamma\vdash g \typ  \Pi (x\typ F) G$}
\AxiomC{$\Gamma\vdash a \typ  F$}
\BinaryInfC{$\Gamma \vdash g\,a \typ  G[a]$}
\DisplayProof\,
\AxiomC{$\Gamma \vdash g \typ  \Pi (x\typ F) G$}
\AxiomC{$\Gamma \vdash u =v \typ  F$}
\BinaryInfC{$\Gamma \vdash g\,u = g\,v \typ  G[u]$}
\DisplayProof
\,
\AxiomC{$\Gamma \vdash h =g \typ  \Pi (x\typ F) G$}
\AxiomC{$\Gamma \vdash u \typ  F$}
\BinaryInfC{$\Gamma \vdash h\,u = g\,u \typ  G[u]$}
\DisplayProof \bigskip

\AxiomC{$\Gamma, x\typ F \vdash t \typ  G$}
\AxiomC{$\Gamma \vdash a \typ  F$}
\BinaryInfC{$\Gamma \vdash (\lambda x. t) a = t[a] \typ  G[a]$}
\DisplayProof\,
\AxiomC{$\Gamma \vdash h \typ \Pi(x\typ  F)G$}
\AxiomC{$\Gamma \vdash g \typ \Pi(x\typ  F) G$}
\AxiomC{$\Gamma, x\typ F \vdash h\,x=g\,x \typ G[x]$}
\TrinaryInfC{$\Gamma \vdash h = g \typ  \Pi(x\typ  F)G$}
\DisplayProof \bigskip
\end{center}

\noindent
\textbf{Dependent pairs:}\smallskip
\begin{center}
\AxiomC{$\Gamma \vdash F$ }
\AxiomC{$\Gamma, x\typ F \vdash G $}
\BinaryInfC{$\Gamma \vdash \Sigma (x\typ F) G$}
\DisplayProof\,
\noindent
\AxiomC{$\Gamma\vdash F = H$}
\AxiomC{$\Gamma, x\typ  F \vdash G = E $}
\BinaryInfC{$\Gamma \vdash \Sigma(x\typ F)G =\Sigma(x\typ H)E$}
\DisplayProof
\,
\AxiomC{$\Gamma, x\typ F \vdash G$ }
\AxiomC{$\Gamma \vdash a \typ  F$ }
\AxiomC{$\Gamma \vdash b\typ  G[a]$}
\TrinaryInfC{$\Gamma \vdash (a, b) \typ  \Sigma (x\typ  F) G $}
\DisplayProof\bigskip

\AxiomC{$\Gamma\vdash t\typ  \Sigma (x\typ F) G$ }
\UnaryInfC{$\Gamma \vdash t\fpr \typ  F$ }
\DisplayProof
\,
\AxiomC{$\Gamma\vdash t\typ  \Sigma (x\typ F) G$ }
\UnaryInfC{$\Gamma \vdash t\spr \typ  G[t\fpr]$ }
\DisplayProof\,
\AxiomC{$\Gamma, x\typ F\vdash G$}
\AxiomC{$\Gamma\vdash t\typ F$}
\AxiomC{$\Gamma\vdash u\typ G[t]$}
\TrinaryInfC{$\Gamma \vdash (t,u)\fpr=t \typ  F$ }
\DisplayProof\bigskip

\AxiomC{$\Gamma, x\typ F\vdash G$}
\AxiomC{$\Gamma\vdash t\typ F$}
\AxiomC{$\Gamma\vdash u\typ G[t]$}
\TrinaryInfC{$\Gamma \vdash (t,u)\spr=u \typ  G[t]$ }
\DisplayProof\,
\AxiomC{$\Gamma \vdash t =u \typ  \Sigma (x\typ F) G$}
\UnaryInfC{$\Gamma \vdash t\fpr=u\fpr\typ F$}
\DisplayProof
\,
\AxiomC{$\Gamma \vdash t =u \typ  \Sigma (x\typ F) G$}
\UnaryInfC{$\Gamma \vdash t\spr=u\spr\typ G[t\fpr]$}
\DisplayProof \bigskip

\noindent
\AxiomC{$\Gamma\vdash t\typ  \Sigma(x\typ  F) G$}
\AxiomC{$\Gamma\vdash u\typ  \Sigma(x\typ  F) G$}
\AxiomC{$\Gamma\vdash t\fpr=u\fpr\typ  F$ }
\AxiomC{$\Gamma\vdash t\spr=u\spr\typ  G[t\fpr]$ }
\QuaternaryInfC{$\Gamma \vdash t=u \typ  \Sigma(x\typ F)G$ }
\DisplayProof
\medskip
\end{center}

\noindent
\textbf{Universe:}\smallskip

\begin{center}
\AxiomC{$\Gamma \vdash$ }
\UnaryInfC{$\Gamma \vdash \tuniv$}
\DisplayProof\,
\AxiomC{$\Gamma \vdash F \typ  \tuniv$}
\UnaryInfC{$\Gamma \vdash F$}
\DisplayProof\,
\AxiomC{$\Gamma \vdash F = G\typ  \tuniv$}
\UnaryInfC{$\Gamma \vdash F=G$}
\DisplayProof\,
\noindent
\AxiomC{$\Gamma \vdash$ }
\UnaryInfC{$\Gamma \vdash \tempt \typ \tuniv$}
\DisplayProof
\,
\AxiomC{$\Gamma \vdash$ }
\UnaryInfC{$\Gamma \vdash \tunit \typ \tuniv$}
\DisplayProof
\,
\AxiomC{$\Gamma \vdash$ }
\UnaryInfC{$\Gamma \vdash \tbool \typ \tuniv$}
\DisplayProof
\,
\AxiomC{$\Gamma \vdash$ }
\UnaryInfC{$\Gamma \vdash \tnat \typ \tuniv$}
\DisplayProof\bigskip

\noindent
\AxiomC{$\Gamma \vdash F \typ \tuniv$ }
\AxiomC{$\Gamma, x\typ  F \vdash G \typ \tuniv$}
\BinaryInfC{$\Gamma \vdash \Pi (x\typ  F) G \typ \tuniv$}
\DisplayProof\medskip
\,
\AxiomC{$\Gamma\vdash F= H \typ \tuniv$}
\AxiomC{$\Gamma, x\typ  F \vdash G =E \typ \tuniv$}
\BinaryInfC{$\Gamma \vdash \Pi (x\typ F) G =\Pi (x\typ H) E \typ \tuniv$}
\DisplayProof\bigskip

\noindent
\AxiomC{$\Gamma \vdash F \typ \tuniv$ }
\AxiomC{$\Gamma, x\typ  F \vdash G \typ \tuniv$}
\BinaryInfC{$\Gamma \vdash \Sigma (x\typ  F) G\typ  \tuniv$}
\DisplayProof\medskip
\,
\AxiomC{$\Gamma\vdash F= H \typ \tuniv$}
\AxiomC{$\Gamma, x\typ  F \vdash G =E \typ \tuniv$}
\BinaryInfC{$\Gamma \vdash \Sigma (x\typ F) G =\Sigma (x\typ H) E \typ \tuniv$}
\DisplayProof\medskip
\end{center}
\noindent
\textbf{Congruence:}

\begin{center}
\noindent
\AxiomC{$\Gamma \vdash F$}
\UnaryInfC{$\Gamma \vdash F=F$}
\DisplayProof\,
\AxiomC{$\Gamma \vdash F=G$}
\UnaryInfC{$\Gamma \vdash G=F$}
\DisplayProof\,
\AxiomC{$\Gamma \vdash F=G$}
\AxiomC{$\Gamma \vdash G=H$}
\BinaryInfC{$\Gamma \vdash F=H$}
\DisplayProof\,
\AxiomC{$\Gamma \vdash t\typ F$}
\UnaryInfC{$\Gamma \vdash t=t\typ F$}
\DisplayProof\,
\AxiomC{$\Gamma \vdash t=u\typ F$}
\UnaryInfC{$\Gamma \vdash u=t\typ F$}
\DisplayProof\bigskip

\AxiomC{$\Gamma \vdash t=u\typ F$}
\AxiomC{$\Gamma \vdash u=v \typ F$}
\BinaryInfC{$\Gamma \vdash t=v \typ F$}
\DisplayProof\,
\AxiomC{$\Gamma \vdash t\typ F$}
\AxiomC{$\Gamma \vdash F = G$}
\BinaryInfC{$\Gamma \vdash t\typ G$}
\DisplayProof\,
\AxiomC{$\Gamma \vdash t=u\typ F$}
\AxiomC{$\Gamma \vdash F = G$}
\BinaryInfC{$\Gamma \vdash t=u\typ G$}
\DisplayProof \bigskip

\AxiomC{$\Gamma \vdash a\typ A$}
\AxiomC{$\Gamma \vdash A=B$}
\BinaryInfC{$\Gamma \vdash a \typ B$}
\DisplayProof
\,
\AxiomC{$\Gamma \vdash a=b\typ A$}
\AxiomC{$\Gamma \vdash A=B$}
\BinaryInfC{$\Gamma \vdash a=b \typ B$}
\DisplayProof\bigskip

\end{center}

\normalsize

\noindent
The following four rules are admissible in the this type system \cite{abelScherer:types10}, we consider them as rules of our type system:\medskip

\noindent
\AxiomC{$\Gamma \vdash a\typ A$}
\UnaryInfC{$\Gamma\vdash A$}
\DisplayProof
\,
\AxiomC{$\Gamma\vdash a=b \typ A$}
\UnaryInfC{$\Gamma\vdash a \typ A$}
\DisplayProof
\,
\AxiomC{$\Gamma, x\typ F\vdash G$}
\AxiomC{$\Gamma\vdash a=b\typ F$}
\BinaryInfC{$\Gamma\vdash G[a]=G[b]$}
\DisplayProof
\,
\AxiomC{$\Gamma, x\typ F\vdash t \typ G$}
\AxiomC{$\Gamma\vdash a=b\typ F$}
\BinaryInfC{$\Gamma\vdash t[a]=t[b] \typ G[a]$}
\DisplayProof\medskip
\,
\AxiomC{$\Gamma \vdash A = B$}
\UnaryInfC{$\Gamma \vdash A$}
\DisplayProof

\subsection{Markov's principle}

Markov's principle can be represented in type theory by the type 
\begin{equation*}
\tMP\coloneqq \Pi(h\typ \tnat\rightarrow \tbool)[\neg\neg (\Sigma (x\typ \tnat)\,\iszero\,(h\,x)) \rightarrow \Sigma (x\typ \tnat)\,\iszero\,(h\,x)]
\end{equation*}
where $\iszero\typ \tbool \rightarrow \tuniv$ is defined by $\iszero \coloneqq \lambda y.\boolrec (\lambda x. \tuniv)\,\tunit\,\tempt\,y$. 

Note that $\iszero\,(h\,n)$ is inhabited when $h\,n = 0$ and empty when $h\,n = 1$. Thus $\Sigma(x\typ \tnat) \,\iszero\,(h\,x)$ is inhabited if there is $n$ such that $h\,n = 0$. 

We remark that in the presence of propositional truncation $\parallel . \parallel$, Markov's principle can be alternatively formulated with weak (propositional) existential $\exists (x\typ A) B \coloneqq \parallel \Sigma (x\typ A) B\parallel$. However, the two formulations are logically equivalent \cite[Exercise 3.19]{hottbook}.

The main result of this paper is the following:

\begin{restatable}{thm}{mainresult}
\label{THM:mainResult}
There is no term $t$ such that $\MLTT\vdash t\typ \tMP$.
\end{restatable}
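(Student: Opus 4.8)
The plan is to follow Brouwer's argument syntactically: build a forcing extension $\MLTT^{*}$ of $\MLTT$ — intuitively, type theory together with a generic point of Cantor space (refining \cite{coquand2010note}) — which contains a formal symbol $\f\typ\tnat\rightarrow\tbool$ for the generic sequence and a closed proof term $t_{0}$ of type $\neg\neg(\Sigma(x\typ\tnat)\,\iszero\,(\f\,x))$, and prove a normalization theorem for $\MLTT^{*}$ from which it follows that there is \emph{no} closed term of type $\Sigma(x\typ\tnat)\,\iszero\,(\f\,x)$. Granting these three ingredients, the theorem is immediate: $\MLTT^{*}$ extends $\MLTT$, so if $\MLTT\vdash t\typ\tMP$ then $\MLTT^{*}\vdash t\typ\tMP$, whence $\MLTT^{*}\vdash t\,\f\,t_{0}\typ\Sigma(x\typ\tnat)\,\iszero\,(\f\,x)$ by two applications of the $\Pi$-elimination rule (instantiating $h\coloneqq\f$), contradicting the normalization consequence.

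For the construction I would take forcing conditions to be finite binary strings $p,q,\dots$ ordered by extension, adjoin $\f$, and make judgments relative to a condition, written $p\forces\Gamma\vdash t\typ A$, with the natural reduction rule that $\f\,\xoverline{n}$ rewrites to the $n$-th bit of $p$ (an element $\zer$ or $\one$ of $\tbool$) when $n<|p|$, and is otherwise stuck, i.e.\ neutral. One then checks that every $\MLTT$-derivation is an $\MLTT^{*}$-derivation at every condition (so $\MLTT$ embeds into $\MLTT^{*}$), and that $t_{0}$ exists: the point is that the conditions are \emph{dense} in the relevant sense, since any $q$ may be extended by appending $\zer$, and at that extension $\iszero(\f\,\xoverline{|q|})$ reduces to $\tunit$ (via the $\boolrec$-computation rule), hence is inhabited by $\zer$, hence so is $\Sigma(x\typ\tnat)\,\iszero\,(\f\,x)$; the double negation is then forced at the empty condition by density, and the forcing machinery turns this into an explicit term.

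The heart of the matter is the normalization theorem for $\MLTT^{*}$, which I would prove by Tait's computability method \cite{Tait1967-TAIIIO, martinloftypes} adapted to the forcing setting and to conversion-as-judgment: define, by induction on types and monotonically in the condition, a computability \emph{predicate} on terms together with a computability \emph{relation} interpreting the equality judgment — the forcing analogue of the technique of \cite{abelScherer:types10} — prove the fundamental theorem that every derivable judgment is computable (using in particular that the stuck term $\f\,\xoverline{n}$, and more generally all neutral terms, are computable at every condition), and read off weak normalization together with the shape of closed normal forms. From this one extracts the required non-inhabitation: a closed term of type $\Sigma(x\typ\tnat)\,\iszero\,(\f\,x)$ at the empty condition would reduce to a canonical pair $(\xoverline{n},q)$ with $q$ a closed normal form of type $\iszero(\f\,\xoverline{n})$; but at the empty condition $\f\,\xoverline{n}$ is neutral, so $\iszero(\f\,\xoverline{n})$ reduces to the neutral type $\boolrec(\lambda x.\tuniv)\,\tunit\,\tempt\,(\f\,\xoverline{n})$, which has no closed normal inhabitant — equivalently, in the computability model its interpretation at the empty condition is empty, because no single $n$ makes the $n$-th bit equal to $\zer$ in every extension.

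The step I expect to be the main obstacle is precisely this normalization theorem: setting up the computability predicate and relation so that they are simultaneously monotone under extension of forcing conditions, behave correctly at the stuck term $\f\,\xoverline{n}$, and are strong enough to interpret $\eta$-conversion, surjective pairing, and the conversion judgment (together with the admissible inversion principles for the type constructors). By comparison, the embedding $\MLTT\hookrightarrow\MLTT^{*}$, the construction of $t_{0}$, and the final combination are routine.
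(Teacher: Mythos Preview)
Your overall strategy matches the paper's, but two concrete steps in your sketch do not go through as stated.

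First, the witness $t_{0}\typ\neg\neg(\Sigma(x\typ\tnat)\,\iszero\,(\f\,x))$ cannot be produced by ``density plus the forcing machinery''. The locality rule of the forcing extension lets you conclude $\Gamma\vdash_p J$ from $\Gamma\vdash_{p_i} J$ for a partition $\{p_i\}$ of $p$, but it requires the \emph{same} term in $J$ at every $p_i$. In context $y\typ\neg(\Sigma(x\typ\tnat)\,\iszero\,(\f\,x))$, the obvious inhabitant of $\tempt$ at a condition extending by $\zer$ at position $m$ is $y\,(\xoverline{m},\zer)$, and this term does not even typecheck at the sibling condition where position $m$ is $\one$. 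No finite partition gives a uniform term, so no closed $t_0$ exists; the paper instead \emph{postulates} a fresh constant $\w$ of this type and then proves semantically (in the computability model) that $\w$ is forced.

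Second, and more seriously, your non-inhabitation argument is incorrect. It is \emph{not} true that a closed term of type $\Sigma(x\typ\tnat)\,\iszero\,(\f\,x)$ at the empty condition reduces to a canonical pair $(\xoverline{n},q)$: the first projection can be stuck, e.g.\ $\boolrec(\lambda x.\tnat)\,\xoverline{0}\,\xoverline{1}\,(\f\,\xoverline{0})$ is a perfectly good closed term of type $\tnat$ with no numeral whnf at $\emptyset$. Consequently your ``neutral type has no closed normal inhabitant'' line does not fire. The paper's computability relation is \emph{not} set up by declaring neutrals computable; instead it has an explicit locality clause: $p\forces t\typ\tnat$ holds either because $t$ reduces at $p$ to a numeral, or because $t$ reduces at $p$ to a stuck form $\mathbb{E}[\f\,\xoverline{k}]$ with $k\notin\dom{p}$ and both one-step refinements force. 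This inductive splitting is what yields the key finiteness fact (the paper's Lemma on base-type canonicity): if $p\forces t\typ\tnat$ then there is a \emph{finite partition} $p\covt S$ such that $t$ reduces to a numeral at every $q\in S$. The contradiction then comes from the combinatorial observation that any such partition contains a condition all of whose values are $\one$; at that condition (or a one-step ``$\one$'' extension of it) $\iszero(\f\,\xoverline{n})$ reduces to $\tempt$, which is uninhabited in the model. Your phrase ``no single $n$ makes the $n$-th bit $\zer$ in every extension'' gestures at the right idea but misses that the argument hinges on the finite-partition structure of the semantics, not on a global quantification over extensions.
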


An {\em extension} of $\MLTT$ is given by introducing new objects, judgment forms and derivation rules. This means in particular that any judgment valid in $\MLTT$ is valid in the extension.  A {\em consistent} extension is one in which the type $\tempt$ is uninhabited.

To show Theorem \ref{THM:mainResult} we will form a consistent extension of $\MLTT$ with a new constant $\vdash \f \typ \tnat\rightarrow \tbool$. We will then show that $\neg\neg (\Sigma (x\typ \tnat)\,\iszero\,(\f\,x))$ is derivable while $\Sigma (x\typ \tnat)\,\iszero\,(\f\,x)$ is not derivable. Thus showing that $\tMP$ is not derivable in this extension and consequently not derivable in $\MLTT$. 

While this is sufficient to establish independence in the sense of non-derivability of $\MP$, to establish the independence of $\MP$ in the stronger sense one also needs to show that $\neg \tMP$ is not derivable in $\MLTT$. This can achieved by reference to the work of Aczel \cite{Aczel1999} where it is shown that $\MLTT$ extended with $\vdash \mathsf{dne} \typ \Pi(A\typ \tuniv) (\neg \neg A \rightarrow A)$ is consistent. Since $h\typ \tnat\rightarrow \tbool, x\typ \tnat \vdash \iszero\,(h\,x) \typ \tuniv$ we have $h\typ \tnat\rightarrow \tbool\vdash \Sigma(x\typ \tnat)\,\iszero\,(h\,x)\typ \tuniv$. If we let $\mathsf{T}(h)\coloneqq \Sigma(x\typ \tnat)\,\iszero\,(h\,x)$ we get that $h\typ \tnat \rightarrow \tbool \vdash \mathsf{dne}\, \mathsf{T}(h) \typ \neg \neg \mathsf{T}(h) \rightarrow \mathsf{T}(h)$.
By $\lambda$ abstraction we have $\vdash \lambda h. \mathsf{dne}\,T(h) \typ \tMP$. We can then conclude that there is no term $t$ such that $\MLTT \vdash t\typ \neg \tMP$.

Finally, we will refine the result of Theorem \ref{THM:mainResult} by building a consistent extension of $\MLTT$ where $\neg \tMP$ is derivable.

\subsection{Forcing extension}

A \emph{condition} $p$ is a graph of a partial finite function from $\nats$ to $\{\zer,\one\}$. We denote by $\emptyset$ the empty condition. We write $p(n)=b$ when $(n,b) \in p$. We say $q$ \emph{extends} $p$ (written $q \leqslant p$) if $p$ is a subset of $q$. A condition can be thought of as a basic compact open in Cantor space $2^\nats$. Two conditions $p$ and $q$ are \emph{compatible} if $p\cup q$ is a condition and we write $p
q$ for $p\cup q$. If $n\notin \dom{p}$ we write $p(n\mapsto \zer)$ for $p\cup\{(n,\zer)\}$ and $p(n\mapsto \one)$ for $p\cup\{(n,\one)\}$. We define the notion of \emph{partition} corresponding to the notion of finite covering of a compact open in Cantor space.

\begin{defi}We write $p\covt S$ to say that $S$ is a partition of $p$ and we define it inductively as follows:
\begin{enumerate}
\item $p \covt \{p\}$.
\item If $n\notin \dom{p}$ and $p(n\mapsto \zer)\covt S_0$ and $p(n\mapsto \one)\covt S_1 $ then $p \covt S_0 \cup S_1$.
\end{enumerate}
\end{defi}

Note that if $p\covt S$ then any $q\in S$ and $r\in S$ are incompatible unless $q=r$. If moreover $s \leqslant p$ then $s \covt \{sq \mid q\in S \text{ compatible with } s\}$. 

We extend the given type theory by annotating the judgments with conditions, i.e.\ replacing each judgment $\Gamma \vdash J$ in the given type system with a judgment $\Gamma\vdash_p J$.\medskip

In addition, we add the locality rule: 
\AxiomC{$\Gamma \vdash_{p_1} J$}
\AxiomC{$\dots$}
\AxiomC{$\Gamma \vdash_{p_n} J $} 
\def\labelSpacing{1pt}
\LeftLabel{\scriptsize{\textsc{loc}}}
\RightLabel{$p \covt \{p_1,\dots, p_n\}$}
\TrinaryInfC{$\Gamma \vdash_p J $}
\DisplayProof\smallskip

We add a term $\f$ for the generic point along with the introduction and conversion rules:\smallskip
\begin{center}
\AxiomC{$\Gamma \vdash_p$}
\LeftLabel{\scriptsize{\textsc{$\f$-I}}}
\UnaryInfC{$\Gamma \vdash_p \f \typ  \tnat \rightarrow \tbool$}
\DisplayProof\,
\AxiomC{$\Gamma \vdash_p$}
\LeftLabel{\scriptsize{\textsc{$\f$-eval}}}
\RightLabel{$n \in \dom{p}$}
\UnaryInfC{$\Gamma \vdash_p \f\,\xoverline{n} = p(n)\typ \tbool$}
\DisplayProof \bigskip
\end{center}

We add a term $\w$ and the rule: 
\begin{center}
\AxiomC{$\Gamma \vdash_p$}
\LeftLabel{\scriptsize{\textsc{$\w$-term}}}
\UnaryInfC{$\Gamma \vdash_p \w\typ  \neg \neg (\Sigma(x\typ \tnat)\,\iszero\,(\f\,x)) $}
\DisplayProof
\end{center}

Since $\w$ inhabits the type $\neg \neg (\Sigma(x\typ \tnat)\,\iszero\,(\f\,x))$, our goal is then to show that no term inhabits the type $\Sigma (x\typ \tnat)\, \iszero(\f\,x)$.

It follows directly from the description of the forcing extension that:
\begin{lem}
\label{LEM:extension}
If $\Gamma\vdash J$ in standard type theory then $\Gamma \vdash_{\emptyset} J$.
\end{lem}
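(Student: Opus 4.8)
The plan is to prove the lemma by induction on the derivation of $\Gamma\vdash J$ in $\MLTT$. The essential observation is the way the forcing extension was set up: each judgment $\Gamma\vdash J$ of $\MLTT$ is turned into the annotated judgment $\Gamma\vdash_p J$, and correspondingly each inference rule of $\MLTT$ becomes an inference rule of the forcing extension in which one and the same condition $p$ decorates every premise and the conclusion. The locality rule and the rules introducing $\f$ and $\w$ are genuinely new and play no role in this direction.

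Concretely, I would exhibit a translation on derivation trees: given a derivation $\mathcal{D}$ of $\Gamma\vdash J$, replace every judgment occurring in $\mathcal{D}$ by its $\emptyset$-annotated version. Each rule instance used in $\mathcal{D}$ then becomes, by construction, a valid $\emptyset$-annotated instance of the corresponding forcing rule, since all judgments in a rule carry the same condition (here $\emptyset$) and none of the $\MLTT$ rules constrains or even mentions the annotating condition. Running through the rule shapes --- the rules for $\tnat$, $\tbool$, $\tunit$, $\tempt$, $\Pi$, $\Sigma$, the universe, the congruence rules, and the admissible rules listed at the end of the type system --- is then completely routine. The only bookkeeping remarks needed are that substitution into terms and types (as in the side conditions $G[a]$, $t[a]$, $F[\zer]$, $\ldots$) leaves conditions untouched, and likewise context extension $\Gamma, x\typ F$ is orthogonal to conditions, so no side condition of any rule is disturbed by the annotation.

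There is accordingly no real obstacle: the lemma holds essentially by the design of the forcing extension, and the same argument shows the (mildly stronger) fact that $\Gamma\vdash J$ implies $\Gamma\vdash_p J$ for \emph{every} condition $p$, taking $p=\emptyset$ for the statement as given.
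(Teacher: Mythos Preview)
Your proposal is correct and matches the paper's approach: the paper simply states that the lemma ``follows directly from the description of the forcing extension'' without giving any further proof, and your induction on derivations spelling out that each $\MLTT$ rule becomes the corresponding $\emptyset$-annotated rule is exactly the content of that remark. Your additional observation that the same argument yields $\Gamma\vdash_p J$ for every $p$ is also noted in the paper (as the monotonicity remark following the lemma).
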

Note that if $q\leqslant p$ and $\Gamma\vdash_p J$ then $\Gamma\vdash_q J$ (monotonicity). A statement $A$ (represented as a closed type) is derivable in this extension if $\vdash_{\emptyset} t\typ A$ for some $t$, which implies $\vdash_p t\typ A$ for all $p$. 

Similarly to \cite{coquand2010note} we can state a conservativity result for this extension. Let $\vdash g\typ \tnat\rightarrow \tbool$. 
 We say that $g$ is compatible with a condition $p$ if $g$ is such that $\vdash g\,\xoverline{n} = b \typ \tbool$ whenever $(n,b) \in p$ and $\vdash g\,\xoverline{n} = \zer\typ \tbool$ otherwise. 
 We write $n_g$ for the smallest natural number such that $g\, \xoverline{n}_g = \zer$. Let $v_g \typ \neg \neg (\Sigma(x\typ \tnat)\,\iszero\,(g\,x))$ be the term given by $v_g\coloneqq \lambda x. x\,(\xoverline{n}_g,\zer)\typ \neg \neg (\Sigma(y\typ \tnat)\,\iszero\,(g\,y))$. 
 To see that $v_g$ is well typed, note that by design $\Gamma\vdash g\,\xoverline{n}_g = \zer \typ \tbool$ thus $\Gamma\vdash \iszero\,(g\,\xoverline{n}_g) = \tunit$ and $\Gamma \vdash (\xoverline{n}_g,\zer) \typ \Sigma(x\typ \tnat) \iszero\,(g\,x)$. 
 We have then $\Gamma, x\typ \neg (\Sigma(y\typ \tnat)\,\iszero\,(g\,y))\vdash x\,(\xoverline{n}_g,\zer)\typ \tempt$, thus $\Gamma\vdash \lambda x. x\, (\xoverline{n}_g,\zer) \typ \neg \neg (\Sigma(y\typ \tnat)\,\iszero\,(g\,y))$.
 

\begin{lem}[Conservativity]
Let $\vdash g \typ \tnat\rightarrow \tbool$ be compatible with some condition $p$. If $\Gamma \vdash_p J$ then $\Gamma[g/\f,v_g/\w] \vdash J[g/\f,v_g/\w]$, i.e.\ by replacing $\f$ with $g$ then $\w$ with $v_g$  we obtain a valid judgment in standard type theory. In particular, if we have $\Gamma\vdash_{\emptyset} J$ where neither $\f$ nor $\w$ occur in $\Gamma$ or $J$ then $\Gamma\vdash J$ is a valid judgment in standard type theory.
\end{lem}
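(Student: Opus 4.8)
The plan is to argue by induction on the derivation of $\Gamma \vdash_p J$, establishing the following for every such derivation: for every closed $g$ with $\vdash g \typ \tnat\rightarrow\tbool$ that is compatible with the condition $p$ annotating the conclusion, the judgment $\Gamma[g/\f,v_g/\w] \vdash J[g/\f,v_g/\w]$ holds in standard type theory. It is convenient first to isolate a purely combinatorial sub-lemma on partitions: \emph{if $g$ is compatible with $p$ and $p \covt S$, then some $q \in S$ is compatible with $g$}. This is proved by induction on the derivation of $p\covt S$. If $S = \{p\}$ take $q = p$. If $p \covt S_0 \cup S_1$ comes from $n \notin \dom p$, $p(n\mapsto\zer)\covt S_0$ and $p(n\mapsto\one)\covt S_1$, then compatibility of $g$ with $p$ together with $n\notin\dom p$ gives $\vdash g\,\xoverline n = \zer\typ\tbool$; one checks directly that $g$ is then compatible with $p(n\mapsto\zer)$, and the induction hypothesis applied to $p(n\mapsto\zer)\covt S_0$ yields a $q \in S_0 \subseteq S$ compatible with $g$. (Only existence of such $q$ is needed below.)

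Now for the main induction. The key simplification is that $g$ and $v_g$ are \emph{closed} terms containing neither $\f$ nor $\w$ (recall $v_g = \lambda x. x\,(\xoverline{n}_g,\zer)$), so the replacement $[g/\f,v_g/\w]$ introduces no variable capture, the two replacements do not interact, and $[g/\f,v_g/\w]$ commutes with context extension and with ordinary substitution of terms for variables. Because of this, every rule inherited from the standard type system — which in the forcing extension carries a single, fixed condition $p$ throughout its premises and conclusion — goes through verbatim: one applies the induction hypothesis to each premise, rewrites using commutation of $[g/\f,v_g/\w]$ with the substitutions occurring in the rule (for instance $G[a][g/\f,v_g/\w] = (G[g/\f,v_g/\w])[a[g/\f,v_g/\w]]$ in the application rule and in the computation rules for $\natrec$, $\boolrec$, etc.), and re-applies the same rule in standard type theory; compatibility of $g$ with $p$ plays no role in these cases.

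The genuinely new rules are handled as follows. For \textsc{loc}, with $p \covt \{p_1,\dots,p_n\}$, the sub-lemma produces some $p_i$ compatible with $g$; applying the induction hypothesis to the premise $\Gamma \vdash_{p_i} J$ — legitimate precisely because $g$ is compatible with $p_i$, and noting that $v_g$ depends only on $g$, not on the condition — gives exactly $\Gamma[g/\f,v_g/\w] \vdash J[g/\f,v_g/\w]$. In each of \textsc{$\f$-I}, \textsc{$\f$-eval}, \textsc{$\w$-term} the premise is a context judgment $\Gamma \vdash_p$, and the induction hypothesis first gives that $\Gamma[g/\f,v_g/\w]$ is a well-formed context of standard type theory; then: for \textsc{$\f$-I} use the hypothesis $\vdash g \typ \tnat\rightarrow\tbool$ and weakening, observing $(\tnat\rightarrow\tbool)[g/\f,v_g/\w] = \tnat\rightarrow\tbool$; for \textsc{$\f$-eval} with $n \in \dom p$ we have $(n,p(n))\in p$, so compatibility of $g$ with $p$ gives $\vdash g\,\xoverline n = p(n)\typ\tbool$, which weakening transports into $\Gamma[g/\f,v_g/\w]$, and this is $(\f\,\xoverline n = p(n)\typ\tbool)[g/\f,v_g/\w]$; for \textsc{$\w$-term} we invoke the computation carried out just above the statement of the lemma, which shows $\vdash v_g \typ \neg\neg(\Sigma(x\typ\tnat)\,\iszero\,(g\,x))$ in any well-formed context, this being exactly $(\w\typ\neg\neg(\Sigma(x\typ\tnat)\,\iszero\,(\f\,x)))[g/\f,v_g/\w]$. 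This completes the induction. For the last assertion, note the constant function $g \coloneqq \lambda y.\zer$ is compatible with $\emptyset$ (for each $n$, $\vdash (\lambda y.\zer)\,\xoverline n = \zer\typ\tbool$ by $\beta$-reduction, and $\dom\emptyset=\emptyset$ makes the remaining requirement vacuous); applying the statement just proved to $\Gamma \vdash_{\emptyset} J$ and observing that $[g/\f,v_g/\w]$ acts as the identity on $\Gamma$ and $J$ (since neither $\f$ nor $\w$ occurs in them) yields $\Gamma \vdash J$.

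The main obstacle is the \textsc{loc} case, which is where the technical notion of compatibility is genuinely used: one must see that ``$g$ compatible with $p$'' (i.e.\ $g$ total, agreeing with $p$ on $\dom p$, and equal to $\zer$ off $\dom p$) survives passage to the $\zer$-branch of a partition split, and package this into the partition sub-lemma so that, whatever finite covering $p\covt\{p_1,\dots,p_n\}$ is used, $g$ selects one branch $p_i$ along which the induction hypothesis applies. Apart from that, the proof is the standard substitution-lemma bookkeeping, the only extra care being to confirm that the forcing extension really does decorate each standard rule with a single fixed condition, so that the induction transports each such rule uniformly.
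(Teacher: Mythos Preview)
Your proof is correct and follows essentially the same approach as the paper: induction on the derivation, with the standard rules handled by commutation of the substitution $[g/\f,v_g/\w]$ with term substitution, the $\f$- and $\w$-rules handled by the definition of compatibility and the construction of $v_g$, and the \textsc{loc} rule handled via the observation that a $g$ compatible with $p$ is compatible with some (the paper says exactly one) member of any partition of $p$. Your write-up is in fact more detailed than the paper's --- you spell out the partition sub-lemma and the ``in particular'' clause explicitly --- but the strategy is the same.
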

\begin{proof}
We show that whenever the statement holds for the premise of a typing rule it holds for the conclusion. The statement will then follow by induction on the derivation tree of $\Gamma \vdash_p J$. 
 
 For the standard rules the proof is straightforward. For (\textsc{$\f$-eval}) we have $(\f\,\xoverline{n})[g/\f, v_g/\w] \coloneqq g\,\xoverline{n}$ and since $g$ is compatible with $p$ we have $\Gamma[g/\f, v_g/\w]\vdash g\,\xoverline{n}= p(n)\typ \tbool$ whenever $n\in \dom{p}$. For (\textsc{$\w$-term}) we have 
\begin{align*}
(\w\typ\neg\neg (\Sigma(x\typ \tnat)\,\iszero\,(\f\,x)))[g/\f, v_g/\w] &\coloneqq (\w\typ\neg\neg (\Sigma(x\typ \tnat)\, \iszero\,(g\,x)))[v_g/\w]\\
& \coloneqq v_g\typ\neg\neg (\Sigma(x\typ \tnat)\,\iszero \,(g\,x)). 
\end{align*}

For (\textsc{loc}) the statement follows from the observation that when $g$ is compatible with $p$ and $p\covt S$ then $g$ is compatible with exactly one $q \in S$. From $\Gamma \vdash_q J$ by IH we get $\Gamma[g/\f,v_g/\w] \vdash J[g/\f,v_g/\w]$.
\end{proof}

\section{A Semantics of the forcing extension}
\label{SEC:semantics}

In this section we outline a semantics for the forcing extension given in the previous section. We will interpret the judgments of type theory by computability predicates and relations defined by reducibility to computable weak head normal forms.

\subsection{Reduction rules}
We extend the $\beta, \iota$ conversion with $\f\,\xoverline{n} \red_p b$ whenever $(n,b) \in p$. To ease the presentation of the proofs and definitions we introduce {\em evaluation contexts} following \cite{WRIGHT199438}.
\begin{align*}
\mathbb{E}\Coloneqq & [\;] \mid \mathbb{E}\, u \mid \mathbb{E}\fpr \mid \mathbb{E}\spr \mid \Suc\,\mathbb{E} \mid \f\, \mathbb{E}\\
& \botrec (\lambda x. C)\,\mathbb{E} \mid \unitrec (\lambda x. C)\,a\,\mathbb{E}
\mid \boolrec (\lambda x. C)\, a_0\,a_1\,\mathbb{E}\mid \natrec (\lambda x. C)\, c_z\,g\,\mathbb{E}
\end{align*}

An expression $\mathbb{E}[e]$ is then the expression resulting from replacing the hole $[\;]$ by $e$. 

We have the following reduction rules:

\AxiomC{\vphantom{$\vdash$}}
\UnaryInfC{$\unitrec (\lambda x. C)\;c\;\zer \red c $}
\DisplayProof
\,
\AxiomC{$\vphantom{\vdash}$}
\UnaryInfC{$\boolrec (\lambda x. C)\,c_0\,c_1\,\zer \red c_0$}
\DisplayProof
\,
\AxiomC{$\vphantom{\vdash}$}
\UnaryInfC{$\boolrec (\lambda x. C)\,c_0\,c_1\,\one \red c_1$}
\DisplayProof\smallskip

\AxiomC{$\vphantom{\vdash}$}
\UnaryInfC{$\natrec (\lambda x. C)\,c_z\,g\,\zer \red c_z$}
\DisplayProof
\,
\AxiomC{$\vphantom{\vdash}$}
\UnaryInfC{$\natrec (\lambda x. C)\,c_z\,g\,(\Suc\,\xoverline{k}) \red g\,\xoverline{k}\, (\natrec (\lambda x. C)\,c_z\,g\,\xoverline{k})$}
\DisplayProof\medskip

\AxiomC{$\vphantom{\vdash}$}
\UnaryInfC{$(\lambda x. t)\, a \red t[a/x]$}
\DisplayProof
\,
\AxiomC{\vphantom{$\vdash$}}
\UnaryInfC{$(u,v)\fpr \red u$}
\DisplayProof
\,
\AxiomC{\vphantom{$\vdash$}}
\UnaryInfC{$(u,v)\spr \red v$}
\DisplayProof\medskip

\AxiomC{$e\red e'$}
\UnaryInfC{$e\red_p e'$}
\DisplayProof
\,
\AxiomC{$k \in \dom{p}$}
\LeftLabel{\scriptsize{\textsc{$\f$-red}}}
\UnaryInfC{$\f\,\xoverline{k} \red_p p(k)$}
\DisplayProof
\,
\AxiomC{$e\red_p e'$}
\UnaryInfC{$\mathbb{E}[e]\red_p \mathbb{E}[e']$}
\DisplayProof\medskip

Note that we reduce under $\Suc$. Also note that the relation $\red$ is monotone, that is if $q \leqslant p$ and $t \red_p u$ then $t \red_q u$. In the following we will show that $\red$ is also local, i.e.\ if $p\covt S$ and $t\red_q u$ for all $q\in S$ then $t\red_p u$. 

\begin{lem}
\label{untypedredislocal}
If $m\notin \dom{p}$ and $t\red_{p(m\mapsto \zer)} u$ and $t\red_{p(m\mapsto \one)} u$ then $t\red_p u$.
\end{lem}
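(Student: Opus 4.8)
The plan is to pin down the single way in which the reduction relation depends on the condition. Note first that evaluation contexts compose, i.e.\ $\mathbb{E}_1[\mathbb{E}_2]$ is again an evaluation context (immediate induction on the grammar of $\mathbb{E}$). Hence every step $s\red_q s'$ can be written as $s=\mathbb{E}[e]$, $s'=\mathbb{E}[e']$ for an evaluation context $\mathbb{E}$ and a \emph{base} step $e\red_q e'$, namely an instance of one of the $\beta,\iota$ rules --- none of which mentions the condition --- or an instance of (\textsc{$\f$-red}), $\f\,\xoverline{k}\red_q q(k)$ with $k\in\dom{q}$.

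First I would record the following comparison, for $b\in\{\zer,\one\}$: if $t\red_{p(m\mapsto b)} u$ then either $t\red_p u$ already, or $t=\mathbb{E}[\f\,\xoverline{m}]$ and $u=\mathbb{E}[b]$ for some evaluation context $\mathbb{E}$. Indeed, write the step as $\mathbb{E}[e]\red_{p(m\mapsto b)}\mathbb{E}[e']$ with base step $e\red_{p(m\mapsto b)} e'$. If this base step is a $\beta,\iota$ rule it is equally a base $\red_p$-step, so $\mathbb{E}[e]\red_p\mathbb{E}[e']=u$. If it is (\textsc{$\f$-red}) at $\f\,\xoverline{k}$ with $k\in\dom{p}$, then $k\neq m$ (as $m\notin\dom{p}$), so $e'=(p(m\mapsto b))(k)=p(k)$ and $\f\,\xoverline{k}\red_p p(k)$ as well, giving $t\red_p u$ again. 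The only remaining possibility is (\textsc{$\f$-red}) with $k=m$, where $e=\f\,\xoverline{m}$ and $e'=(p(m\mapsto b))(m)=b$, that is $t=\mathbb{E}[\f\,\xoverline{m}]$ and $u=\mathbb{E}[b]$.

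Now assume $t\red_{p(m\mapsto\zer)} u$ and $t\red_{p(m\mapsto\one)} u$, and apply the comparison to each hypothesis. If either application yields $t\red_p u$ we are done, so suppose not: then there are evaluation contexts $\mathbb{E}_0,\mathbb{E}_1$ with $t=\mathbb{E}_0[\f\,\xoverline{m}]=\mathbb{E}_1[\f\,\xoverline{m}]$ and $u=\mathbb{E}_0[\zer]=\mathbb{E}_1[\one]$. Here I would invoke the unique-decomposition property of evaluation contexts: since $\xoverline{m}$ contains no occurrence of $\f$, the term $\f\,\xoverline{m}$ has no proper subterm equal to $\f\,\xoverline{m}$, and inspection of the grammar of $\mathbb{E}$ shows that $\f\,\xoverline{m}$ is not of the form $\mathbb{E}'[\f\,\xoverline{m}]$ for any evaluation context $\mathbb{E}'\neq[\;]$; a short induction on evaluation contexts then forces $\mathbb{E}_0=\mathbb{E}_1$ from $\mathbb{E}_0[\f\,\xoverline{m}]=\mathbb{E}_1[\f\,\xoverline{m}]$. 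Consequently $\mathbb{E}_0[\zer]=u=\mathbb{E}_0[\one]$, so $\zer=\one$ --- absurd. Hence $t\red_p u$, as required.

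The only genuinely delicate ingredient is the unique-decomposition property used at the end --- equivalently, determinism of the reduction strategy carved out by the evaluation contexts; everything else is just bookkeeping on the reduction rules. An alternative that avoids naming it is to run the whole argument as an induction on $t$, by cases on its head symbol, the clash occurring precisely in the case $t=\f\,\xoverline{m}$ where the two reducts differ; this is longer but wholly routine.
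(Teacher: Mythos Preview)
Your argument is correct. Both your route and the paper's hinge on the same determinism/unique-decomposition fact about evaluation contexts; the difference is in organisation. The paper proceeds by a short induction on the derivation of $t\red_{p(m\mapsto\zer)} u$: in the $\f$-red base case one has $t=\f\,\xoverline{k}$ and $u=p(m\mapsto\zer)(k)$, and the second hypothesis forces $u=p(m\mapsto\one)(k)$ as well, so $p(m\mapsto\zer)(k)=p(m\mapsto\one)(k)$, which is only possible for $k\in\dom{p}$, giving $t\red_p u$ directly; in the context case $\mathbb{E}[e]\red_{p(m\mapsto\zer)}\mathbb{E}[e']$ one inverts the second hypothesis to get $e\red_{p(m\mapsto\one)} e'$ and applies the induction hypothesis. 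You instead first extract (essentially the paper's next lemma) that any $p(m\mapsto b)$-step is either already a $p$-step or has the form $\mathbb{E}[\f\,\xoverline{m}]\red\mathbb{E}[b]$, and then close by contradiction via unique decomposition. Your version is a bit longer and spells out the unique-decomposition step explicitly, whereas the paper absorbs it into the ``inversion'' of the second hypothesis; conversely, your comparison lemma is reusable and in fact anticipates what the paper proves immediately afterwards.
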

\begin{proof}
By induction on the derivation of $t\red_{p(m\mapsto \zer)} u$. If  $t\red_{p(m\mapsto \zer)} u$ is derived by (\textsc{$\f$-red}) then $t\coloneqq \f\,\xoverline{k}$ and $u\coloneqq p(m\mapsto \zer)(k)$ for some $k\in \dom{p(m\mapsto \zer)}$. But since we also have a reduction $\f\,\xoverline{k}\red_{p(m\mapsto \one)} u$,  we have $p(m\mapsto \one)(k) \coloneqq u \coloneqq p(m\mapsto \zer)(k)$ which could only be the case if $k\in \dom{p}$. Thus we have a reduction $\f\,\xoverline{k} \red_p u\coloneqq p(k)$. If on the other hand we have a derivation $t\red u$, then we have $t\red_p u$ directly. If the derivation $t\red_{p(m\mapsto \zer)} u$ has the form $\mathbb{E}[e]\red_{p(m\mapsto \zer)} \mathbb{E}[e']$ then we have also $\mathbb{E}[e]\red_{p(m\mapsto \one)} \mathbb{E}[e']$. Hence, $e\red_{p(m\mapsto \zer)} e'$ and $e\red_{p(m\mapsto \one)} e'$. By IH $e\red_p e'$, thus $\mathbb{E}[e]\red_p \mathbb{E}[e']$.
\end{proof}

\begin{lem}
\label{reductionislevelbehaved}
Let $q\leqslant p$. If $t\red_q u$ then $t\red_p u$ or $t$ has the form $\mathbb{E}[\f\,\xoverline{m}]$ for some $m\in \dom{q}\setminus \dom{p}$.
\end{lem}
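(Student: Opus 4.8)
The plan is to prove Lemma~\ref{reductionislevelbehaved} by induction on the derivation of $t\red_q u$, mirroring the case analysis of the previous lemma but now tracking which reductions genuinely depend on the extra conditions in $q\setminus p$. The derivation of $t\red_q u$ ends in one of three rules: the lift of a pure reduction $t\red u$, the rule (\textsc{$\f$-red}), or the congruence rule $\mathbb{E}[e]\red_q\mathbb{E}[e']$ from $e\red_q e'$.

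First I would dispatch the easy case: if $t\red_q u$ is derived from a pure reduction $t\red u$ (a $\beta$- or $\iota$-step), then the same rule gives $t\red_p u$ directly, since pure reductions do not mention the condition. Second, for (\textsc{$\f$-red}), we have $t\coloneqq \f\,\xoverline{m}$ and $u\coloneqq q(m)$ for some $m\in\dom q$. Either $m\in\dom p$, in which case $q\leqslant p$ forces $q(m)=p(m)$, so $t\red_p u$ by (\textsc{$\f$-red}) again; or $m\in\dom q\setminus\dom p$, in which case $t=\f\,\xoverline{m}=\mathbb{E}[\f\,\xoverline{m}]$ with $\mathbb{E}\coloneqq[\;]$, which is exactly the second alternative in the statement. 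Third, for the congruence case $t=\mathbb{E}[e]\red_q\mathbb{E}[e']=u$ coming from $e\red_q e'$, I apply the induction hypothesis to $e\red_q e'$: either $e\red_p e'$, whence $\mathbb{E}[e]\red_p\mathbb{E}[e']$ by the congruence rule and we are done; or $e$ has the form $\mathbb{E}'[\f\,\xoverline{m}]$ for some $m\in\dom q\setminus\dom p$, and then $t=\mathbb{E}[e]=\mathbb{E}[\mathbb{E}'[\f\,\xoverline{m}]]=(\mathbb{E}\circ\mathbb{E}')[\f\,\xoverline{m}]$, using the elementary fact that the composition of two evaluation contexts is again an evaluation context, so the second alternative holds for $t$ with context $\mathbb{E}\circ\mathbb{E}'$.

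The only point requiring a small remark is the closure of evaluation contexts under composition, i.e.\ that if $\mathbb{E}$ and $\mathbb{E}'$ are generated by the grammar for $\mathbb{E}$ then so is $\mathbb{E}[\mathbb{E}'[-]]$; this follows by a trivial induction on the structure of $\mathbb{E}$, since each production simply wraps a single elimination or $\Suc$ or $\f$ around the hole. I do not expect any genuine obstacle here: the lemma is essentially a bookkeeping statement, and the induction is completely driven by the three-way split on the last reduction rule. If there is a subtle point, it is making sure the case $m\in\dom q\setminus\dom p$ in the (\textsc{$\f$-red}) base case is phrased so that it matches the conclusion with $\mathbb{E}$ the empty context, which it does.
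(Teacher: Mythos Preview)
Your proof is correct and follows essentially the same three-case induction on the derivation of $t\red_q u$ as the paper's own proof. You give a bit more detail (noting explicitly that $q(m)=p(m)$ when $m\in\dom p$, and that evaluation contexts compose), but the argument is identical in structure and content.
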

\begin{proof}
By induction on the derivation of $t\red_q u$. If the reduction $t\red_q u$ has the form $\f\,\xoverline{k} \red_q q(k)$ then either $k\notin\dom{p}$ and the statement follows or $k\in \dom{p}$ and we have $t\red_p u$. If on the other hand we have $t\red u$ then $t\red_p u$ immediately.
If $t\red_q u$ has the form $\mathbb{E}[e] \red_q \mathbb{E}[e']$ then $e\red_q e'$ and the statement follows by induction.
\end{proof}

\begin{cor}
\label{eitherreduceorstuck}
Let $t\red_{p(m\mapsto \zer)} u$ and $t\red_{p(m\mapsto \one)} v$ for some $m\notin\dom{p}$. If $u\coloneqq v$ then $t\red_p u$; otherwise, $t$ has the form $\mathbb{E}[\f\,\xoverline{m}]$.
\end{cor}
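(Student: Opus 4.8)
The plan is to obtain this as a short consequence of Lemmas~\ref{untypedredislocal} and~\ref{reductionislevelbehaved}. Since $m\notin\dom p$ we have $\dom{p(m\mapsto\zer)}\setminus\dom p=\dom{p(m\mapsto\one)}\setminus\dom p=\{m\}$, so Lemma~\ref{reductionislevelbehaved} specialises to the statement that a reduction $t\red_{p(m\mapsto b)}w$ (for $b$ either $\zer$ or $\one$) either already holds as $t\red_p w$, or else $t$ has the form $\mathbb{E}[\f\,\xoverline m]$. The first alternative of the corollary is then immediate: if $u\coloneqq v$ we are exactly in the situation of Lemma~\ref{untypedredislocal}, which gives $t\red_p u$ directly.

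For the second alternative, suppose $u$ and $v$ are distinct and, towards a contradiction, that $t$ does not have the form $\mathbb{E}[\f\,\xoverline m]$. Applying the specialised form of Lemma~\ref{reductionislevelbehaved} to each of the two given reductions then excludes the ``stuck'' alternative, so $t\red_p u$ and $t\red_p v$ both hold; since one-step $\red_p$ is deterministic this forces $u\coloneqq v$, a contradiction, so $t$ must be of the stated form. The only ingredient here not recorded as a separate lemma is determinism of $\red_p$, which is routine from the evaluation-context presentation: a term that is not a weak head normal form decomposes \emph{uniquely} as $\mathbb{E}[e]$ with $e$ one of the listed $\beta\iota$-redexes or an $\f$-redex $\f\,\xoverline k$ with $k\in\dom p$ --- inspection of the grammar of $\mathbb{E}$ and of the redex shapes shows the possible decompositions never overlap --- and each such $e$ has a single one-step reduct. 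Establishing this unique-decomposition property is the main, and only mildly tedious, point.

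One can instead avoid invoking determinism as an external fact and prove the corollary directly by induction on the derivation of $t\red_{p(m\mapsto\zer)}u$, in the style of the two preceding lemmas. In the base case where the last step is $\f$-red, either the reduced index equals $m$, so already $t\coloneqq\f\,\xoverline m$, or it lies in $\dom p$ and one reads off that $t\red_{p(m\mapsto\one)}v$ must reduce the same index, giving $u\coloneqq v$ and hence $t\red_p u$. In every other base case $t$ is manifestly not of the form $\mathbb{E}[\f\,\xoverline m]$, and its shape determines its reduct at any level, so again $u\coloneqq v$. In the evaluation-context case $t\coloneqq\mathbb{E}[e]$ the induction hypothesis applies to $e$ once one notes that the $\mathbb{E}$-decomposition of $t$ is forced --- which is once more precisely the unique-decomposition fact, so the crux of the argument is the same under either approach.
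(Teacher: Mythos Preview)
Your argument is correct and is essentially the intended one: the paper states this as an immediate corollary of Lemmas~\ref{untypedredislocal} and~\ref{reductionislevelbehaved} without giving a proof, and your reconstruction---Lemma~\ref{untypedredislocal} for the case $u\coloneqq v$, Lemma~\ref{reductionislevelbehaved} plus determinism of $\red_p$ for the contrapositive of the other case---is precisely how one unpacks it. The paper does rely on determinism (it is invoked explicitly just a few lines later, before Lemma~\ref{LEM:uniquenessofwhnf}), so your appeal to it is legitimate even though it is not yet stated at this point in the text; your remark that the unique-decomposition property of evaluation contexts is the routine ingredient behind it is accurate.
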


Define $p\vdash t\red u \typ A$ to mean $t\red_p u$ and $\vdash_p t=u \typ A$ and write $p\vdash A\red B$ for $p\vdash A\red B\typ \tuniv$. 

Note that it holds that if $p\vdash t\red u\typ \Pi(x\typ F) G$ and $\vdash a\typ F$ then $p\vdash t\,a\red u\,a \typ G[a]$ and if $p\vdash t\red u \typ \Sigma(x\typ F) G$ then $p\vdash t\fpr \red u\fpr \typ F$ and $p\vdash t\spr \red u\spr \typ G[t\fpr]$. 

We define a closure for this relation as follows:\medskip
\begin{center}
\AxiomC{$\vdash_p t\typ A$}
\UnaryInfC{$p\vdash t \rtred t \typ A$}
\DisplayProof
\,
\AxiomC{$p\vdash t \red u\typ A$}
\UnaryInfC{$p\vdash t \rtred u\typ A$}
\DisplayProof
\,
\AxiomC{$p\vdash t \red u \typ A$}
\AxiomC{$p\vdash u \rtred v \typ A$}
\BinaryInfC{$p\vdash t \rtred v\typ A$}
\DisplayProof\smallskip

\AxiomC{$\vdash_p A$}
\UnaryInfC{$p\vdash A\rtred A$}
\DisplayProof
\;\;\;
\AxiomC{$p\vdash A \red B$}
\UnaryInfC{$p\vdash A \rtred B$}
\DisplayProof
\;\;\;
\AxiomC{$p\vdash A \red B $}
\AxiomC{$p\vdash B \rtred C $}
\BinaryInfC{$p\vdash A \rtred C$}
\DisplayProof
\end{center}\smallskip


A term $t$ is {\em in} $p$-whnf if whenever $p\vdash t \rtred u\typ A$ then $t\coloneqq u$. 

A whnf is \emph{canonical} if it has one of the forms:
\begin{align*}
& \zer,\one,\xoverline{n}, \lambda x. t, (a,b), \f, \w, \tempt,\tunit,\tbool,\tnat,\tuniv,\Pi(x\typ F) G, \Sigma(x\typ F) G,\\
& \botrec (\lambda x. C), \unitrec (\lambda x. C)\,a, \boolrec (\lambda x. C)\, a_0\,a_1, \natrec (\lambda x. C)\, c_z\,g
\end{align*}

A $p$-whnf is \emph{proper} if it is canonical or it is of the form $\mathbb{E}[\f\,\xoverline{k}]$ for $k\notin\dom {p}$. 

A canonical $p$-whnf has no further reduction at any $q\leqslant p$. A non-canonical proper $p$-whnf, i.e.\ of the form $\mathbb{E}[\f\,\xoverline{k}]$ for $k\notin\dom{p}$, have further reduction at some $q\leqslant p$, namely when $k\in \dom{q}$.

We have the following corollaries to Lemma~\ref{untypedredislocal} and Corollary~\ref{eitherreduceorstuck}.

\begin{cor}
\label{typedreduceorstuck}
Let $m\notin \dom{p}$. Let $p(m\mapsto \zer) \vdash t\red u\typ A$ and $p(m\mapsto \one)\vdash t\red v\typ A$. If $u\coloneqq v$ then $p\vdash t\red u \typ A$; otherwise $t$ has the form  $\mathbb{E}[\f\,\xoverline{m}]$.
\end{cor}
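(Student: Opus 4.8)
The plan is to reduce the statement to its purely syntactic counterpart, Corollary~\ref{eitherreduceorstuck}, together with one application of the locality rule \textsc{loc}. First I would unfold the definition of the judgment $p\vdash t\red u\typ A$. The hypothesis $p(m\mapsto\zer)\vdash t\red u\typ A$ unpacks to the reduction $t\red_{p(m\mapsto\zer)} u$ together with the conversion $\vdash_{p(m\mapsto\zer)} t=u\typ A$, and likewise $p(m\mapsto\one)\vdash t\red v\typ A$ unpacks to $t\red_{p(m\mapsto\one)} v$ and $\vdash_{p(m\mapsto\one)} t=v\typ A$. The reduction components are now in exactly the shape required to invoke Corollary~\ref{eitherreduceorstuck}.

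Applying Corollary~\ref{eitherreduceorstuck} to $t\red_{p(m\mapsto\zer)} u$ and $t\red_{p(m\mapsto\one)} v$ splits into the two cases of the present statement. If $u\not\coloneqq v$, the corollary immediately gives that $t$ has the form $\mathbb{E}[\f\,\xoverline{m}]$, which is the second alternative, so that case is finished. If $u\coloneqq v$, the corollary gives $t\red_p u$, so it only remains to produce the conversion $\vdash_p t=u\typ A$ in order to conclude $p\vdash t\red u\typ A$. For this I would first note that $p\covt\{p(m\mapsto\zer),p(m\mapsto\one)\}$: by clause~(1) of the definition of partition we have $p(m\mapsto\zer)\covt\{p(m\mapsto\zer)\}$ and $p(m\mapsto\one)\covt\{p(m\mapsto\one)\}$, and since $m\notin\dom p$ clause~(2) gives the claim. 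Since $v\coloneqq u$, both $\vdash_{p(m\mapsto\zer)} t=u\typ A$ and $\vdash_{p(m\mapsto\one)} t=u\typ A$ hold, so \textsc{loc} applied along this partition yields $\vdash_p t=u\typ A$; combined with $t\red_p u$ this gives $p\vdash t\red u\typ A$.

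There is no genuine obstacle here: this is a bookkeeping corollary, obtained by pairing the untyped result with the locality principle for the conversion judgment. The only point worth making explicit is that the two-element set $\{p(m\mapsto\zer),p(m\mapsto\one)\}$ really is a partition of $p$ in the sense of the inductive definition, so that \textsc{loc} is legitimately applicable to transport the conversion judgment from the two one-step refinements down to $p$.
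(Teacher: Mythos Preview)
Your argument is correct and matches the paper's intended route: the paper states this result without proof, simply as a corollary to Lemma~\ref{untypedredislocal} and Corollary~\ref{eitherreduceorstuck}, and your unpacking---using Corollary~\ref{eitherreduceorstuck} for the untyped reduction and the \textsc{loc} rule along the partition $\{p(m\mapsto\zer),p(m\mapsto\one)\}$ for the conversion judgment---is exactly how one fills in the details.
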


\begin{cor}
\label{onestepredmonolocal}
Let $p\covt S$ and $q \vdash t\red u \typ A$ for all $q\in S$. We have $p\vdash t\red u \typ A$.
\end{cor}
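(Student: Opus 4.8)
\emph{Proof proposal.} The plan is to argue by induction on the derivation of $p \covt S$. In the base case $S = \{p\}$, the hypothesis already provides $p \vdash t \red u \typ A$ for the single condition $p \in S$, so there is nothing to do.

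For the inductive step we have $S = S_0 \cup S_1$ with $m \notin \dom p$, $p(m\mapsto\zer) \covt S_0$ and $p(m\mapsto\one) \covt S_1$. Since $S_0 \subseteq S$ and $S_1 \subseteq S$, the hypothesis $q \vdash t \red u \typ A$ holds in particular for every $q \in S_0$ and every $q \in S_1$ (both sets being nonempty, by an immediate induction on the partition relation, so these instances are not vacuous). Applying the induction hypothesis to the two sub-partitions gives $p(m\mapsto\zer) \vdash t \red u \typ A$ and $p(m\mapsto\one) \vdash t \red u \typ A$. The crucial point is that these two one-step reductions have the \emph{same} target $u$; hence in the dichotomy of Corollary~\ref{typedreduceorstuck} (read with $v \coloneqq u$) it is the first alternative that applies, and we obtain $p \vdash t \red u \typ A$, which is what we wanted.

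I do not expect any genuine obstacle here: the only thing to check is that the ``stuck'' alternative of Corollary~\ref{typedreduceorstuck} — namely $t$ of the shape $\mathbb{E}[\f\,\xoverline{m}]$ — is ruled out, and this is automatic precisely because the reductions along the two halves of the split agree on their right-hand side. Note also that Corollary~\ref{typedreduceorstuck} already bundles together both ingredients we need: the untyped locality of one-step reduction (Lemma~\ref{untypedredislocal}, through Corollary~\ref{eitherreduceorstuck}) giving $t \red_p u$, and the locality rule \textsc{loc} giving the judgmental equality $\vdash_p t = u \typ A$; so no separate reasoning about these is required.
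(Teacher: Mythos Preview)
Your argument is correct and follows the same induction on the partition as the paper. The only cosmetic difference is that in the inductive step the paper unpacks the two ingredients separately---Lemma~\ref{untypedredislocal} for $t\red_p u$ and the rule \textsc{loc} for $\vdash_p t=u\typ A$---whereas you invoke Corollary~\ref{typedreduceorstuck} with $v\coloneqq u$, which (as you note) already bundles both. One small remark: you do not actually need to ``rule out'' the stuck alternative; Corollary~\ref{typedreduceorstuck} is phrased as ``if $u\coloneqq v$ then \ldots; otherwise \ldots'', so the equality of targets directly triggers the first clause without any case analysis.
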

\begin{proof}
By induction on $S$. If $S\coloneqq \{p\}$ the the statement follows. Assume the statement holds for $p(m\mapsto \zer)\covt S_0$ and $p(m\mapsto \one)\covt S_1$ and let $S\coloneqq S_0 \cup S_1$. By IH, $p(m\mapsto \zer)\vdash t\red u\typ A$ and $p(m\mapsto \one)\vdash t\red u\typ A$. From Lemma~\ref{untypedredislocal}, $t\red_p u$. Since $\vdash_{p(m\mapsto \zer)} t=u\typ A$ and $\vdash_{p(m\mapsto \one)} t=u\typ A$, then $\vdash_p t=u\typ A$. Thus $p\vdash t\red u\typ A$.
\end{proof}

Note that if $q\leqslant p$ and $p\vdash t\rtred u \typ A$ then $q\vdash t\rtred u\typ A$. However if $p(m\mapsto \zer)\vdash t\rtred u\typ A$ and $p(m\mapsto \one)\vdash t\rtred u\typ A$ it is not necessarily the case that $p\vdash t\rtred u\typ A$. E.g.\ we have that $\{(m,\zer)\}\vdash \boolrec (\lambda x. \tnat)\,\xoverline{n}\,\xoverline{n}\,(\f\,\xoverline{m}) \rtred \xoverline{n}\typ \tnat$ and $\{(m,\one)\}\vdash \boolrec (\lambda x. \tnat)\,\xoverline{n}\,\xoverline{n}\,(\f\,\xoverline{m}) \rtred \xoverline{n}\typ \tnat$ but it is \emph{not} true that $\emptyset\vdash \boolrec (\lambda x. \tnat) \,\xoverline{n}\,\xoverline{n}\,(\f\,\xoverline{m}) \rtred \xoverline{n}\typ \tnat$.

For a closed term $\vdash_p t\typ A$, we say that $t$ {\em has} a $p$-whnf if $p\vdash t\rtred  u \typ A$ and $u$ is in $p$-whnf. If $u$ is canonical, respectively proper, we say that $t$ has a canonical, respectively proper, $p$-whnf.  

Since the reduction relation is deterministic we have:

\begin{lem}
\label{LEM:uniquenessofwhnf}
A term $\vdash_p t\typ A$ has at most one $p$-whnf.
\end{lem}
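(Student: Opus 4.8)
The statement to prove is Lemma~\ref{LEM:uniquenessofwhnf}: a term $\vdash_p t\typ A$ has at most one $p$-whnf, justified by the determinism of the reduction relation.

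The plan is to argue from the determinism of the one-step reduction $\red_p$ and then lift this to the reflexive-transitive closure $\rtred$. First I would establish the key auxiliary fact that $\red_p$ is deterministic: if $t\red_p u$ and $t\red_p u'$ then $u\coloneqq u'$. This goes by induction on the structure of $t$ (equivalently on the derivation of the first reduction). The primitive $\beta$ and $\iota$ rules each fire on a term of a fixed syntactic shape (an application of a $\lambda$, a projection of a pair, a recursor applied to a constructor), and the $\f$-red rule fires only on $\f\,\xoverline{k}$ with $k\in\dom p$, where the result $p(k)$ is uniquely determined since $p$ is a function. The evaluation-context rule $\mathbb{E}[e]\red_p\mathbb{E}[e']$ requires checking that the decomposition of a term as $\mathbb{E}[e]$ with $e$ a redex is unique — this is the standard "unique decomposition" property of evaluation contexts, and it holds here because the grammar for $\mathbb{E}$ is deterministic: at each node there is at most one hole position that can contain a redex (e.g. in $\mathbb{E}\,u$ the function part is evaluated, not the argument; reduction proceeds under $\Suc$ and under $\f$; the scrutinee position of each recursor is the evaluated one). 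One must also check no overlap between a head redex firing at the root and a reduction happening strictly inside a subterm — but a head redex requires the relevant subterm already to be a constructor/$\lambda$/pair, which admits no further reduction, so there is no critical pair.

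Next I would lift determinism to $\rtred$ together with the whnf condition. Recall $t$ is in $p$-whnf iff $p\vdash t\rtred u\typ A$ implies $t\coloneqq u$; since every canonical or proper form is, by the earlier remarks, irreducible-to-itself in the appropriate sense, "being a $p$-whnf" is equivalent to "$t$ has no $\red_p$-reduct". Now suppose $t\rtred u_1$ and $t\rtred u_2$ with both $u_1,u_2$ in $p$-whnf. By determinism of $\red_p$, the reduction sequence out of $t$ is unique: if $t\red_p t'$ then $t'$ is uniquely determined, and both $\rtred$-sequences must take this step (neither can stop, since $t$, having a reduct, is not a whnf — unless $t$ itself equals both $u_i$). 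Iterating, the two sequences coincide step by step until they reach a whnf, which must be the same term; hence $u_1\coloneqq u_2$. Formally this is an induction on the length of one of the two reduction sequences, using at the inductive step that $\red_p$-determinism forces the first steps to agree and that a whnf has no further step. The typed judgments $\vdash_p t=u\typ A$ attached to $\rtred$ play no role in the uniqueness argument — it is entirely about the underlying untyped reduction — so one can safely ignore them here.

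The main obstacle is the unique-decomposition property for evaluation contexts: one must verify that no term can be written as $\mathbb{E}_1[e_1]=\mathbb{E}_2[e_2]$ with $e_1,e_2$ distinct redexes (or with distinct hole positions), for the particular grammar of $\mathbb{E}$ given in the excerpt, including the clauses $\f\,\mathbb{E}$ and $\Suc\,\mathbb{E}$ which mean we evaluate inside $\f(-)$ and under the successor. This is routine but requires a careful case analysis over the top constructor of the term; once it is in hand, both the determinism of $\red_p$ and the uniqueness of the $p$-whnf follow by straightforward inductions. Everything else — that $p(k)$ is single-valued, that constructors are irreducible — is immediate from the definitions already set up.
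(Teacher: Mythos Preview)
Your proposal is correct and takes the same approach as the paper, which simply records the lemma as an immediate consequence of the determinism of the reduction relation without spelling out the argument. Your write-up just unpacks what the paper leaves implicit: unique decomposition for the given evaluation contexts yields determinism of $\red_p$, and this lifts to uniqueness of $p$-whnf by the standard induction on reduction length.
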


\begin{cor}
\label{properwhnflocal}
Let $\vdash_p t\typ A$ and $m\notin \dom{p}$. If $t$ has proper $p(m\mapsto \zer)$-whnf and a proper $p(m\mapsto \one)$-whnf then $t$ has a proper $p$-whnf. 
\end{cor}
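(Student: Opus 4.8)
The plan is to induct on the total number of reduction steps in the two hypothesised reductions. Write $p_0 \coloneqq p(m\mapsto\zer)$ and $p_1 \coloneqq p(m\mapsto\one)$, and let $u_0$ be the proper $p_0$-whnf of $t$ and $u_1$ the proper $p_1$-whnf of $t$; these are uniquely determined by Lemma~\ref{LEM:uniquenessofwhnf}. Let $n_0$, $n_1$ be the lengths (number of $\red$-steps) of the given reductions $p_0 \vdash t \rtred u_0 \typ A$ and $p_1 \vdash t \rtred u_1 \typ A$, and argue by induction on $n_0 + n_1$, the induction hypothesis being that the statement holds for any term in place of $t$ whose corresponding sum is smaller.

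The first thing to record is that a proper $p_0$-whnf is automatically a proper $p$-whnf, and likewise for $p_1$. Indeed, by monotonicity of reduction any reduction at $p$ is also a reduction at $p_0 \leqslant p$, so a term in $p_0$-whnf is in $p$-whnf; and since $\dom p \subseteq \dom{p_0}$, a term of the form $\mathbb{E}[\f\,\xoverline k]$ with $k \notin \dom{p_0}$ also has $k \notin \dom p$, while being canonical does not depend on the condition. This handles the base of the induction: if $n_0 = 0$ then $t = u_0$ is a proper $p_0$-whnf, hence a proper $p$-whnf, and $p \vdash t \rtred t \typ A$ witnesses that $t$ has a proper $p$-whnf; symmetrically if $n_1 = 0$.

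For the inductive step, suppose $n_0, n_1 \geq 1$. Using the closure rules for $\rtred$, extract the first step of each reduction: $p_0 \vdash t \red t_0 \typ A$ and $p_1 \vdash t \red t_1 \typ A$, with remainders $p_0 \vdash t_0 \rtred u_0 \typ A$ and $p_1 \vdash t_1 \rtred u_1 \typ A$ of lengths $n_0 - 1$ and $n_1 - 1$. Apply Corollary~\ref{typedreduceorstuck} to these two first steps. In one case $t$ has the form $\mathbb{E}[\f\,\xoverline m]$; since $m \notin \dom p$, this term admits no reduction at $p$, so it is itself a proper $p$-whnf and $p \vdash t \rtred t \typ A$ finishes. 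In the other case $t_0 = t_1$ and $p \vdash t \red t_0 \typ A$; from $\vdash_p t = t_0 \typ A$ we get $\vdash_p t_0 \typ A$, and $t_0$ has $u_0$ as proper $p_0$-whnf and $u_1$ as proper $p_1$-whnf via reductions of total length $(n_0 - 1) + (n_1 - 1) < n_0 + n_1$, so the induction hypothesis applied to $t_0$ yields a proper $p$-whnf $u'$ with $p \vdash t_0 \rtred u' \typ A$; prepending $p \vdash t \red t_0 \typ A$ gives $p \vdash t \rtred u' \typ A$, so $t$ has the proper $p$-whnf $u'$.

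The main work is simply setting up this induction and recognising Corollary~\ref{typedreduceorstuck} as the tool to apply to the pair of first steps; everything else is tracking the induction measure and unfolding the definitions of $\red$ and $\rtred$. The only ingredient that is not pure bookkeeping is the assertion that a term of the form $\mathbb{E}[\f\,\xoverline m]$ with $m \notin \dom p$ admits no reduction at $p$ and is therefore a $p$-whnf; this rests on the evaluation-context discipline (reduction occurs only at the hole, which here holds the irreducible term $\f\,\xoverline m$) and should already be available at this point in the development. I do not anticipate any further obstacle.
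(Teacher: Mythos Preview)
Your proposal is correct and follows essentially the same approach as the paper's proof: handle the base case by observing that a proper $p(m\mapsto b)$-whnf is automatically a proper $p$-whnf, then in the inductive step peel off the first reduction on each side, invoke Corollary~\ref{typedreduceorstuck}, and recurse. Your version is simply more explicit about the induction measure ($n_0+n_1$) and about why the base case goes through, whereas the paper compresses this into a single sentence and an unlabelled ``by induction.''
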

\begin{proof}
Let $p(m\mapsto \zer)\vdash t\rtred u\typ A$ and $p(m\mapsto \one)\vdash t\rtred v\typ A$ with $u$ in proper $p(m\mapsto \zer)$-whnf and $v$ in proper $p(m\mapsto \one)$-whnf. If $t\coloneqq u$ or $t\coloneqq v$ then $t$ is already in proper $p$-whnf. Alternatively we have reductions $p(m\mapsto \zer)\vdash t\red u_1\typ A$ and $p(m\mapsto \one)\vdash t\red v_1 \typ A$. By Corollary~\ref{typedreduceorstuck} either $t$ is in proper $p$-whnf or $u_1\coloneqq v_1$ and $p\vdash t\red u_1 \typ A$. It then follows by induction that $u_1$, and thus $t$, has a proper $p$-whnf.
\end{proof}

\subsection{Computability predicate and relation}
We define inductively a forcing relation $p\forces A$ to express that a type $A$ is computable at $p$. Mutually by recursion we define relations $p\forces a\typ A$ ($a$ computable of type $A$ at $p$), $p\forces A=B$ ($A$ and $B$ are computably equal at $p$), and $p\forces a=b \typ A$ ($a$ is computably equal to $b$ of type $A$ at $p$). 
The definition fits the generalized mutual induction-recursion schema \cite{dybjer_inductive-recursive}\footnote{However, for the canonical proof below we actually need something weaker than an
inductive-recursive definition (arbitrary fixed-point instead of {\em least} fixed-point), reflecting
the fact that the universe is defined in an open way \cite{martinloftypes}.}.

The following rules have an implicit (hidden) premise $\vdash_p A$\medskip
\begin{center}
\AxiomC{$p\vdash A \rtred \tempt$}
\LeftLabel{\scriptsize{$\lfor_\tempt$}}
\UnaryInfC{$p\forces A$}
\DisplayProof
\,
\AxiomC{$p\vdash A \rtred \tunit$}
\LeftLabel{\scriptsize{$\lfor_\tunit$}}
\UnaryInfC{$p\forces A$}
\DisplayProof
\,
\AxiomC{$p\vdash A \rtred \tbool$}
\LeftLabel{\scriptsize{$\lfor_\tbool$}}
\UnaryInfC{$p\forces A$}
\DisplayProof
\,
\AxiomC{$p\vdash A \rtred \tnat$}
\LeftLabel{\scriptsize{$\lfor_\tnat$}}
\UnaryInfC{$p\forces A$}
\DisplayProof\medskip
\,
\AxiomC{\vphantom{$p\vdash$}}
\LeftLabel{\scriptsize{$\lfor_\tuniv$}}
\UnaryInfC{$p\forces \tuniv$}
\DisplayProof

\begin{prooftree}
\alwaysNoLine
\AxiomC{$p \forces F$}
\AxiomC{$p\vdash A \rtred \Pi (x\typ F) G$}
\UnaryInfC{$\forall q \leqslant p (q\forces a\typ F\Rightarrow q\forces G[a])$}
\AxiomC{$\forall q \leqslant p (q\forces a=b\typ F\Rightarrow q\forces G[a]=G[b])$}
\alwaysSingleLine
\LeftLabel{\scriptsize{$\lfor_\Pi$}}
\TrinaryInfC{$p\forces A$}
\end{prooftree}

\begin{prooftree}
\alwaysNoLine
\AxiomC{$p \forces F$}
\AxiomC{$p\vdash A \rtred \Sigma (x\typ F) G$}
\UnaryInfC{$\forall q \leqslant p (q\forces a\typ F\Rightarrow q\forces G[a])$}
\AxiomC{$\forall q \leqslant p (q\forces a=b\typ F\Rightarrow q\forces G[a]=G[b])$}
\alwaysSingleLine
\LeftLabel{\scriptsize{$\lfor_\Sigma$}}
\TrinaryInfC{$p\forces A$}
\end{prooftree}

\AxiomC{$p\vdash A\rtred \mathbb{E}[\f\, \xoverline{k}]$}
\AxiomC{$k\notin \dom{p}$}
\AxiomC{$p(k\mapsto \zer)\forces A$} 
\AxiomC{$p(k\mapsto \one)\forces A$}
\LeftLabel{\scriptsize{$\lfor_\mathrm{Loc}$}}
\QuaternaryInfC{$p\forces A$}
\DisplayProof
\end{center}\bigskip

\begin{enumerate}
	\item Assuming $p\forces A$ by $\lfor_\tempt$
	\begin{enumerate}
		\item Assuming $p\forces B$ then $p\forces A=B$ if
		\begin{enumerate}
			\item $p\vdash B\rtred \tempt$.
			\item $p\vdash B\rtred \mathbb{E}[\f\,\xoverline{k}]$, $k\notin \dom{p}$ and $p(k\mapsto \zer)\forces A=B$ and $p(k\mapsto \one) \forces A=B$.
		\end{enumerate}
		\item $p \nforces t\typ A$ for all $t$.
		\item $p\nforces t=u\typ A$ for all $t$ and $u$.
	\end{enumerate}

	\item Assuming $p\forces A$ by $\lfor_\tunit$ 
	\begin{enumerate}
		\item Assuming $p\forces B$ then $p\forces A=B$ if
		\begin{enumerate}
			\item $p\vdash B\rtred \tunit$.
			\item $p\vdash B\rtred \mathbb{E}[\f\,\xoverline{k}]$, $k\notin \dom{p}$ and $p(k\mapsto \zer)\forces A=B$ and $p(k\mapsto \one) \forces A=B$.
		\end{enumerate}
		\item $p \forces t\typ A$ if
		\begin{enumerate}
			\item $p\vdash t\rtred \zer \typ A$
			\item $p\vdash t\rtred \mathbb{E}[\f\,\xoverline{k}]\typ A$, $k\notin\dom{p}$ and $p(k\mapsto \zer)\forces t\typ A$ and $p(k\mapsto \one) \forces t\typ A$.
		\end{enumerate}
		\item Assuming $p\forces t\typ A$ and $p\forces u\typ A$ then $p\forces t=u \typ A$ if
		\begin{enumerate}
			\item $p\vdash t\rtred \zer \typ A$ and $p\vdash u\rtred \zer \typ A$.
			\item $p\vdash t\rtred \zer \typ A$ and $p\vdash u\rtred \mathbb{E}[\f\,\xoverline{k}] \typ A, k\notin\dom{p}$ and $p(k\mapsto \zer)\forces t=u\typ A$ and $p(k\mapsto \one)\forces t= u\typ A$.
			\item $p\vdash t\rtred  \mathbb{E}[\f\,\xoverline{k}]\typ A, k\notin\dom{p}$ and $p(k\mapsto \zer)\forces t=u\typ A$ and $p(k\mapsto \one)\forces t= u\typ A$.
		\end{enumerate}
	\end{enumerate}

	\item Assuming $p\forces A$ by $\lfor_\tbool$ 
	\begin{enumerate}
		\item Assuming $p\forces B$ then $p\forces A=B$ if
		\begin{enumerate}
			\item $p\vdash B\rtred \tbool$.
			\item $p\vdash B\rtred \mathbb{E}[\f\,\xoverline{k}]$, $k\notin \dom{p}$ and $p(k\mapsto \zer)\forces A=B$ and $p(k\mapsto \one) \forces A=B$.
		\end{enumerate}
		\item $p \forces t\typ A$ if
		\begin{enumerate}
			\item $p\vdash t\rtred b \typ A$ for some $b\in \{\zer,\one\}$.
			\item $p\vdash t\rtred \mathbb{E}[\f\,\xoverline{k}]$, $k\notin\dom{p}$ and $p(k\mapsto \zer)\forces t\typ A$ and $p(k\mapsto \one) \forces t\typ A$.
		\end{enumerate}
		\item Assuming $p\forces t\typ A$ and $p\forces u\typ A$ then $p\forces t=u \typ A$ if
		\begin{enumerate}
			\item $p\vdash t\rtred b \typ A$ and $p\vdash u\rtred b \typ A$ for some $b\in \{\zer,\one\}$.
			\item $p\vdash t\rtred b \typ A$, $b\in \{\zer,\one\}$ and $p\vdash u\rtred \mathbb{E}[\f\,\xoverline{k}]\typ A, k\notin\dom{p}$ and $p(k\mapsto \zer)\forces t=u\typ A$ and $p(k\mapsto \one)\forces t= u\typ A$.
			\item $p\vdash t\rtred  \mathbb{E}[\f\,\xoverline{k}]\typ A, k\notin\dom{p}$ and $p(k\mapsto \zer)\forces t=u\typ A$ and $p(k\mapsto \one)\forces t= u\typ A$.
		\end{enumerate}
	\end{enumerate}
	
	\item Assuming $p\forces A$ by $\lfor_\tnat$ 
	\begin{enumerate}
		\item Assuming $p\forces B$ then $p\forces A=B$ if
		\begin{enumerate}
			\item $p\vdash B\rtred \tnat$.
			\item $p\vdash B\rtred \mathbb{E}[\f\,\xoverline{k}]$, $k\notin \dom{p}$ and $p(k\mapsto \zer)\forces A=B$ and $p(k\mapsto \one) \forces A=B$.
		\end{enumerate}
		\item $p \forces t\typ A$ if
		\begin{enumerate}
			\item $p\vdash t\rtred \xoverline{n} \typ A$.
			\item $p\vdash t\rtred \mathbb{E}[\f\,\xoverline{k}]$, $k\notin\dom{p}$ and $p(k\mapsto \zer)\forces t\typ A$ and $p(k\mapsto \one) \forces t\typ A$.
		\end{enumerate}
		\item Assuming $p\forces t\typ A$ and $p\forces u\typ A$ then $p\forces t=u \typ A$ if
		\begin{enumerate}
			\item $p\vdash t\rtred \xoverline{n} \typ A$ and $p\vdash u\rtred \xoverline{n} \typ A$.
			\item $p\vdash t\rtred \xoverline{n} \typ A$ and $p\vdash u\rtred \mathbb{E}[\f\,\xoverline{k}]\typ A, k\notin\dom{p}$ and $p(k\mapsto \zer)\forces t=u\typ A$ and $p(k\mapsto \one)\forces t= u\typ A$.
			\item $p\vdash t\rtred  \mathbb{E}[\f\,\xoverline{k}]\typ A, k\notin\dom{p}$ and $p(k\mapsto \zer)\forces t=u\typ A$ and $p(k\mapsto \one)\forces t= u\typ A$.
		\end{enumerate}
	\end{enumerate}

	\item Assuming $p\forces A$ by $\lfor_\Pi$ (Let $p\vdash A\rtred \Pi(x\typ F) G$). 
	\begin{enumerate}
		\item Assuming $p\forces B$ and $\vdash_p A=B$ then $p\forces A=B$ if
		\begin{enumerate}
			\item $p\vdash B\rtred \Pi(x\typ H) E$ and $p\forces F=H$ and $\forall q\leqslant p (q\forces a\typ F\Rightarrow q\forces G[a]=E[a])$.
			\item $p\vdash B\rtred \mathbb{E}[\f\,\xoverline{k}]$, $k\notin \dom{p}$ and $p(k\mapsto \zer)\forces A=B$ and $p(k\mapsto \one) \forces A=B$.
		\end{enumerate}
		\item Assuming $\vdash_p t\typ A$ then $p \forces t\typ A$ if
		\begin{enumerate}
			\item[] $\forall q\leqslant p(q\forces a\typ F \Rightarrow q\forces t\,a \typ G[a])$ and $\forall q\leqslant p (q\forces a=b\typ F\Rightarrow q\forces t\,a = t\,b \typ G[a])$. 
		\end{enumerate}
		\item Assuming $p\forces t\typ A$ and $p\forces u\typ A$ and $\vdash_p t=u\typ A$ then $p\forces t=u \typ A$ if
		\begin{enumerate}
			\item[] $\forall q \leqslant p (q\forces a\typ F \Rightarrow q\forces t\, a = u\, a \typ G[a])$
		\end{enumerate}
	\end{enumerate}

	\item Assuming $p\forces A$ by $\lfor_\Sigma$ (Let $p\vdash A\rtred \Sigma(x\typ F) G$). 
	\begin{enumerate}
		\item Assuming $p\forces B$ and $\vdash_p A=B$ then $p\forces A=B$ if
		\begin{enumerate}
			\item $p\vdash B\rtred \Sigma(x\typ H) E$ and $p\forces F=H$ and $\forall q\leqslant p (q\forces a\typ F\Rightarrow q\forces G[a]=E[a])$.
			\item $p\vdash B\rtred \mathbb{E}[\f\,\xoverline{k}]$, $k\notin \dom{p}$ and $p(k\mapsto \zer)\forces A=B$ and $p(k\mapsto \one) \forces A=B$.
		\end{enumerate}
		\item Assuming $\vdash_p t\typ A$ then $p \forces t\typ A$ if
		\begin{enumerate}
			\item[] $p\forces t\fpr \typ F$ and $p\forces t\spr \typ G[t\fpr]$
		\end{enumerate}
		\item Assuming $p\forces t\typ A$ and $p\forces u\typ A$ and $\vdash_p t=u\typ A$ then $p\forces t=u \typ A$ if
		\begin{enumerate}
			\item[] $p\forces t\fpr = u\fpr \typ F$ and $p\forces t\spr = u\spr \typ G[t\fpr]$
		\end{enumerate}
	\end{enumerate}

	\item Assuming $p\forces A$ by $\lfor_\tuniv$ (i.e.\ $A \coloneqq \tuniv$).
	\begin{enumerate}
		\item Assuming $p\forces B$ then $p\forces A=B$ if $B\coloneqq \tuniv$
		\item Assuming $\vdash_p L\typ A$ then $p \forces L\typ A$ if
		\begin{enumerate}
			\item $p\vdash L\rtred M$ with $M\in \{\tempt, \tunit, \tbool, \tnat\}$
			\item $p\vdash L\rtred \Pi(x\typ F) G$ and $p\forces F\typ A$ and\\ $\forall q\leqslant p (q\forces a\typ F \Rightarrow q\forces G[a] \typ A)$ and $\forall q\leqslant p (q\forces a=b\typ F \Rightarrow q\forces G[a]=G[b] \typ A)$.
			\item $p\vdash L\rtred \Sigma(x\typ F) G$ and $p\forces F\typ A$ and\\ $\forall q\leqslant p (q\forces a\typ F \Rightarrow q\forces G[a] \typ A)$ and $\forall q\leqslant p (q\forces a=b\typ F \Rightarrow q\forces G[a]=G[b] \typ A)$
			\item $p\vdash L \rtred \mathbb{E}[\f\,\xoverline{k}]$, $k\notin\dom{p}$ and $p(k\mapsto \zer)\forces L\typ A$ and $p(k\mapsto \one) \forces L\typ A$.
		\end{enumerate}
		\item Assuming $p\forces L\typ A$ and $p\forces P\typ A$ and $\vdash_p L=P\typ A$ then $p\forces L=P \typ A$ if
		\begin{enumerate}
			\item $p\vdash L\rtred M$ and $p\vdash P\rtred M$ for $M\in \{\tempt, \tunit, \tbool,\tnat\}$.
			\item $p\vdash L\rtred \Pi(x\typ F) G$ and $p\vdash P\rtred \Pi(x\typ H) E$ and\\ $p\forces F=H\typ A$ and $\forall q\leqslant p (q\forces a\typ F \Rightarrow q\forces G[a] = E[a] \typ A)$
			\item $p\vdash L\rtred \Sigma(x\typ F) G$ and $p\vdash P\rtred \Sigma(x\typ H) E$ and\\ $p\forces F=H\typ A$ and $\forall q\leqslant p (q\forces a\typ F \Rightarrow q\forces G[a] = E[a] \typ A)$
			\item $p\vdash L\rtred M$ with $M$ in canonical $p$-whnf 
			and $p\vdash P\rtred \mathbb{E}[\f\,\xoverline{k}], k\notin\dom{p}$ and $p(k\mapsto \zer)\forces L=P\typ A$ and $(k\mapsto \one) \forces L=P\typ A$.
			\item $p\vdash L\rtred \mathbb{E}[\f\,\xoverline{k}], k\notin\dom{p}$ and $p(k\mapsto \zer)\forces L=P\typ A$ and $(k\mapsto \one) \forces L=P\typ A$.
		\end{enumerate}
	\end{enumerate}

	\item Assuming $p\forces A$ by $\lfor_\mathrm{Loc}$ (i.e.\ $p\vdash A\rtred \mathbb{E}[\f\,\xoverline{k}], k\notin\dom{p}$ and $p(k\mapsto \zer)\forces A$ and $p(k\mapsto \one)\forces A$).
	\begin{enumerate}
		\item Assuming $p\forces B$ and $\vdash_p A= B$ then $p\forces A=B$ if $p(k\mapsto \zer)\forces A=B$ and $p(k\mapsto \one) \forces A=B$.
		\item Assuming $\vdash_p t\typ A$ then $p \forces t\typ A$ if $p(k\mapsto \zer)\forces t\typ A$ and $p(k\mapsto \one) \forces t\typ A$.
		\item Assuming $p\forces t\typ A$ and $p\forces u\typ A$ and $\vdash_p t=u\typ A$ then $p\forces t=u \typ A$ if $p(k\mapsto \zer)\forces t=u\typ A$ and $p(k\mapsto \one) \forces t=u\typ A$.
	\end{enumerate}

\end{enumerate}

Note from the definition that when $p\forces A=B$ then $p\forces A$ and $p\forces B$, when $p\forces a\typ A$ then $p\forces A$ and when $p\forces a=b \typ A$ then $p\forces a\typ A$ and $p\forces b\typ A$. It follows also from the definition that $\vdash_p J$ whenever $p\forces J$. 

The clause ($\mathrm{F_{Loc}}$) gives semantics to \emph{variable types}. For example,  if $p\coloneqq\{(0,\zer)\}$ and $q\coloneqq\{(0,\one)\}$ the type $R\coloneqq \boolrec (\lambda x. \tuniv)\,\tunit\,\tnat\,(\f\,\zer)$ has reductions $p\vdash R \rtred \tunit$ and $q\vdash R\rtred \tnat$. Thus $p\forces R$ and $q\forces R$ and since $\emptyset \covt \{p,q\}$ we have $\emptyset\forces R$. 

\begin{lem}
\label{forcedtypeslocallyreducetocanonical}
If $p\forces A$ then there is a partition $p\covt S$ where $A$ has a canonical $q$-whnf for all $q\in S$. If $p\forces A=B$ then there is a partition $p\covt S$ where $A$ and $B$ have similar canonical $q$-whnf for all $q\in S$, i.e.\ $q\vdash A\rtred A'$ and $q\vdash B\rtred B'$ where $(A',B')$ is of the form $(\tempt,\tempt)$, $(\tunit, \tunit)$, $(\tbool, \tbool)$, $(\tnat, \tnat)$, $(\tuniv, \tuniv)$, $(\Pi(x\typ F) G, \Pi(x\typ H) E)$, or $(\Sigma(x\typ F) G, \Sigma(x\typ H) E)$.
\end{lem}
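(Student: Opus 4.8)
The plan is to prove both statements simultaneously by induction on the derivations of $p\forces A$ and $p\forces A=B$, following the clause structure of the inductive-recursive definition. For the first statement, consider the last rule deriving $p\forces A$. In every case other than $\lfor_\mathrm{Loc}$ --- that is, $\lfor_\tempt$, $\lfor_\tunit$, $\lfor_\tbool$, $\lfor_\tnat$, $\lfor_\tuniv$, $\lfor_\Pi$, $\lfor_\Sigma$ --- the rule provides a reduction $p\vdash A\rtred M$ with $M$ canonical (for $\lfor_\tuniv$ one has $M\coloneqq A\coloneqq\tuniv$), and since a canonical $p$-whnf has no further reduction at $p$, the term $A$ has a canonical $p$-whnf; thus $S\coloneqq\{p\}$ already witnesses the claim. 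In the case $\lfor_\mathrm{Loc}$ we have $k\notin\dom{p}$ with $p(k\mapsto\zer)\forces A$ and $p(k\mapsto\one)\forces A$ as subderivations, so the induction hypothesis yields partitions $p(k\mapsto\zer)\covt S_0$ and $p(k\mapsto\one)\covt S_1$ such that $A$ has a canonical $q$-whnf for every $q\in S_0\cup S_1$; then $S\coloneqq S_0\cup S_1$ is a partition of $p$ by the second clause in the definition of $\covt$ (using $k\notin\dom{p}$), and it has the required property.

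For the second statement I would run the case analysis by the rule deriving the presupposed judgment $p\forces A$, which is exactly how the clauses defining $p\forces A=B$ are indexed. In each ``direct'' sub-clause the defining condition of $p\forces A=B$ already displays matching canonical reducts of $A$ and $B$ at $p$: under $\lfor_\tempt$ both reduce to $\tempt$; under $\lfor_\Pi$ one gets $p\vdash A\rtred\Pi(x\typ F)G$ and $p\vdash B\rtred\Pi(x\typ H)E$; under $\lfor_\tuniv$ one has $A\coloneqq B\coloneqq\tuniv$; and similarly for $\tunit$, $\tbool$, $\tnat$, and $\Sigma$. In each of these $S\coloneqq\{p\}$ works, and the resulting pair of reducts is one of the forms listed in the statement. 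Every remaining sub-clause is ``local'': either $p\vdash B\rtred\mathbb{E}[\f\,\xoverline{k}]$ for some $k\notin\dom{p}$ with $p(k\mapsto\zer)\forces A=B$ and $p(k\mapsto\one)\forces A=B$, or $p\forces A$ was itself derived by $\lfor_\mathrm{Loc}$ and the same two judgments hold; in either situation I would apply the induction hypothesis to those two judgments and glue the resulting partitions exactly as in the first part.

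There is no single deep obstacle here, since the argument is essentially bookkeeping against the definition, but two points need care. First, one must check that gluing $S_0$ and $S_1$ genuinely yields a partition of $p$; this is precisely the inductive clause for $\covt$, and it relies on the side condition $k\notin\dom{p}$ supplied by $\lfor_\mathrm{Loc}$ and by the local equality clauses. Second, one must see that the induction is well founded: the local sub-clauses invoke $p(k\mapsto b)\forces A$ or $p(k\mapsto b)\forces A=B$, which are genuine subderivations within the inductive-recursive definition, so induction on derivations is legitimate. Uniqueness of $p$-whnf (Lemma~\ref{LEM:uniquenessofwhnf}) is not needed for mere existence, though it confirms that the canonical $q$-whnf produced is in fact \emph{the} $q$-whnf of $A$ (resp.\ of $A$ and $B$).
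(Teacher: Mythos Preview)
Your proposal is correct and follows essentially the same approach as the paper, which simply states that ``the statement follows from the definition by induction on the derivation of $p\forces A$.'' You have spelled out in detail the case analysis that the paper leaves implicit: the canonical cases give $S\coloneqq\{p\}$ directly, and the local clauses are handled by applying the induction hypothesis at $p(k\mapsto\zer)$ and $p(k\mapsto\one)$ and gluing the resulting partitions via the second clause of~$\covt$.
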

\begin{proof}
The statement follows from the definition by induction on the derivation of $p\forces A$
\end{proof}

\begin{cor}
\label{existenceofpwhnfislocal}
Let $p\covt S$. If $q\forces A$ for all $q\in S$ then $A$ has a proper $p$-whnf. 
\end{cor}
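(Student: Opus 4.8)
The plan is to isolate a purely combinatorial ``gluing'' statement about proper whnfs along a partition, prove it by induction on the partition using Corollary~\ref{properwhnflocal}, and then feed it the fibrewise canonical whnfs provided by Lemma~\ref{forcedtypeslocallyreducetocanonical}.

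First I would prove the gluing statement: for all $p$ and $S$ with $p \covt S$, if $A$ has a proper $q$-whnf for every $q \in S$, then $A$ has a proper $p$-whnf. This goes by induction on the derivation of $p \covt S$. If $S \coloneqq \{p\}$ the conclusion \emph{is} the hypothesis. Otherwise $p \covt S_0 \cup S_1$ with $m \notin \dom{p}$, $p(m\mapsto\zer) \covt S_0$, $p(m\mapsto\one) \covt S_1$ and $S \coloneqq S_0 \cup S_1$; since $A$ has a proper $r$-whnf at each $r \in S_0$ and at each $r \in S_1$, the induction hypothesis applied to the two subderivations gives that $A$ has a proper $p(m\mapsto\zer)$-whnf and a proper $p(m\mapsto\one)$-whnf. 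Corollary~\ref{properwhnflocal} — or rather its evident counterpart for types, whose hypothesis reads $\vdash_p A$ and whose proof is identical — then yields a proper $p$-whnf for $A$; its side condition $\vdash_p A$ holds by the locality rule \textsc{loc}, since $\vdash_q A$ for every $q \in S$.

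To conclude, suppose $q \forces A$ for every $q \in S$ with $p \covt S$. Applying Lemma~\ref{forcedtypeslocallyreducetocanonical} to each $q \in S$ gives a partition $q \covt S_q$ at every member of which $A$ has a canonical, hence proper, whnf. Since partitions compose — a short induction on $p \covt S$, in the style of the remark following the definition of partition — we get $p \covt \bigcup_{q \in S} S_q$, and $A$ has a proper whnf at every member of this refinement. The gluing statement now applies and gives a proper $p$-whnf for $A$.

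The one genuinely delicate point is the inductive step above, and it rests entirely on the dichotomy behind Corollary~\ref{properwhnflocal}/Corollary~\ref{typedreduceorstuck}: at $p$, either $A$ is already a (non-canonical) proper $p$-whnf of the form $\mathbb{E}[\f\,\xoverline{m}]$, or $A$ performs the \emph{same} one-step reduction at $p(m\mapsto\zer)$ and at $p(m\mapsto\one)$ — hence at $p$ as well — and one recurses on the common reduct. Everything else (composition of partitions, the \textsc{loc} bookkeeping, and the passage from canonical to proper whnfs) is routine.
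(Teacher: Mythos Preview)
Your proof is correct and matches the paper's approach: the paper's one-line proof likewise cites Lemma~\ref{forcedtypeslocallyreducetocanonical} and Corollary~\ref{properwhnflocal} and appeals to induction on the partition. Your explicit gluing lemma, the partition-composition step, and the observation that Corollary~\ref{properwhnflocal} is stated for terms (so a type-level analogue with hypothesis $\vdash_p A$ is what is actually used) are faithful elaborations of that terse argument.
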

\begin{proof}
Follows from Lemma~\ref{forcedtypeslocallyreducetocanonical} and Corollary~\ref{properwhnflocal} by induction.
\end{proof}

\begin{lem}
	\label{LEM:BaseTypeCanonicity}
	Let $p\vdash A\rtred M$ with $M \in \{\tunit,\tbool,\tnat\}$. If $p\forces t\typ A$ then there is a partition $p\covt S$ where $t$ has a canonical $q$-whnf for all $q\in S$. If $p\forces t=u\typ A$ then there is a partition $p\covt S$ where $t$ and $u$ have the same canonical $q$-whnf for each $q\in S$.
\end{lem}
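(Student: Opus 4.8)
The plan is to proceed by induction on the derivation of the computability judgment $p\forces t\typ A$ (and, for the second assertion, of $p\forces t=u\typ A$), simply reading off the relevant defining clause from the way the hidden premise $p\forces A$ was obtained. I would first observe that, because $p\vdash A\rtred M$ with $M$ canonical, $M$ is the \emph{unique} $p$-whnf of $A$ by Lemma~\ref{LEM:uniquenessofwhnf}; consequently $p\forces A$ can only have been derived by $\lfor_\tunit$, $\lfor_\tbool$ or $\lfor_\tnat$ according to whether $M$ is $\tunit$, $\tbool$ or $\tnat$, since the premises of $\lfor_\tempt$, $\lfor_\Pi$, $\lfor_\Sigma$, $\lfor_\tuniv$ and $\lfor_\mathrm{Loc}$ all require $A$ to have a $p$-whnf different from $M$. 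Thus $p\forces t\typ A$ (resp.\ $p\forces t=u\typ A$) can only come from the matching clause among~(2),(3),(4) of the definition of the forcing relation. The second thing I would note, so that the inductive hypothesis stays applicable, is that monotonicity of $\red$ and of the typed judgments gives $p(k\mapsto b)\vdash A\rtred M$ for all $k\notin\dom p$ and $b\in\{\zer,\one\}$: the hypothesis of the lemma is preserved along the condition extensions that appear in those clauses.

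For concreteness take $M\coloneqq\tunit$. For term computability there are two subcases in clause~(2). If $p\vdash t\rtred\zer\typ A$, then $\zer$ is a canonical $p$-whnf of $t$, and the trivial partition $p\covt\{p\}$ already works. Otherwise $p\vdash t\rtred\mathbb{E}[\f\,\xoverline{k}]\typ A$ for some $k\notin\dom p$ with $p(k\mapsto\zer)\forces t\typ A$ and $p(k\mapsto\one)\forces t\typ A$; the inductive hypothesis then gives partitions $p(k\mapsto\zer)\covt S_0$ and $p(k\mapsto\one)\covt S_1$ on which $t$ reduces to a canonical whnf, and $S_0\cup S_1$ is a partition of $p$ by the second clause in the definition of $\covt$ (applicable since $k\notin\dom p$). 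For the equality statement the base subcase of clause~(2) has $t$ and $u$ both reducing to $\zer$ at $p$, so $p\covt\{p\}$ works with common canonical whnf $\zer$; in the remaining subcases there is $k\notin\dom p$ with $p(k\mapsto\zer)\forces t=u\typ A$ and $p(k\mapsto\one)\forces t=u\typ A$, and I would again glue the two partitions produced by the inductive hypothesis. The cases $M\coloneqq\tbool$ and $M\coloneqq\tnat$ are word-for-word the same, with the canonical whnfs $\zer,\one$ respectively $\xoverline{n}$ in place of $\zer$.

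I do not expect a real obstacle here—this is a direct unwinding of the definitions—but two points need attention. First, the induction only closes because the hypothesis $p\vdash A\rtred M$ is stable under the condition extensions $p(k\mapsto b)$ appearing in the $\f$-stuck clauses; this is the monotonicity observation above. Second, one must check that the pieces $S_0,S_1$ really do assemble into a partition of $p$, which is exactly clause~(2) of the definition of $\covt$ and works precisely because the number $k$ being decided lies outside $\dom p$. The base subcases need nothing beyond the fact that $\zer$, $\one$ and $\xoverline{n}$ are canonical whnfs, hence have no further reduction at any $q\leqslant p$.
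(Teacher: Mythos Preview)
Your proposal is correct and is precisely the unwinding that the paper leaves implicit: its entire proof reads ``Follows from the definition.'' Your observation that $p\vdash A\rtred M$ forces $p\forces A$ to come from $\lfor_M$ (via uniqueness of $p$-whnf), and that this hypothesis persists to $p(k\mapsto b)$ by monotonicity of $\rtred$ (not of $\forces$, which is only proved later), is exactly what is needed to make the induction on the derivation of $p\forces t\typ A$ and $p\forces t=u\typ A$ go through.
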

\begin{proof}
Follows from the definition.
\end{proof}

The rest of this section is dedicated to proving the following theorem:
\begin{thm}
\label{properties}
The following hold for the forcing relation
\begin{enumerate}
\item Monotonicity: If $q \leqslant p$ and $p\forces J$ then $q\forces J$.
\item Locality: If $p\covt S$ and $q\forces J$ for all $q\in S$ then $p\forces J$.
\item Reflexivity: If $p\forces A$ then $p\forces A=A$ and if $p\forces a\typ A$ then $p\forces a=a\typ A$.
\item Symmetry: If $p\forces A=B$ then $p\forces B=A$ and if $p\forces a=b\typ A$ then $p\forces b=a\typ A$.
\item Transitivity: If $p\forces A=B$ and $p\forces B=C$ then $p\forces A=C$ and if $p\forces a=b \typ A$ and $p\forces b=c \typ A$ then $p\forces a=c\typ A$.
\item Extensionality: If $p\forces A=B$ then if $p\forces a\typ A$ then $p\forces a\typ B$ and if $p\forces a=b\typ A$ then $p\forces a=b \typ B$.
\end{enumerate}
\end{thm}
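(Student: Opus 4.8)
The plan is to prove all six statements by a single simultaneous induction, organised in two layers: first \textbf{Monotonicity}, then \textbf{Locality} (which uses Monotonicity), and finally \textbf{Reflexivity}, \textbf{Symmetry}, \textbf{Transitivity} and \textbf{Extensionality} together, each freely invoking the already-established structural properties and using uniqueness of $p$-whnf (Lemma~\ref{LEM:uniquenessofwhnf}) and Lemma~\ref{forcedtypeslocallyreducetocanonical} to line up cases clause by clause.

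For \textbf{Monotonicity} I would induct on the derivation of $p\forces A$; since $p\forces a\typ A$, $p\forces A=B$ and $p\forces a=b\typ A$ are all defined by recursion over this derivation, each is handled within the same induction. For the base clauses $\lfor_\tempt,\dots,\lfor_\tnat,\lfor_\tuniv$ one only needs monotonicity of $\red$ and of the underlying judgments $\vdash_q$, both recorded in the text. For $\lfor_\Pi$ and $\lfor_\Sigma$ one additionally uses that a quantifier $\forall r\leqslant q(\cdots)$ is a restriction of $\forall r\leqslant p(\cdots)$, together with the inductive hypothesis on $F$. The only delicate clause is $\lfor_\mathrm{Loc}$, with $p\vdash A\rtred\mathbb{E}[\f\,\xoverline{k}]$ and $k\notin\dom{p}$: given $q\leqslant p$, if $k\in\dom{q}$ then $q\leqslant p(k\mapsto q(k))$ and the inductive hypothesis on that strictly smaller subderivation gives $q\forces A$, while if $k\notin\dom{q}$ then $q(k\mapsto\zer)\leqslant p(k\mapsto\zer)$ and $q(k\mapsto\one)\leqslant p(k\mapsto\one)$, so the inductive hypothesis plus monotonicity of $\red$ let us reapply $\lfor_\mathrm{Loc}$ at $q$.

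For \textbf{Locality} it suffices, by induction on the partition $p\covt S$, to treat the binary step: $m\notin\dom{p}$, $p(m\mapsto\zer)\forces J$ and $p(m\mapsto\one)\forces J$, conclude $p\forces J$. Here I would first determine the $p(m\mapsto\zer)$-whnf (and, by Lemma~\ref{LEM:uniquenessofwhnf} and monotonicity of $\red$, the $p(m\mapsto\one)$-whnf) of the relevant subject. If the subject is already stuck at $p$, i.e.\ of the form $\mathbb{E}[\f\,\xoverline{m}]$, then $\lfor_\mathrm{Loc}$ applies to $p$ directly, its two side premises being exactly the given data. Otherwise, by Corollary~\ref{typedreduceorstuck} the two one-step reducts agree, so by Corollary~\ref{properwhnflocal} the subject has a canonical $p$-whnf $W$; by uniqueness, both $p(m\mapsto\zer)\forces J$ and $p(m\mapsto\one)\forces J$ are obtained through the clause attached to $W$, and one reassembles that clause at $p$ using the inductive hypothesis (binary locality for the strictly smaller derivations of the components $F$ and the fibres $G[a]$) together with Monotonicity to relativise the $\forall q\leqslant p$ side conditions, splitting each such $q$ on membership of $m$ exactly as in the Monotonicity argument. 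Lemma~\ref{LEM:BaseTypeCanonicity} and Corollaries~\ref{onestepredmonolocal},~\ref{existenceofpwhnfislocal} serve as the bookkeeping devices for these whnf manipulations.

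Finally, \textbf{Reflexivity}, \textbf{Symmetry}, \textbf{Transitivity} and \textbf{Extensionality} are proved together by induction on the derivation of the principal computability judgment in the hypothesis (the derivation of $p\forces A$ for reflexivity and extensionality of types, the derivation witnessing $p\forces B$ for transitivity viewed as composition through the middle type, and the analogous term-level derivations). In each case one reads off the whnf of the subject, uses Lemma~\ref{LEM:uniquenessofwhnf} and Lemma~\ref{forcedtypeslocallyreducetocanonical} to force the remaining derivations into the matching clause, and then either closes the corresponding equality clause directly (base types) or recurses on $F$ and on the fibres $G[a]$ (for $\lfor_\Pi$, $\lfor_\Sigma$, $\lfor_\tuniv$). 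I expect the main obstacle to be \emph{transitivity in the presence of $\lfor_\mathrm{Loc}$}: an equality $p\forces A=B$ may be certified by a case split on some $k_1\notin\dom{p}$ arising from a stuck whnf, while $p\forces B=C$ is certified by a split on a different $k_2$, so there is no single clause to align. I would resolve this by passing to a common refinement of the two partitions, using Monotonicity to push both equalities down to each condition of the refinement, applying the inductive hypothesis there, and using the already-proved Locality to glue the results back up to $p$. The same refinement technique handles the asymmetric ``one side canonical, one side stuck'' sub-clauses in the definition of $=$ (for instance in the universe case), while the $\lfor_\Pi$-case of Extensionality additionally uses Symmetry and Transitivity of the smaller instances to transport $q\forces a\typ F$ across $p\forces F=H$ before applying the inductive hypothesis to the fibres.
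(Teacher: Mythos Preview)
Your plan matches the paper's structure closely: Monotonicity first (Lemmas~\ref{LEM:typeIntroMonotone}--\ref{LEM:termEqMonotone}), then binary Locality (Lemmas~\ref{LEM:localitysublemma1}--\ref{LEM:termIntroEqLocal}), then Extensionality and the PER properties, with the refine-then-glue technique for mismatched $\lfor_\mathrm{Loc}$ splits used exactly as you describe for Transitivity. One organisational difference: the paper proves Extensionality \emph{before} the PER properties, stated bidirectionally, so that in the $\lfor_\Pi$ case one can transport $q\forces a\typ H$ back to $q\forces a\typ F$ by the inductive hypothesis alone; your plan to use Symmetry on the strictly smaller instance $p\forces F=H$ works equally well.

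There is, however, a missing case in your Locality sketch. Your dichotomy ``stuck at index $m$'' versus ``has canonical $p$-whnf'' is incomplete: Corollary~\ref{properwhnflocal} only yields a \emph{proper} $p$-whnf, and the subject may be stuck at some $k\neq m$ with $k\notin\dom{p}$. In that case the side premises required by $\lfor_\mathrm{Loc}$ at $p$ are $p(k\mapsto b)\forces A$, not the given $p(m\mapsto b)\forces A$, so you cannot close ``directly'' as you claim. The paper (Lemma~\ref{LEM:typeIntroLocal}, final case, and analogously in Lemma~\ref{LEM:termIntroEqLocal}) handles this by observing that both $p(m\mapsto\zer)\forces A$ and $p(m\mapsto\one)\forces A$ must themselves be by $\lfor_\mathrm{Loc}$ at index $k$, extracting the four subderivations at $p(m\mapsto b)(k\mapsto c)$, applying the inductive hypothesis with respect to $m$ to obtain $p(k\mapsto\zer)\forces A$ and $p(k\mapsto\one)\forces A$, and only then applying $\lfor_\mathrm{Loc}$ at $k$.
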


In the premise of any forcing $p\forces J$ there are a number of typing judgments. Since the type system satisfy the properties listed in Theorem~\ref{properties} we will largely ignore these typing judgments in the proofs.

\begin{lem}\label{LEM:typeIntroMonotone}
If $p\forces A$ and $q\leqslant p$ then $q\forces A$.
\end{lem}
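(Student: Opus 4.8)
The plan is to argue by induction on the derivation of $p\forces A$, with a case analysis on the last clause applied. In every case the implicit premise $\vdash_p A$ gives $\vdash_q A$ by monotonicity of the typing judgment, so the only remaining task is to re-establish, at $q$, the semantic data demanded by the relevant clause.

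The leaf cases are immediate. If the derivation ends with one of $\lfor_\tempt,\lfor_\tunit,\lfor_\tbool,\lfor_\tnat$, then $p\vdash A\rtred M$ for the corresponding base type $M$, and since $q\leqslant p$ implies $q\vdash A\rtred M$ by monotonicity of $\rtred$, the same clause applies at $q$. The clause $\lfor_\tuniv$ is trivial: there $A\coloneqq\tuniv$ and $q\forces\tuniv$ holds unconditionally. For $\lfor_\Pi$ (and symmetrically $\lfor_\Sigma$) we have $p\forces F$, $p\vdash A\rtred\Pi(x\typ F)G$, and the two universally quantified conditions ranging over $r\leqslant p$; the induction hypothesis gives $q\forces F$, monotonicity of $\rtred$ gives $q\vdash A\rtred\Pi(x\typ F)G$, and since every $r\leqslant q$ is in particular $\leqslant p$ the two quantified conditions restrict verbatim, so $\lfor_\Pi$ applies at $q$.

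The one case requiring genuine work is $\lfor_\mathrm{Loc}$: there $p\vdash A\rtred\mathbb{E}[\f\,\xoverline{k}]$ with $k\notin\dom{p}$, and the derivation has immediate subderivations of $p(k\mapsto\zer)\forces A$ and $p(k\mapsto\one)\forces A$. I split according to whether $k$ has been decided in $q$. If $k\in\dom{q}$, say $q(k)=b$ with $b\in\{\zer,\one\}$, then $p\cup\{(k,b)\}\subseteq q$, i.e.\ $q\leqslant p(k\mapsto b)$, and applying the induction hypothesis to the subderivation of $p(k\mapsto b)\forces A$ yields $q\forces A$ directly. If instead $k\notin\dom{q}$, then $q\vdash A\rtred\mathbb{E}[\f\,\xoverline{k}]$ by monotonicity of $\rtred$, while $q(k\mapsto\zer)\leqslant p(k\mapsto\zer)$ and $q(k\mapsto\one)\leqslant p(k\mapsto\one)$; the induction hypothesis applied to the two subderivations gives $q(k\mapsto\zer)\forces A$ and $q(k\mapsto\one)\forces A$, so $\lfor_\mathrm{Loc}$ reapplies at $q$. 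The main obstacle is exactly this last case: one must notice that refining $p$ can turn the stuck ``variable type'' $\mathbb{E}[\f\,\xoverline{k}]$ into a further reducible term at $q$, and that the correct move is then to descend into whichever half $p(k\mapsto\zer)$ or $p(k\mapsto\one)$ the condition $q$ already refines and invoke the induction hypothesis there, rather than attempting to reapply the locality clause at $q$.
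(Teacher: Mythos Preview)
Your proof is correct and follows essentially the same approach as the paper: induction on the derivation of $p\forces A$, with the base-type and $\lfor_\tuniv$ cases handled by monotonicity of $\rtred$ (or trivially), the $\lfor_\Pi$/$\lfor_\Sigma$ cases by the induction hypothesis on $p\forces F$ together with the observation that quantification over $r\leqslant p$ restricts to $r\leqslant q$, and the $\lfor_\mathrm{Loc}$ case by the same two-way split on whether $k\in\dom{q}$. Your explicit remark that in the $k\in\dom{q}$ subcase one must descend into the appropriate subderivation $p(k\mapsto b)\forces A$ rather than reapply $\lfor_\mathrm{Loc}$ matches the paper's reasoning exactly.
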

\begin{proof}
Let $p\forces A$ and $q\leqslant p$. By induction on the derivation of $p\forces A$ 
	\begin{enumerate}
		\item (Derivation by $\for_\tnat$)
		 Let $p\vdash A\rtred \tnat$. Since the reduction is monotone $q\vdash A\rtred \tnat$, thus $q\forces A$. \\
		 The statement follows similarly when $p\forces A$ holds by $\lfor_\tempt$, $\lfor_\tunit$, $\lfor_\tbool$, $\lfor_\tnat$ or $\lfor_\tuniv$.
		\item\label{typeIntroMonotonePi} (Derivation by $\lfor_\Pi$.)
		 Let $p\vdash A\rtred\Pi(x\typ F) G$. From the premise $p\forces F$, by IH, it follows that $q\forces F$. From $\forall r\leqslant p (r\forces a\typ F\Rightarrow r\forces G[a])$ and $\forall r\leqslant p (r\forces a=b\typ F\Rightarrow r\forces G[a]=G[b])$ it follows directly that $\forall s\leqslant q (s\forces a\typ F\Rightarrow s\forces G[a])$ and $\forall s\leqslant q (s\forces a=b\typ F\Rightarrow s\forces G[a]=G[b])$. Hence $q\forces A$.\\
		 The statement follows similarly when $p\forces A$ holds by $\lfor_\Sigma$
		\item (Derivation by $\lfor_\mathrm{Loc}$.) 
		 Let $p\vdash A\rtred \mathbb{E}[\f\,\xoverline{m}], m\notin \dom{p}$. If $m\in \dom{q}$ then $q\leqslant p(m\mapsto b)$ for some $b\in \{\zer,\one\}$. Since $p(m\mapsto b)\forces A$ with a derivation strictly smaller than the derivation of $p\forces A$ then by IH $q\forces A$. Alternatively, $q\vdash A\rtred \mathbb{E}[\f\,\xoverline{m}]$ but then $q(m\mapsto b)\leqslant p(m\mapsto b)$. By IH we have $q(m\mapsto \zer)\forces A$ and  $q(m\mapsto \one)\forces A$ and thus $q\forces A$.\qedhere
	\end{enumerate}
\end{proof}

\begin{lem}\label{LEM:typeEqMonotone}
If $p\forces A=B$ and $q\leqslant p$ then $q\forces A=B$.
\end{lem}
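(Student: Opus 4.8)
The plan is to argue exactly as in the proof of Lemma~\ref{LEM:typeIntroMonotone}, now by induction on the derivation of $p\forces A=B$. Recall that this judgment is defined by cases on the last rule used to derive $p\forces A$, and that in each case the recursive premises — those of the form $p(k\mapsto b)\forces A=B$, and $p\forces F=H$ in the $\lfor_\Pi$ and $\lfor_\Sigma$ cases — are strict sub-derivations, so the induction hypothesis will be available on them. Throughout I will use freely that $\red$ is monotone (so $p\vdash A\rtred M$ gives $q\vdash A\rtred M$), that $\rtred$ and the underlying judgments $\vdash_p$ are monotone (both noted earlier), and Lemma~\ref{LEM:typeIntroMonotone} to transport the type-formation side conditions ($p\forces A$, $p\forces B$, $p\forces F$, and so on) from $p$ down to $q$.

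First I would treat the non-local clauses. If $p\forces A$ is derived by $\lfor_\tempt$ (and symmetrically for $\lfor_\tunit$, $\lfor_\tbool$, $\lfor_\tnat$) and $p\forces A=B$ holds because $p\vdash B\rtred\tempt$, then $q\vdash A\rtred\tempt$ and $q\vdash B\rtred\tempt$ by monotone reduction, so $q\forces A$ and $q\forces B$ by $\lfor_\tempt$, and hence $q\forces A=B$ by the same clause at $q$. If $p\forces A$ is by $\lfor_\Pi$ with $p\vdash A\rtred\Pi(x\typ F)G$ and $p\forces A=B$ holds because $p\vdash B\rtred\Pi(x\typ H)E$, $p\forces F=H$, and $\forall r\leqslant p\,(r\forces a\typ F\Rightarrow r\forces G[a]=E[a])$, then monotone reduction gives $q\vdash A\rtred\Pi(x\typ F)G$ and $q\vdash B\rtred\Pi(x\typ H)E$; the induction hypothesis gives $q\forces F=H$; the universally quantified condition persists verbatim, since every $s\leqslant q$ is an $r\leqslant p$; and $q\forces A$, $q\forces B$, $\vdash_q A=B$ follow from Lemma~\ref{LEM:typeIntroMonotone} and monotonicity. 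Hence $q\forces A=B$. The $\lfor_\Sigma$ case is identical, and the $\lfor_\tuniv$ case, where $A\coloneqq B\coloneqq\tuniv$, is immediate.

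Then I would turn to the local clauses: the $\lfor_\mathrm{Loc}$ case for $p\forces A$, together with the sub-clauses ``$p\vdash B\rtred\mathbb{E}[\f\,\xoverline{k}]$, $k\notin\dom{p}$, and $p(k\mapsto\zer)\forces A=B$ and $p(k\mapsto\one)\forces A=B$'' occurring in the base-type and $\lfor_\Pi$/$\lfor_\Sigma$ cases. These are handled precisely as the $\lfor_\mathrm{Loc}$ step of Lemma~\ref{LEM:typeIntroMonotone}. If $k\in\dom{q}$, then $q\leqslant p(k\mapsto b)$ for some $b\in\{\zer,\one\}$, and applying the induction hypothesis to the strict sub-derivation $p(k\mapsto b)\forces A=B$ yields $q\forces A=B$. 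If $k\notin\dom{q}$, then by monotonicity of $\rtred$ the stuck term — $A$ in the $\lfor_\mathrm{Loc}$ case, $B$ in the sub-clauses — still reduces at $q$ to $\mathbb{E}[\f\,\xoverline{k}]$ with $k\notin\dom{q}$, and $q(k\mapsto b)\leqslant p(k\mapsto b)$ for $b\in\{\zer,\one\}$, so the induction hypothesis gives $q(k\mapsto\zer)\forces A=B$ and $q(k\mapsto\one)\forces A=B$; combining these with $q\forces A$, $q\forces B$ (Lemma~\ref{LEM:typeIntroMonotone}) and $\vdash_q A=B$ (monotonicity) we obtain $q\forces A=B$ via the corresponding clause at $q$. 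I do not expect a real obstacle beyond the bookkeeping, which is the same as in Lemma~\ref{LEM:typeIntroMonotone}: one must keep track that descending from $p$ to $p(k\mapsto b)$ strictly shrinks the derivation of $p\forces A=B$, so that the induction hypothesis is genuinely available at each local step, and that monotonicity of $\red$ and $\rtred$ suffices to carry all reductions down to $q$. All remaining content is routine propagation of side conditions, already covered by Lemmas~\ref{LEM:extension} and~\ref{LEM:typeIntroMonotone}.
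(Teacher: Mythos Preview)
Your proposal is correct and follows essentially the same route as the paper: case analysis on the clause defining $p\forces A=B$ (which the paper organizes as an outer induction on $p\forces A$ and an inner induction on $p\forces A=B$), using monotonicity of $\rtred$ for the canonical sub-cases, the induction hypothesis on $p\forces F=H$ in the $\lfor_\Pi/\lfor_\Sigma$ cases, and the same $k\in\dom{q}$ versus $k\notin\dom{q}$ split for the local clauses. The only cosmetic difference is that you group the local sub-clauses together at the end, whereas the paper treats them inline within each case.
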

\begin{proof}
Let $p\forces A=B$ and $q\leqslant p$. We have then that $p\forces A$ and $p\forces B$. By Lemma~\ref{LEM:typeIntroMonotone} we have that $q\forces A$ and $q\forces B$. By induction on the derivation of $p\forces A$
\begin{enumerate}
	 \item\label{floctypeq} Let $p\forces A$ by $\lfor_\mathrm{Loc}$, i.e.\ $p\vdash A\rtred \mathbb{E}[\f\,\xoverline{k}], k\notin\dom{p}$ and $p(k\mapsto \zer)\forces A=B$ and $p(k\mapsto \one)\forces A=B$.  By induction on the derivation of $p\forces A=B$. If $k\in\dom{q}$ then $q\leqslant p(k\mapsto b)$ for some $b\in \{\zer,\one\}$. Since the derivation of $p(k\mapsto b)\forces A=B$ is strictly smaller than that of $p \forces A=B$,  by IH $q\forces A=B$. Otherwise, $k\notin \dom{q}$ and $q\vdash A\rtred \mathbb{E}[\f\,\xoverline{k}]$ and since $q(k\mapsto b)\leqslant p(k\mapsto b)$, by IH, $q(k\mapsto \zer)\forces A=B$ and $q(k\mapsto \one)\forces A=B$. By the definition $q\forces A=B$.  
	\item Let $p\forces A$ by $\lfor_\tnat$ (i.e.\ $p\vdash A\rtred \tnat$). By induction on the derivation of $p\forces A=B$ 
	\begin{enumerate}
		\item Let $p\vdash B\rtred \tnat$. We have directly that $q\forces A=B$ by monotonicity of the reduction.
		\item Let $p\vdash B\rtred \mathbb{E}[\f\,\xoverline{k}], k\notin\dom{p}$ and $p(k\mapsto \zer)\forces A=B$ and $p(k\mapsto \one)\forces A=B$. The statement then follows similarly to (\ref{floctypeq}). 
	\end{enumerate}
	The statement follows similarly when $p\forces A$ holds by $\lfor_\tempt, \lfor_\tunit, \lfor_\tbool$.
	\item Let $p\vdash A\rtred \Pi(x\typ F) G$. By induction on the derivation of $p\forces A=B$ 
	\begin{enumerate}
		\item Let $p\vdash B\rtred \Pi(x\typ H E)$ and $p\forces F=H$ and $\forall r\leqslant p(r\forces a\typ F\Rightarrow r\forces G[a]=E[a])$. By IH $q\forces F=H$. Directly we have $\forall s\leqslant q(s\forces a\typ F\Rightarrow s\forces G[a]=E[a])$. Thus $q\forces A=B$.
		\item Let $p\vdash B\rtred \mathbb{E}[\f\,\xoverline{k}], k\notin\dom{p}$ and $p(k\mapsto \zer)\forces A=B$ and $p(k\mapsto \one)\forces A=B$. The statement then follows similarly to (\ref{floctypeq}). 
	\end{enumerate}
	The statement follows similarly when $p\forces A$ holds by $\lfor_\Sigma$.
\item (Derivation by $\lfor_\tuniv$) We have then that $B\coloneqq \tuniv$ and thus $q\forces A=B$.\qedhere
\end{enumerate}
\end{proof}

\begin{lem}\label{LEM:termIntroMonotone}
If $p\forces t\typ A$ and $q\leqslant p$ then $q\forces t\typ A$.
\end{lem}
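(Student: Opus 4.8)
The plan is to prove the statement by induction on the derivation of $p \forces A$, following the same pattern as Lemma~\ref{LEM:typeIntroMonotone} and Lemma~\ref{LEM:typeEqMonotone}. Since $p \forces t \typ A$ entails $p \forces A$, Lemma~\ref{LEM:typeIntroMonotone} already gives $q \forces A$ for the given $q \leqslant p$, so only the computability of $t$ at $q$ remains to be checked. Moreover, in each case the reduction $p \vdash A \rtred W$ used to form the derivation of $p \forces A$ transports to $q \vdash A \rtred W$ by monotonicity of $\red$, so the derivation of $q \forces A$ has the same head shape (here one uses that, by Lemma~\ref{LEM:uniquenessofwhnf}, the relevant clause is determined by the $q$-whnf of $A$), and it suffices to verify the defining clause for $q \forces t \typ A$ attached to it.

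Then I would go case by case on how $p \forces A$ was derived. For the base types ($\lfor_\tunit$, $\lfor_\tbool$, $\lfor_\tnat$; the $\lfor_\tempt$ case is vacuous): if $p \forces t \typ A$ holds because $p \vdash t \rtred c \typ A$ for a canonical $c$ (one of $\zer$, some $b \in \{\zer,\one\}$, or $\xoverline{n}$), then monotonicity of $\red$ gives $q \vdash t \rtred c \typ A$, hence $q \forces t \typ A$; the only other possibility is the local clause, treated below. For $\lfor_\Pi$ the conclusion is immediate without the induction hypothesis: $q \forces t \typ A$ requires the two conditions quantified over all $s \leqslant q$, and since $q \leqslant p$ every such $s$ is an $r \leqslant p$, so the conditions for $p$ specialize directly. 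For $\lfor_\Sigma$, $p \forces t \typ A$ unfolds to $p \forces t\fpr \typ F$ and $p \forces t\spr \typ G[t\fpr]$; I would apply the induction hypothesis to the sub-derivation witnessing $p \forces F$ and to the sub-derivation witnessing $p \forces G[t\fpr]$ (the latter obtained from the $\lfor_\Sigma$ premise using $p \forces t\fpr \typ F$), obtaining $q \forces t\fpr \typ F$ and $q \forces t\spr \typ G[t\fpr]$, hence $q \forces t \typ A$. For $\lfor_\tuniv$ one performs a secondary induction on the derivation of $p \forces t \typ \tuniv$: the cases where $t$ reduces to $\tempt, \tunit, \tbool, \tnat$ use monotonicity of $\red$, the $\Pi$- and $\Sigma$-cases combine the secondary induction hypothesis on the head subtype with the same ``$\forall r \leqslant p \Rightarrow \forall s \leqslant q$'' observation, and the local case is treated below.

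The local case is $\lfor_\mathrm{Loc}$ together with the local sub-clauses met above: $p \vdash A \rtred \mathbb{E}[\f\,\xoverline{k}]$ with $k \notin \dom{p}$, and $p(k \mapsto b) \forces t \typ A$ for $b \in \{\zer,\one\}$ via derivations whose underlying type derivations $p(k\mapsto b) \forces A$ are strictly smaller than the one for $p \forces A$. I split on whether $k \in \dom{q}$. If $k \in \dom{q}$, put $b = q(k)$; then $q \leqslant p(k \mapsto b)$ and, since the derivation of $p(k \mapsto b) \forces A$ is strictly smaller, the induction hypothesis gives $q \forces t \typ A$. If $k \notin \dom{q}$, then $q \vdash A \rtred \mathbb{E}[\f\,\xoverline{k}]$ by monotonicity and $q(k \mapsto b) \leqslant p(k \mapsto b)$ for both $b$, so the induction hypothesis yields $q(k \mapsto \zer) \forces t \typ A$ and $q(k \mapsto \one) \forces t \typ A$, which together with $q \forces A$ is exactly the local clause for $q \forces t \typ A$. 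The main obstacle is the bookkeeping in this last case: one must see that the induction measure — the derivation of $p \forces A$ — genuinely decreases along the $\lfor_\mathrm{Loc}$ branches even though the conditions $p(k\mapsto b)$ carry more information than $p$, and one must be careful that ``$q \forces t \typ A$'' is always relative to some derivation of $q \forces A$, using Lemma~\ref{LEM:uniquenessofwhnf} to know that the derivation of $q \forces A$ supplied by Lemma~\ref{LEM:typeIntroMonotone} has the matching shape so that the clause we verify is the correct one; the rest is routine transport along monotone reduction and transitivity of $\leqslant$.
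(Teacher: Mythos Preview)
Your overall approach matches the paper's: outer induction on the derivation of $p\forces A$, using Lemma~\ref{LEM:typeIntroMonotone} to obtain $q\forces A$, with $\lfor_\Pi$ direct and $\lfor_\Sigma$ handled via the induction hypothesis on the component types.

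There is, however, a bookkeeping error in your unified ``local case'' paragraph, where you conflate two distinct situations. In the genuine $\lfor_\mathrm{Loc}$ case it is indeed $A$ that reduces to $\mathbb{E}[\f\,\xoverline{k}]$, and the premises $p(k\mapsto b)\forces A$ are structurally smaller sub-derivations of $p\forces A$, so the outer induction hypothesis applies as you describe. But the term-level local sub-clauses for the base types (and for $\lfor_\tuniv$) have $t$, not $A$, reducing to $\mathbb{E}[\f\,\xoverline{k}]$, while $A$ still reduces to its canonical head. In that situation $p(k\mapsto b)\forces A$ is \emph{not} a smaller derivation than $p\forces A$ --- it is the same one-step $\lfor_\tnat$ (or $\lfor_\tunit$, $\lfor_\tbool$, $\lfor_\tuniv$) derivation --- so the outer induction hypothesis on $p\forces A$ is unavailable, contrary to what your paragraph asserts.

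The paper handles this by running, inside each base-type case, an \emph{inner} induction on the derivation of $p\forces t\typ A$; the sub-clauses $p(k\mapsto b)\forces t\typ A$ are structurally smaller for \emph{that} measure. You correctly set up such a secondary induction for $\lfor_\tuniv$, but then defer its local case to the wrong paragraph; for the base types you omit it altogether. The actual argument you need is exactly the one you wrote --- split on whether $k\in\dom{q}$, use $q\leqslant p(k\mapsto b)$ or $q(k\mapsto b)\leqslant p(k\mapsto b)$ --- but justified by the inner induction on $p\forces t\typ A$ rather than the outer one on $p\forces A$.
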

\begin{proof}
Let $p\forces t\typ A$ and $q\leqslant p$. From the definition we have that $p\forces A$. From Lemma \ref{LEM:typeIntroMonotone} we have that $q\forces A$. By induction on the derivation of $p\forces A$.
		\begin{enumerate}
			\item Let $p\forces A$ by $\lfor_\tnat$.  By induction on the derivation of $p\forces t\typ A$. 
			\begin{enumerate}
				\item Let $p\vdash t\rtred \xoverline{n}\typ A$. Then $q\vdash t\rtred \xoverline{n}\typ A$ and $q\forces t\typ A$.
				\item Let $p\vdash t\rtred \mathbb{E}[\f\,\xoverline{k}]\typ A, k\notin\dom{p}$ and $p(k\mapsto \zer)\forces t\typ A$ and $p(k\mapsto \one)\forces t\typ A$. If $k\in \dom{q}$ then $q\leqslant p(k\mapsto b)$ for some $b\in \{\zer,\one\}$, since the derivation of $p(k\mapsto b) \forces t\typ A$ is strictly smaller than the derivation $p\forces t\typ A$, by IH,  $q\forces t\typ A$. Otherwise, $k\notin\dom{q}$ and $q\vdash t\rtred \mathbb{E}[\f\,\xoverline{k}] \typ A$. But $q(k\mapsto b) \leqslant p(k\mapsto b)$ and by IH $q(k\mapsto \zer)\forces t\typ A$ and $q(k\mapsto \one)\forces t\typ A$. By the definition $q\forces t\typ A$. 
			\end{enumerate}
			The statement follows similarly when $p\forces A$ is derived by $\lfor_\tunit$ or $\lfor_\tbool$.
			
			\item Let $p\forces A$ by $\lfor_\tuniv$. The statement follows similarly to Lemma~\ref{LEM:typeIntroMonotone}.
			\item Let $p\forces A$ by $\lfor_\Pi$ and let $p\vdash A\rtred \Pi(x\typ F) G$. From $\forall r\leqslant p (r\forces a\typ F \Rightarrow r\forces t\,a \typ G[a])$ and $\forall r\leqslant p (r\forces a=b\typ F\Rightarrow r\forces t\,a=t\,b\typ G[a])$ we have directly that $\forall s\leqslant q (s\forces a\typ F \Rightarrow s\forces t\,a \typ G[a])$ and $\forall s\leqslant q (s\forces a=b\typ F\Rightarrow s\forces t\,a=t\,b\typ G[a])$. Thus $q\forces t\typ A$. 
			\item Let $p\forces A$ by $\lfor_\Sigma$ and let $p\vdash A\rtred \Sigma(x\typ F) G$. By induction on the derivation of $p\forces t\typ A$. From $p\forces t\fpr \typ F$ and $p\forces t\spr \typ G[t\fpr]$, by IH, $q\forces t\fpr \typ F$ and $q\forces t\spr \typ G[t\fpr]$, thus $q\forces t\typ A$.
			\item Let $p\forces A$ by $\lfor_\mathrm{Loc}$ and let $p\vdash A\rtred \mathbb{E}[\f\,\xoverline{k}], k\notin \dom {p}$ and $p(k\mapsto \zer)\forces A$ and $p(k\mapsto \one)\forces A$. By induction on the derivation of $p\forces t\typ A$. If $k\in\dom{q}$ then $q\leqslant p(k\mapsto b)$ for some $b\in \{\zer,\one\}$ from the premise $p(k\mapsto \zer)\forces t\typ A$ and $p(k\mapsto \one)\forces t\typ A$, by IH, $q\forces t\typ A$. Otherwise, $k\notin\dom{q}$ and $q\vdash A\rtred \mathbb{E}[\f\,\xoverline{k}]$ (i.e.\ $q\forces A$ by $\lfor_\mathrm{Loc}$). Since $q(k\mapsto b)\leqslant p(k\mapsto b)$, by the IH, $q(k\mapsto \zer)\forces t\typ A$ and $q(k\mapsto \one)\forces t\typ A$. By the definition $q\forces t\typ A$. \qedhere
		\end{enumerate}
\end{proof}

\begin{lem}\label{LEM:termEqMonotone}
 If $p\forces t=u\typ A$ and $q\leqslant p$ then $q\forces t=u\typ A$.
\end{lem}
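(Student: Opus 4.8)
The plan is to follow exactly the pattern of Lemmas~\ref{LEM:typeIntroMonotone}--\ref{LEM:termIntroMonotone}. Fix $p\forces t=u\typ A$ and $q\leqslant p$, and argue by induction on the derivation of $p\forces A$, with a secondary induction on the derivation of $p\forces t=u\typ A$ in those cases whose defining clause requires a case split (the base-type cases, $\lfor_\mathrm{Loc}$, and the universe case). Before the case analysis I would discharge the obligations common to every clause: $q\forces A$ holds by Lemma~\ref{LEM:typeIntroMonotone}, $q\forces t\typ A$ and $q\forces u\typ A$ hold by Lemma~\ref{LEM:termIntroMonotone}, and $\vdash_q t=u\typ A$ holds by monotonicity of the type system; moreover any reduction $p\vdash v\rtred w\typ C$ appearing in a premise transfers to $q$, since $\red$ is monotone and a canonical $p$-whnf is still a canonical $q$-whnf.

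For $p\forces A$ derived by one of $\lfor_\tempt$, $\lfor_\tunit$, $\lfor_\tbool$, $\lfor_\tnat$ (the first being vacuous, as nothing is forced equal at the empty type) the secondary induction splits on how $p\forces t=u\typ A$ was obtained. If $t$ and $u$ both reduce at $p$ to the same canonical whnf, monotonicity of $\red$ gives the identical reductions at $q$ and we conclude. If one side is stuck, say $p\vdash u\rtred\mathbb{E}[\f\,\xoverline{k}]\typ A$ with $k\notin\dom{p}$, $p(k\mapsto\zer)\forces t=u\typ A$ and $p(k\mapsto\one)\forces t=u\typ A$, then either $k\in\dom{q}$, so $q\leqslant p(k\mapsto b)$ for some $b\in\{\zer,\one\}$ and the secondary IH on the strictly smaller derivation $p(k\mapsto b)\forces t=u\typ A$ yields $q\forces t=u\typ A$ directly, or $k\notin\dom{q}$, so $q\vdash u\rtred\mathbb{E}[\f\,\xoverline{k}]\typ A$ and, since $q(k\mapsto b)\leqslant p(k\mapsto b)$, the secondary IH gives $q(k\mapsto\zer)\forces t=u\typ A$ and $q(k\mapsto\one)\forces t=u\typ A$, whence $q\forces t=u\typ A$. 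The $\lfor_\mathrm{Loc}$ case of the primary induction uses the same $k\in\dom{q}$ / $k\notin\dom{q}$ dichotomy, now invoking the primary IH on the strictly smaller derivations $p(k\mapsto\zer)\forces A$ and $p(k\mapsto\one)\forces A$. The $\lfor_\Pi$ case is immediate: from $p\forces t=u\typ A$ with $p\vdash A\rtred\Pi(x\typ F)G$ we have $\forall r\leqslant p\,(r\forces a\typ F\Rightarrow r\forces t\,a=u\,a\typ G[a])$, which restricts to all $s\leqslant q$. For $\lfor_\Sigma$, with $p\vdash A\rtred\Sigma(x\typ F)G$, we have $p\forces t\fpr=u\fpr\typ F$ and $p\forces t\spr=u\spr\typ G[t\fpr]$, and the primary IH applied to the smaller type-derivations $p\forces F$ (a premise of $\lfor_\Sigma$) and $p\forces G[t\fpr]$ (obtained from that rule's premise instantiated at $t\fpr$) gives $q\forces t\fpr=u\fpr\typ F$ and $q\forces t\spr=u\spr\typ G[t\fpr]$, hence $q\forces t=u\typ A$. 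In the $\lfor_\tuniv$ case ($A\coloneqq\tuniv$) a secondary induction on the derivation of $p\forces L=P\typ\tuniv$ handles the $\Pi$ and $\Sigma$ subcases via the secondary IH on $p\forces F=H\typ\tuniv$ together with the restriction of the universally quantified premise to $q$, while the two $\mathbb{E}[\f\,\xoverline{k}]$ subcases go exactly as in the base-type argument, using that a canonical reduct $M$ with $p\vdash L\rtred M$ remains a canonical $q$-whnf so that subcase~(d) is preserved.

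The only point needing genuine care — and the step I expect to be the main obstacle to write cleanly — is the bookkeeping of the induction measures. In each ``stuck on $\f\,\xoverline{k}$'' subcase one must know that the premises at $p(k\mapsto\zer)$ and $p(k\mapsto\one)$ carry strictly smaller derivations, so that the appropriate (primary or secondary) induction hypothesis applies, and in the $\lfor_\Sigma$ and $\lfor_\tuniv$ cases one must know that the component equalities at $F$, at $G[t\fpr]$ and at $\tuniv$ rest on type-derivations strictly smaller than that of $p\forces A$. This is precisely the well-foundedness already used implicitly in Lemmas~\ref{LEM:typeIntroMonotone}--\ref{LEM:termIntroMonotone}, and it is licensed by the forcing relation being defined through the generalized induction--recursion schema; aside from it, the proof is a routine transfer of reductions and premises from $p$ down to $q\leqslant p$.
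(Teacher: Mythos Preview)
Your proposal is correct and follows essentially the same approach as the paper: primary induction on the derivation of $p\forces A$, with a secondary induction on $p\forces t=u\typ A$ in the base-type, $\lfor_\mathrm{Loc}$, and universe cases, and direct restriction of quantifiers in the $\lfor_\Pi$ case. If anything, you are more explicit than the paper about which induction hypothesis (primary vs.\ secondary) is invoked where and about the well-foundedness bookkeeping, which the paper leaves implicit.
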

\begin{proof}
Let $p\forces t=u \typ A$ and $q\leqslant p$. We have then that $p\forces A$, $p\forces t\typ A$, and $p\forces u\typ A$. By Lemma~\ref{LEM:typeIntroMonotone} $q\forces A$. By Lemma~\ref{LEM:termIntroMonotone} $q\forces t \typ A$ and $q\forces u \typ A$. By induction on the derivation $p\forces A$.

\begin{enumerate}
			\item Let $p\forces A$ by $\lfor_\tnat$.  By induction on the derivation of $p\forces t=u\typ A$.  
			\begin{enumerate}
				\item Let $p\vdash t\rtred \xoverline{n}\typ A$ and $p\vdash u\rtred \xoverline{n}\typ A$. By monotonicity of reduction $q\forces t=u\typ A$.
			\item \label{localtermeqforcingismonotone} Let $p\vdash t\rtred \xoverline{n}\typ A$ and $p\vdash t\rtred \mathbb{E}[\f\,\xoverline{k}]\typ A, k\notin\dom{p}$ and $p(k\mapsto \zer)\forces t=u\typ A$ and $p(k\mapsto \one)\forces t=u\typ A$. If $k\in \dom{q}$ then $q\leqslant p(k\mapsto b)$ for some $b\in \{\zer,\one\}$, by IH,  $q\forces t=u\typ A$. Otherwise,  $q\vdash t\rtred \mathbb{E}[\f\,\xoverline{k}]\typ A$. But $q(k\mapsto b) \leqslant p(k\mapsto b)$ and by IH $q(k\mapsto \zer)\forces t=u\typ A$ and $q(k\mapsto \one) \forces t=u\typ A$. By the definition $q\forces t=u\typ A$. 
			\item Let $p\vdash t\rtred \mathbb{E}[\f\,\xoverline{k}]\typ A, k\notin\dom{p}$. The statement follows similarly to (\ref{localtermeqforcingismonotone}).
			\end{enumerate}
			The statement follows similarly for when $p\forces A$ holds by $\lfor_\tunit$ or $\lfor_\tbool$.
			\item Let $p\forces A$ by $\lfor_\tuniv$. The statement follows by a proof similar to that of Lemma~\ref{LEM:typeEqMonotone}.
			\item Let $p\forces A$ by $\lfor_\Pi$ and let $p\vdash A\rtred \Pi(x\typ F) G$. From $\forall r\leqslant q(r\forces a\typ F\Rightarrow r\forces t\, a=u\, a\typ G[a])$ we have directly that $\forall s\leqslant q(s\forces a\typ F\Rightarrow s\forces t\, a=u\, a\typ G[a])$. Hence $q\forces t=u\typ A$.
			\item Let $p\forces A$ by $\lfor_\Sigma$ and let $p\vdash A\rtred \Sigma(x\typ F) G$. By induction on the derivation of $p\forces t=u\typ A$. From $p\forces t\fpr =u\fpr\typ F$ and $p\forces t\spr =u\spr\typ G[t\fpr]$, by IH we have $q\forces t\fpr =u\fpr\typ F$ and $q\forces t\spr =u\spr\typ G[t\fpr]$, thus $q\forces t=u\typ A$.
			\item Let $p\forces A$ by $\lfor_\mathrm{Loc}$ and let $p\vdash A\rtred \mathbb{E}[\f\,\xoverline{k}], k \notin\dom{p}$. By induction on the derivation of $p\forces t=u \typ A$. If $k\in \dom{q}$ then the statement follows by IH. If $k\notin\dom{q}$ then $q\vdash A\rtred \mathbb{E}[\f\,\xoverline{k}]$ (i.e.\ $q\forces A$ by $\lfor_\mathrm{Loc}$) and since $q(k\mapsto b)\leqslant p(k \mapsto b)$, by IH, $q(k\mapsto \zer)\forces t=u\typ A$ and $q(k\mapsto \one)\forces t=u\typ A$. Hence $q\forces t=u\typ A$. \qedhere
\end{enumerate}
\end{proof}

We collect the results of Lemmas~\ref{LEM:typeIntroMonotone}, \ref{LEM:termIntroMonotone}, \ref{LEM:termEqMonotone}, and \ref{LEM:typeEqMonotone} in the following corollary.
\begin{cor}[Monotonicity]
\label{COR:monotonicity}
If $p\forces J$ and $q\leqslant p$ then $q\forces J$.
\end{cor}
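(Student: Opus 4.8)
The plan is simply to unfold the metavariable $J$ and appeal to the four lemmas just established. By the definition of the computability relation, a forcing judgment $p\forces J$ has exactly one of four shapes: $p\forces A$, $p\forces A=B$, $p\forces a\typ A$, or $p\forces a=b\typ A$. So I would perform a four-way case split on which shape $J$ has, keeping $q\leqslant p$ fixed throughout.

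In the case where $J$ is a type $A$, the desired conclusion $q\forces A$ is precisely Lemma~\ref{LEM:typeIntroMonotone}. When $J$ is $A=B$, Lemma~\ref{LEM:typeEqMonotone} yields $q\forces A=B$. When $J$ is $a\typ A$, Lemma~\ref{LEM:termIntroMonotone} gives $q\forces a\typ A$. When $J$ is $a=b\typ A$, Lemma~\ref{LEM:termEqMonotone} gives $q\forces a=b\typ A$. Since the hypothesis $q\leqslant p$ is exactly what each lemma requires, every invocation is legitimate, and there are no further cases, so the argument is complete.

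I do not expect any real obstacle here: all of the genuine content lives in the four lemmas, whose proofs carry out the nested inductions — first on the derivation of $p\forces A$, then on the derivation of the relevant sub-judgment — and in particular handle the subtle $\lfor_\mathrm{Loc}$ clause, where one splits on whether the blocking index $k$ already lies in $\dom{q}$ (so $q\leqslant p(k\mapsto b)$ for some $b$ and one recurses on a strictly smaller derivation) or not (so $q\vdash A\rtred\mathbb{E}[\f\,\xoverline{k}]$ still holds and one recurses at $q(k\mapsto\zer)$ and $q(k\mapsto\one)$, invoking the definition of $\lfor_\mathrm{Loc}$ at $q$). The corollary merely repackages these statements under the uniform notation $p\forces J$, which is what makes the remaining parts of Theorem~\ref{properties} convenient to state and prove without repeating the case analysis each time.
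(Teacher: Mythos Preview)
Your proposal is correct and matches the paper's approach exactly: the corollary is stated as simply collecting the results of Lemmas~\ref{LEM:typeIntroMonotone}, \ref{LEM:typeEqMonotone}, \ref{LEM:termIntroMonotone}, and \ref{LEM:termEqMonotone}, which is precisely the four-way case split you describe.
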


We write $\forces J$ when $\emptyset \forces J$. By monotonicity $\forces J$ iff $p\forces J$ for all $p$. 

\begin{lem}
\label{LEM:localitysublemma1}
Let $p\forces A$ and $p\forces B$. If $p(m\mapsto \zer)\forces A=B$ and $p(m\mapsto \one)\forces A=B$ then $p\forces A=B$.
\end{lem}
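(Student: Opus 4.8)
The plan is to prove this by induction on the derivation of $p\forces A$, with a subsidiary induction on the derivations of $p(m\mapsto\zer)\forces A=B$ and $p(m\mapsto\one)\forces A=B$ for the locality-flavoured sub-cases below; I will use monotonicity (Corollary~\ref{COR:monotonicity}), uniqueness of $p$-whnf (Lemma~\ref{LEM:uniquenessofwhnf}) and existence of proper $p$-whnf (Corollary~\ref{existenceofpwhnfislocal}) freely. Note that the derivation of $p\forces A$ already exhibits a proper $p$-whnf $A'$ of $A$, and that $B$ likewise has a proper $p$-whnf $B'$. I would case on the last rule of $p\forces A$. If it is $\lfor_\tuniv$, then $A\coloneqq\tuniv$ and $p(m\mapsto\zer)\forces A=B$ forces $B\coloneqq\tuniv$, so $p\forces A=B$. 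If it is $\lfor_\mathrm{Loc}$, say $p\vdash A\rtred\mathbb{E}[\f\,\xoverline{k}]$ with $k\notin\dom{p}$ and $p(k\mapsto\zer)\forces A$, $p(k\mapsto\one)\forces A$: when $k=m$ the two hypotheses are exactly the premises of the $\lfor_\mathrm{Loc}$-clause for $p\forces A=B$; when $k\neq m$, monotonicity gives $p(m\mapsto b)\vdash A\rtred\mathbb{E}[\f\,\xoverline{k}]$ with $k\notin\dom{p(m\mapsto b)}$, so each $p(m\mapsto b)\forces A=B$ is forced through that clause at $k$, yielding $p(k\mapsto b')(m\mapsto b)\forces A=B$ for all $b,b'\in\{\zer,\one\}$; since $p(k\mapsto b')\forces A$ is a premise of the present derivation and $p(k\mapsto b')\forces B$ by monotonicity, the induction hypothesis gives $p(k\mapsto b')\forces A=B$, and the $\lfor_\mathrm{Loc}$-clause concludes $p\forces A=B$.

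In the ``small type'' cases $\lfor_\tempt,\lfor_\tunit,\lfor_\tbool,\lfor_\tnat$ — say $p\vdash A\rtred\tnat$ — I would determine $B'$. If $B'$ is canonical it is also the $p(m\mapsto\zer)$-whnf of $B$, and the $\lfor_\tnat$-equality clause at $p(m\mapsto\zer)$ forces $B'\coloneqq\tnat$; then $p\vdash B\rtred\tnat$ and $p\forces A=B$ directly. If $B'\coloneqq\mathbb{E}[\f\,\xoverline{k}]$ with $k\notin\dom{p}$, then for $k=m$ the hypotheses are exactly the premises of the clause ``$p\vdash B\rtred\mathbb{E}[\f\,\xoverline{m}]$ and $p(m\mapsto\zer)\forces A=B$ and $p(m\mapsto\one)\forces A=B$'', and for $k\neq m$ I repeat the manoeuvre of the $\lfor_\mathrm{Loc}$ case (here $p(k\mapsto b')\forces A$ holds by monotonicity, since $p(k\mapsto b')\vdash A\rtred\tnat$) and close with the induction hypothesis at $p(k\mapsto b')$.

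The real work is in $\lfor_\Pi$, and symmetrically $\lfor_\Sigma$: say $p\vdash A\rtred\Pi(x\typ F)G$ with the premises of $\lfor_\Pi$. If $B'$ has the stuck form $\mathbb{E}[\f\,\xoverline{k}]$, $k\notin\dom{p}$, proceed exactly as in the small-type case. Otherwise $B'\coloneqq\Pi(x\typ H)E$ is canonical, hence it is the $p(m\mapsto b)$-whnf of $B$, so each $p(m\mapsto b)\forces A=B$ is derived by the first $\lfor_\Pi$-equality clause, giving $p(m\mapsto b)\forces F=H$ and $\forall q\leqslant p(m\mapsto b)\,(q\forces a\typ F\Rightarrow q\forces G[a]=E[a])$. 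To reassemble $p\forces A=B$ through that clause I must show $p\forces F=H$ and $\forall q\leqslant p\,(q\forces a\typ F\Rightarrow q\forces G[a]=E[a])$. The first follows from the induction hypothesis applied to the sub-derivation $p\forces F$ of $p\forces A$, together with $p\forces H$ (read off the $\lfor_\Pi$-derivation of $p\forces B$) and the two $p(m\mapsto b)\forces F=H$. For the second, fix $q\leqslant p$ with $q\forces a\typ F$: if $m\in\dom{q}$, done by monotonicity; otherwise $q(m\mapsto b)\forces G[a]=E[a]$ for $b\in\{\zer,\one\}$, hence $q(m\mapsto b)\forces E[a]$, so $q\forces E[a]$ by the companion locality statement for the judgement $p\forces A$, while $q\forces G[a]$ comes from the $\lfor_\Pi$-premise of $p\forces A$, and the induction hypothesis at $q$ then gives $q\forces G[a]=E[a]$.

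I expect the delicate point to be precisely this $\lfor_\Pi/\lfor_\Sigma$ step. It forces the lemma to be proved simultaneously with the corresponding locality statement for $p\forces A$ (and ultimately for $p\forces t\typ A$ and $p\forces t=u\typ A$), because the alternative route to $q\forces a\typ H$ — deducing it from $q\forces a\typ F$ and $q\forces F=H$ — is exactly the Extensionality part of Theorem~\ref{properties}, which is not yet available; instead one obtains the required $q\forces E[a]$ through locality for the plain judgement. Pinning down a well-founded measure for this mutual induction (lexicographic in the derivation of $p\forces A$ and then in the two equality derivations, with the usual subderivation convention for the universally quantified premises) and verifying it strictly decreases in each recursive call is the main bookkeeping burden; everything else is the routine whnf-matching sketched above.
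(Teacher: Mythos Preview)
Your proof follows the paper's argument almost exactly: induction on the derivation of $p\forces A$, with a secondary case split on the shape of $B$ (the paper phrases this as induction on the derivation of $p\forces B$, but that is the same thing). The $\lfor_\tuniv$, $\lfor_\mathrm{Loc}$, and small-type cases are identical to the paper's.

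The one point of divergence is exactly where you flag it, the $\lfor_\Pi/\lfor_\Sigma$ case: to apply the induction hypothesis at $q$ for $G[a]$ and $E[a]$, one needs $q\forces E[a]$. The paper simply writes ``from $p\forces B$ we have that $q\forces E[a]$'', using the $\lfor_\Pi$-premise of the given derivation $p\forces B$; that premise, however, is keyed to $q\forces a\typ H$, not $q\forces a\typ F$, so the paper is implicitly relying on the passage from $F$ to $H$ that you correctly identify as extensionality. You instead extract $q(m\mapsto b)\forces E[a]$ from $q(m\mapsto b)\forces G[a]=E[a]$ and then invoke the companion locality statement for plain type judgments to get $q\forces E[a]$, packaging the two lemmas as a single mutual induction. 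This is a genuine difference in organisation rather than in idea: the paper keeps the two lemmas sequential and leans on the derivation of $p\forces B$ directly, whereas you make the interdependence explicit. Your version is more honest about the dependency structure, at the cost of the bookkeeping you mention; the paper's version is terser but leaves the step ``$q\forces a\typ H$'' unarticulated.
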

\begin{proof}
By induction on the derivation of $p\forces A$.
\begin{enumerate}
	\item Let $p\forces A$ by $\lfor_\tnat$. By induction on the derivation of $p\forces B$
	\begin{enumerate}
		\item If $p\forces B$ by $\lfor_\tnat$ then $p\forces A=B$ immediately.
		\item \label{typeqrightloc} If $p\forces B$ by $\lfor_\mathrm{Loc}$. 
		The statement follows similarly to~(\ref{loctypeqloc}) below. 
	\end{enumerate}
	The statement follows similarly when $p\forces A$ is derived by $\lfor_\tempt, \lfor_\tunit$ and $\lfor_\tbool$.
	\item Let $p\forces A$ by $\lfor_\Pi$ and let $p\vdash A\rtred \Pi(x\typ F) G$. By induction on the derivation of $p\forces B$
	\begin{enumerate}
		\item Let $p\forces B$ by $\lfor_\Pi$ and let $p\vdash B\rtred \Pi(x\typ H) E$. Since $p\forces A$ and $p\forces B$ we have $p\forces F$ and $p\forces H$. From the premise $p(m\mapsto \zer)\forces F=H$ and $p(m\mapsto \one)\forces F=H$ and by IH $p\forces F=H$.
		
		Let $q\leqslant p$ and $q\forces a\typ F$. If $q\leqslant p(m\mapsto b)$ for some $b\in \{\zer,\one\}$ then $q\forces G[a]=E[a]$. Otherwise, since $q(m\mapsto b)\leqslant p(m\mapsto b)$, by monotonicity $q(m\mapsto \zer) \forces G[a]=E[a]$ and $q(m\mapsto \one)\forces G[a]=E[a]$. From $p\forces A$ we have that $q\forces G[a]$ and from $p\forces B$ we have that $q\forces E[a]$. By IH $q\forces G[a] = E[a]$. We thus have $p\forces A=B$.
		\item Let $p\forces B$ by $\lfor_\mathrm{Loc}$. The statement then follows similarly to (\ref{loctypeqloc}) below.
	\end{enumerate}
	The statement follows similarly when $p\forces A$ is derived by $\lfor_\Sigma$.
	\item If $p\forces A$ by $\lfor_\tuniv$ then $A\coloneqq B\coloneqq \tuniv$ and $p\forces A=B$.
	\item \label{loctypeqloc} If $p\forces A$ by $\lfor_\mathrm{Loc}$. Let $p\vdash A\rtred \mathbb{E}[\f\,\xoverline{k}], k\notin\dom{p}$ and $p(k\mapsto \zer)\forces A$ and $p(k\mapsto \one)\forces A$. 
	\begin{enumerate}
		\item If $k=m$ then we have $p\forces A=B$ by the definition.
		\item If $k\neq m$. By monotonicity $p(k\mapsto b)\forces A$ and $p(k\mapsto b)\forces B$ and $p(k \mapsto b) (m\mapsto \zer) \forces A=B$ and $p(k\mapsto b) (m\mapsto \one) \forces A=B$ for all $b\in \{\zer,\one\}$. The derivation of $p(k\mapsto b)\forces A$ is strictly smaller than the derivation of $p\forces A$. By IH we have $p(k\mapsto \zer) \forces A=B$ and $p(k\mapsto \one)\forces A=B$. By the definition $p\forces A=B$.\qedhere
	\end{enumerate}
\end{enumerate}
\end{proof}

\begin{lem}\leavevmode
\label{LEM:typeIntroLocal}
If $p(m\mapsto \zer)\forces A$ and $p(m\mapsto \one)\forces A$ for some $m\notin \dom{p}$ then $p\forces A$.
\end{lem}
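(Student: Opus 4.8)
The plan is to run the whole argument off a single case distinction on the shape of the proper $p$-whnf of $A$. Applying Corollary~\ref{existenceofpwhnfislocal} to the partition $p\covt\{p(m\mapsto\zer),p(m\mapsto\one)\}$ (using the two hypotheses), $A$ has a proper $p$-whnf; fix $A'$ with $p\vdash A\rtred A'$. Since reduction and the type system are monotone, $p(m\mapsto b)\vdash A\rtred A'$ for $b\in\{\zer,\one\}$, and whenever $A'$ happens to be in $p(m\mapsto b)$-whnf, Lemma~\ref{LEM:uniquenessofwhnf} lets me read off the last rule of the derivation of $p(m\mapsto b)\forces A$ from the shape of $A'$ — the base-type rules, $\lfor_\tuniv$, $\lfor_\Pi$ and $\lfor_\Sigma$ all produce canonical whnf, while $\lfor_\mathrm{Loc}$ produces a whnf of the form $\mathbb{E}[\f\,\xoverline k]$. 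The argument proceeds by induction on the derivation of $p(m\mapsto\zer)\forces A$; every forcing judgment I will feed to the induction hypothesis occurs as a premise (in the inductive–recursive sense) of that derivation, so the recursion is well-founded.

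Suppose first $A'\coloneqq\mathbb{E}[\f\,\xoverline k]$ with $k\notin\dom p$. If $k=m$, then $p\vdash A\rtred\mathbb{E}[\f\,\xoverline m]$ and the two hypotheses are exactly the remaining premises of $\lfor_\mathrm{Loc}$ at $p$, so $p\forces A$. If $k\neq m$, then $k\notin\dom{p(m\mapsto b)}$, so $A'$ is the (non-canonical) proper $p(m\mapsto b)$-whnf of $A$, which forces both $p(m\mapsto\zer)\forces A$ and $p(m\mapsto\one)\forces A$ to be derived by $\lfor_\mathrm{Loc}$ splitting on that same $k$; hence $p(m\mapsto b)(k\mapsto b')\forces A$ with strictly smaller derivations, for all $b,b'\in\{\zer,\one\}$. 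Rewriting $p(m\mapsto b)(k\mapsto b')=p(k\mapsto b')(m\mapsto b)$ and applying the induction hypothesis at condition $p(k\mapsto b')$ and index $m$ yields $p(k\mapsto b')\forces A$ for $b'\in\{\zer,\one\}$, and then $\lfor_\mathrm{Loc}$ splitting on $k$ gives $p\forces A$.

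Suppose instead $A'$ is canonical; since $\vdash_p A$, it must be one of $\tempt,\tunit,\tbool,\tnat,\tuniv,\Pi(x\typ F)G,\Sigma(x\typ F)G$, it is the canonical $p(m\mapsto b)$-whnf of $A$, and neither derivation $p(m\mapsto b)\forces A$ is by $\lfor_\mathrm{Loc}$. If $A'\in\{\tempt,\tunit,\tbool,\tnat\}$ then $p\vdash A\rtred A'$ and $p\forces A$ by the matching introduction rule; if $A'\coloneqq\tuniv$ then each $p(m\mapsto b)\forces A$ is necessarily by $\lfor_\tuniv$, which forces $A$ to be literally $\tuniv$, whence $p\forces A$. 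If $A'\coloneqq\Pi(x\typ F)G$ (and symmetrically $\Sigma$), both $p(m\mapsto b)\forces A$ are by $\lfor_\Pi$ with the same $F,G$; I get $p\forces F$ from the induction hypothesis (at index $m$), and I check the two universally quantified premises of $\lfor_\Pi$ at $p$ by splitting on whether a given $q\leqslant p$ decides $m$: if it does I quote the corresponding premise of one of the two given derivations, and if it does not I pass to $q(m\mapsto\zer),q(m\mapsto\one)$ by monotonicity (Corollary~\ref{COR:monotonicity}), quote the premises there, and reassemble — $q\forces G[a]$ from the induction hypothesis (at condition $q$, index $m$), and $q\forces G[a]=G[a']$ from Lemma~\ref{LEM:localitysublemma1}, whose hypotheses $q\forces G[a]$ and $q\forces G[a']$ are the instances just produced.

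The main obstacle is bookkeeping the induction so that each recursive call is on a strictly smaller \emph{derivation} rather than merely a smaller or incomparable condition: the two delicate points are the sub-case $k\neq m$, where the recursion lives at the incomparable condition $p(k\mapsto b')$ and one must use that the derivations of $p(k\mapsto b')(m\mapsto b)\forces A$ are subderivations of $p(m\mapsto b)\forces A$, and the $\Pi/\Sigma$ case, where the recursive calls are on $G[a]$ for the $q$'s not deciding $m$ and one needs $q(m\mapsto b)\forces G[a]$ to count as a subderivation of $p(m\mapsto\zer)\forces A$ — which it is, since it arises by instantiating the universally quantified premise of $\lfor_\Pi$. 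The remaining work — discharging the hidden well-formedness premises via the locality rule of the type system, and the routine monotonicity shuffles — is straightforward.
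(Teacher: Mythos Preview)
Your proof is correct and follows essentially the same route as the paper's: both arguments first invoke Corollary~\ref{existenceofpwhnfislocal} to obtain a proper $p$-whnf of $A$, and then split into the non-canonical case (handled via $\lfor_\mathrm{Loc}$, with the sub-split on $k=m$ versus $k\neq m$) and the canonical case (base types, $\tuniv$, $\Pi/\Sigma$), using the induction hypothesis for $F$ and for $G[a]$ and Lemma~\ref{LEM:localitysublemma1} for $G[a]=G[b]$ in the $\Pi/\Sigma$ case. The only difference is organizational: the paper's outer case analysis is on the last rule of the derivation $p(m\mapsto\zer)\forces A$ (so the ``$k=m$'' sub-case reappears inside several of those cases), whereas you first fix the proper $p$-whnf $A'$ and branch on its shape, which lets you treat all non-canonical situations uniformly up front.
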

\begin{proof}
The proof is by by induction on the derivations $p(m\mapsto \zer)\forces A$. Note that from $p(m\mapsto \zer)\forces A$ and $p(m\mapsto \one)\forces A$ we have that $A$ has proper $p(m\mapsto \zer)$-whnf and $p(m\mapsto \one)$-whnf and by Corollary~\ref{existenceofpwhnfislocal} $A$ has a proper $p$-whnf.

\begin{enumerate}
	\item Let $p(m\mapsto \zer)\forces A$ by $\lfor_\tnat$
	\begin{enumerate}
			\item If $A$ has a canonical $p$-whnf then $p\vdash A\rtred \tnat$ and $p\forces A$.
			\item Otherwise, $p\vdash A\rtred \mathbb{E}[\xoverline{k}], k\notin\dom{p}$. Since $A$ has a canonical $p(m\mapsto \zer)$-whnf we have that $k=m$ and by the definition we have $p\forces A$ by $\lfor_\mathrm{Loc}$
	\end{enumerate}
	The statement follows similarly when $p(m\mapsto \zer) \forces A$ holds by $\lfor_\tempt$, $\lfor_\tunit$ or $\lfor_\tbool$.
	\item Let $p(m\mapsto \zer)\forces A$ by $\lfor_\Pi$
	\begin{enumerate}
		\item If $A$ has a canonical $p$-whnf then $p\vdash A\rtred \Pi(x\typ F) G$. From $p(m\mapsto 0)\forces A$ and $p(m\mapsto 1)\forces A$ we have $p(m\mapsto \zer)\forces F$ and $p(m\mapsto \one)\forces F$. By IH $p\forces F$. Let $q\leqslant p$
		\begin{enumerate}
			\item If $m\in \dom{q}$ then $q\leqslant p(m\mapsto b)$ for some $b\in \{\zer,\one\}$, then $(q\forces a\typ F \Rightarrow q\forces G[a])$ and $(q\forces a=b\typ F\Rightarrow q\forces G[a]=G[b])$.
			\item If $m\notin \dom{q}$ then $q\covt \{q(m\mapsto \zer), q(m\mapsto \one)\}$. Let $q\forces a\typ F$, by monotonicity $q(m\mapsto \zer) \forces a\typ F$ and $q(m\mapsto \one)\forces a\typ F$. Since $q(m\mapsto b)\leqslant p(m\mapsto b)$, by the definition $q(m\mapsto \zer) \forces G[a]$ and $q(m\mapsto \one)\forces G[a]$ and by IH $q\forces G[a]$. 
			
			Let $q\forces a=b\typ F$, by monotonicity $q(m\mapsto \zer)\forces a=b\typ F$ and $q(m\mapsto \one)\forces a=b\typ F$. But then $q(m\mapsto \zer)\forces G[a]$ and $q(m\mapsto \one)\forces G[a]$ and $q(m\mapsto \zer)\forces G[b]$ and $q(m\mapsto \one)\forces G[b]$ and $q(m\mapsto \zer)\forces G[a]=G[b]$ and $q(m\mapsto \one)\forces G[a]=G[b]$. By Lemma~\ref{LEM:localitysublemma1} we have $q\forces G[a]=G[b]$.
				\end{enumerate}
		\end{enumerate}
		The statement follows similarly when $p(m\mapsto \zer) \forces A$ holds by $\lfor_\Sigma$.
	\item Let $p(m\mapsto \zer)\forces A$ by $\lfor_\tuniv$ then $A\coloneqq \tuniv$ and $p\forces A$.
	\item Let $p(m\mapsto \zer)\forces A$ by $\lfor_\mathrm{Loc}$. Since $A$ doesn't have a canonical $p(m\mapsto \zer)$-whnf $A$ doesn't have a canonical $p$-whnf. Since $A$ has a proper $p$-whnf we have $p\vdash A\rtred \mathbb{E}[\f\,\xoverline{k}], k\notin\dom{p}$. 	
	\begin{enumerate}
	\item If $k=m$ then by the definition we have $p\forces A$ by $\lfor_\mathrm{Loc}$
	\item If $k\neq m$ then $p(m\mapsto b)\vdash A\rtred \mathbb{E}[\f\,\xoverline{k}]$. Hence $p(m\mapsto \one)\forces A$ by $\lfor_\mathrm{Loc}$. We have then $p(m\mapsto \zer)(k\mapsto \zer)\forces A$ and $p(m\mapsto \zer)(k\mapsto \one)\forces A$ and $p(m\mapsto \one)(k\mapsto \zer)\forces A$ and $p(m\mapsto \one)(k\mapsto \one)\forces A$. By IH $p(k\mapsto \zer) \forces A$ and $p(k\mapsto \one)\forces A$. By the definition $p\forces A$.\qedhere
	\end{enumerate}
	\end{enumerate}
\end{proof}

\begin{lem}
\label{LEM:typeEqLocal}
If $p(m\mapsto \zer)\forces A=B$ and $p(m\mapsto \one)\forces A=B$ for $m\notin\dom{p}$ then $p\forces A=B$.
\end{lem}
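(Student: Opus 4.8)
The plan is to reduce the statement to two facts already established: Lemma~\ref{LEM:typeIntroLocal}, which glues type-computability along a single split, and Lemma~\ref{LEM:localitysublemma1}, which glues a type-equality along a single split \emph{provided} one already knows both sides are computable at $p$. The whole content of the present lemma is just to supply those two side conditions, so no new induction is needed.

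First I would unpack the hypotheses. Since $p(m\mapsto\zer)\forces A=B$, the remark following the definition of the forcing relation (namely that $p\forces A=B$ entails $p\forces A$ and $p\forces B$) gives $p(m\mapsto\zer)\forces A$ and $p(m\mapsto\zer)\forces B$; symmetrically, from $p(m\mapsto\one)\forces A=B$ we get $p(m\mapsto\one)\forces A$ and $p(m\mapsto\one)\forces B$. Applying Lemma~\ref{LEM:typeIntroLocal} twice, once to $A$ and once to $B$, with the split at $m\notin\dom{p}$, yields $p\forces A$ and $p\forces B$. At this point all four hypotheses of Lemma~\ref{LEM:localitysublemma1} are in place, and that lemma directly delivers $p\forces A=B$, which is the claim.

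I do not expect a genuine obstacle here: the real case analysis — on the shape of the $p$-whnf of $A$, including the $\lfor_\mathrm{Loc}$ case where $A$ reduces to a stuck term $\mathbb{E}[\f\,\xoverline{k}]$, and the bookkeeping about whether the new split variable $m$ equals the blocking variable $k$ — has already been carried out in the proofs of Lemmas~\ref{LEM:typeIntroLocal} and \ref{LEM:localitysublemma1}. The only point worth a second look is that no circularity is introduced, i.e.\ that neither of those two lemmas appeals to the present one; inspecting their proofs shows they use only monotonicity (Corollary~\ref{COR:monotonicity}), Corollary~\ref{existenceofpwhnfislocal}, and their own induction hypotheses, so the argument is sound.
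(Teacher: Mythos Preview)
Your proposal is correct and matches the paper's proof essentially verbatim: extract $p(m\mapsto b)\forces A$ and $p(m\mapsto b)\forces B$ from the hypotheses, apply Lemma~\ref{LEM:typeIntroLocal} to obtain $p\forces A$ and $p\forces B$, and then invoke Lemma~\ref{LEM:localitysublemma1}. Your additional sanity check on non-circularity is accurate and is the only content beyond what the paper writes.
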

\begin{proof}
By the definition $p(m\mapsto \zer)\forces A$ and $p(m\mapsto \one)\forces A$ and $p(m\mapsto \zer)\forces B$ and $p(m\mapsto \one)\forces B$. By Lemma~\ref{LEM:typeIntroLocal} $p\forces A$ and $p\forces B$. By Lemma~\ref{LEM:localitysublemma1} $p\forces A=B$.
\end{proof}

\begin{lem}\leavevmode
\label{LEM:termIntroEqLocal}
\begin{enumerate}
\item If $p(m\mapsto \zer) \forces t\typ A$ and $p(m\mapsto \one) \forces t\typ A$ for some $m\notin \dom{p}$ then $p\forces t\typ A$.
\item If $p(m\mapsto \zer) \forces t=u\typ A$ and $p(m\mapsto \one) \forces t=u\typ A$ for some $m\notin \dom{p}$ then $p\forces t=u\typ A$.
\end{enumerate}
\end{lem}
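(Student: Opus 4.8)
The plan is to prove the two statements \emph{simultaneously} by induction on the derivation of the type judgment $p\forces A$, which is available because the hypotheses of (1) give $p(m\mapsto\zer)\forces A$ and $p(m\mapsto\one)\forces A$, hence $p\forces A$ by Lemma~\ref{LEM:typeIntroLocal}; for (2) the hypotheses likewise give those, together with $p\forces t\typ A$ and $p\forces u\typ A$ once (1) is established at the same derivation (so within a fixed stage of the primary induction one proves (1) first, then (2), with (1) allowed to invoke (2) only at strictly smaller type‑forcing derivations). The whole argument is the exact analogue, one level up, of the proofs of Lemmas~\ref{LEM:typeIntroLocal} and~\ref{LEM:typeEqLocal}; below I only indicate the new ingredients. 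In the cases where the $p\forces A$ derivation is a leaf one also runs a secondary induction, on the multiset of the two term‑forcing derivations $p(m\mapsto\zer)\forces t\typ A$ and $p(m\mapsto\one)\forces t\typ A$ (resp.\ of $p(m\mapsto b)\forces t=u\typ A$).

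When $p\forces A$ is derived by $\lfor_\Pi$, $\lfor_\Sigma$ or $\lfor_\mathrm{Loc}$ the primary induction does the work, exactly as in Lemma~\ref{LEM:typeIntroLocal}. For $\lfor_\Pi$, say $p\vdash A\rtred\Pi(x\typ F)G$: the judgment $p\forces t\typ A$ is a pair of clauses quantified over $q\leqslant p$; given $q\leqslant p$ with $q\forces a\typ F$, either $q\leqslant p(m\mapsto b)$ for some $b$ (done by hypothesis) or $m\notin\dom{q}$, in which case monotonicity (Corollary~\ref{COR:monotonicity}) yields $q(m\mapsto\zer)\forces t\,a\typ G[a]$ and $q(m\mapsto\one)\forces t\,a\typ G[a]$, whence $q\forces t\,a\typ G[a]$ by the induction hypothesis at type $G[a]$ (a strictly smaller forcing derivation, being a component of the $\lfor_\Pi$ premise); the congruence clause is handled the same way, invoking part (2) at $G[a]$. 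For $\lfor_\Sigma$ one works with $t\fpr$ and $t\spr$: the induction hypothesis at $F$ gives $p\forces t\fpr\typ F$, the $\lfor_\Sigma$ premise then makes $p\forces G[t\fpr]$ available with a smaller derivation, and a second application gives $p\forces t\spr\typ G[t\fpr]$. For $\lfor_\mathrm{Loc}$ with $p\vdash A\rtred\mathbb{E}[\f\,\xoverline{k}]$, $k\notin\dom{p}$: if $k\coloneqq m$ the $\lfor_\mathrm{Loc}$ term/term‑equality subclauses conclude straight from the hypotheses, the side condition $\vdash_p t\typ A$ coming from the rule \textsc{loc}; if $k\neq m$ then $p(m\mapsto b)\forces t\typ A$ unfolds to $p(m\mapsto b)(k\mapsto\zer)\forces t\typ A$ and $p(m\mapsto b)(k\mapsto\one)\forces t\typ A$, and the induction hypothesis at the smaller derivation $p(k\mapsto\zer)\forces A$ (split on $m$) gives $p(k\mapsto\zer)\forces t\typ A$, likewise $p(k\mapsto\one)\forces t\typ A$, and the subclause concludes. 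Part (2) is parallel in all three.

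When $p\forces A$ is derived by $\lfor_\tempt,\lfor_\tunit,\lfor_\tbool$ or $\lfor_\tnat$ (the $\tempt$ subcase being vacuous, as no term is forced of a type reducing to $\tempt$), we have $p\vdash A\rtred M$ for a base type $M$, and likewise at $p(m\mapsto\zer)$ and $p(m\mapsto\one)$. From $p(m\mapsto\zer)\forces t\typ A$ and $p(m\mapsto\one)\forces t\typ A$, inspection of the defining clauses shows $t$ has a proper whnf at each of these conditions (a constructor of $M$, or $\mathbb{E}[\f\,\xoverline{k}]$ with $k$ outside the condition), so by Corollary~\ref{properwhnflocal} the term $t$ has a proper $p$‑whnf $w$ with $p\vdash t\rtred w\typ A$. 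If $w$ is canonical then, by uniqueness of whnf (Lemma~\ref{LEM:uniquenessofwhnf}) and the fact that it is also the $p(m\mapsto\zer)$‑whnf of $t$, it must be a constructor of $M$, and the relevant clause gives $p\forces t\typ A$. If $w\coloneqq\mathbb{E}[\f\,\xoverline{k}]$ with $k\notin\dom{p}$: when $k\coloneqq m$ the locality subclause applies verbatim to the hypotheses; when $k\neq m$, monotonicity of reduction forces $p(m\mapsto b)\forces t\typ A$ to be derived by that same locality subclause, so its premises give $p(m\mapsto b)(k\mapsto\zer)\forces t\typ A$ and $p(m\mapsto b)(k\mapsto\one)\forces t\typ A$, and applying the (secondary) induction hypothesis at $p(k\mapsto\zer)$ and at $p(k\mapsto\one)$ to these strictly smaller pairs of sub‑derivations yields $p(k\mapsto\zer)\forces t\typ A$ and $p(k\mapsto\one)\forces t\typ A$, whence $p\forces t\typ A$. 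For (2) one additionally uses Lemma~\ref{LEM:BaseTypeCanonicity} to put $t$ and $u$ in canonical whnf over a common partition, then argues as in Lemma~\ref{LEM:typeEqLocal}. The case $p\forces A$ by $\lfor_\tuniv$ is the same pattern applied to $L$ (resp.\ to $L$ and $P$) as a term of type $\tuniv$: a secondary induction establishes a proper $p$‑whnf of $L$ via Corollary~\ref{properwhnflocal}, uniqueness of whnf pins down its shape (a base type, $\Pi$, $\Sigma$, or $\mathbb{E}[\f\,\xoverline{k}]$), the $\Pi$ and $\Sigma$ shapes are treated as in the $\lfor_\Pi,\lfor_\Sigma$ cases by splitting $q\leqslant p$ and recursing on $F$, $G[a]$ and $G[a]=G[b]$, and the stuck shape is treated by the $k\coloneqq m$ / $k\neq m$ dichotomy above; for (2) there is the extra subcase in which $L$ has a canonical $p$‑whnf while $P$ reduces to $\mathbb{E}[\f\,\xoverline{k}]$, handled by the very same dichotomy.

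There is no single hard step: the lemma is a mechanical transcription, one level up, of Lemmas~\ref{LEM:typeIntroLocal} and~\ref{LEM:typeEqLocal}. The only delicate point is organizing the induction so that the ``stuck at $\f\,\xoverline{k}$ with $k\neq m$'' subcases (which arise for the base types, for $\tuniv$, and for $\lfor_\mathrm{Loc}$) go through: there one must commute the two binary splits, on $m$ and on $k$, and the argument closes only because the four conditions $p(m\mapsto i)(k\mapsto j)$ carry term‑forcing derivations strictly below those assumed at $p(m\mapsto\zer)$ and $p(m\mapsto\one)$. Hence the induction is lexicographic, primary on the derivation of $p\forces A$ and secondary on the multiset of the two term‑forcing derivations, with (1) and (2) intertwined as described in the first paragraph.
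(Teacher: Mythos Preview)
Your proposal is correct and follows essentially the same route as the paper: primary induction on the derivation of $p\forces A$ (obtained via Lemma~\ref{LEM:typeIntroLocal}), a secondary induction on the term-forcing derivations in the base-type and $\lfor_\tuniv$ leaves, and the mutual treatment of (1) and (2) with (1) invoking (2) only at smaller type-forcing derivations. The only imprecision is your one-line sketch of part~(2) at base types via Lemma~\ref{LEM:BaseTypeCanonicity} ``then as in Lemma~\ref{LEM:typeEqLocal}'': locality (Corollary~\ref{LEM:localCharacter}) is not yet available at this point, so one cannot simply pass to a common partition; the paper instead runs the same nested whnf case analysis you correctly spell out for the $\lfor_\tuniv$ case---induct on the derivation of $p\forces t\typ A$, then within that on $p\forces u\typ A$, and apply the $k=m$ versus $k\neq m$ dichotomy.
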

\begin{proof}
We prove the two statements mutually by induction.
\begin{enumerate}
 \item \label{termintromontone} From $p(m\mapsto \zer)\forces t\typ A$ and $p(m\mapsto \one)\forces t\typ A$ we have $p(m\mapsto \zer)\forces A$ and $p(m\mapsto \one)\forces A$ and by Lemma~\ref{LEM:typeIntroLocal} $p\forces A$. By induction on the derivation of $p\forces A$.
\begin{enumerate}
	\item \label{natlocaltermforced} Let $p\forces A$ by $\lfor_\tnat$. Since $t$ has proper $p(m\mapsto \zer)$-whnf and $p(m\mapsto \one)$-whnf. By Lemma~\ref{properwhnflocal} $t$ has a proper $p$-whnf. By induction on the derivation of $p(m\mapsto \zer)\forces t\typ A$.
	\begin{enumerate}	
		\item Let $p(m\mapsto \zer)\vdash t\rtred \xoverline{n}\typ A$
		\begin{enumerate}
			\item If $t$ has a canonical $p$-whnf then $p\vdash t\rtred \xoverline{n}\typ A$ and $p\forces t\typ A$ directly.
			\item Otherwise, $p\vdash t\rtred \mathbb{E}[\f\,\xoverline{k}]\typ A,k\notin\dom{p}$. But then we have that $k=m$ and by the definition $p\forces t\typ A$. 
		\end{enumerate}
		\item Let $p(m\mapsto \zer)\vdash t\rtred \mathbb{E}[\f\,\xoverline{k}],k\notin\dom{p(m\mapsto \zer)}$ and $p(m\mapsto \zer)(k\mapsto \zer)\forces t\typ A$ and $p(m\mapsto \zer)(k\mapsto \one)\forces t\typ A$. By monotonicity $p(m\mapsto \one)(k\mapsto \zer)\forces t\typ A$ and $p(m\mapsto \one)(k\mapsto \one)\forces t\typ A$. By IH $p(k\mapsto \zer)\forces t\typ A$ and $p(k\mapsto \one)\forces t\typ A$ and by the definition $p\forces t\typ A$.
	\end{enumerate}	
	The statement follows similarly when $p\forces A$ by $\lfor_\tunit$ and $\lfor_\tbool$.
	
	\item \label{pilocaltermforced} Let $p\forces A$ by $\lfor_\Pi$ and $p\vdash A\rtred \Pi(x\typ F) G$. Let $q\leqslant p$. If $q\leqslant p(m\mapsto b)$  then we have directly $q\forces a \typ F\Rightarrow q\forces t\,a\typ G[a]$ and $q\forces a=b\typ F \Rightarrow q\forces t\,a = t\,b \typ G[a]$. Otherwise, we have $q(m\mapsto b)\leqslant p(m\mapsto b)$. Let $q\forces a\typ F$. By monotonicity $q(m\mapsto \zer)\forces a\typ F$ and $q(m\mapsto \one)\forces a\typ F$ and we have $q(m\mapsto \zer)\forces t\,a\typ G[a]$ and $q(m\mapsto \one)\forces t\,a \typ G[a]$. By IH we have $q\forces t\,a\typ G[a]$. Let $q\forces a=b\typ F$. By monotonicity $q(m\mapsto \zer)\forces a=b\typ F$ and $q(m\mapsto \one)\forces a=b\typ F$ and we have $q(m\mapsto \zer)\forces t\,a=t\,b\typ G[a]$ and $q(m\mapsto \one)\forces t\,a=t\,b\typ G[a]$. By IH (\ref{termeqlocal}) $q\forces t\,a =t\,b \typ G[a]$. Thus we have $p\forces t\typ A$.
	
	\item \label{sigmalocaltermforced} Let $p\forces A$ by $\lfor_\Sigma$ and let $p\vdash A\rtred \Sigma(x\typ F) G$. We have $p(m\mapsto \zer)\forces t\fpr \typ F$ and $p(m\mapsto \one)\forces t\fpr \typ F$ and $p(m\mapsto \zer)\forces t\spr \typ G[t\fpr]$ and $p(m\mapsto \one)\forces t\spr \typ G[t\fpr]$. By IH $p\forces t\fpr \typ F$ and $p\forces t\spr \typ G[t\fpr]$. Thus $p\forces t\typ A$.
	\item Let $p\forces A$ by $\lfor_\tuniv$. The statement then follows similarly to Lemma~\ref{LEM:typeIntroLocal}.
	\item \label{locallocaltermforced} Let $p\forces A$ by $\lfor_\mathrm{Loc}$ and let $p\vdash A\rtred \mathbb{E}[\f\,\xoverline{k}], k\notin\dom{p}$. If $k=m$ then by the definition $p\forces t\typ A$. If $k\neq m$ then by monotonicity $p(k\mapsto \zer)(m\mapsto \zer)\forces t\typ A$ and $p(k\mapsto \zer)(m\mapsto \one)\forces t\typ A$ and $p(k\mapsto \one)(m\mapsto \zer)\forces t\typ A$ and $p(k\mapsto \one)(m\mapsto \one)\forces t\typ A$. By IH $p(k\mapsto \zer)\forces t\typ A$ and $p(k\mapsto \one)\forces t\typ A$. By the definition $p\forces t\typ A$.
\end{enumerate}

\item \label{termeqlocal} From $p(m\mapsto \zer)\forces t=u \typ A$ and $p(m\mapsto \zer)\forces t=u\typ A$ we have $p(m\mapsto \zer)\forces t\typ A$ and $p(m\mapsto \one)\forces t\typ A$ and $p(m\mapsto \zer)\forces u\typ A$ and $p(m\mapsto \one)\forces u\typ A$ and $p(m\mapsto \zer)\forces A$ and $p(m\mapsto \one)\forces A$. By Lemma~\ref{LEM:typeIntroLocal} $p\forces A$. By induction on the derivation of $p\forces A$.
\begin{enumerate}
	\item Let $p\forces A$ by $\lfor_\tnat$. By (\ref{natlocaltermforced}) we have $p\forces t\typ A$. By induction on the derivation of $p\forces t\typ A$.
	\begin{enumerate}
		\item If $p\vdash t\rtred \xoverline{n}\typ A$. By induction on the derivation of $p\forces u\typ A$
		\begin{enumerate}
			\item If $u$ has a canonical $p$-whnf then $p\vdash u\rtred \xoverline{n}\typ A$ and $p\forces t=u\typ A$.
			\item \label{natermeqlocalright} Otherwise, $p\forces u\rtred \mathbb{E}[\f\,\xoverline{k}]\typ A, k\notin\dom{p}$. If $k=m$ then by the definition $p\forces t=u \typ A$. If $k\neq m$ then by monotonicity $p(k\mapsto \zer)(m\mapsto \zer)\forces t=u\typ A$ and $p(k\mapsto \zer)(m\mapsto \one)\forces t= u\typ A$ and $p(k\mapsto \one)(m\mapsto \zer)\forces t=u\typ A$ and $p(k\mapsto \one)(m\mapsto \one)\forces t= u\typ A$. By IH, $p(k\mapsto \zer)\forces t=u \typ A$ and $p(k\mapsto \one)\forces t=u\typ A$. By the definition $p\forces t=u\typ A$.
		\end{enumerate}
		\item If $p\vdash t\rtred \mathbb{E}[\f\,\xoverline{k}]\typ A, k\notin\dom{p}$ and $p(k\mapsto \zer)\forces t\typ A$ and $p(k\mapsto \one)\forces t\typ A$. The statement follows similarly to (\ref{natermeqlocalright}).
	\end{enumerate}
	The statement follows similarly when $p\forces A$ holds by $\lfor_\tunit$ and $\lfor_\tbool$.
	
	\item Let $p\forces A$ by $\lfor_\Pi$ and $p\vdash A\rtred \Pi(x\typ F) G$. By (\ref{pilocaltermforced}) we have $p\forces t\typ A$ and $p\forces u\typ A$. Let $q\leqslant p$. If $q\leqslant p(m\mapsto b)$ for some $b\in \{\zer,\one\}$ then we have $q\forces a \typ F\Rightarrow q\forces t\,a=u\,a\typ G[a]$. Otherwise, we have $q(m\mapsto b)\leqslant p(m\mapsto b)$. Let $q\forces a\typ F$. By monotonicity $q(m\mapsto \zer)\forces a\typ F$ and $q(m\mapsto \one)\forces a\typ F$ and we have $q(m\mapsto \zer)\forces t\,a=u\,a\typ G[a]$ and $q(m\mapsto \one)\forces t\,a=u\,a \typ G[a]$. By IH we have $q\forces t\,a=u\,a\typ G[a]$. Thus $p\forces t=u\typ A$.
	
	\item \label{sigmalocaltermforced} Let $p\forces A$ by $\lfor_\Sigma$ and let $p\vdash A\rtred \Sigma(x\typ F) G$. By (\ref{sigmalocaltermforced}) $p\forces t\typ A$ and $p\forces u\typ A$. We have $p(m\mapsto \zer)\forces t\fpr = u\fpr \typ F$ and $p(m\mapsto \one)\forces t\fpr=u\fpr \typ F$ and $p(m\mapsto \zer)\forces t\spr=u\spr \typ G[t\fpr]$ and $p(m\mapsto \one)\forces t\spr=u\spr \typ G[t\fpr]$. By IH $p\forces t\fpr=u\fpr \typ F$. Since we have $p(m\mapsto \zer)\forces u\spr \typ G[t\fpr]$ and $p(m\mapsto \one)\forces u\spr \typ G[t\fpr]$ then by IH(\ref{termintromontone}) $p\forces u\spr \typ G[t\fpr]$. By IH we have $p\forces t\spr = u\spr \typ G[t\fpr]$. Thus we have $p\forces t=u\typ A$.
	\item Let $p\forces A$ by $\lfor_\tuniv$. The statement then follows similarly to Lemma~\ref{LEM:typeEqLocal}.
	\item Let $p\forces A$ by $\lfor_\mathrm{Loc}$ and let $p\vdash A\rtred \mathbb{E}[\f\,\xoverline{k}], k\notin\dom{p}$. By (\ref{locallocaltermforced}) $p\forces t\typ A$ and $p\forces u\typ A$. If $k=m$ then by the definition $p\forces t=u\typ A$. If $k\neq m$ then by monotonicity $p(k\mapsto \zer)(m\mapsto \zer)\forces t=u\typ A$ and $p(k\mapsto \zer)(m\mapsto \one)\forces t=u\typ A$ and $p(k\mapsto \one)(m\mapsto \zer)\forces t=u\typ A$ and $p(k\mapsto \one)(m\mapsto \one)\forces t=u\typ A$. By IH $p(k\mapsto \zer)\forces t=u\typ A$ and $p(k\mapsto \one)\forces t=u\typ A$. By the definition $p\forces t=u\typ A$.\qedhere
\end{enumerate}
\end{enumerate}
\end{proof}

\begin{cor}[Locality]
	\label{LEM:localCharacter}
	If $p \covt S$ and $q \forces J$ for all $q\in S$ then $p\forces J$.
\end{cor}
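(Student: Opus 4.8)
The plan is to prove this by induction on the derivation of the partition $p \covt S$, with the genuinely delicate work already discharged by the ``one-step'' locality lemmas proved just above. Recall that $p \covt S$ is generated inductively by the clause $p \covt \{p\}$ and by the clause that builds $p \covt S_0 \cup S_1$ from $m \notin \dom{p}$, $p(m\mapsto \zer) \covt S_0$, and $p(m\mapsto \one) \covt S_1$; the induction will follow this generation.

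In the base case $S \coloneqq \{p\}$, the hypothesis ``$q \forces J$ for all $q \in S$'' is literally $p \forces J$, so nothing is needed. In the inductive step, write $S \coloneqq S_0 \cup S_1$ coming from $m \notin \dom{p}$, $p(m\mapsto \zer) \covt S_0$, and $p(m\mapsto \one) \covt S_1$. Since $S_0 \subseteq S$ and $S_1 \subseteq S$, the hypothesis gives $q \forces J$ for every $q \in S_0$ and every $q \in S_1$, so applying the induction hypothesis to the two sub-partitions yields $p(m\mapsto \zer) \forces J$ and $p(m\mapsto \one) \forces J$. I then case on the form of $J$: if $J$ is a type $A$, conclude $p \forces A$ by Lemma~\ref{LEM:typeIntroLocal}; if $J$ is a type equality $A = B$, conclude by Lemma~\ref{LEM:typeEqLocal}; if $J$ is a membership $t \typ A$, conclude by part~(1) of Lemma~\ref{LEM:termIntroEqLocal}; and if $J$ is a term equality $t = u \typ A$, conclude by part~(2) of Lemma~\ref{LEM:termIntroEqLocal}. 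In each case $p \forces J$, which closes the induction.

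I do not expect a real obstacle here: the hard part --- propagating a stuck application $\mathbb{E}[\f\,\xoverline{k}]$ through a split, matching up weak head normal forms, and interleaving the term and term-equality cases --- was exactly the content of Lemma~\ref{LEM:termIntroEqLocal} (together with Lemmas~\ref{LEM:typeIntroLocal} and \ref{LEM:typeEqLocal}), so this corollary is pure bookkeeping. The only points to keep clean are that the outer induction is on the partition (not on any forcing derivation) and that the side typing judgments carried inside each forcing are preserved across the argument, which holds by the corresponding structural properties of the type system.
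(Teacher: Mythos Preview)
Your proposal is correct and is exactly the argument the paper has in mind: the paper's proof reads in full ``Follows from Lemma~\ref{LEM:typeIntroLocal}, Lemma~\ref{LEM:termIntroEqLocal}, and Lemma~\ref{LEM:typeEqLocal} by induction,'' and you have spelled out precisely that induction on the generation of $p \covt S$, invoking the appropriate one-step locality lemma at the splitting step.
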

\begin{proof}
Follows from Lemma~\ref{LEM:typeIntroLocal}, Lemma~\ref{LEM:termIntroEqLocal}, and Lemma~\ref{LEM:typeEqLocal} by induction.
\end{proof}

\begin{lem}
\label{LEM:equivalentTypesHaveEqualPers}
\renewcommand{\theenumi}{\alph{enumi}}
Let $p\forces A=B$. \begin{enumerate}
\item If $p \forces t \typ  A$ then $p\forces t \typ  B$ and if $p\forces u\typ B$ then $p\forces u\typ A$. 
\item If $p \forces t = u\typ  A$ then $p \forces t=u\typ  B$ and if $p\forces v=w \typ B$ then $p\forces v=w \typ A$.
\end{enumerate}
\end{lem}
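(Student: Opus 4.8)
The plan is to prove (a) and (b) simultaneously by induction on the derivation of $p\forces A$, which is also the derivation on which the definition of $p\forces A=B$ branches, freely using Monotonicity (Corollary~\ref{COR:monotonicity}) and Locality (Corollary~\ref{LEM:localCharacter}), both already available. Two observations keep most cases shallow: since $p\forces A=B$ we have $p\forces A$ and $p\forces B$, and since this entails $\vdash_p A=B$, a typed reduction $p\vdash t\rtred c\typ A$ holds if and only if $p\vdash t\rtred c\typ B$, so that term-forcing at $A$ and at $B$ differ only in an inessential type annotation whenever $A$ and $B$ share a whnf.

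I would dispatch the structural cases first. If $p\forces A$ by $\lfor_\tuniv$ the definition forces $B\coloneqq\tuniv$ and there is nothing to show. If $p\forces A$ by $\lfor_\tempt$, then either $p\vdash B\rtred\tempt$, in which case neither side forces any term or term-equality and (a),(b) hold vacuously, or $p\vdash B\rtred\mathbb{E}[\f\,\xoverline{k}]$ with $p(k\mapsto\zer)\forces A=B$ and $p(k\mapsto\one)\forces A=B$ as strictly smaller sub-derivations; here $B$ is forced by $\lfor_\mathrm{Loc}$, so $p\forces u\typ B$ unfolds to $p(k\mapsto b)\forces u\typ B$ for $b\in\{\zer,\one\}$, and we conclude by the induction hypothesis at $p(k\mapsto\zer),p(k\mapsto\one)$ and Locality. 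The cases $\lfor_\tunit,\lfor_\tbool,\lfor_\tnat$ go the same way: when $B$ has the matching base whnf, Lemma~\ref{LEM:uniquenessofwhnf} makes $p\forces B$ be derived by the same rule, the term-forcing clauses at $A$ and at $B$ coincide up to the annotation, and a subsidiary induction on the derivation of $p\forces t\typ A$ (and of $p\forces t=u\typ A$) transports it to $B$, the local sub-clauses being dispatched by that subsidiary hypothesis plus Locality; when instead $p\vdash B\rtred\mathbb{E}[\f\,\xoverline{k}]$ we proceed as in the $\lfor_\tempt$ case. The case $p\forces A$ by $\lfor_\mathrm{Loc}$ is identical in spirit: both sides are stuck at the same $\f\,\xoverline{k}$, the definition gives $p(k\mapsto b)\forces A=B$ as sub-derivations, and each of $p\forces t\typ A\leftrightarrow p\forces t\typ B$ and the equality statements is transported along the induction hypothesis at $p(k\mapsto\zer),p(k\mapsto\one)$ and Locality.

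The $\lfor_\Pi$ and $\lfor_\Sigma$ cases are the heart. Say $p\vdash A\rtred\Pi(x\typ F)G$; then the definition gives either $p\vdash B\rtred\Pi(x\typ H)E$ with $p\forces F=H$ and $\forall q\leqslant p\,(q\forces a\typ F\Rightarrow q\forces G[a]=E[a])$, or the local alternative $p\vdash B\rtred\mathbb{E}[\f\,\xoverline{k}]$ handled as above. In the first subcase, to get $p\forces t\typ A\Rightarrow p\forces t\typ B$ unfold both definitions: for $q\leqslant p$ with $q\forces a\typ H$ we have $q\forces F=H$ by Monotonicity, so the induction hypothesis~(a) applied to $q\forces F=H$ gives $q\forces a\typ F$; then $q\forces t\,a\typ G[a]$ by the hypothesis on $t$, and since $q\forces a\typ F$ we have $q\forces G[a]=E[a]$, so the induction hypothesis~(a) gives $q\forces t\,a\typ E[a]$. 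The argument-equality clause transports identically using parts (a) and (b) of the hypothesis, so $p\forces t\typ B$; the converse and part (b) follow in the same way, exploiting that (a) and (b) are themselves bidirectional. For $\lfor_\Sigma$, with $p\vdash A\rtred\Sigma(x\typ F)G$, $p\vdash B\rtred\Sigma(x\typ H)E$, $p\forces F=H$ and $\forall q\leqslant p\,(q\forces a\typ F\Rightarrow q\forces G[a]=E[a])$: from $p\forces t\typ A$ we have $p\forces t\fpr\typ F$ and $p\forces t\spr\typ G[t\fpr]$; the induction hypothesis~(a) at $F$ gives $p\forces t\fpr\typ H$, and since $p\forces t\fpr\typ F$ yields $p\forces G[t\fpr]=E[t\fpr]$, the induction hypothesis~(a) at $G[t\fpr]$ gives $p\forces t\spr\typ E[t\fpr]$, whence $p\forces t\typ B$; part (b) and the converse are analogous (establishing (a) at a given $(p,A,B)$ first, so that its preconditions $p\forces t\typ B$, $p\forces u\typ B$ are available when proving (b) there).

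I expect the main obstacle to be the well-foundedness of this induction in the $\lfor_\Pi$ case, where the hypothesis is invoked not only at $p$ but at arbitrary $q\leqslant p$ — namely at $q\forces F=H$ and at $q\forces G[a]=E[a]$ — whereas these were obtained by Monotonicity rather than being literal sub-derivations of $p\forces A=B$. This is legitimate once one reads the inductive-recursive definition stagewise, as suggested by the footnote accompanying it, so that a derivation has a meaningful rank, and one checks that Monotonicity (Corollary~\ref{COR:monotonicity}) does not raise this rank: a derivation of $q\forces F=H$ produced from one of $p\forces F=H$ is no larger, while $p\forces F$ occurs as a premise of the $\lfor_\Pi$ rule for $A$, so all the relevant appeals are at strictly smaller rank. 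With this bookkeeping settled, every step above is either such an appeal to the induction hypothesis or one of the routine unfoldings indicated.
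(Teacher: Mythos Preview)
Your proposal is correct and follows essentially the same approach as the paper: induction on the derivation of $p\forces A$ (and the associated data for $p\forces A=B$), treating $\Pi$ and $\Sigma$ componentwise via the induction hypothesis on $F=H$ and $G[a]=E[a]$, and disposing of the stuck cases by the induction hypothesis at $p(k\mapsto b)$ together with Locality. The paper organizes the non-canonical cases a little differently---rather than your direct case split on whether $B$ has a matching canonical whnf or is stuck, it invokes Lemma~\ref{forcedtypeslocallyreducetocanonical} once to pass to a partition on which both $A$ and $B$ are canonical, and likewise uses Lemma~\ref{LEM:BaseTypeCanonicity} in the base-type cases instead of your subsidiary induction on $p\forces t\typ A$---but this is a packaging difference, not a different idea.

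One small slip: in your $\lfor_\mathrm{Loc}$ case you write ``both sides are stuck at the same $\f\,\xoverline{k}$''. This need not be true of $B$; when $p\forces A$ holds by $\lfor_\mathrm{Loc}$, clause~(8a) of the definition only records $p(k\mapsto\zer)\forces A=B$ and $p(k\mapsto\one)\forces A=B$, with no constraint on the whnf of $B$. Your argument is unaffected, since you correctly use exactly those sub-derivations plus Locality (and, for the $B\Rightarrow A$ direction, Monotonicity to restrict $p\forces u\typ B$ to $p(k\mapsto b)$), but the parenthetical should be dropped.
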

\begin{proof}
By induction on the derivations of $p\forces A$, $p\forces B$ and $p\forces A=B$
\begin{enumerate}
\item Let $p\vdash A\rtred \tnat$ and $p\vdash B\rtred \tnat$
\begin{enumerate}
	\item Let $p\forces t\typ A$. By Lemma~\ref{LEM:BaseTypeCanonicity} there is a partition $p\covt S$ where for each $q\in S$ $q\vdash t\rtred \xoverline{n}\typ A$ for some $n\in \nats$. But then $q\vdash t\rtred \xoverline{n}\typ B$ and $q\forces t\typ B$. By locality $p\forces t\typ B$. Similarly if $p\forces u\typ B$ then $p\forces u\typ A$.
	\item Let $p\forces t=u\typ A$ then there is a partition $p\covt S$ where for each $q\in S$ $q\vdash t\rtred \xoverline{n}\typ A$ and $q\vdash u\rtred \xoverline{n}\typ A$ for some $n\in \nats$. But then $q\vdash t\rtred \xoverline{n}\typ B$ and $q\vdash u\rtred \xoverline{n}\typ B$ and $q\forces t=u\typ B$. By locality $p\forces t=u\typ B$. Similarly $p\forces v=w\typ A$ whenever $p\forces v=w\typ B$
\end{enumerate}
\item Let $p\vdash A\rtred \Pi(x\typ F) G$ and $p\vdash B\rtred \Pi(x\typ H) E$.
\begin{enumerate}
	\item Let $p\forces t\typ A$ and $q\leqslant p$. Let $q\forces a\typ H$. From $p\forces A=B$ we get $p\forces F=H$ and by monotonicity $q\forces F=H$. By IH $q\forces a\typ F$. Thus we have $q\forces t\,a\typ G[a]$. Since $q\forces G[a]=E[a]$, by IH $q\forces t\,a \typ E[a]$. Similarly if $q\forces a=b\typ H$ by monotonicity $q\forces a=b\typ H$ and by IH $q\forces a=b\typ F$. Thus $q\forces t\,a = t\,b \typ G[a]$ and since $q\forces G[a]=E[a]$. By IH $q\forces t\,a=t\,b \typ E[a]$. Similarly $p\forces u\typ A$ when $p\forces u\typ B$.
	\item Let $p\forces t=u\typ A$ and $q\leqslant p$. Let $q\forces a\typ H$. Similarly to the above we get $q\forces t\,a = u\,a \typ \typ E[a]$. Thus showing $q\forces t=u\typ B$. Similarly we have $q\forces v=w\typ A$ when $q\forces v=w\typ B$
\end{enumerate}
\item Let $p\vdash A\rtred \Sigma(x\typ F) G$ and $p\vdash B\rtred \Sigma(x\typ H) E$.
\begin{enumerate}
	\item Let $p\forces t\typ A$. We have $p\forces t\fpr \typ F$ and $p\forces t\spr \typ G[t\fpr]$. From $p\forces A=B$ we get $p\forces F=H$ and by IH $p\forces t\fpr \typ H$. From $p\forces A=B$ we get $p\forces G[t\fpr]=E[t\fpr]$ and by IH $p\forces t\spr \typ E[t\fpr]$. Thus $p\forces t\typ B$. Similarly we have $p\forces u\typ A$ when $p\forces u\typ B$.
	
	\item Let $p\forces t=u\typ A$. We have $p\forces t\fpr =u\fpr \typ F$ and $p\forces t\spr =u\spr\typ G[t\fpr]$. From $p\forces A=B$ we get $p\forces F=H$ and by IH $p\forces t\fpr=u\fpr \typ H$. From $p\forces A=B$ we get $p\forces G[t\fpr]=E[t\fpr]$ and by IH $p\forces t\spr=u\spr \typ E[t\fpr]$. Thus $p\forces t=u\typ B$. Similarly we have $p\forces v=w\typ A$ when $p\forces v=w\typ B$.
	\end{enumerate}
	\item If either $A$ or $B$ does not reduce to a canonical $p$-whnf then by Lemma~\ref{forcedtypeslocallyreducetocanonical} we have a partition $p\covt S$ where for each $q \in S$ both $A$ and $B$ have canonical whnf and we can show the statement for each $q\in S$ by the above. By locality the statement follows for~$p$.\qedhere
\end{enumerate}
\end{proof}

Immediately from the definition we have
\begin{lem}
If $p \forces A$ then $p\forces A=A$.
\end{lem}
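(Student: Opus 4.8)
The plan is to induct on the derivation of $p\forces A$ and, in each case, exhibit a derivation of $p\forces A=A$ by invoking the clause in the definition of $p\forces A=B$ that matches the last rule used, instantiated with $B\coloneqq A$. Several of those clauses carry an ambient side condition $\vdash_p A=B$; with $B\coloneqq A$ this becomes $\vdash_p A=A$, which is discharged by the congruence rule $\Gamma\vdash_p F \Rightarrow \Gamma\vdash_p F=F$ together with the observation (recorded just after the definition) that $p\forces A$ entails $\vdash_p A$.

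For the base rules this is immediate and requires no induction hypothesis. If $p\forces A$ is derived by $\lfor_\tempt$, $\lfor_\tunit$, $\lfor_\tbool$ or $\lfor_\tnat$, then $p\vdash A\rtred M$ for the corresponding constant $M$, and the first sub-clause of the matching item (``$p\vdash B\rtred M$'') with $B\coloneqq A$ gives $p\forces A=A$. If it is derived by $\lfor_\tuniv$, then $A\coloneqq\tuniv$ and the clause ``$p\forces A=B$ if $B\coloneqq\tuniv$'' applies directly.

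The recursive rules use the induction hypothesis. Suppose $p\forces A$ is derived by $\lfor_\Pi$ with $p\vdash A\rtred \Pi(x\typ F)G$. Its premises include $p\forces F$ and, for all $q\leqslant p$ and all $a$, the implication $q\forces a\typ F \Rightarrow q\forces G[a]$; these are subderivations, so the induction hypothesis yields $p\forces F=F$ and, for each such $q$ and $a$, $q\forces G[a]=G[a]$. Instantiating the clause ``$p\vdash B\rtred \Pi(x\typ H)E$ and $p\forces F=H$ and $\forall q\leqslant p(q\forces a\typ F\Rightarrow q\forces G[a]=E[a])$'' with $B\coloneqq A$, $H\coloneqq F$, $E\coloneqq G$ then gives $p\forces A=A$. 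The case $\lfor_\Sigma$ is identical with $\Sigma$ in place of $\Pi$. Finally, if $p\forces A$ is derived by $\lfor_\mathrm{Loc}$ with $p\vdash A\rtred \mathbb{E}[\f\,\xoverline{k}]$ and $k\notin\dom p$, then $p(k\mapsto\zer)\forces A$ and $p(k\mapsto\one)\forces A$ are subderivations, so the induction hypothesis gives $p(k\mapsto\zer)\forces A=A$ and $p(k\mapsto\one)\forces A=A$, and the clause for $\lfor_\mathrm{Loc}$ delivers $p\forces A=A$.

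There is essentially no obstacle here; the only point worth flagging is that the induction must be run over the derivation order of the (mutual, induction--recursion) definition, so that the universally quantified premises $q\forces G[a]$ of the $\lfor_\Pi$ and $\lfor_\Sigma$ rules count as strictly smaller derivations — the same convention already used in Lemmas~\ref{LEM:typeIntroMonotone}--\ref{LEM:termIntroEqLocal}. Granting this, every case is a direct appeal to the definition, which is why the statement is ``immediate''.
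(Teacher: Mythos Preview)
Your proof is correct and is precisely the unfolding of the paper's one-line justification ``Immediately from the definition we have'': you induct on the derivation of $p\forces A$ and in each case instantiate the corresponding clause of the definition of $p\forces A=B$ with $B\coloneqq A$, using the induction hypothesis for the recursive premises. This is the same approach, just made explicit.
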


\begin{lem}
If $p\forces A=B$ then $p\forces B=A$.
\end{lem}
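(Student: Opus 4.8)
The plan is to proceed by induction on the derivation of $p\forces A=B$, which comes equipped with derivations of $p\forces A$ and of $p\forces B$, and to split on the last rule used to derive $p\forces A$; by Lemma~\ref{LEM:uniquenessofwhnf} this rule is determined by the shape of the $p$-whnf of $A$, and similarly the rule deriving $p\forces B$ is determined once the $p$-whnf of $B$ is known. If $p\vdash A\rtred M$ with $M$ one of $\tempt,\tunit,\tbool,\tnat$, the defining clause for $p\forces A=B$ forces either $p\vdash B\rtred M$ or $p\vdash B\rtred\mathbb{E}[\f\,\xoverline{k}]$ with $k\notin\dom{p}$; in the first case $p\forces B$ is derived by the same base-type rule, and $p\forces B=A$ then holds because $p\vdash A\rtred M$. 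If $p\forces A$ is derived by $\lfor_\tuniv$ then $A\coloneqq B\coloneqq\tuniv$ and $p\forces B=A$ is immediate.

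All the ``local'' sub-clauses — those of the form ``$p\vdash B\rtred\mathbb{E}[\f\,\xoverline{k}]$ (respectively $p\vdash A\rtred\mathbb{E}[\f\,\xoverline{k}]$), $k\notin\dom{p}$, and $p(k\mapsto\zer)\forces A=B$ and $p(k\mapsto\one)\forces A=B$'' — are handled uniformly: the two hypotheses $p(k\mapsto\zer)\forces A=B$ and $p(k\mapsto\one)\forces A=B$ are strict subderivations, so by the induction hypothesis $p(k\mapsto\zer)\forces B=A$ and $p(k\mapsto\one)\forces B=A$, whence $p\forces B=A$ by Lemma~\ref{LEM:typeEqLocal}. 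This covers the case where $p\forces A$ is derived by $\lfor_\mathrm{Loc}$, as well as the second sub-clause in each of the base-type, $\lfor_\Pi$ and $\lfor_\Sigma$ cases.

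The only case with real content is $p\vdash A\rtred\Pi(x\typ F)G$ together with $p\vdash B\rtred\Pi(x\typ H)E$ (and its $\lfor_\Sigma$ analogue, which is identical in shape). Here $p\forces A=B$ supplies $p\forces F=H$ and $\forall q\leqslant p\,(q\forces a\typ F\Rightarrow q\forces G[a]=E[a])$, and since $p\vdash B\rtred\Pi(x\typ H)E$ the derivation of $p\forces B$ is by $\lfor_\Pi$, so to obtain $p\forces B=A$ it suffices to produce $p\forces H=F$ and $\forall q\leqslant p\,(q\forces a\typ H\Rightarrow q\forces E[a]=G[a])$. The first follows from $p\forces F=H$ by the induction hypothesis. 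For the second, take $q\leqslant p$ with $q\forces a\typ H$; monotonicity (Corollary~\ref{COR:monotonicity}) gives $q\forces F=H$, Lemma~\ref{LEM:equivalentTypesHaveEqualPers} then gives $q\forces a\typ F$, hence $q\forces G[a]=E[a]$, and the induction hypothesis yields $q\forces E[a]=G[a]$.

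I expect this last step to be the main obstacle: the definition of computable equality of $\Pi$-types is asymmetric, quantifying the argument $a$ over the domain read off the left component, so turning a statement about $q\forces a\typ F$ into one about $q\forces a\typ H$ needs precisely the fact — supplied by Lemma~\ref{LEM:equivalentTypesHaveEqualPers} — that computably equal types have the same computable elements. Everything else is routine bookkeeping, justified by uniqueness of $p$-whnf, about which forcing rule must have produced each judgment in play.
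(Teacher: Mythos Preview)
Your proof is correct and the substantive case — the $\Pi$/$\Sigma$ case, where the asymmetry in the definition forces you to transport $q\forces a\typ H$ back to $q\forces a\typ F$ via Lemma~\ref{LEM:equivalentTypesHaveEqualPers} before invoking the induction hypothesis — is exactly what the paper does. The organization differs: you do a direct induction on the derivation of $p\forces A=B$, handling each $\lfor_\mathrm{Loc}$-shaped sub-clause by applying the induction hypothesis at $p(k\mapsto\zer)$ and $p(k\mapsto\one)$ and then invoking Lemma~\ref{LEM:typeEqLocal}; the paper instead dispatches all non-canonical cases in one stroke by appealing to Lemma~\ref{forcedtypeslocallyreducetocanonical} to pass to a partition $p\covt S$ on which both $A$ and $B$ have canonical whnf, proving $q\forces B=A$ at each $q\in S$ by the canonical case, and then using locality. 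The paper's reduction is more concise, while your case analysis is more explicit about where the recursion bottoms out; neither buys anything the other cannot, and both rely on locality and on Lemma~\ref{LEM:equivalentTypesHaveEqualPers} at the same essential point.
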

\begin{proof}
If both $A$ and $B$ have canonical $p$-whnf then the statement follows by induction from the definition and Lemma~\ref{LEM:equivalentTypesHaveEqualPers}. Otherwise, by Lemma~\ref{forcedtypeslocallyreducetocanonical} we have a partition $p\covt S$ where both $A$ and $B$ have canonical $q$-whnf for all $q\in S$. By monotonicity $q\forces A=B$ and it follows by the above that $q\forces B=A$. By locality $p\forces B=A$.
\end{proof}

\begin{lem}
\label{LEM:typeequiv_refl,symm,trans}
If $p\forces A=B$ and  $p\forces B=C$ then $p \forces A=C$.
\end{lem}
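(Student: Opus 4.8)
The plan is to mimic the proof of the preceding symmetry lemma: first reduce to the case in which $A$, $B$ and $C$ all have a canonical $p$-whnf, and then run an induction on the structure of their common shape.

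For the reduction step, I would apply Lemma~\ref{forcedtypeslocallyreducetocanonical} to $p\forces A=B$ to obtain a partition $p\covt S_1$ on which $A$ and $B$ reduce to canonical whnf; then, using monotonicity (Corollary~\ref{COR:monotonicity}), for each $q\in S_1$ we have $q\forces B=C$, and a second application of Lemma~\ref{forcedtypeslocallyreducetocanonical} gives a partition $q\covt S_q$ on which $B$ and $C$ reduce to canonical whnf. Since a canonical $p$-whnf has no further reduction below $p$, on the composite partition $p\covt\bigcup_{q\in S_1}S_q$ --- which is again a partition of $p$, by a straightforward induction on the partition structure --- all three of $A,B,C$ reduce to canonical whnf. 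By monotonicity $r\forces A=B$ and $r\forces B=C$ for every $r$ in this partition, so by locality (Corollary~\ref{LEM:localCharacter}) it is enough to establish $r\forces A=C$; hence I may assume $A$, $B$, $C$ all have a canonical $p$-whnf.

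In that case, Lemma~\ref{forcedtypeslocallyreducetocanonical} forces the canonical whnf of $A$, of $B$ and of $C$ to share one common shape, and I would argue by induction on the derivations of $p\forces A=B$ and $p\forces B=C$. If that shape is one of $\tempt,\tunit,\tbool,\tnat,\tuniv$, then $p\vdash A\rtred M$ and $p\vdash C\rtred M$ for the corresponding constant $M$, and $p\forces A=C$ is immediate from the definition. If the shape is $\Pi$, say $p\vdash A\rtred\Pi(x\typ F)G$, $p\vdash B\rtred\Pi(x\typ H)E$ and $p\vdash C\rtred\Pi(x\typ H')E'$, then $p\forces A$, $p\forces B$, $p\forces C$ are necessarily derived by $\lfor_\Pi$, and the relevant sub-clauses are the ones for matching $\Pi$-forms rather than the local ones (by uniqueness of the $p$-whnf, Lemma~\ref{LEM:uniquenessofwhnf}, a $\Pi$-headed canonical whnf rules out any reduction to some $\mathbb{E}[\f\,\xoverline{k}]$). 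So $p\forces A=B$ yields $p\forces F=H$ together with $q\forces G[a]=E[a]$ for all $q\leqslant p$ with $q\forces a\typ F$, and $p\forces B=C$ yields $p\forces H=H'$ together with $q\forces E[a]=E'[a]$ for all $q\leqslant p$ with $q\forces a\typ H$. All of these are forced by derivations strictly smaller than those of $p\forces A=B$ and $p\forces B=C$, so the induction hypothesis gives $p\forces F=H'$; and for $q\leqslant p$ with $q\forces a\typ F$, monotonicity together with Lemma~\ref{LEM:equivalentTypesHaveEqualPers} gives $q\forces a\typ H$, hence $q\forces G[a]=E[a]$ and $q\forces E[a]=E'[a]$, hence $q\forces G[a]=E'[a]$ again by the induction hypothesis. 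This is precisely what the $\lfor_\Pi$ clause requires for $p\forces A=C$. The shape-$\Sigma$ case is the same, using the $\lfor_\Sigma$ clause.

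The main obstacle I expect is keeping the induction well-founded in the $\Pi$ and $\Sigma$ cases: one must appeal to the induction hypothesis only on the domain types $F,H,H'$ and the codomain instances $G[a],E[a],E'[a]$, which occur as premises of the respective $\lfor_\Pi$ and $\lfor_\Sigma$ derivations and so are forced by strictly smaller derivations --- this is exactly the point where the open (non-least-fixed-point) reading of the universe flagged in the footnote is used. The remaining bookkeeping --- that a composite of partitions is a partition, and that the hidden premise $\vdash_p A=C$ is available from transitivity of conversion in the type system --- is routine.
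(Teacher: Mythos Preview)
Your proposal is correct and follows the same route as the paper: reduce via Lemma~\ref{forcedtypeslocallyreducetocanonical}, monotonicity and locality to the case where $A,B,C$ all have canonical $p$-whnf, then case-split on the common shape, invoking the induction hypothesis together with Lemma~\ref{LEM:equivalentTypesHaveEqualPers} in the $\Pi/\Sigma$ cases. The only cosmetic difference is that the paper phrases the induction on the derivation of $p\forces A$ rather than on $p\forces A=B$ and $p\forces B=C$; your appeal to the open-universe footnote is unnecessary, since well-foundedness here is just the ordinary inductive structure of $p\forces A$, of which $p\forces F$ and each $q\forces G[a]$ are subderivations via the $\lfor_\Pi$ premise.
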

\begin{proof}
\leavevmode
Let $p\forces A=B$ and $p\forces B=C$. We then have that $p\forces A$, $p\forces B$ and $p\forces C$. Thus $A$, $B$ and $C$ have proper $p$-whnf. If any of these proper $p$-wnf is not canonical then by Lemma~\ref{forcedtypeslocallyreducetocanonical} we can find a partition $p\covt S$ where all three have canonical $q$-whnf for all $q\in S$. By monotonicity $q\forces A=B$ and $q\forces B=C$ for all $q\in S$. If we can then show that $q\forces A=C$ for all $q\in S$ then by locality we will have $p\forces A=C$. Thus we can assume w.l.o.g that $A$, $B$ and $C$ have canonical $p$-whnf.

By induction on the derivations of $p\forces A$, 
\begin{enumerate}[leftmargin=10mm]
\item Let $p\forces A$ by $\lfor_\tnat$. Since by assumption $B$ has a canonical $p$-wnf and $p\forces A=B$ then $p\vdash B\rtred \tnat$. Similarly $p\vdash C\rtred \tnat$ and we have $p\forces A=C$

The statement follows similarly when $p\forces A$ holds by $\lfor_\tempt$, $\lfor_\tunit$ and $\lfor_\tbool$.

\item Let $p\forces A$ by $\lfor_\Pi$ and $p\vdash A\rtred \Pi(x\typ F) G$. From $p\forces A=B$ and since by assumption $B$ has a canonical $p$-whnf we have $p\vdash B\rtred \Pi(x\typ H) E$ and $p\forces F=H$ and $\forall q\leqslant p (q\forces a \typ F \Rightarrow q\forces G[a]=E[a])$. Since $p\forces B=C$ and by assumption $C$ has a canonical $p$-whnf we have $p\vdash C\rtred \Pi(x\typ T) R$ and $p\forces H=T$ and $\forall q\leqslant p (q\forces b\typ H \Rightarrow q\forces E[b]=R[b])$. 

By IH $p\forces F=T$. Let $q\leqslant p$ and $q\forces a\typ F$. By monotonicity $q\forces F=H$ and by Lemma~\ref{LEM:equivalentTypesHaveEqualPers} $q\forces a\typ H$. Thus $q\forces E[a]=R[a]$. But $q\forces G[a]=E[a]$. By IH $q\forces G[a]=R[a]$. Thus we have $p\forces A=C$.

The statement follows similarly when $p\forces A$ holds by $\lfor_\Sigma$. 

\item[$(\mathbf{F_\tuniv})$] Since $p\forces A=B$ and $p\forces B=C$, we have $B\coloneqq \tuniv$ and $C\coloneqq \tuniv$ and the statements follows.\qedhere
\end{enumerate}
\end{proof}

Immediately from the definition we have the following 
\begin{lem}
If $p\forces t\typ A$ then $p\forces t=t\typ A$.
\end{lem}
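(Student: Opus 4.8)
The plan is to argue by induction on the derivation of $p\forces t\typ A$, which is itself generated by recursion on the derivation of $p\forces A$, so the case split is first on the rule introducing $p\forces A$ and then on the clause witnessing $p\forces t\typ A$. The uniform observation driving every case is that each clause defining $p\forces t=u\typ A$, once specialised to $u\coloneqq t$, collapses --- using uniqueness of $p$-whnf (Lemma~\ref{LEM:uniquenessofwhnf}) so that ``$t$ and $u$ have the same whnf'' is automatic --- to exactly the clause defining $p\forces t\typ A$ together with reflexivity applied recursively to the strictly smaller forcing derivations that that clause mentions. The hidden typing premises, e.g.\ $\vdash_p t=t\typ A$, are supplied by reflexivity of conversion in the underlying type system, since $p\forces t\typ A$ entails $\vdash_p t\typ A$.

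Concretely: if $p\forces A$ is derived by $\lfor_\tempt$ there is no $t$ with $p\forces t\typ A$, so the claim holds vacuously. If it is derived by $\lfor_\tunit$, $\lfor_\tbool$ or $\lfor_\tnat$, then $p\forces t\typ A$ holds either because $p\vdash t\rtred v\typ A$ for the relevant canonical whnf $v$ of that base type, and then $p\vdash t\rtred v\typ A$ holds of both occurrences of $t$ so the first equality sub-clause applies; or because $p\vdash t\rtred\mathbb{E}[\f\,\xoverline{k}]\typ A$ with $k\notin\dom{p}$ and $p(k\mapsto\zer)\forces t\typ A$, $p(k\mapsto\one)\forces t\typ A$, in which case the induction hypothesis on these smaller derivations gives $p(k\mapsto\zer)\forces t=t\typ A$ and $p(k\mapsto\one)\forces t=t\typ A$, and the localising sub-clause concludes. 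For $\lfor_\Pi$ with $p\vdash A\rtred\Pi(x\typ F)G$ we have $q\forces t\,a\typ G[a]$ for all $q\leqslant p$ with $q\forces a\typ F$, so the induction hypothesis applied to the subderivation $q\forces t\,a\typ G[a]$ yields $q\forces t\,a=t\,a\typ G[a]$, which is precisely the condition required for $p\forces t=t\typ A$; $\lfor_\Sigma$ is identical with $t\fpr$, $t\spr$ in place of applications. For $\lfor_\tuniv$ (so $A\coloneqq\tuniv$ and $t$ is a small type) one inspects which sub-clause witnesses $p\forces t\typ\tuniv$ --- $t$ reduces to a base type, to $\Pi(x\typ F)G$, to $\Sigma(x\typ F)G$, or to some $\mathbb{E}[\f\,\xoverline{k}]$ --- and in each case the matching equality sub-clause asks exactly for $p\forces F=F\typ\tuniv$ and $q\forces G[a]=G[a]\typ\tuniv$, which are the induction hypothesis applied to the subderivations $p\forces F\typ\tuniv$ and $q\forces G[a]\typ\tuniv$. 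Finally for $\lfor_\mathrm{Loc}$, with $p\vdash A\rtred\mathbb{E}[\f\,\xoverline{k}]$ and $k\notin\dom{p}$, we have $p(k\mapsto b)\forces t\typ A$ for $b\in\{\zer,\one\}$ with smaller derivations, so the induction hypothesis plus the $\lfor_\mathrm{Loc}$ equality sub-clause (equivalently, Locality, Corollary~\ref{LEM:localCharacter}) give $p\forces t=t\typ A$.

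I expect the only delicate point to be the shape of the induction rather than any individual case. In the $\lfor_\Pi$ and $\lfor_\tuniv$ clauses the recursive appeals are at the types $G[a]$, which differ from $A$, so one has to rely on the fact that the judgments $q\forces G[a]$ (resp.\ $q\forces G[a]\typ\tuniv$) occurring as premises of those clauses count as structurally smaller than $p\forces A$ (resp.\ than $p\forces t\typ A$) in the inductive--recursive generation of the forcing relation. Granting the well-foundedness of that generation --- which is exactly what the definition is set up to provide --- every case reduces to the routine clause-matching sketched above, and there is no further obstacle.
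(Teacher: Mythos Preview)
Your proposal is correct and matches the paper's approach: the paper records this lemma as holding ``immediately from the definition'' without further argument, and your case analysis is precisely the unpacking of that claim --- each clause of $p\forces t=u\typ A$ specialised to $u\coloneqq t$ coincides with the corresponding clause of $p\forces t\typ A$, modulo the recursive appeals you identify. One small simplification: you do not need to invoke uniqueness of $p$-whnf, since with $u\coloneqq t$ the two occurrences are literally the same term and hence trivially share a whnf.
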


\begin{lem}
If $p\forces t=u\typ A$ and then $p \forces u=t\typ A$.
\end{lem}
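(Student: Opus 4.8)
I would prove this exactly as the preceding symmetry lemma for type equality was proved. Let $p\forces t=u\typ A$; then automatically $p\forces A$, $p\forces t\typ A$ and $p\forces u\typ A$, and (since the type system has symmetry of judgmental equality and, by Theorem~\ref{properties}, we may ignore the typing side-conditions) the premises $\vdash_p u=t\typ A$ etc.\ needed to apply the clauses defining $p\forces u=t\typ A$ are available. First I would dispose of the case where $A$ does not have a canonical $p$-whnf: then, as in the earlier argument, Lemma~\ref{forcedtypeslocallyreducetocanonical} gives a partition $p\covt S$ with $A$ having a canonical $q$-whnf for each $q\in S$; by monotonicity (Corollary~\ref{COR:monotonicity}) $q\forces t=u\typ A$ for all $q\in S$, and once $q\forces u=t\typ A$ is established for each such $q$, locality (Corollary~\ref{LEM:localCharacter}) delivers $p\forces u=t\typ A$. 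So it remains to treat the case where $A$ has a canonical $p$-whnf, which I would do by induction on the derivation of $p\forces A$; by uniqueness of the whnf (Lemma~\ref{LEM:uniquenessofwhnf}) that derivation does not end with $\lfor_\mathrm{Loc}$, so only the canonical clauses $\lfor_\tempt,\lfor_\tunit,\lfor_\tbool,\lfor_\tnat,\lfor_\Pi,\lfor_\Sigma,\lfor_\tuniv$ arise.

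If $p\forces A$ is by $\lfor_\tempt$ the hypothesis is impossible and there is nothing to show. If it is by $\lfor_\tunit$, $\lfor_\tbool$ or $\lfor_\tnat$, Lemma~\ref{LEM:BaseTypeCanonicity} supplies a partition $p\covt S$ such that for every $q\in S$ the terms $t$ and $u$ share a canonical $q$-whnf $c$, so $q\vdash u\rtred c\typ A$ and $q\vdash t\rtred c\typ A$; this is precisely the first subcase of the relevant term-equality clause, so $q\forces u=t\typ A$, and locality concludes. If it is by $\lfor_\Pi$ with $p\vdash A\rtred\Pi(x\typ F)G$, the hypothesis unfolds to $\forall q\leqslant p\,(q\forces a\typ F\Rightarrow q\forces t\,a=u\,a\typ G[a])$; for each such $q$ and $a$ the derivation of $q\forces G[a]$ is a subderivation of that of $p\forces A$, so the induction hypothesis (symmetry at $G[a]$) converts $q\forces t\,a=u\,a\typ G[a]$ into $q\forces u\,a=t\,a\typ G[a]$, which is exactly what the clause for $p\forces u=t\typ A$ demands. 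The case $\lfor_\tuniv$ follows from the definition together with Lemma~\ref{LEM:equivalentTypesHaveEqualPers}, just as in the preceding lemma on symmetry of type equality.

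The only case requiring an extra step is $\lfor_\Sigma$, say $p\vdash A\rtred\Sigma(x\typ F)G$. The hypothesis gives $p\forces t\fpr=u\fpr\typ F$ and $p\forces t\spr=u\spr\typ G[t\fpr]$, whereas the clause for $p\forces u=t\typ A$ requires $p\forces u\fpr=t\fpr\typ F$ and $p\forces u\spr=t\spr\typ G[u\fpr]$, and the type of the second component has changed. The first is the induction hypothesis applied at $F$. For the second, the induction hypothesis at $G[t\fpr]$ gives $p\forces u\spr=t\spr\typ G[t\fpr]$; then, using $p\forces t\fpr=u\fpr\typ F$ and the congruence premise of $\lfor_\Sigma$ (namely $\forall q\leqslant p\,(q\forces a=b\typ F\Rightarrow q\forces G[a]=G[b])$) we obtain $p\forces G[t\fpr]=G[u\fpr]$, and extensionality (Lemma~\ref{LEM:equivalentTypesHaveEqualPers}) transports $p\forces u\spr=t\spr\typ G[t\fpr]$ to $p\forces u\spr=t\spr\typ G[u\fpr]$. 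I expect this transport across $G[t\fpr]=G[u\fpr]$ in the $\Sigma$-case to be the only genuine subtlety; every other case is the routine symmetry reshuffle already seen for type equality.
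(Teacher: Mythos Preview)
Your argument is essentially the paper's, and in particular your treatment of the $\Sigma$-case (the only subtle point) is correct. There is, however, a structural problem with how you set up the induction.

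You propose to first dispose of the case where $A$ has no canonical $p$-whnf (via partition and locality), and only \emph{afterwards} run an induction on $p\forces A$ restricted to the canonical clauses. But in the $\lfor_\Pi$ case you invoke the induction hypothesis at $q\forces G[a]$, and in the $\lfor_\Sigma$ case at $p\forces G[t\fpr]$; neither $G[a]$ nor $G[t\fpr]$ need have a canonical whnf, so an induction hypothesis confined to derivations ending in a canonical clause does not cover them. The ``dispose'' step cannot be performed once at the top level and then forgotten: it has to be available at every recursive call.

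The paper avoids this by running a single induction on the derivation of $p\forces A$ that \emph{includes} $\lfor_\mathrm{Loc}$ as an explicit case. There one has $p(k\mapsto\zer)\forces t=u\typ A$ and $p(k\mapsto\one)\forces t=u\typ A$ from the definition; the induction hypothesis (now the full lemma) gives $p(k\mapsto b)\forces u=t\typ A$ for $b\in\{\zer,\one\}$, and clause~8 of the definition yields $p\forces u=t\typ A$. With this organisation the induction hypothesis applies to $G[a]$ and $G[t\fpr]$ regardless of their whnf. Your partition-then-locality argument is essentially this $\lfor_\mathrm{Loc}$ case unrolled, but it must sit \emph{inside} the induction, not before it.

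A minor difference in the $\Sigma$-case: you apply the induction hypothesis at $G[t\fpr]$ and then transport along $p\forces G[t\fpr]=G[u\fpr]$; the paper transports first and applies the induction hypothesis at $G[u\fpr]$. Both orders are fine.
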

\begin{proof}
Let $p\forces t=u\typ A$. We have $p\forces t\typ A$, $p\forces u\typ A$ and $p\forces A$. By induction on the derivation of $p\forces A$.
\begin{enumerate}
\item Let $p\forces A$ by $\lfor_\tnat$. Since $p\forces t=u\typ A$ we have a partition (Lemma~\ref{LEM:BaseTypeCanonicity}) $p\covt S$ where for each $q\in S$ we have $q\vdash t\rtred \xoverline{n}\typ A$ and $q\vdash u\rtred \xoverline{n}\typ A$ for some $n\in \nats$. Hence $q\forces u=t \typ A$ for all $q\in S$. By locality $p\forces t=u\typ A$.

The statement follows similarly when $p\forces A$ is derived by $\lfor_\tunit$ and $\lfor_\tbool$.

\item Let $p\forces A$ by $\lfor_\Pi$ and let $p\vdash A\rtred \Pi(x\typ F) G$. Let $q\leqslant p$ and $q\forces a\typ F$ we then have $q\forces t\,a=u\,a \typ G[a]$. We have $q\forces G[a]$ and by IH $q\forces u\,a=t\,a \typ G[a]$. Thus $p\forces u=t\typ A$.
  
\item Let $p\forces A$ by $\lfor_\Sigma$ and let $p\vdash A\rtred \Sigma(x\typ F) G$. We have $p\forces t\fpr=u\fpr \typ F$ and $p\forces t\spr = u\spr \typ G[t\fpr]$. Since $q\forces F$, by IH $p\forces u\fpr = t\fpr \typ F$. Since $p\forces A$ we have  $p\forces G[t\fpr]=G[u\fpr]$. By Lemma~\ref{LEM:equivalentTypesHaveEqualPers} $p\forces t\spr = u\spr \typ G[u\fpr]$. Since $p\forces G[u\fpr]$, by IH $p\forces u\spr = t\spr \typ G[u\fpr]$. Thus $p\forces u=t\typ A$.

\item Let $p\forces A$ by $\lfor_\tuniv$. The statement then follows similarly to Lemma~\ref{LEM:typeequiv_refl,symm,trans}

\item  Let $p\forces A$ by $\lfor_\mathrm{Loc}$ and let $p\vdash A\rtred \mathbb{E}[\f\,\xoverline{k}], k\notin\dom{p}$ and $p(k\mapsto \zer)\forces A$ and $p(k\mapsto \one)\forces A$. Since $p\forces t=u\typ A$ we have that $p(k\mapsto \zer)\forces t=u\typ A$ and $p(k\mapsto \one)\forces t=u \typ A$. By IH $p(k\mapsto \zer)\forces u=t\typ A$ and $p(k\mapsto \one)\forces u=t\typ A$. By the definition $p\forces u=t\typ A$.\qedhere
\end{enumerate}
\end{proof}

\begin{lem}
If $p\forces t=u\typ A$ and $p\forces u=v\typ A$ then $p\forces t=v\typ A$.
\end{lem}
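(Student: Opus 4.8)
The plan is to follow the structure of the preceding symmetry lemma and of Lemma~\ref{LEM:typeequiv_refl,symm,trans}: argue by induction on the derivation of $p\forces A$, after recording that $p\forces t=u\typ A$ and $p\forces u=v\typ A$ together yield $p\forces A$, $p\forces t\typ A$, $p\forces u\typ A$ and $p\forces v\typ A$ (so all the typing side conditions needed to invoke the definition are available, and reflexivity/symmetry of the relation can be used freely). In each case of the induction I analyse the two hypotheses according to the clauses that could have produced them and reassemble the conclusion, using the inductive hypothesis at a structurally smaller type whenever a dependency forces me to.

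For the base-type cases $\lfor_\tnat$ (and similarly $\lfor_\tunit$, $\lfor_\tbool$) I would first apply Lemma~\ref{LEM:BaseTypeCanonicity} to $p\forces t=u\typ A$ to get a partition $p\covt S$ on which $t$ and $u$ share a canonical $q$-whnf at each $q\in S$. For a fixed $q\in S$, monotonicity (Corollary~\ref{COR:monotonicity}) gives $q\forces u=v\typ A$; a second application of Lemma~\ref{LEM:BaseTypeCanonicity} yields a partition $q\covt S_q$ on which $u$ and $v$ share a canonical whnf, and uniqueness of whnf (Lemma~\ref{LEM:uniquenessofwhnf}) forces this to be the same one $u$ already had, hence the one $t$ has, so $r\forces t=v\typ A$ for all $r\in S_q$. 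Locality (Corollary~\ref{LEM:localCharacter}) gives $q\forces t=v\typ A$, and a second use of locality over $S$ gives $p\forces t=v\typ A$. The $\lfor_\tempt$ case is vacuous, and the $\lfor_\tuniv$ case follows exactly as in Lemma~\ref{LEM:typeequiv_refl,symm,trans}, since the clauses defining $p\forces L=P\typ\tuniv$ parallel those for type equality.

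For $\lfor_\Pi$ with $p\vdash A\rtred\Pi(x\typ F)G$: given $q\leqslant p$ with $q\forces a\typ F$, the two hypotheses supply $q\forces t\,a=u\,a\typ G[a]$ and $q\forces u\,a=v\,a\typ G[a]$, and since $q\forces G[a]$ has a strictly smaller derivation the inductive hypothesis gives $q\forces t\,a=v\,a\typ G[a]$; hence $p\forces t=v\typ A$. For $\lfor_\mathrm{Loc}$ with $p\vdash A\rtred\mathbb{E}[\f\,\xoverline{k}]$, $k\notin\dom p$, both hypotheses decompose over $p(k\mapsto\zer)$ and $p(k\mapsto\one)$ by the locality clause of the definition, the corresponding derivations of the type being strictly smaller; the inductive hypothesis then yields $p(k\mapsto\zer)\forces t=v\typ A$ and $p(k\mapsto\one)\forces t=v\typ A$, and the locality clause recombines these into $p\forces t=v\typ A$.

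The main obstacle is the $\lfor_\Sigma$ case with $p\vdash A\rtred\Sigma(x\typ F)G$, because of the type dependency in the second projection. From the hypotheses I obtain $p\forces t\fpr=u\fpr\typ F$, $p\forces t\spr=u\spr\typ G[t\fpr]$, $p\forces u\fpr=v\fpr\typ F$ and $p\forces u\spr=v\spr\typ G[u\fpr]$. The inductive hypothesis at $F$ gives $p\forces t\fpr=v\fpr\typ F$. The delicate point is that the equation on the second components for $u,v$ lives at $G[u\fpr]$ rather than at $G[t\fpr]$; but $p\forces t\fpr=u\fpr\typ F$ fed into the congruence premise of $\lfor_\Sigma$ gives $p\forces G[t\fpr]=G[u\fpr]$, so Lemma~\ref{LEM:equivalentTypesHaveEqualPers} transports it to $p\forces u\spr=v\spr\typ G[t\fpr]$, and then the inductive hypothesis (transitivity at the strictly smaller type $G[t\fpr]$) delivers $p\forces t\spr=v\spr\typ G[t\fpr]$. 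Pairing the two projections gives $p\forces t=v\typ A$, which completes the induction.
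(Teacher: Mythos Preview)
Your proposal is correct and follows essentially the same route as the paper's proof: induction on the derivation of $p\forces A$, with the $\lfor_\Sigma$ case handled exactly as in the paper (transporting $p\forces u\spr=v\spr\typ G[u\fpr]$ to $G[t\fpr]$ via Lemma~\ref{LEM:equivalentTypesHaveEqualPers} before applying the inductive hypothesis), and the $\lfor_\Pi$, $\lfor_\tuniv$, $\lfor_\mathrm{Loc}$ cases identical in spirit. The only minor difference is in the base-type case: the paper asserts directly that Lemma~\ref{LEM:BaseTypeCanonicity} yields a single partition on which all three of $t,u,v$ share the same canonical whnf, whereas you explicitly take two partitions (one from each hypothesis), refine, and invoke uniqueness of whnf to match the common term $u$; your version is slightly more careful but amounts to the same argument.
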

\begin{proof}
Let $p\forces t=u\typ A$ and  $p\forces u=v\typ A$. We have $p\forces A$, $p\forces t\typ A$, $p\forces u\typ A$ and $p\forces v\typ A$. By induction on the derivation of $p\forces A$.
\begin{enumerate}
\item Let $p\forces A$ by $\lfor_\tnat$. By Lemma~\ref{LEM:BaseTypeCanonicity} there is a partition $p\covt S$ where for each $q\in S$ we have $q\vdash t\rtred \xoverline{n}\typ A$, $q\vdash u\rtred \xoverline{n}\typ A$, and $q\vdash v\rtred \xoverline{n}\typ A$. Thus $q\forces t=v\typ A$ for all $q\in S$. By locality $q\forces t=v\typ A$.

The statement follows similarly when $p\forces A$ by $\lfor_\tunit$ and $\lfor_\tbool$

\item Let $p\forces A$ by $\lfor_\Pi$ and let $p\vdash A\rtred \Pi(x\typ F) G$. Let $q\leqslant p$ and $q\forces a\typ F$. We have then $q\forces t\,a=u\,a \typ G[a]$ and $q\forces u\,a=v\,a\typ G[a]$. By IH $q\forces t\,a = v\,a \typ G[a]$. Thus $p\forces t=v\typ A$.

\item Let $p\forces A$ by $\lfor_\Sigma$ and let $p\vdash A\rtred \Sigma(x\typ F) G$. Since $p\forces t=u\typ A$ we have that $p\forces t\fpr = u\fpr \typ F$ and $p\forces t\spr = u\spr \typ G[t\fpr]$. Similarly we have that $p\forces u\fpr = v\fpr \typ F$ and $p\forces u\spr = v\spr \typ G[u\fpr]$. Since $p\forces A$ we have that $p\forces G[t\fpr]=G[u\fpr]$ and by Lemma~\ref{LEM:equivalentTypesHaveEqualPers} $p\forces u\spr = v\spr \typ G[t\fpr]$. By IH we have $p\forces t\fpr = v\fpr \typ F$ and $p\forces t\spr = v\spr \typ G[t\fpr]$. We have then that $p\forces t= v\typ A$.

\item Let $p\forces A$ by $\lfor_\tuniv$. The statement then follows similarly to Lemma~\ref{LEM:typeequiv_refl,symm,trans}.

\item Let $p\forces A$ by $\lfor_\mathrm{Loc}$ and let $p\vdash A\rtred \mathbb{E}[\f\,\xoverline{k}], k\notin\dom{p}$ and $p(k\mapsto \zer)\forces A$ and $p(k\mapsto \one)\forces A$. We have that $p(k\mapsto \zer)\forces t=u\typ A$ and $p(k\mapsto \one)\forces t=u \typ A$ and $p(k\mapsto \zer)\forces u=v\typ A$ and $p(k\mapsto \one)\forces u=v\typ A$. By IH $p(k\mapsto \zer) \forces t=v\typ A$ and $p(k\mapsto \one)\forces t=v\typ A$. By the definition $p\forces t=v\typ A$.\qedhere
\end{enumerate}
\end{proof}

\noindent
Theorem ~\ref{properties} then follows from the above.

\section{Soundness}
In this section we show that the type theory described in Section~\ref{SEC:Forcingextensionoftypetheory} is sound with respect to the semantics described in Section~\ref{SEC:semantics}. I.e.\ we aim to show that $p\forces J$ whenever $\vdash_p J$. 

\begin{lem}
\label{LEM:types_reductionimplyequiv}
If $p\vdash A \rtred B$ and $p\forces B$ then $p\forces A$ and $p\forces A=B$.
\end{lem}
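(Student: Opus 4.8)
The plan is to prove the statement by induction on the derivation of $p\forces B$, treating $A$ and $p$ as parameters, after recording two routine facts about the typed reduction closure on types. First, $\rtred$ is closed under composition: if $p\vdash A\rtred B$ and $p\vdash B\rtred C$ then $p\vdash A\rtred C$, by induction on the first derivation using the single-step and transitivity clauses — this is just the usual transitivity of a reflexive--transitive closure. Second, $\rtred$ is monotone: if $p\vdash A\rtred B$ and $q\leqslant p$ then $q\vdash A\rtred B$, using monotonicity of the judgments and of $\red$. I also use that $p\vdash A\rtred B$ entails $\vdash_p A=B\typ\tuniv$ (whence also $\vdash_p A$), which is what licenses appealing to the equality clauses of the computability definition.

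In the inductive step one distinguishes the last rule of $p\forces B$. If it is one of $\lfor_\tempt,\lfor_\tunit,\lfor_\tbool,\lfor_\tnat$ (or $\lfor_\tuniv$), then $p\vdash B\rtred M$ for the associated canonical base whnf $M$; composing with $p\vdash A\rtred B$ gives $p\vdash A\rtred M$, so $p\forces A$ holds by the same rule, and $p\forces A=B$ is read straight off the matching equality clause, whose only requirement is that $B$ reduce to $M$. If the last rule is $\lfor_\Pi$ (resp.\ $\lfor_\Sigma$), then $p\vdash B\rtred\Pi(x\typ F)G$ (resp.\ $\Sigma(x\typ F)G$) together with $p\forces F$ and the two universally quantified side conditions on $G$; composing yields $p\vdash A\rtred\Pi(x\typ F)G$, and since those side conditions mention only $F$ and $G$ they may be reused verbatim to obtain $p\forces A$ by $\lfor_\Pi$. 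For $p\forces A=B$ one invokes the $\lfor_\Pi$ (resp.\ $\lfor_\Sigma$) equality clause with the same $F$ and $G$ as the codomain of $B$; it then reduces to $p\forces F=F$ and $q\forces G[a]=G[a]$, i.e.\ to reflexivity of computable type equality, which is available from the lemmas already proved.

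The case needing a little attention is $\lfor_\mathrm{Loc}$: here $p\vdash B\rtred\mathbb{E}[\f\,\xoverline{k}]$ with $k\notin\dom{p}$, and the derivation contains strictly smaller sub-derivations of $p(k\mapsto\zer)\forces B$ and $p(k\mapsto\one)\forces B$. Composition gives $p\vdash A\rtred\mathbb{E}[\f\,\xoverline{k}]$, and monotonicity of $\rtred$ gives $p(k\mapsto b)\vdash A\rtred B$ for $b\in\{\zer,\one\}$; applying the induction hypothesis to the two sub-derivations yields $p(k\mapsto b)\forces A$ and $p(k\mapsto b)\forces A=B$, whence $p\forces A$ by $\lfor_\mathrm{Loc}$ and $p\forces A=B$ by the corresponding local equality clause. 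The degenerate situation $A\equiv B$ of $p\vdash A\rtred B$ needs nothing beyond the hypothesis $p\forces B$ and reflexivity of type equality, and is anyway subsumed by the above, since prepending a trivial reduction is harmless. The only genuine difficulty in the whole argument is organisational: keeping the recursion driven by the derivation of $p\forces B$ (so that it stays well-founded even though $p$ changes in the $\lfor_\mathrm{Loc}$ case), and, for the ``$p\forces A=B$'' half, selecting the clause of the large mutual induction--recursion that matches the rule used to derive $p\forces A$.
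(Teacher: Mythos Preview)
Your proof is correct and follows exactly the route the paper indicates in its one-line proof: induction on the derivation of $p\forces B$, composing the given reduction $p\vdash A\rtred B$ with the whnf reduction of $B$ in each case, and invoking reflexivity (already available from Theorem~\ref{properties}) to discharge the $\lfor_\Pi/\lfor_\Sigma$ equality clauses. Two cosmetic points that do not affect the argument: from $p\vdash A\rtred B$ one obtains $\vdash_p A=B$ (as types) rather than $\vdash_p A=B\typ\tuniv$ in general, and the $\lfor_\tuniv$ clause fires only when $B\equiv\tuniv$ literally---whence $A\equiv\tuniv$ too, since nothing reduces nontrivially to $\tuniv$---so that case is even simpler than you suggest.
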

\begin{proof}
Follows from the definition by induction on the derivation of $p\forces B$.
\end{proof}
\begin{lem}
\label{LEM:terms_onestepreductionimplyequiv}
Let $p\forces A$. If $p\vdash t \red u\typ A$ and $p\forces u\typ A$ then $p\forces t\typ A$ and $p\forces t=u\typ A$.
\end{lem}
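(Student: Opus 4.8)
The plan is to prove the statement by induction on the derivation of $p\forces A$, in the same style as the monotonicity and locality lemmas of Section~\ref{SEC:semantics} and mirroring the proof of Lemma~\ref{LEM:types_reductionimplyequiv}. The observation driving every case is that $p\vdash t\red u\typ A$ unfolds to ``$t\red_p u$ and $\vdash_p t=u\typ A$'', so that $\vdash_p t\typ A$ holds, any reduction $p\vdash u\rtred w\typ A$ extends to a reduction $p\vdash t\rtred w\typ A$ by prepending the single step $t\red_p u$, and $q\vdash t\red u\typ A$ holds for every $q\leqslant p$ by monotonicity of $\red$ and of the typing judgments. I would also use the two remarks following the definition of $p\vdash t\red u\typ A$: that this relation is compatible with application at a $\Pi$-type and with both projections at a $\Sigma$-type; and, from the fact that this lemma sits after Theorem~\ref{properties}, the reflexivity, symmetry, transitivity and extensionality of the forcing relation.

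When $p\forces A$ is derived by $\lfor_\tempt$ the case is vacuous (nothing is forced of an empty type). When it is derived by $\lfor_\tunit$, $\lfor_\tbool$, $\lfor_\tnat$ or $\lfor_\tuniv$, I would run a secondary induction on the derivation of $p\forces u\typ A$. In the ``value'' subcases ($p\vdash u\rtred\xoverline n\typ A$ and the like, or $p\vdash u\rtred M\typ\tuniv$ for a base-type code, or $u$ reducing to a $\Pi$- or $\Sigma$-code) the reduction of $u$ lifts to a reduction of $t$ with the same head, giving $p\forces t\typ A$ at once, and $p\forces t=u\typ A$ either directly or, for the $\Pi/\Sigma$-code subcases, via reflexivity. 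In the ``stuck'' subcase ($p\vdash u\rtred\mathbb E[\f\,\xoverline k]\typ A$, $k\notin\dom p$, with $p(k\mapsto b)\forces u\typ A$), monotonicity gives $p(k\mapsto b)\vdash t\red u\typ A$, the secondary IH yields $p(k\mapsto b)\forces t\typ A$ and $p(k\mapsto b)\forces t=u\typ A$, and the local clauses of the definition close the case. The $\lfor_\mathrm{Loc}$ case of the main induction is parallel, applying the main IH to the strictly smaller derivations $p(k\mapsto b)\forces A$ together with $p(k\mapsto b)\forces u\typ A$.

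The substantial cases are $\lfor_\Pi$ and $\lfor_\Sigma$. For $\lfor_\Pi$, with $p\vdash A\rtred\Pi(x\typ F)G$: given $q\leqslant p$ with $q\forces a\typ F$, the application-compatibility remark gives $q\vdash t\,a\red u\,a\typ G[a]$, and applying the main IH to $q\forces G[a]$ (a premise of the $\lfor_\Pi$ derivation of $p\forces A$, hence legitimately smaller in the inductive-recursive reading) yields $q\forces t\,a\typ G[a]$ and $q\forces t\,a=u\,a\typ G[a]$; this already supplies the first component of $p\forces t\typ A$ and the whole of $p\forces t=u\typ A$. For $\lfor_\Sigma$, with $p\vdash A\rtred\Sigma(x\typ F)G$: the projection remarks give $p\vdash t\fpr\red u\fpr\typ F$ and $p\vdash t\spr\red u\spr\typ G[t\fpr]$; the IH on $p\forces F$ handles the first components, and after transporting $p\forces u\spr\typ G[u\fpr]$ to $p\forces u\spr\typ G[t\fpr]$ using $p\forces G[u\fpr]=G[t\fpr]$ (from $p\forces u\fpr=t\fpr\typ F$ and the type-equality premise of the $\lfor_\Sigma$ derivation) together with Extensionality, the IH on $p\forces G[t\fpr]$ handles the second components.

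I expect the main obstacle to be the second component of $p\forces t\typ A$ in the $\lfor_\Pi$ case, namely $\forall q\leqslant p\,(q\forces a=b\typ F\Rightarrow q\forces t\,a=t\,b\typ G[a])$: one must combine $q\forces t\,a=u\,a\typ G[a]$ and $q\forces t\,b=u\,b\typ G[b]$ (from the first part), $q\forces u\,a=u\,b\typ G[a]$ (from $p\forces u\typ A$), and $q\forces G[a]=G[b]$ (from the $\lfor_\Pi$ derivation of $p\forces A$), using Extensionality to move $t\,b=u\,b$ to type $G[a]$ and then Symmetry and Transitivity from Theorem~\ref{properties} to chain everything. A secondary, bookkeeping obstacle is checking that every appeal to the main IH is to a strictly smaller derivation; in the inductive-recursive reading this is precisely the status of the universally quantified premises of $\lfor_\Pi$ and $\lfor_\Sigma$, exactly as in the monotonicity and locality lemmas.
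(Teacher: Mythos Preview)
Your proposal is correct and follows essentially the same approach as the paper's proof: induction on the derivation of $p\forces A$, with a secondary induction on $p\forces u\typ A$ in the base-type and universe cases, and exactly the same symmetry/transitivity/extensionality manoeuvre you anticipated to obtain $q\forces t\,a=t\,b\typ G[a]$ in the $\lfor_\Pi$ case. The only cosmetic difference is that in the $\lfor_\Sigma$ case the paper first applies the IH at $G[u\fpr]$ (using $p\vdash t\spr\red u\spr\typ G[u\fpr]$) and then transports to $G[t\fpr]$, whereas you transport first and then apply the IH at $G[t\fpr]$; both orders work.
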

\begin{proof}
Let $p\vdash t \red u\typ A$ and $p\forces u\typ A$. By induction on the derivation of $p\forces A$.
\begin{enumerate}
\item Let $p\forces A$ by $\lfor_\tuniv$. The statement follows similarly to Lemma~\ref{LEM:types_reductionimplyequiv}. 

\item Let $p\forces A$ by $\lfor_\tnat$. By induction on the derivation of $p\forces u\typ A$. If $p\vdash u\rtred \xoverline{n} \typ \tnat$ then $p\vdash t\rtred \xoverline{n} \typ \tnat$ and the statement follows by the definition. If $p\vdash u\rtred \mathbb{E}[\f\,\xoverline{k}]\typ A, k\notin \dom{p}$ and $p(k\mapsto \zer)\forces u\typ A$ and $p(k\mapsto \one)\forces u\typ A$ then since $p(k\mapsto b)\vdash t\red u\typ A$, by IH  $p(k\mapsto \zer)\forces t\typ A$ and $p(k\mapsto \one)\forces t\typ A$ and $p(k\mapsto \zer)\forces t=u\typ A$ and $p(k\mapsto \one)\forces t= u\typ A$. By the definition $p\forces t\typ A$ and $p\forces t=u \typ A$. 

The statement follows similarly for $\lfor_\tunit$ and $\lfor_\tbool$.

\item Let $p\forces A$ by $\lfor_\Pi$ and let $p\vdash A \rtred \Pi(x\typ F)G$. Let $q\leqslant p$ and $q\forces a\typ F$. We have $q\vdash t\,a\red u\,a \typ G[a]$. By IH $q\forces t\,a \typ G[a]$ and $q\forces t\,a=u\,a \typ G[a]$. If $q\forces a=b \typ F$ we similarly get $q\forces t\,b\typ G[b]$ and $q\forces t\,b=u\,b\typ G[b]$. Since $q\forces G[a]=G[b]$, by Lemma~\ref{LEM:equivalentTypesHaveEqualPers}  $q\forces t\,b=u\,b\typ G[a]$. But $q\forces u\,a=u\,b\typ G[a]$. By symmetry and transitivity $q\forces t\,a = t\,b \typ G[a]$. Thus $p\forces t\typ A$ and $p\forces t=u\typ A$.

\item Let $p\forces A$ by $\lfor_\Sigma$ and let $p\vdash A \rtred \Sigma(x\typ F)G$. From $p\vdash t\red u\typ A$ we have $\vdash_p t\typ A$ and we have $p\vdash t\fpr \red u\fpr\typ F$ and $p\vdash t\spr\red u\spr\typ G[u\fpr]$. By IH $p\forces t\fpr \typ F$ and $p\forces t\fpr = u\fpr \typ F$. By IH $p\forces t\spr \typ G[u\fpr]$ and $p\forces t\spr = u\spr \typ G[u\fpr]$. But since $p\forces A$ and we have shown $p\forces t\fpr = u\fpr \typ F$ we get $p\forces G[t\fpr]=G[u\fpr]$. By Lemma~\ref{LEM:equivalentTypesHaveEqualPers} $p\forces t\spr \typ G[t\fpr]$ and $p\forces t\spr=u\spr\typ G[t\fpr]$. Thus $p \forces t\typ A$ and $p \forces t=u\typ A$

\item Let $p\forces A$ by $\lfor_\mathrm{Loc}$. Let $p\vdash A\rtred \mathbb{E}[\f\,\xoverline{k}], k\notin\dom{p}$ and $p(k\mapsto \zer)\forces A$ and $p(k\mapsto \one)\forces A$. Since $p\forces u\typ A$ we have $p(k\mapsto \zer)\forces u\typ A$ and $p(k\mapsto \one)\forces u\typ A$. But we have $p(k\mapsto b)\vdash t\red u\typ A$. By IH $p(k\mapsto \zer)\forces t\typ A$ and $p(k\mapsto \one)\forces t\typ A$ and $p(k\mapsto \zer)\forces t=u\typ A$ and $p(k\mapsto \one)\forces t= u \typ A$. By the definition $p\forces t\typ A$ and $p\forces t=u \typ A$. \qedhere
\end{enumerate}
\end{proof}

\begin{cor}
\label{COR:terms_reductionimplyequiv}
Let $p\vdash t \rtred u\typ A$ and $p\forces A$. If $p\forces u\typ A$ then $p\forces t\typ A$ and $p\forces t=u\typ A$.
\end{cor}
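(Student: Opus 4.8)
The plan is to induct on the derivation of $p\vdash t\rtred u\typ A$, following the three-clause definition of the reflexive--transitive closure. Throughout we keep the standing hypotheses $p\forces A$ and $p\forces u\typ A$ (the former is in fact subsumed by the latter, since $p\forces a\typ A$ entails $p\forces A$).

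In the reflexivity clause we have $t\coloneqq u$, so $p\forces t\typ A$ is immediate, and $p\forces t=u\typ A$ follows from reflexivity of computable equality (Theorem~\ref{properties}(3)), i.e.\ from $p\forces t\typ A$ we get $p\forces t=t\typ A$. In the one-step clause, $p\vdash t\red u\typ A$, and the conclusion is precisely Lemma~\ref{LEM:terms_onestepreductionimplyequiv} applied with $p\forces A$ and $p\forces u\typ A$.

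In the transitive clause the derivation splits as $p\vdash t\red v\typ A$ together with $p\vdash v\rtred u\typ A$ for some intermediate $v$. Since $p\forces u\typ A$, the induction hypothesis applied to the shorter derivation $p\vdash v\rtred u\typ A$ gives $p\forces v\typ A$ and $p\forces v=u\typ A$. Lemma~\ref{LEM:terms_onestepreductionimplyequiv}, applied now to $p\vdash t\red v\typ A$ and $p\forces v\typ A$, yields $p\forces t\typ A$ and $p\forces t=v\typ A$. Finally, transitivity of computable equality (Theorem~\ref{properties}(5)) combines $p\forces t=v\typ A$ and $p\forces v=u\typ A$ into $p\forces t=u\typ A$, completing the induction. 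There is no genuine obstacle here; the only points requiring care are to invoke the one-step lemma on the correct intermediate term $v$ and to have transitivity of the forcing relation available, which is exactly why this corollary is placed after Theorem~\ref{properties}.
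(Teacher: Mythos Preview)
Your proof is correct and is exactly the argument the paper leaves implicit: the corollary is stated without proof immediately after Lemma~\ref{LEM:terms_onestepreductionimplyequiv}, and the intended derivation is precisely the induction on the length of the reduction chain that you spell out, using reflexivity and transitivity from Theorem~\ref{properties} to glue the steps together.
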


\begin{cor}
\label{COR:genericwelltyped}
$\forces \f\typ  \tnat\rightarrow \tbool$. 
\end{cor}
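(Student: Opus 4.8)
The plan is to unfold the clause of the computability relation for $\Pi$-types. Recall that $\tnat\rightarrow\tbool$ is notation for $\Pi(x\typ\tnat)\tbool$. First I would check $\emptyset\forces\tnat\rightarrow\tbool$: this follows from clause $\lfor_\Pi$, since $\emptyset\forces\tnat$ (by $\lfor_\tnat$), and for every $q\leqslant\emptyset$ and every $a$ we have $q\forces\tbool$ (by $\lfor_\tbool$) and $q\forces\tbool=\tbool$, so all premises of $\lfor_\Pi$ are met (the hidden premise $\vdash_\emptyset\tnat\rightarrow\tbool$ is immediate). Since (\textsc{$\f$-I}) gives $\vdash_\emptyset\f\typ\tnat\rightarrow\tbool$, the clause defining $p\forces t\typ A$ when $p\forces A$ by $\lfor_\Pi$ reduces the goal to two implications: for all $q\leqslant\emptyset$, (i) if $q\forces a\typ\tnat$ then $q\forces\f\,a\typ\tbool$, and (ii) if $q\forces a=b\typ\tnat$ then $q\forces\f\,a=\f\,b\typ\tbool$.

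The key sublemma is that $q\forces\f\,\xoverline{n}\typ\tbool$ for every condition $q$ and every $n$; I would prove it by cases on whether $n\in\dom{q}$. The hidden premise $\vdash_q\f\,\xoverline{n}\typ\tbool$ holds by (\textsc{$\f$-I}) and application. If $n\in\dom{q}$, then by (\textsc{$\f$-red}) and (\textsc{$\f$-eval}) we get $q\vdash\f\,\xoverline{n}\rtred q(n)\typ\tbool$ with $q(n)\in\{\zer,\one\}$ canonical, so $q\forces\f\,\xoverline{n}\typ\tbool$. If $n\notin\dom{q}$, then $\f\,\xoverline{n}$ is already a proper $q$-whnf of the form $\mathbb{E}[\f\,\xoverline{n}]$ with $\mathbb{E}=[\;]$; for each $b\in\{\zer,\one\}$ we have $q(n\mapsto b)\vdash\f\,\xoverline{n}\rtred b\typ\tbool$ and hence $q(n\mapsto b)\forces\f\,\xoverline{n}\typ\tbool$ by the previous case, so the ``stuck'' sub-case of the $\lfor_\tbool$ term clause gives $q\forces\f\,\xoverline{n}\typ\tbool$.

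For (i), assume $q\forces a\typ\tnat$. By Lemma~\ref{LEM:BaseTypeCanonicity} there is a partition $q\covt S$ and, for each $r\in S$, a numeral $\xoverline{n_r}$ with $r\vdash a\rtred\xoverline{n_r}\typ\tnat$; reducing inside the evaluation context $\f\,\mathbb{E}$ and using the congruence rule for application yields $r\vdash\f\,a\rtred\f\,\xoverline{n_r}\typ\tbool$. Now $r\forces\tbool$, $r\forces\f\,\xoverline{n_r}\typ\tbool$ by the sublemma, and Corollary~\ref{COR:terms_reductionimplyequiv} gives $r\forces\f\,a\typ\tbool$ (and $r\forces\f\,a=\f\,\xoverline{n_r}\typ\tbool$). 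By Locality (Corollary~\ref{LEM:localCharacter}), $q\forces\f\,a\typ\tbool$. For (ii), assume $q\forces a=b\typ\tnat$; Lemma~\ref{LEM:BaseTypeCanonicity} gives a partition $q\covt S$ with $r\vdash a\rtred\xoverline{n_r}\typ\tnat$ and $r\vdash b\rtred\xoverline{n_r}\typ\tnat$ for the \emph{same} $\xoverline{n_r}$ at each $r\in S$. As above, $r\vdash\f\,a\rtred\f\,\xoverline{n_r}\typ\tbool$ and $r\vdash\f\,b\rtred\f\,\xoverline{n_r}\typ\tbool$, so Corollary~\ref{COR:terms_reductionimplyequiv} and the sublemma give $r\forces\f\,a=\f\,\xoverline{n_r}\typ\tbool$ and $r\forces\f\,b=\f\,\xoverline{n_r}\typ\tbool$; symmetry and transitivity (Theorem~\ref{properties}) yield $r\forces\f\,a=\f\,b\typ\tbool$, and Locality lifts this to $q\forces\f\,a=\f\,b\typ\tbool$.

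I do not expect any step to be genuinely difficult. The one point needing care is the ``stuck'' case $n\notin\dom{q}$ in the sublemma, where one must invoke the locality clause of the forcing definition in place of a reduction, together with the partition bookkeeping when passing from computability of $a$ at type $\tnat$ to that of $\f\,a$ at type $\tbool$; both are routine given Lemma~\ref{LEM:BaseTypeCanonicity}, Corollary~\ref{COR:terms_reductionimplyequiv}, and Locality.
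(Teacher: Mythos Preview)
Your proposal is correct and follows essentially the same approach as the paper: both unfold the $\lfor_\Pi$ clause, use Lemma~\ref{LEM:BaseTypeCanonicity} to obtain a partition on which $a$ (resp.\ $a$ and $b$) reduce to a common numeral, reduce under the evaluation context $\f\,\mathbb{E}$, handle the two cases $m\in\dom{q}$ and $m\notin\dom{q}$ via the canonical and ``stuck'' sub-clauses of the $\lfor_\tbool$ term definition, and conclude by Locality. The only cosmetic difference is that you factor out the case analysis on $m$ as a separate sublemma about $q\forces\f\,\xoverline{n}\typ\tbool$ and then invoke Corollary~\ref{COR:terms_reductionimplyequiv}, whereas the paper performs the case split inline on $\f\,n$ directly.
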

\begin{proof}
It's direct to see that $\forces \tnat\rightarrow \tbool$. For an arbitrary condition $p$ let $p\forces n\typ \tnat$. By Lemma~\ref{LEM:BaseTypeCanonicity} we have a partition $p\covt S$ where for each $q\in S$, $q\vdash n\rtred \xoverline{m}\typ \tnat$ for some $m\in\nats$.  We have thus a reduction $q\vdash \f\,n\rtred \f\,\xoverline{m}\typ \tbool$. If $m\in \dom{q}$ then $q\vdash \f\,n\rtred \f\,\xoverline{m}\red q(m) \typ \tbool$ and by definition $q \forces \f\,n \typ \tbool$. If $m \notin \dom{q}$ then $q(m\mapsto \zer) \vdash \f\,n\rtred \f\xoverline{m}\red \zer\typ \tbool$ and $q(m_j\mapsto \one) \vdash \f\,n\rtred \f\xoverline{m}\red \one\typ \tbool$. Thus $q(m\mapsto \zer) \forces \f\,n \typ \tbool$ and $q(m\mapsto \one)\forces \f\,n \typ \tbool$. By the definition $q \forces \f\,n \typ \tbool$. We thus have that $q \forces \f\,n \typ \tbool$ for all $q\in S$ and by locality $p\forces \f\,n \typ \tbool$. 

Let $p\forces a=b\typ \tnat$. By Lemma~\ref{LEM:BaseTypeCanonicity} there is a partition $p\covt S$ where for each $q\in S$, $q\vdash a\rtred \xoverline{m} \typ \tnat$ and $q\vdash b \rtred \xoverline{m}\typ \tnat$ for some $m\in \nats$. We then have $q \vdash \f\,a \rtred \f\,\xoverline{m} \typ \tbool$ and $q\vdash \f\,b\rtred \f\,\xoverline{m}\typ \tbool$. If $m\in \dom{q}$ then $q \vdash \f\,a\rtred q(m)\typ \tbool$ and $q\vdash \f\,b\rtred q(m)\typ \tbool$. By Corollary~\ref{COR:terms_reductionimplyequiv}, symmetry and transitivity $q \forces \f\,a=\f\,b \typ \tbool$. If on the other hand $m \notin\dom{q}$ then $q(m\mapsto \zer) \forces \f\,a=\f\,b \typ \tbool$ and $q(m\mapsto \one)\forces \f\,a=\f\,b \typ \tbool$. By the definition $q \forces \f\,a=\f\,b\typ \tbool$. Thus $q \forces \f\,a=\f\,b\typ \tbool$ for all $q\in S$. By locality $p\forces \f\,a=\f\,b\typ \tbool$. Hence $\forces \f\typ \tnat\rightarrow \tbool$.
\end{proof}

\begin{lem}
\label{LEM:forcingTermNegation}
If $\vdash_p t\typ \neg A$ and $p\forces A$ then $p\forces t\typ \neg A$ iff for all $q\leqslant p$ there is no term $u$ such that $q\forces u\typ A$.
\end{lem}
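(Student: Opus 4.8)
The plan is to unfold $\neg A$ and read off the answer from the $\lfor_\Pi$ clause of the computability relation. Recall $\neg A \coloneqq \Pi(x\typ A)\,\tempt$, which is already a canonical $p$-whnf, so $p\vdash \neg A\rtred \Pi(x\typ A)\,\tempt$ trivially (and, by uniqueness of whnf, $\neg A$ cannot reduce to any $\mathbb{E}[\f\,\xoverline{k}]$, so $\lfor_\mathrm{Loc}$ is not in play). First I would check that $p\forces \neg A$: this is an instance of $\lfor_\Pi$, using the hypothesis $p\forces A$ for the domain and the fact that $q\forces \tempt$ and $q\forces \tempt=\tempt$ for every $q\leqslant p$ — both immediate from the $\lfor_\tempt$ clause since $\tempt$ is in whnf. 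Consequently, since $\vdash_p t\typ \neg A$ is given, membership $p\forces t\typ \neg A$ is governed by the $\Pi$-clause: it holds iff
\[
\forall q\leqslant p\,(q\forces a\typ A\Rightarrow q\forces t\,a\typ\tempt)\ \text{ and }\ \forall q\leqslant p\,(q\forces a=b\typ A\Rightarrow q\forces t\,a=t\,b\typ\tempt),
\]
where $a,b$ range over all terms.

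The key observation is that the computability predicate at $\tempt$ is empty: for every $q$ and every term $s$ one has $q\nforces s\typ\tempt$ (and $q\nforces s=s'\typ\tempt$), directly from the $\lfor_\tempt$ clause. Hence the first implication above is true exactly when its hypothesis is never met, i.e.\ when there is no $q\leqslant p$ and no term $u$ with $q\forces u\typ A$; and the second implication adds nothing, since $q\forces a=b\typ A$ entails $q\forces a\typ A$, so it is vacuous precisely when the first one is. Combining these yields the claimed equivalence.

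Spelling out the two directions: for the forward implication I would argue contrapositively — if some $q\leqslant p$ admits $u$ with $q\forces u\typ A$, then $p\forces t\typ\neg A$ would force $q\forces t\,u\typ\tempt$, contradicting emptiness of the predicate at $\tempt$. For the converse, assuming no such $q,u$ exist, both universal implications in the $\lfor_\Pi$ clause hold vacuously, and together with the already-established $\vdash_p t\typ\neg A$ and $p\forces\neg A$ this gives $p\forces t\typ\neg A$. I do not expect a genuine obstacle here; the only points requiring care are to invoke the $\lfor_\Pi$ clause with $\Pi(x\typ A)\,\tempt$ (which is already in whnf, so no locality bookkeeping is needed) and to use that emptiness of the computability predicate at $\tempt$ is exactly what collapses the quantified implications to the non-existence statement.
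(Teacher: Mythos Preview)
Your proposal is correct and follows essentially the same argument as the paper: unfold $\neg A$ as $\Pi(x\typ A)\,\tempt$, observe $p\forces \neg A$ via $\lfor_\Pi$, and then use that the computability predicate at $\tempt$ is empty to reduce the $\Pi$-clause conditions to the non-existence statement. Your write-up is more detailed than the paper's (which simply says ``We have directly that $p\forces \neg A$'' and then argues the two directions in one line each), but the content is the same.
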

\begin{proof}
Let $p\forces A$ and $\vdash_p t\typ \neg A$. We have directly that $p\forces \neg A$. Assume $p\forces t\typ \neg A$. If $q \forces u\typ A$ for some $q\leqslant p$, then $q\forces t\,u \typ \tempt$ which is impossible. Conversely, assume it is the case that for all  $q \leqslant p$ there is no $u$ for which $q\forces u\typ A$. Since $r\forces a\typ A$ and $r\forces a=b\typ A$ never hold for any $r\leqslant p$, the statements  ``$r\forces a\typ A \Rightarrow r\forces t\,a \typ \tempt$'' and ``$r\forces a=b \typ A \Rightarrow r\forces t\,a=t\,b \typ \tempt$'' hold trivially.
\end{proof}

\begin{lem}
\label{LEM:markovpremisewitnesswelltyped}
$\forces \w \typ \neg\neg (\Sigma(x\typ \tnat) \iszero(\f\,x))$.
\end{lem}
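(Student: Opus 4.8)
The plan is to apply Lemma~\ref{LEM:forcingTermNegation} twice. Write $B \coloneqq \Sigma(x\typ\tnat)\,\iszero(\f\,x)$, so that the type in question is $\neg\neg B$, i.e.\ $\neg A$ for $A \coloneqq \neg B$. Since $\vdash_p \w \typ \neg\neg B$ holds for every condition $p$ (the rule introducing $\w$), Lemma~\ref{LEM:forcingTermNegation} reduces the goal $p\forces\w\typ\neg\neg B$ to two claims: (a) $p\forces\neg B$, and (b) for every $q\leqslant p$ there is no term $u$ with $q\forces u\typ\neg B$. Establishing (a) and (b) for an arbitrary $p$ --- in particular $\emptyset$ --- gives $\forces\w\typ\neg\neg B$.

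For (a): since $\neg B = \Pi(x\typ B)\tempt$, the clause $\lfor_\Pi$ reduces it to $p\forces B$ (the other two obligations mention only $\tempt$, which is always forced). As $B$ is a $\Sigma$-whnf, $\lfor_\Sigma$ reduces $p\forces B$ to $p\forces\tnat$ together with the requirement that, for all $q\leqslant p$, $q\forces a\typ\tnat$ implies $q\forces\iszero(\f\,a)$, and $q\forces a=b\typ\tnat$ implies $q\forces\iszero(\f\,a)=\iszero(\f\,b)$. By Corollary~\ref{COR:genericwelltyped} and monotonicity $q\forces\f\,a\typ\tbool$ and $q\forces\f\,a=\f\,b\typ\tbool$; and since $\iszero$ unfolds to $\boolrec(\lambda x.\tuniv)\,\tunit\,\tempt$, it is routine that $r\forces c\typ\tbool$ implies $r\forces\iszero\,c$: by Lemma~\ref{LEM:BaseTypeCanonicity} partition $r$ so that $c$ reduces to a boolean constant on each piece, reduce $\iszero\,c$ to $\tunit$ or $\tempt$ accordingly, and conclude by locality (Corollary~\ref{LEM:localCharacter}); likewise $r\forces c=c'\typ\tbool$ implies $r\forces\iszero\,c=\iszero\,c'$.

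For (b): suppose toward a contradiction that $q\forces u\typ\neg B$ for some $q\leqslant p$ and term $u$. Unfolding $\neg B$ as a $\Pi$-type yields $q\forces B$, and $\vdash_q u\typ\neg B$ holds, so Lemma~\ref{LEM:forcingTermNegation} (this time with $A\coloneqq B$, $t\coloneqq u$) gives that no $r\leqslant q$ admits a term $v$ with $r\forces v\typ B$. I will refute this. Choose $n\notin\dom{q}$ and set $r\coloneqq q(n\mapsto\zer)\leqslant q$. Since $r(n)=\zer$ we get $\vdash_r\f\,\xoverline{n}=\zer\typ\tbool$, hence $\vdash_r\iszero(\f\,\xoverline{n})=\tunit$ and $\vdash_r(\xoverline{n},\zer)\typ B$. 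Then $r\forces(\xoverline{n},\zer)\typ B$ follows from $\lfor_\Sigma$ (using $p\forces B$ from (a) and monotonicity for the side-condition $r\forces B$): the first projection reduces, $r\vdash(\xoverline{n},\zer)\fpr\rtred\xoverline{n}\typ\tnat$, and $\xoverline{n}$ is forced of type $\tnat$, so $r\forces(\xoverline{n},\zer)\fpr\typ\tnat$ by Corollary~\ref{COR:terms_reductionimplyequiv}; and the type $\iszero(\f\,(\xoverline{n},\zer)\fpr)$ reduces at $r$ to $\tunit$ (unfold $\iszero$, $\beta$-reduce, reduce the projection to $\xoverline{n}$, reduce $\f\,\xoverline{n}$ to $\zer$ since $n\in\dom{r}$, then apply the $\boolrec$-rule), so $r\forces\iszero(\f\,(\xoverline{n},\zer)\fpr)=\tunit$ by Lemma~\ref{LEM:types_reductionimplyequiv}, while $r\forces\zer\typ\tunit$ trivially; transporting $\zer$ along this type equality by Lemma~\ref{LEM:equivalentTypesHaveEqualPers}, and using $(\xoverline{n},\zer)\spr\rtred\zer$ with Corollary~\ref{COR:terms_reductionimplyequiv}, gives $r\forces(\xoverline{n},\zer)\spr\typ\iszero(\f\,(\xoverline{n},\zer)\fpr)$. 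Hence $r\forces(\xoverline{n},\zer)\typ B$, contradicting the conclusion drawn from Lemma~\ref{LEM:forcingTermNegation}; so (b) holds.

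The conceptual point is just that refining the condition by $n\mapsto\zer$ makes $B$ inhabited, which is exactly why $\neg B$ is never forced and $\w$ therefore is. The only real work lies in the bookkeeping of (b): verifying that the composite reduction of $\iszero(\f\,(\xoverline{n},\zer)\fpr)$ down to $\tunit$ is legitimate --- each intermediate redex occurs inside an evaluation context, the relevant cases being $\f\,\mathbb{E}$, $\mathbb{E}\fpr$, and $\boolrec(\lambda x.C)\,a_0\,a_1\,\mathbb{E}$ --- and carrying along the implicit typing side-conditions of the forcing clauses, which follow routinely from the conversion rules for surjective pairing, for $\f$, and for $\boolrec$.
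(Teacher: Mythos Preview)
Your proof is correct and follows essentially the same idea as the paper: pick $n\notin\dom{q}$, refine to $r\coloneqq q(n\mapsto\zer)$, and exhibit $(\xoverline{n},\zer)$ as an element of $B$ there. The only difference is cosmetic: the paper applies Lemma~\ref{LEM:forcingTermNegation} once and then, from the assumption $q\forces u\typ\neg B$, unfolds the $\Pi$-forcing clause directly to get $r\forces u\,(\xoverline{n},\zer)\typ\tempt$, whereas you invoke Lemma~\ref{LEM:forcingTermNegation} a second time to extract the statement ``no $r\leqslant q$ has $r\forces v\typ B$'' and refute that; your second invocation just repackages the paper's one-line unfolding. You are also more explicit than the paper about the side condition $p\forces\neg B$ (your part~(a)), which the paper silently assumes when citing Lemma~\ref{LEM:forcingTermNegation}.
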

\begin{proof}
By Lemma~\ref{LEM:forcingTermNegation} it is enough to show that for all $q$ there is no term $u$ for which $q\forces u\typ \neg (\Sigma(x\typ \tnat) \iszero(\f\,x))$. Assume $q\forces u\typ\neg (\Sigma(x\typ \tnat) \iszero(\f\,x))$ for some $u$. Let $m\notin\dom{q}$ we have then $q(m\mapsto \zer) \forces (\overline{m},\zer) \typ \Sigma(x\typ \tnat) \iszero(\f\,x)$ thus $q(m\mapsto \zer) \forces u\,(\xoverline{m},\zer)\typ \tempt$ which is impossible.\end{proof}

Let $\Gamma \coloneqq x_1\typ A_1\dots,x_n\typ A_n$ and $\rho \coloneqq a_1,\dots,a_n$. Let $A_i\rho \coloneqq A_i[a_1/x_1,\dots,a_{i-1}/x_{i-1}]$. We write $\Delta \vdash_p \rho \typ \Gamma$ if $\Delta \vdash_p a_i\typ A_i\rho$ for all $i$. Letting $\sigma=b_1,\dots,b_n$, we write $\Delta \vdash_p \rho =\sigma\typ \Gamma$ if $\Delta \vdash_p a_i=b_i\typ A_i\rho$ for all $i$. We write $p \forces \rho\typ \Gamma$ if $p\forces a_i\typ A_i\rho$ for all $i$ and $p\forces \rho=\sigma\typ \Gamma$ if $p\forces a_i=b_i\typ A_i\rho$ for all $i$.

\begin{lem}
\label{LEM:evalissub}
If $p\forces \rho \typ \Gamma$ then $\vdash_p \rho \typ \Gamma$. If $p\forces \rho = \sigma \typ \Gamma$ then $\vdash_p \rho = \sigma \typ \Gamma$.
\end{lem}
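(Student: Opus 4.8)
The plan is to reduce the claim to the observation, recorded immediately after the inductive-recursive definition of the forcing relation, that $\vdash_p J$ holds whenever $p\forces J$ for a single judgment $J$ of one of the forms $A$, $a\typ A$, $A=B$, or $a=b\typ A$. This observation is immediate from the definition itself: every defining clause of the forcing relation carries the relevant typing judgment as an (explicit or hidden) premise, including the $\lfor_\mathrm{Loc}$ clauses, where it is inherited through the premises; so no separate induction is needed to pass from $p\forces J$ to $\vdash_p J$.

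Granting this, both statements are pure bookkeeping over the length of the context. Writing $\Gamma \coloneqq x_1\typ A_1,\dots,x_n\typ A_n$ and $\rho \coloneqq a_1,\dots,a_n$, the hypothesis $p\forces \rho\typ\Gamma$ unfolds by definition to $p\forces a_i\typ A_i\rho$ for each $i$; applying the observation to each $i$ gives $\vdash_p a_i\typ A_i\rho$ for each $i$, which is precisely $\vdash_p \rho\typ\Gamma$. Running the same argument with $p\forces a_i=b_i\typ A_i\rho$ in place of $p\forces a_i\typ A_i\rho$ yields $\vdash_p \rho=\sigma\typ\Gamma$ from $p\forces \rho=\sigma\typ\Gamma$.

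There is no real obstacle here. The only point deserving a moment's care is the ``hidden premise'' convention: one should check that in every defining clause for $p\forces a\typ A$ and $p\forces a=b\typ A$ the ambient typing judgment is genuinely available (either as the displayed hypothesis of the clause or inherited through the premises of the $\lfor_\mathrm{Loc}$ clause), so that the step from $p\forces J$ to $\vdash_p J$ truly requires nothing beyond unfolding the definition.
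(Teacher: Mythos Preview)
Your proposal is correct and is exactly the paper's approach: the paper's proof is the single line ``Follows from the definition,'' and you have simply spelled out what that means---namely, invoke the observation (stated just after the forcing definition) that $\vdash_p J$ whenever $p\forces J$, and apply it componentwise using the definition of $p\forces \rho\typ\Gamma$.
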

\begin{proof}
Follows from the definition.
\end{proof}

\begin{defi}\leavevmode
\begin{enumerate}
\item We write $\Gamma\vDash_p A$ if $\Gamma\vdash_p A$ and for all $q\leqslant p$ whenever $q\forces \rho \typ \Gamma$ then $q\forces A\rho$ and whenever $q\forces \rho=\sigma\typ\Gamma$ then $q\forces A\rho=A\sigma$. 
\item We write $\Gamma\vDash_p t\typ A$ if $\Gamma\vdash_p t\typ A$, $\Gamma\vDash_p A$ and for all $q\leqslant p$ whenever $q\forces \rho \typ \Gamma$ then $q\forces t\rho\typ A\rho$ and whenever $q\forces \rho = \sigma \typ \Gamma$ then $q\forces t\rho=t\sigma\typ A\rho$.
\item We write $\Gamma\vDash_p A=B$ if $\Gamma\vdash_p A=B$, $\Gamma\vDash_p A$, $\Gamma\vDash_p B$ and for all $q\leqslant p$ whenever $q\forces \rho \typ \Gamma$ then $q\forces A\rho = B\rho$. 
\item We write $\Gamma\vDash_p t=u\typ A$ if $\Gamma\vdash_p t=u\typ A$, $\Gamma\vDash_p t\typ A$, $\Gamma\vDash_p u\typ A$  and for all $q\leqslant p$ whenever $q\forces \rho \typ \Gamma$ then $q\forces t\rho = u\rho \typ A\rho$.
\end{enumerate}
\end{defi}

In the following we will show that whenever we have a rule 
\AxiomC{$\Gamma_1\vdash_{p_1} J_1$} 
\AxiomC{$\dots$} 
\AxiomC{$\Gamma_\ell \vdash_{p_\ell} J_\ell$} 
\TrinaryInfC{$\Gamma\vdash_p J$}
\DisplayProof
 in the type system then it holds that 
\AxiomC{$\Gamma_1\vDash_{p_1} J_1$} 
\AxiomC{$\dots$} 
\AxiomC{$\Gamma_\ell \vDash_{p_\ell} J_\ell$} 
\TrinaryInfC{$\Gamma\vDash_p J$}
\DisplayProof.
 Which is sufficient to show soundness.

\begin{lem}
\AxiomC{$\Gamma \vDash_{p_1} J$}
\AxiomC{$\dots$}
\AxiomC{$\Gamma \vDash_{p_n} J $} 
\def\labelSpacing{1pt}
\RightLabel{$p \covt \{p_1,\dots, p_n\}$}
\TrinaryInfC{$\Gamma \vDash_p J $}
\DisplayProof\smallskip
\end{lem}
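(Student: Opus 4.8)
The plan is to derive the statement from two facts already available for the forcing relation: monotonicity (Corollary~\ref{COR:monotonicity}) and locality (Corollary~\ref{LEM:localCharacter}), together with the locality rule (\textsc{loc}) of the forcing extension and the observation about partitions recorded right after the definition of partition (if $p\covt S$ and $s\leqslant p$ then $s\covt\{\,sq\mid q\in S\text{ compatible with }s\,\}$). The four possible shapes of $J$ — a type $A$, a typing $t\typ A$, a type equality $A=B$, a term equality $t=u\typ A$ — will all be treated by the same argument, so the proof should be understood as establishing all four simultaneously.

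First I would discharge the purely syntactic part. Each hypothesis $\Gamma\vDash_{p_i}J$ contains the judgment $\Gamma\vdash_{p_i}J$ (and, in the compound cases, also the syntactic judgments asserting that the types occurring in $J$ are well formed at $p_i$). Since $p\covt\{p_1,\dots,p_n\}$, the rule (\textsc{loc}) yields $\Gamma\vdash_p J$ and, likewise, the required well-formedness judgments at $p$.

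Then I would turn to the semantic part. Fix $q\leqslant p$ and suppose $q\forces\rho\typ\Gamma$. By the remark following the definition of partition, from $p\covt\{p_1,\dots,p_n\}$ and $q\leqslant p$ we obtain a partition $q\covt S_q$ with $S_q=\{\,qp_i\mid 1\le i\le n,\ p_i\text{ compatible with }q\,\}$, and every element $qp_i$ of $S_q$ satisfies both $qp_i\leqslant q$ and $qp_i\leqslant p_i$. By monotonicity, $qp_i\forces\rho\typ\Gamma$ for each $qp_i\in S_q$; since $qp_i\leqslant p_i$ and $\Gamma\vDash_{p_i}J$, the defining clause of $\vDash_{p_i}$ then produces the corresponding forcing at $qp_i$, namely $qp_i\forces A\rho$, or $qp_i\forces t\rho\typ A\rho$, or $qp_i\forces A\rho=B\rho$, or $qp_i\forces t\rho=u\rho\typ A\rho$, according to the shape of $J$. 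Applying locality of the forcing relation to the partition $q\covt S_q$ gives the same forcing at $q$. The case $q\forces\rho=\sigma\typ\Gamma$ is identical, using the equality half of each clause of $\vDash$. Combining the syntactic and semantic parts gives $\Gamma\vDash_p J$.

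I do not anticipate a genuine obstacle; the argument is a routine combination of monotonicity and locality. The only point needing a little care is that the auxiliary judgments bundled into the compound forms of $\vDash$ (e.g.\ $\Gamma\vDash_p A$ inside $\Gamma\vDash_p t\typ A$, or $\Gamma\vDash_p t\typ A$ and $\Gamma\vDash_p u\typ A$ inside $\Gamma\vDash_p t=u\typ A$) are themselves instances of the statement being proved, which is why the four cases must be run in parallel rather than one after another.
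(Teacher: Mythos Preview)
Your proposal is correct and is essentially the detailed unpacking of the paper's one-line proof, which reads simply ``Follows from Corollary~\ref{LEM:localCharacter}.'' You have spelled out exactly the argument that reference is meant to invoke: pass to the refined partition $q\covt S_q$, push the substitution down by monotonicity, apply the hypotheses at each $qp_i\leqslant p_i$, and glue back by locality. One minor remark: the four shapes of $J$ need not literally be handled in parallel, since the dependencies are strictly layered (types, then type equalities and typings, then term equalities), but treating them simultaneously is harmless.
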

\begin{proof}
Follows from Corollary~\ref{LEM:localCharacter}.
\end{proof}

\begin{lem}\leavevmode \medskip
\label{soundPiIntro}
\AxiomC{$\Gamma \vDash_p F$ }
\AxiomC{$\Gamma, x\typ  F \vDash_p G$}
\BinaryInfC{$\Gamma \vDash_p \Pi (x\typ  F) G$}
\DisplayProof
\,
\AxiomC{$\Gamma \vDash_p F$ }
\AxiomC{$\Gamma, x\typ  F \vDash_p G$}
\BinaryInfC{$\Gamma \vDash_p \Sigma (x\typ  F) G$}
\DisplayProof
\end{lem}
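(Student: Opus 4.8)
The plan is to unfold the definition of $\Gamma\vDash_p\Pi(x\typ F)G$ into its three constituents and supply each from the data in $\Gamma\vDash_p F$ and $\Gamma,x\typ F\vDash_p G$. The syntactic part $\Gamma\vdash_p\Pi(x\typ F)G$ is immediate from the $\Pi$-formation rule applied to $\Gamma\vdash_p F$ and $\Gamma,x\typ F\vdash_p G$, which are part of the hypotheses, so all the real content is in the two semantic clauses; the $\Sigma$ case is then settled by the identical argument, since $\lfor_\Sigma$ has the same shape as $\lfor_\Pi$.

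For the computability clause, fix $q\leqslant p$ and $\rho$ with $q\forces\rho\typ\Gamma$, and prove $q\forces(\Pi(x\typ F)G)\rho$, i.e.\ $q\forces\Pi(x\typ F\rho)(G\rho)$, by the rule $\lfor_\Pi$. Its hidden premise $\vdash_q\Pi(x\typ F\rho)(G\rho)$ follows from Lemma~\ref{LEM:evalissub} and substitution into $\Gamma\vdash_p\Pi(x\typ F)G$; the reduction premise is the reflexive instance $q\vdash A\rtred A$ since $\Pi(x\typ F\rho)(G\rho)$ is already canonical; and the premise $q\forces F\rho$ is $\Gamma\vDash_p F$ read at $q$. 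For the two $\forall r\leqslant q$ premises, take $r\leqslant q$: by monotonicity (Corollary~\ref{COR:monotonicity}) we still have $r\forces\rho\typ\Gamma$, so from $r\forces a\typ F\rho$ we get $r\forces(\rho,a)\typ(\Gamma,x\typ F)$, whence $\Gamma,x\typ F\vDash_p G$ (used at $r\leqslant p$) gives $r\forces G(\rho,a)$, which is exactly $(G\rho)[a]$; the variant with $r\forces a=b\typ F\rho$ is the same, first turning $r\forces\rho\typ\Gamma$ into $r\forces\rho=\rho\typ\Gamma$ by reflexivity before appending the equality and invoking the equality part of $\Gamma,x\typ F\vDash_p G$.

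For the equality clause, fix $q\leqslant p$ with $q\forces\rho=\sigma\typ\Gamma$ and aim at $q\forces(\Pi(x\typ F)G)\rho=(\Pi(x\typ F)G)\sigma$. From $q\forces\rho=\sigma\typ\Gamma$ one extracts $q\forces\rho\typ\Gamma$ and $q\forces\sigma\typ\Gamma$, so the clause just proved makes both $\Pi(x\typ F\rho)(G\rho)$ and $\Pi(x\typ F\sigma)(G\sigma)$ computable at $q$, and with $\vdash_q(\Pi(x\typ F)G)\rho=(\Pi(x\typ F)G)\sigma$ by substitution the $\Pi$-case of the definition of $p\forces A=B$ applies. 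Its premises are $q\forces F\rho=F\sigma$, supplied by $\Gamma\vDash_p F$, and, for each $r\leqslant q$ with $r\forces a\typ F\rho$, the equality $r\forces(G\rho)[a]=(G\sigma)[a]$, which is the equality part of $\Gamma,x\typ F\vDash_p G$ applied to $r\forces(\rho,a)=(\sigma,a)\typ(\Gamma,x\typ F)$, this environment equality being built from $r\forces\rho=\sigma\typ\Gamma$ (monotonicity) and $r\forces a=a\typ F\rho$ (reflexivity).

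The proof is almost pure bookkeeping; the points that need care are the substitution manipulations — identifying $(G\rho)[a]$ with $G(\rho,a)$ and confirming that $(\rho,a)$ and $(\rho,a)=(\sigma,a)$ are legitimate instances of $\Gamma,x\typ F$ — and, in the equality clause, making sure both $\Pi(x\typ F\rho)(G\rho)$ and $\Pi(x\typ F\sigma)(G\sigma)$ are known computable before invoking the forcing-equality definition, which is where $q\forces\sigma\typ\Gamma$, and with it Lemma~\ref{LEM:equivalentTypesHaveEqualPers}, is used. Everything else reduces to monotonicity, reflexivity, and substitution in the underlying type system.
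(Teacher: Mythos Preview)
Your proof is correct and follows essentially the same approach as the paper's: unfold $\vDash_p$, verify the $\lfor_\Pi$ (resp.\ $\lfor_\Sigma$) premises by extending environments to $(\rho,a)$ and $(\rho,a)=(\rho,b)$, and for the equality clause build $(\rho,a)=(\sigma,a)$ using $\Gamma\vDash_p F$ and Lemma~\ref{LEM:equivalentTypesHaveEqualPers}. You are somewhat more explicit than the paper about the bookkeeping (monotonicity, reflexivity, the hidden typing premises), but the argument is the same.
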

\begin{proof}
Let $q\leqslant p$ and $q\forces \rho \typ \Gamma$. Let $r\leqslant q$. If $r\forces a\typ F\rho$, we have $r\forces (\rho,a)\typ (\Gamma,x\typ F)$, thus $r\forces G\rho[a]$. If moreover $r\forces a=b \typ F\rho$ then $r\forces (\rho,a)=(\rho,b)\typ(\Gamma,x\typ F)$ and we have $r\forces G\rho[a]=G\rho[b]$. Thus $q\forces (\Pi(x\typ F) G)\rho$ and $q\forces (\Sigma(x\typ F)G)\rho$

Let $q\forces \rho = \sigma \typ \Gamma$. We have that $r\forces F\rho=F\sigma$. If $r\forces a\typ F\rho$ then by Lemma~\ref{LEM:equivalentTypesHaveEqualPers} $r\forces a\typ F\sigma$. Thus $r\forces (\rho,a)=(\sigma,a)\typ (\Gamma,x\typ F)$. We have $r\forces G\rho[a]=G\sigma[a]$. Thus $q\forces (\Pi(x\typ F) G)\rho= (\Pi(x\typ F) G)\sigma$ and $q\forces (\Sigma(x\typ F)G)\rho = (\Sigma(x\typ F)G)\sigma$.\smallskip
\end{proof}

\begin{lem}
\label{soundpieq}

\AxiomC{$\Gamma\vDash_p F= H$}
\AxiomC{$\Gamma, x\typ  F \vDash_p G =E $}
\BinaryInfC{$\Gamma \vDash_p \Pi (x\typ F) G =\Pi (x\typ H) E$}
\DisplayProof
\,
\AxiomC{$\Gamma\vDash_p F= H$}
\AxiomC{$\Gamma, x\typ  F \vDash_p G =E $}
\BinaryInfC{$\Gamma \vDash_p \Sigma (x\typ F) G =\Sigma (x\typ H) E$}
\DisplayProof
\end{lem}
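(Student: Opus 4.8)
We must verify the four clauses that make up the definition of $\Gamma\vDash_p \Pi(x\typ F)G = \Pi(x\typ H)E$: namely $\Gamma\vdash_p \Pi(x\typ F)G = \Pi(x\typ H)E$; $\Gamma\vDash_p \Pi(x\typ F)G$; $\Gamma\vDash_p \Pi(x\typ H)E$; and, for every $q\leqslant p$ and every $\rho$ with $q\forces\rho\typ\Gamma$, that $q\forces (\Pi(x\typ F)G)\rho = (\Pi(x\typ H)E)\rho$. The first is obtained by applying the corresponding typing rule to the typing components $\Gamma\vdash_p F=H$ and $\Gamma, x\typ F\vdash_p G=E$ of the hypotheses. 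The second is Lemma~\ref{soundPiIntro} applied to the components $\Gamma\vDash_p F$ and $\Gamma, x\typ F\vDash_p G$ of the hypotheses.

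For the third clause I would first prove $\Gamma, x\typ H\vDash_p E$ by transporting contexts along $\Gamma\vDash_p F=H$. Fix $q\leqslant p$ and suppose $q\forces(\rho,a)\typ(\Gamma, x\typ H)$, so $q\forces\rho\typ\Gamma$ and $q\forces a\typ H\rho$. Since $\Gamma\vDash_p F=H$ gives $q\forces F\rho=H\rho$, Lemma~\ref{LEM:equivalentTypesHaveEqualPers} yields $q\forces a\typ F\rho$, hence $q\forces(\rho,a)\typ(\Gamma, x\typ F)$, and then $\Gamma, x\typ F\vDash_p E$ (a component of $\Gamma, x\typ F\vDash_p G=E$) gives $q\forces E(\rho,a)$. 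The same argument applied to $q\forces(\rho,a)=(\sigma,b)\typ(\Gamma, x\typ H)$, again via Lemma~\ref{LEM:equivalentTypesHaveEqualPers}, gives $q\forces E(\rho,a)=E(\sigma,b)$. Hence $\Gamma, x\typ H\vDash_p E$, and together with the component $\Gamma\vDash_p H$ of $\Gamma\vDash_p F=H$, Lemma~\ref{soundPiIntro} yields $\Gamma\vDash_p \Pi(x\typ H)E$.

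For the fourth clause, fix $q\leqslant p$ and $\rho$ with $q\forces\rho\typ\Gamma$. Both $(\Pi(x\typ F)G)\rho$ and $(\Pi(x\typ H)E)\rho$ are literally $\Pi$-types, hence already canonical $q$-whnf, and they are forced at $q$ by the second and third clauses proved above; moreover $\vdash_q (\Pi(x\typ F)G)\rho = (\Pi(x\typ H)E)\rho$ follows from the first clause together with $\vdash_q\rho\typ\Gamma$ (Lemma~\ref{LEM:evalissub}) by substitution. So, by the defining clause for $p\forces A=B$ in the case where $p\forces A$ holds by $\lfor_\Pi$, it suffices to check (i) $q\forces F\rho=H\rho$ and (ii) for every $r\leqslant q$, if $r\forces a\typ F\rho$ then $r\forces (G\rho)[a]=(E\rho)[a]$. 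Point (i) is $\Gamma\vDash_p F=H$ instantiated at $\rho$. For (ii), from $r\forces a\typ F\rho$ and $r\forces\rho\typ\Gamma$ (monotonicity) we get $r\forces(\rho,a)\typ(\Gamma, x\typ F)$, whence $\Gamma, x\typ F\vDash_p G=E$ gives $r\forces G(\rho,a)=E(\rho,a)$, i.e.\ $r\forces (G\rho)[a]=(E\rho)[a]$.

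The $\Sigma$ case is proved in exactly the same way, replacing $\lfor_\Pi$ by $\lfor_\Sigma$ throughout (the defining clause for $\Sigma$-type equality has the same shape) and using the $\Sigma$ part of Lemma~\ref{soundPiIntro}. The only step that requires any thought, common to both cases, is the context transport in the third clause, where one passes from substitutions typed in $\Gamma, x\typ F$ to substitutions typed in $\Gamma, x\typ H$; this is precisely where Lemma~\ref{LEM:equivalentTypesHaveEqualPers} and the hypothesis $\Gamma\vDash_p F=H$ are used. Everything else is a routine unfolding of definitions, with the $\vdash_p$-side obligations discharged by the admissible structural rules of the type system as elsewhere.
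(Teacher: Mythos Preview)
Your proof is correct and follows essentially the same approach as the paper's, only more explicitly: you spell out all four clauses of $\Gamma\vDash_p A=B$ and make the context-transport step (via Lemma~\ref{LEM:equivalentTypesHaveEqualPers}) explicit, whereas the paper's proof focuses on the fourth clause and defers the rest to ``similarly to Lemma~\ref{soundPiIntro}''. The one packaging difference is that you first isolate $\Gamma, x\typ H\vDash_p E$ and then invoke Lemma~\ref{soundPiIntro}, while the paper inlines this argument directly when establishing $q\forces(\Pi(x\typ H)E)\rho$; the underlying reasoning is identical.
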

\begin{proof}

Let $q\leqslant p$ and $q\forces \rho \typ \Gamma$. Similarly to Lemma~\ref{soundPiIntro}, we can show $q\forces (\Sigma (x\typ F) G)\rho$, $q\forces (\Sigma (x\typ H) E)\rho$, $q\forces ( \Pi (x\typ F) G)\rho$, and $q\forces (\Pi (x\typ H) E)\rho$. 

From  $q\forces F\rho = H\rho $. Let  $r\leqslant q$ and $r\forces a\typ F\rho$. We have then $r\forces (\rho,a)\typ(\Gamma,x\typ F)$. Thus $r\forces G\rho[a]=E\rho[a]$. Thus $q\forces (\Pi(x\typ F)G)\rho=(\Pi(x\typ H)E)\rho$ and $q\forces (\Sigma(x\typ F) G)\rho = (\Sigma(x\typ H)E)\rho$.
\end{proof}

\begin{lem}
\label{soundLambdaIntro}
\AxiomC{$\Gamma, x \typ  F \vDash_p t \typ  G$ }
\UnaryInfC{$\Gamma \vDash_p \lambda x.t \typ  \Pi (x\typ F) G$}
\DisplayProof
\end{lem}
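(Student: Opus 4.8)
The plan is to unfold the definition of $\Gamma \vDash_p \lambda x.t \typ \Pi(x\typ F) G$ and check its three ingredients. The plain typing judgment $\Gamma \vdash_p \lambda x.t \typ \Pi(x\typ F) G$ is immediate from the typing part $\Gamma, x\typ F \vdash_p t \typ G$ of the premise by the $\lambda$-introduction rule. For $\Gamma \vDash_p \Pi(x\typ F) G$ I would invoke Lemma~\ref{soundPiIntro}: its two hypotheses are $\Gamma, x\typ F \vDash_p G$, which is contained in the premise $\Gamma, x\typ F \vDash_p t \typ G$, and $\Gamma \vDash_p F$, which is available since the context $\Gamma, x\typ F$ of the premise is itself semantically valid. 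So the substance of the lemma is the computability clause: for every $q \leqslant p$ and every computable environment at $q$, $\lambda x.t$ is computable of the indicated type.

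First I would fix $q\leqslant p$ with $q \forces \rho \typ \Gamma$ and show $q\forces (\lambda x.t)\rho \typ (\Pi(x\typ F) G)\rho$. Since $x$ is fresh for $\rho$ this reads $q \forces \lambda x.(t\rho) \typ \Pi(x\typ F\rho)(G\rho)$, and because $\Pi(x\typ F\rho)(G\rho)$ is already a canonical whnf I can apply the $\lfor_\Pi$-clause defining $q\forces t\typ A$. Concretely, for $r\leqslant q$ with $r\forces a\typ F\rho$, monotonicity gives $r\forces \rho\typ\Gamma$, hence $r\forces (\rho,a)\typ(\Gamma, x\typ F)$, and the premise yields $r\forces t(\rho,a)\typ G\rho[a]$; since $(\lambda x.t\rho)\,a \red_r t(\rho,a)$ with the matching typed conversion being an instance of the $\beta$-rule, Corollary~\ref{COR:terms_reductionimplyequiv} upgrades this to $r\forces (\lambda x.t\rho)\,a \typ G\rho[a]$. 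The second requirement, for $r\forces a=b\typ F\rho$, is handled the same way: reflexivity and monotonicity give $r\forces (\rho,a)=(\rho,b)\typ(\Gamma, x\typ F)$, the premise gives $r\forces t(\rho,a)=t(\rho,b)\typ G\rho[a]$, and two uses of Corollary~\ref{COR:terms_reductionimplyequiv} together with symmetry and transitivity give $r\forces (\lambda x.t\rho)\,a=(\lambda x.t\rho)\,b\typ G\rho[a]$. Hence $q\forces \lambda x.(t\rho)\typ \Pi(x\typ F\rho)(G\rho)$.

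Next I would take $q\forces \rho=\sigma\typ\Gamma$ and show $q\forces (\lambda x.t)\rho = (\lambda x.t)\sigma \typ (\Pi(x\typ F) G)\rho$. From $q\forces \rho=\sigma\typ\Gamma$ we get $q\forces\rho\typ\Gamma$ and $q\forces\sigma\typ\Gamma$, so the previous step gives $q\forces \lambda x.(t\rho)\typ\Pi(x\typ F\rho)(G\rho)$ and $q\forces \lambda x.(t\sigma)\typ\Pi(x\typ F\sigma)(G\sigma)$; and since $\Gamma\vDash_p \Pi(x\typ F)G$ supplies $q\forces (\Pi(x\typ F)G)\rho=(\Pi(x\typ F)G)\sigma$, Lemma~\ref{LEM:equivalentTypesHaveEqualPers} lets me regard $\lambda x.(t\sigma)$ at type $\Pi(x\typ F\rho)(G\rho)$. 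Then I apply the $\lfor_\Pi$-clause for $q\forces t=u\typ A$: given $r\leqslant q$ and $r\forces a\typ F\rho$, monotonicity and reflexivity give $r\forces (\rho,a)=(\sigma,a)\typ(\Gamma,x\typ F)$, the premise gives $r\forces t(\rho,a)=t(\sigma,a)\typ G\rho[a]$, and since $(\lambda x.t\rho)\,a$ and $(\lambda x.t\sigma)\,a$ $\beta$-reduce to $t(\rho,a)$ and $t(\sigma,a)$, Corollary~\ref{COR:terms_reductionimplyequiv} with symmetry and transitivity yields $r\forces (\lambda x.t\rho)\,a = (\lambda x.t\sigma)\,a\typ G\rho[a]$, which is what the clause requires.

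The main obstacle will be the bookkeeping at the $\beta$-step: matching $(\lambda x.t)\rho$ with $\lambda x.(t\rho)$, its application's reduct with $(t\rho)[a/x]=t(\rho,a)$, and $G(\rho,a)$ with $G\rho[a]$ via the substitution identities, and then threading everything through Corollary~\ref{COR:terms_reductionimplyequiv}, which needs both the untyped reduction $\red_r$ and the corresponding typed conversion. One also has to keep invoking monotonicity and reflexivity to move computable environments from $q$ down to $r\leqslant q$ and to extend $\rho$ (or the pair $\rho,\sigma$) by a computable $a$. None of this is deep; it is simply the place where the definitions must be lined up precisely, the only genuinely semantic inputs beyond that being the reduction-closure Corollary~\ref{COR:terms_reductionimplyequiv} and, for the equality statement, Lemma~\ref{LEM:equivalentTypesHaveEqualPers}.
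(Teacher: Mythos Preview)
Your proposal is correct and follows essentially the same approach as the paper's proof: unfold the $\lfor_\Pi$ clause, push the $\beta$-reduct through via the reduction-closure lemma, and chain with symmetry/transitivity and Lemma~\ref{LEM:equivalentTypesHaveEqualPers}. The only organizational differences are that you invoke Lemma~\ref{soundPiIntro} as a black box for $\Gamma\vDash_p\Pi(x\typ F)G$ (the paper leaves this implicit) and cite Corollary~\ref{COR:terms_reductionimplyequiv} where the paper uses the one-step Lemma~\ref{LEM:terms_onestepreductionimplyequiv}; both are harmless since only a single $\beta$-step is involved.
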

\begin{proof}
Let $q\leqslant p$ and $q\forces \rho \typ \Gamma$. By Lemma~\ref{LEM:evalissub} $\vdash_q \rho \typ \Gamma$. Let $r\leqslant q$ and $r\forces d\typ F\rho$. Since $\Gamma, x \typ  F \vdash_p t \typ  G$ we have that $x\typ F\rho \vdash_r t\rho \typ G\rho$. Since $\vdash_r d\typ F\rho$ we have $\vdash_r (\lambda x. t\rho) d = t\rho[d]\typ G\rho[d]$. By the reduction rules $(\lambda x. t\rho)d \red t\rho[d]$. Thus $r\vdash (\lambda x. t\rho) d \red t\rho[d]\typ G\rho[d]$. But $r\forces (\rho,d)\typ (\Gamma,x\typ F)$, hence, $r\forces t\rho[d]\typ G\rho[d]$. By Lemma~\ref{LEM:terms_onestepreductionimplyequiv} we have that $r\forces (\lambda x. t\rho)\,d \typ G\rho[d]$ and $r\forces (\lambda x. t\rho)\,d = t\rho[d]\typ G\rho[d]$. 

Let $r\forces e=d\typ F\rho$ we have similarly that $r\forces (\lambda x. t\rho)\,e = t\rho[e]\typ G\rho[e]$. We have also that $r\forces (\rho,d)=(\rho,e)\typ G\rho[d]$, thus  $r\forces t\rho[d]=t\rho[e]\typ G\rho[d]$ and $r\forces G\rho[d]=G\rho[e]$. By Lemma~\ref{LEM:equivalentTypesHaveEqualPers} we have $r\forces (\lambda x. t\rho)\,e = t\rho[e]\typ G\rho[d]$. By symmetry and transitivity we have $r\forces (\lambda x. t\rho)\,d=(\lambda x. t\rho)\,e\typ G\rho[d]$. Thus $q\forces (\lambda x.t)\rho \typ (\Pi(x\typ F) G)\rho$.

Let $q\forces \rho=\sigma\typ \Gamma$. We get $q\forces F\rho = F\sigma$.  Similarly to the above we can show  $q\forces (\lambda x.t)\sigma \typ (\Pi(x\typ F) G)\sigma$. Let $r\leqslant q$ and $r\forces a\typ F\rho$. By Lemma~\ref{LEM:equivalentTypesHaveEqualPers} $r\forces a\typ F\sigma$. We then have $r\forces (\rho,a)=(\sigma,a)\typ (\Gamma,x\typ F)$. Thus we have $r\forces G\rho[a]=G\sigma[a]$. Thus $q\forces  (\Pi(x\typ F) G)\rho =  (\Pi(x\typ F) G)\sigma$ and by Lemma~\ref{LEM:equivalentTypesHaveEqualPers} $q\forces (\lambda x.t)\sigma \typ (\Pi(x\typ F) G)\rho$. We have $r\forces t\rho[a]=t\sigma[a]\typ G\rho[a]$. But $r\forces (\lambda x. t\rho)\,a = t\rho[a]\typ G\rho[a]$ and $r\forces (\lambda x. t\sigma) a = t\sigma[a] \typ G\sigma[a]$. By Lemma~\ref{LEM:equivalentTypesHaveEqualPers} $r\forces (\lambda x. t\sigma) a = t\sigma[a] \typ G\rho[a]$. By Symmetry and transitivity $r\forces (\lambda x. t\rho)\, a = (\lambda x. t\sigma)\,a \typ G\rho[a]$. Thus $q\forces (\lambda x. t) \rho = (\lambda x. t) \sigma \typ (\Pi(x\typ F) G)\rho$.
\end{proof}

\begin{lem}
\label{soundbeta}

\AxiomC{$\Gamma, x\typ F \vDash_p t \typ  G$}
\AxiomC{$\Gamma \vDash_p a \typ  F$}
\BinaryInfC{$\Gamma \vDash_p (\lambda x. t)\,a = t[a] \typ  G[a]$}
\DisplayProof
\end{lem}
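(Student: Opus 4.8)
The plan is to unfold the definition of $\Gamma\vDash_p (\lambda x.t)\,a = t[a]\typ G[a]$, which asks for four things: the type-theoretic judgment $\Gamma\vdash_p (\lambda x.t)\,a = t[a]\typ G[a]$, the two statements $\Gamma\vDash_p (\lambda x.t)\,a\typ G[a]$ and $\Gamma\vDash_p t[a]\typ G[a]$, and the clause that $q\forces ((\lambda x.t)\,a)\rho = (t[a])\rho\typ (G[a])\rho$ whenever $q\leqslant p$ and $q\forces \rho\typ\Gamma$. The judgment itself is an instance of the $\beta$-conversion rule applied to the premises $\Gamma, x\typ F\vdash_p t\typ G$ and $\Gamma\vdash_p a\typ F$ contained in the hypotheses (extracting them, where needed, via Lemma~\ref{LEM:evalissub}); note also that the first hypothesis contains $\Gamma, x\typ F\vDash_p G$.

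The crucial point is the substitution bookkeeping: $((\lambda x.t)\,a)\rho \coloneqq (\lambda x. t\rho)\,a\rho$, $(t[a])\rho \coloneqq t\rho[a\rho]$, and $(G[a])\rho \coloneqq G\rho[a\rho]$. So fix $q\leqslant p$ with $q\forces\rho\typ\Gamma$. From $\Gamma\vDash_p a\typ F$ we get $q\forces a\rho\typ F\rho$, hence $q\forces (\rho, a\rho)\typ (\Gamma, x\typ F)$. Feeding this into $\Gamma, x\typ F\vDash_p G$ and $\Gamma, x\typ F\vDash_p t\typ G$ yields $q\forces G\rho[a\rho]$ (so $q\forces (G[a])\rho$) and $q\forces t\rho[a\rho]\typ G\rho[a\rho]$ (so $q\forces (t[a])\rho\typ (G[a])\rho$). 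On the other hand the $\beta$-reduction rule gives $(\lambda x. t\rho)\,a\rho\red t\rho[a\rho]$, while the conversion rule of the type system gives $\vdash_q (\lambda x. t\rho)\,a\rho = t\rho[a\rho]\typ G\rho[a\rho]$; together these say $q\vdash (\lambda x. t\rho)\,a\rho\red t\rho[a\rho]\typ G\rho[a\rho]$, so Lemma~\ref{LEM:terms_onestepreductionimplyequiv} (with $q\forces G\rho[a\rho]$ and $q\forces t\rho[a\rho]\typ G\rho[a\rho]$ already in hand) gives both $q\forces ((\lambda x.t)\,a)\rho\typ (G[a])\rho$ and $q\forces ((\lambda x.t)\,a)\rho = (t[a])\rho\typ (G[a])\rho$, which is the remaining clause.

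It remains to treat the equality-of-environments clauses needed for the three $\vDash_p$ sub-statements. Suppose $q\forces \rho = \sigma\typ\Gamma$. From $\Gamma\vDash_p a\typ F$ we get $q\forces a\rho = a\sigma\typ F\rho$, hence $q\forces (\rho, a\rho) = (\sigma, a\sigma)\typ (\Gamma, x\typ F)$; then $\Gamma, x\typ F\vDash_p G$ gives $q\forces G\rho[a\rho] = G\sigma[a\sigma]$ (which records $\Gamma\vDash_p G[a]$), and $\Gamma, x\typ F\vDash_p t\typ G$ gives $q\forces t\rho[a\rho] = t\sigma[a\sigma]\typ G\rho[a\rho]$, handling the clause for $t[a]$. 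For $(\lambda x.t)\,a$ one repeats the reduction argument above at $\rho$ and at $\sigma$, transports the equality $q\forces (\lambda x.t\sigma)\,a\sigma = t\sigma[a\sigma]\typ G\sigma[a\sigma]$ to one over $G\rho[a\rho]$ using Lemma~\ref{LEM:equivalentTypesHaveEqualPers}, and concludes with symmetry and transitivity (Theorem~\ref{properties}), exactly as in the proof of Lemma~\ref{soundLambdaIntro}.

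The whole argument is routine once the substitution identities above are recorded; the only mildly delicate step is keeping track of which type each computable equality lives over before symmetry and transitivity are applied, which is handled precisely as in Lemma~\ref{soundLambdaIntro}. I do not expect a genuine obstacle here.
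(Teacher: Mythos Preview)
Your proposal is correct and follows the same approach as the paper: obtain $q\vdash ((\lambda x.t)\,a)\rho \red t[a]\rho \typ G[a]\rho$ via the $\beta$-reduction rule and then invoke Lemma~\ref{LEM:terms_onestepreductionimplyequiv}. The paper's proof is much terser, addressing only this core clause and leaving the remaining parts of the definition of $\vDash_p$ (the syntactic judgment, the two $\vDash_p$-typing statements, and the environment-equality clauses) implicit; your write-up spells these out, which is harmless and arguably more complete.
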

\begin{proof}
Let $q\leqslant p$ and $q\forces \rho \typ \Gamma$. We have $q\forces a\rho \typ F\rho$. As in Lemma~\ref{soundLambdaIntro} $q\vdash ((\lambda x. t)\,a)\rho \red t[a]\rho \typ G[a]\rho$ which by Lemma~\ref{LEM:terms_onestepreductionimplyequiv} imply that $q\forces ((\lambda x. t)\,a)\rho = t[a]\rho \typ G\rho[a]$.
\end{proof}

\begin{lem}
\label{soundapptyp}

\AxiomC{$\Gamma\vDash_p g \typ  \Pi (x\typ F) G$}
\AxiomC{$\Gamma\vDash_p a \typ  F$}
\BinaryInfC{$\Gamma \vDash_p g\,a \typ  G[a]$}
\DisplayProof
\end{lem}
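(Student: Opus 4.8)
The plan is to unfold $\Gamma\vDash_p g\,a\typ G[a]$ into its three requirements and discharge each, using the hypotheses $\Gamma\vDash_p g\typ\Pi(x\typ F)G$ (which in particular gives $\Gamma\vDash_p\Pi(x\typ F)G$) and $\Gamma\vDash_p a\typ F$ (which in particular gives $\Gamma\vDash_p F$), together with the $\lfor_\Pi$ clauses of the forcing definition and Lemma~\ref{LEM:equivalentTypesHaveEqualPers}. The underlying judgment $\Gamma\vdash_p g\,a\typ G[a]$ is immediate from the application rule of the type system applied to $\Gamma\vdash_p g\typ\Pi(x\typ F)G$ and $\Gamma\vdash_p a\typ F$, and this also yields $\Gamma\vdash_p G[a]$. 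Throughout I will silently use the substitution identities $(g\,a)\rho=(g\rho)(a\rho)$ and $G[a]\rho=(G\rho)[a\rho]$, assuming $x$ not free in $\rho,\sigma$.

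For the substitution clause, first fix $q\leqslant p$ with $q\forces\rho\typ\Gamma$. Then $(\Pi(x\typ F)G)\rho=\Pi(x\typ F\rho)(G\rho)$ is already a canonical whnf, so the $\lfor_\Pi$ clauses apply to it directly; the hypotheses give $q\forces g\rho\typ\Pi(x\typ F\rho)(G\rho)$ and $q\forces a\rho\typ F\rho$. Reading off the membership clause of $\lfor_\Pi$ at $q\leqslant q$ with argument $a\rho$ gives $q\forces (g\rho)(a\rho)\typ(G\rho)[a\rho]$, i.e.\ $q\forces (g\,a)\rho\typ G[a]\rho$; and by the remark that $p\forces t\typ A$ entails $p\forces A$ we also get $q\forces G[a]\rho$, which together with $\Gamma\vdash_p G[a]$ covers the type-formation half of $\Gamma\vDash_p G[a]$.

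Now fix $q\leqslant p$ with $q\forces\rho=\sigma\typ\Gamma$. The hypotheses give $q\forces g\rho=g\sigma\typ\Pi(x\typ F\rho)(G\rho)$ and $q\forces a\rho=a\sigma\typ F\rho$, hence also $q\forces g\sigma\typ\Pi(x\typ F\rho)(G\rho)$ and $q\forces a\rho\typ F\rho$. The term-equality clause of $\lfor_\Pi$ applied to $g\rho=g\sigma$ at argument $a\rho$ gives $q\forces (g\rho)(a\rho)=(g\sigma)(a\rho)\typ(G\rho)[a\rho]$; the congruence (second) conjunct of the membership clause applied to $g\sigma$ at the pair $a\rho=a\sigma$ gives $q\forces (g\sigma)(a\rho)=(g\sigma)(a\sigma)\typ(G\rho)[a\rho]$; transitivity then yields $q\forces (g\,a)\rho=(g\,a)\sigma\typ G[a]\rho$. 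It remains to verify $q\forces G[a]\rho=G[a]\sigma$ to finish $\Gamma\vDash_p G[a]$: from $\Gamma\vDash_p\Pi(x\typ F)G$ we have $q\forces\Pi(x\typ F\rho)(G\rho)=\Pi(x\typ F\sigma)(G\sigma)$, whose unfolding gives $q\forces F\rho=F\sigma$ and, instantiated at $a\rho$, $q\forces (G\rho)[a\rho]=(G\sigma)[a\rho]$; transporting $q\forces a\rho=a\sigma\typ F\rho$ along $q\forces F\rho=F\sigma$ via Lemma~\ref{LEM:equivalentTypesHaveEqualPers} gives $q\forces a\rho=a\sigma\typ F\sigma$, which fed into the congruence condition packaged in $q\forces\Pi(x\typ F\sigma)(G\sigma)$ gives $q\forces (G\sigma)[a\rho]=(G\sigma)[a\sigma]$; transitivity then closes the gap.

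The step I expect to be the main obstacle is this last one, establishing $q\forces G[a]\rho=G[a]\sigma$, since it is the only place where the conclusion is not read directly off a single defining clause but has to be assembled by chaining the $\Pi$-type-equality clause, the congruence data contained in $q\forces\Pi(x\typ F\sigma)(G\sigma)$, Lemma~\ref{LEM:equivalentTypesHaveEqualPers}, and transitivity. Everything else is a routine instantiation of the $\lfor_\Pi$ clauses modulo the substitution-composition bookkeeping.
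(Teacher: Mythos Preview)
Your proof is correct and follows the same approach as the paper's. The core argument for the two substitution clauses---instantiating the $\lfor_\Pi$ membership clause at $a\rho$ for the first, and chaining the $\lfor_\Pi$ equality clause at $a\rho$ with the congruence clause of $g\sigma$ at $a\rho=a\sigma$ for the second---is exactly what the paper does. The only difference is that you additionally spell out the verification of $\Gamma\vDash_p G[a]$ (in particular the step $q\forces G[a]\rho=G[a]\sigma$), which the paper's proof leaves entirely implicit; your argument for that part via the $\Pi$-equality unfolding, Lemma~\ref{LEM:equivalentTypesHaveEqualPers}, and transitivity is correct and is the natural way to fill this gap.
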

\begin{proof}
Let $q\leqslant p$ and $q\forces \rho \typ \Gamma$. We have $q\forces g\rho \typ (\Pi(x\typ F) G)\rho$ and  $q\forces a\rho \typ F\rho$. By the definition $q\forces (g\,a)\rho \typ G[a]\rho$.

Let $q\forces \rho =\sigma\typ \Gamma$. We have then $q\forces g\rho = g\sigma \typ (\Pi(x\typ F) G)\rho$ and $q\forces a\rho = a\sigma \typ F\rho$. From the definition $q\forces g\rho\,a\rho = g\sigma\,a\rho \typ G[a]\rho$. From the definition $q\forces g\sigma\,a\rho = g\sigma\,a\sigma \typ G[a]\rho$. By transitivity $q\forces (g\,a)\rho = (g\,a)\sigma \typ G[a]\rho$. 
\end{proof}

\begin{lem}
\label{soundappeq}
(1) 
\AxiomC{$\Gamma \vDash_p g \typ  \Pi (x\typ F) G$}
\AxiomC{$\Gamma \vDash_p u =v \typ  F$}
\BinaryInfC{$\Gamma \vDash_p g\,u = g\,v \typ  G[u]$}
\DisplayProof\;
(2)
\AxiomC{$\Gamma \vDash_p h =g \typ  \Pi (x\typ F) G$}
\AxiomC{$\Gamma \vDash_p u \typ  F$}
\BinaryInfC{$\Gamma \vDash_p h\,u = g\,u \typ  G[u]$}
\DisplayProof 
\end{lem}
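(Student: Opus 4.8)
The plan is to proceed exactly as in Lemma~\ref{soundapptyp} and the preceding soundness lemmas: unfold the definition of $\Gamma\vDash_p{-}$, fix an arbitrary $q\leqslant p$ with $q\forces\rho\typ\Gamma$ (resp.\ $q\forces\rho=\sigma\typ\Gamma$), and read the conclusion off the clause for $\lfor_\Pi$ in the definition of the forcing relation. The $\vdash_p$ side-conditions appearing in each statement are instances of the corresponding type-system rules applied to the typing judgments already contained in the hypotheses (using Lemma~\ref{LEM:evalissub} when a substitution must itself be typed), and the auxiliary $\vDash_p$-judgments demanded by the definition of a $\vDash_p$-equality come from Lemma~\ref{soundapptyp} together with the extensionality statement Lemma~\ref{LEM:equivalentTypesHaveEqualPers}.

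For (1): from $\Gamma\vDash_p g\typ\Pi(x\typ F)G$ and $\Gamma\vDash_p u=v\typ F$ we obtain $q\forces g\rho\typ(\Pi(x\typ F)G)\rho$ and $q\forces u\rho=v\rho\typ F\rho$. Since $(\Pi(x\typ F)G)\rho\coloneqq\Pi(x\typ F\rho)\,G\rho$ is already a canonical $q$-whnf, the judgment $q\forces g\rho\typ(\Pi(x\typ F)G)\rho$ must have been derived by $\lfor_\Pi$, so its clause gives $r\forces a=b\typ F\rho\Rightarrow r\forces g\rho\,a=g\rho\,b\typ G\rho[a]$ for all $r\leqslant q$. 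Instantiating at $r\coloneqq q$, $a\coloneqq u\rho$, $b\coloneqq v\rho$ yields $q\forces g\rho\,u\rho=g\rho\,v\rho\typ G\rho[u\rho]$, i.e.\ $q\forces(g\,u)\rho=(g\,v)\rho\typ(G[u])\rho$, as required. For the side-conditions, $\Gamma\vDash_p g\,u\typ G[u]$ is Lemma~\ref{soundapptyp} applied to the first hypothesis and to $\Gamma\vDash_p u\typ F$ (which is part of the second hypothesis); and $\Gamma\vDash_p g\,v\typ G[u]$ follows from $\Gamma\vDash_p g\,v\typ G[v]$ (again Lemma~\ref{soundapptyp}) via Lemma~\ref{LEM:equivalentTypesHaveEqualPers}, once one notes that for every $q\leqslant p$ with $q\forces\rho\typ\Gamma$ we have $q\forces(\rho,u\rho)=(\rho,v\rho)\typ(\Gamma,x\typ F)$ and hence $q\forces(G[u])\rho=(G[v])\rho$.

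For (2): from $\Gamma\vDash_p h=g\typ\Pi(x\typ F)G$ and $\Gamma\vDash_p u\typ F$ we obtain $q\forces h\rho=g\rho\typ(\Pi(x\typ F)G)\rho$ and $q\forces u\rho\typ F\rho$. The equality is again derived by $\lfor_\Pi$, whose clause for $\Pi$-type equalities gives $r\forces a\typ F\rho\Rightarrow r\forces h\rho\,a=g\rho\,a\typ G\rho[a]$ for all $r\leqslant q$; instantiating at $r\coloneqq q$, $a\coloneqq u\rho$ gives $q\forces(h\,u)\rho=(g\,u)\rho\typ(G[u])\rho$. The side-conditions $\Gamma\vDash_p h\,u\typ G[u]$ and $\Gamma\vDash_p g\,u\typ G[u]$ are instances of Lemma~\ref{soundapptyp}, using that $\Gamma\vDash_p h=g\typ\Pi(x\typ F)G$ contains both $\Gamma\vDash_p h\typ\Pi(x\typ F)G$ and $\Gamma\vDash_p g\typ\Pi(x\typ F)G$.

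There is no genuine obstacle here: everything reduces to reading off the appropriate conjunct of the $\lfor_\Pi$ clause of the inductive definition of $\forces$. The only mild piece of bookkeeping is, in part (1), keeping the dependent codomain consistent — converting $\Gamma\vDash_p g\,v\typ G[v]$ into $\Gamma\vDash_p g\,v\typ G[u]$ — which is precisely what Lemma~\ref{LEM:equivalentTypesHaveEqualPers} provides. As elsewhere in this section, the purely $\vdash_p$ typing obligations attached to the forcing judgments are suppressed, being handled by the fact that the type system enjoys the properties collected in Theorem~\ref{properties}.
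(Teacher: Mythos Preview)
Your proof is correct and follows the same approach as the paper: fix $q\leqslant p$ with $q\forces\rho\typ\Gamma$ and read off the relevant conjunct of the $\lfor_\Pi$ clause. The paper's proof is terser (it just writes ``from the definition'' for both parts), whereas you additionally spell out the side-conditions $\Gamma\vDash_p g\,u\typ G[u]$, $\Gamma\vDash_p g\,v\typ G[u]$, etc., required by the definition of $\Gamma\vDash_p t=u\typ A$; the paper silently elides these in keeping with its general convention of suppressing such bookkeeping.
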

\begin{proof}
Let $q\leqslant p$ and $q\forces \rho \typ \Gamma$.  

\begin{enumerate}
\item We have $q\forces g\rho \typ (\Pi(x\typ F) G)\rho$ and  $q\forces u\rho = v\rho \typ F\rho$. From the definition get $q\forces (g\,u)\rho = (g\,v)\rho \typ G[u]\rho$.
\item We have  $q\forces h\rho = g\rho \typ (\Pi(x\typ F) G)\rho$ and  $q\forces u\rho \typ F\rho$. From the definition we get $q\forces (h\,u)\rho = (g\,u)\rho \typ G[u]\rho$.\qedhere
\end{enumerate}
\end{proof}

\begin{lem}
\label{soundfunext}

\AxiomC{$\Gamma \vDash_p h \typ \Pi(x\typ  F)G$}
\AxiomC{$\Gamma \vDash_p g \typ \Pi(x\typ  F) G$}
\AxiomC{$\Gamma, x\typ F \vDash_p h\,x=g\,x \typ G[x]$}
\TrinaryInfC{$\Gamma \vDash_p h = g \typ  \Pi(x\typ  F)G$}
\DisplayProof \smallskip
\end{lem}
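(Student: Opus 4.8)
The plan is to unfold the definition of $\Gamma\vDash_p h=g\typ\Pi(x\typ F)G$ and reduce everything to a single pointwise check. The syntactic judgment $\Gamma\vdash_p h=g\typ\Pi(x\typ F)G$ follows by applying the corresponding typing rule to the syntactic parts of the three hypotheses, and the conditions $\Gamma\vDash_p h\typ\Pi(x\typ F)G$ and $\Gamma\vDash_p g\typ\Pi(x\typ F)G$ are literally the first two hypotheses (which also subsume $\Gamma\vDash_p\Pi(x\typ F)G$). So the only real content is: for every $q\leqslant p$ and every $\rho$ with $q\forces\rho\typ\Gamma$, show $q\forces h\rho=g\rho\typ(\Pi(x\typ F)G)\rho$.

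Fix such $q$ and $\rho$ and set $A\coloneqq(\Pi(x\typ F)G)\rho=\Pi(x\typ F\rho)(G\rho)$. From the first hypothesis we get $q\forces h\rho\typ A$ and from the second $q\forces g\rho\typ A$; in particular $q\forces A$, and since $A$ is already a canonical $q$-whnf (there is no reduction under $\Pi$) this forcing is necessarily by $\lfor_\Pi$. Hence it suffices, by the clause defining $q\forces t=u\typ A$ for a type $A$ forced by $\lfor_\Pi$, to establish the syntactic equality $\vdash_q h\rho=g\rho\typ A$ (which holds as above, using Lemma~\ref{LEM:evalissub} to get $\vdash_q\rho\typ\Gamma$) together with the pointwise condition: for all $r\leqslant q$ with $r\forces a\typ F\rho$, we have $r\forces(h\rho)\,a=(g\rho)\,a\typ G\rho[a]$.

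To verify this last point, fix $r\leqslant q$ and $a$ with $r\forces a\typ F\rho$. By monotonicity (Corollary~\ref{COR:monotonicity}) $r\forces\rho\typ\Gamma$, so $r\forces(\rho,a)\typ(\Gamma,x\typ F)$, and since $r\leqslant p$ the third hypothesis gives $r\forces(h\,x)(\rho,a)=(g\,x)(\rho,a)\typ G[x](\rho,a)$. As $x$ is not free in $h$ or $g$ we have $(h\,x)(\rho,a)=(h\rho)\,a$ and $(g\,x)(\rho,a)=(g\rho)\,a$, and $G[x](\rho,a)=G\rho[a]$, so $r\forces(h\rho)\,a=(g\rho)\,a\typ G\rho[a]$, as needed. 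The bookkeeping is routine; the one thing to get right is that the $\Pi$-clause for term equality asks only for the pointwise equalities $t\,a=u\,a$ with no congruence in the argument — precisely what instantiating the third hypothesis at the extended environment $(\rho,a)$ delivers — so that neither symmetry nor transitivity of the forcing relation is needed, the $\eta$-rule and its premise both pointing in the same direction.
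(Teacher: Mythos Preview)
Your proof is correct and follows essentially the same approach as the paper's own proof: unfold the definition, use the first two hypotheses to get $q\forces h\rho\typ A$ and $q\forces g\rho\typ A$, then for the pointwise condition extend $\rho$ by $a$ and invoke the third hypothesis. Your version is simply more explicit about the bookkeeping (monotonicity to get $r\forces\rho\typ\Gamma$, the syntactic side condition, and why $q\forces A$ holds by $\lfor_\Pi$), all of which the paper leaves implicit.
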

\begin{proof}
Let $q\leqslant p$ and $q\forces \rho \typ \Gamma$. We have $q\forces h\rho\typ (\Pi(x\typ F) G)\rho$ and $q\forces g\rho\typ (\Pi(x\typ F) G)\rho$. Let $r\leqslant q$ and $r\forces a\typ F\rho$. We have then that $r\forces (\rho,a)\typ \Gamma,x\typ F$. Thus $r\forces h\rho\,a=g\rho\,a \typ G\rho[a]$. By the definition $q\forces h\rho = g\rho \typ (\Pi(x\typ F) G)\rho$.
\end{proof}

\begin{lem}
\label{soundpairing}

\AxiomC{$\Gamma, x\typ F \vDash_p G$ }
\AxiomC{$\Gamma \vDash_p a \typ  F$ }
\AxiomC{$\Gamma \vDash_p b\typ  G[a]$}
\TrinaryInfC{$\Gamma \vDash_p (a, b) \typ  \Sigma (x\typ  F) G $}
\DisplayProof
\end{lem}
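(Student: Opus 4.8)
The plan is to unfold the definition of $\Gamma\vDash_p(a,b)\typ\Sigma(x\typ F)G$ and check its four components. The syntactic part $\Gamma\vdash_p(a,b)\typ\Sigma(x\typ F)G$ is just the pairing rule of the type system applied to the $\vdash_p$-components of the three premises, and $\Gamma\vDash_p\Sigma(x\typ F)G$ follows from $\Gamma\vDash_p F$ (which is contained in $\Gamma\vDash_p a\typ F$) together with $\Gamma,x\typ F\vDash_p G$ by Lemma~\ref{soundPiIntro}. So the real work is in the two computability clauses, and both are handled via the $\lfor_\Sigma$ clauses of the computability relation, i.e.\ by reasoning about the two projections.

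First I would fix $q\leqslant p$ with $q\forces\rho\typ\Gamma$. From the premises, $q\forces a\rho\typ F\rho$ and $q\forces b\rho\typ G[a]\rho = G\rho[a\rho]$; also $(a,b)\rho = (a\rho,b\rho)$. To get $q\forces(a\rho,b\rho)\typ(\Sigma(x\typ F)G)\rho$ from the $\lfor_\Sigma$ clause it suffices to show $q\forces(a\rho,b\rho)\fpr\typ F\rho$ and $q\forces(a\rho,b\rho)\spr\typ G\rho[(a\rho,b\rho)\fpr]$. Now $(a\rho,b\rho)\fpr\red a\rho$, so $q\vdash(a\rho,b\rho)\fpr\red a\rho\typ F\rho$, and since $q\forces a\rho\typ F\rho$, Lemma~\ref{LEM:terms_onestepreductionimplyequiv} yields $q\forces(a\rho,b\rho)\fpr\typ F\rho$ and $q\forces(a\rho,b\rho)\fpr = a\rho\typ F\rho$. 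Likewise $(a\rho,b\rho)\spr\red b\rho$ gives $q\forces(a\rho,b\rho)\spr\typ G\rho[a\rho]$. Feeding the substitution equality $q\forces(\rho,a\rho)=(\rho,(a\rho,b\rho)\fpr)\typ(\Gamma,x\typ F)$ — valid by reflexivity of $\rho$ (Theorem~\ref{properties}) and symmetry of the equality just obtained — into $\Gamma,x\typ F\vDash_p G$ gives $q\forces G\rho[a\rho]=G\rho[(a\rho,b\rho)\fpr]$, and Lemma~\ref{LEM:equivalentTypesHaveEqualPers} transports $q\forces(a\rho,b\rho)\spr\typ G\rho[(a\rho,b\rho)\fpr]$, as needed.

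For the equality clause, fix $q\leqslant p$ with $q\forces\rho=\sigma\typ\Gamma$; in particular $q\forces\rho\typ\Gamma$, so the previous paragraph gives $q\forces(a,b)\rho\typ(\Sigma(x\typ F)G)\rho$. From the premises, $q\forces a\rho=a\sigma\typ F\rho$ and $q\forces b\rho=b\sigma\typ G\rho[a\rho]$; running the same reduction argument for $(a\sigma,b\sigma)$ but keeping everything over the $\rho$-instantiated types (using $q\forces a\sigma\typ F\rho$ and $q\forces b\sigma\typ G\rho[a\rho]$, which follow from those equalities, and again Lemma~\ref{LEM:equivalentTypesHaveEqualPers} to move $(a\sigma,b\sigma)\spr$ over $G\rho[(a\sigma,b\sigma)\fpr]$) gives $q\forces(a,b)\sigma\typ(\Sigma(x\typ F)G)\rho$. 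Then, via $(a\rho,b\rho)\fpr\red a\rho$, $(a\sigma,b\sigma)\fpr\red a\sigma$ and Lemma~\ref{LEM:terms_onestepreductionimplyequiv} plus symmetry/transitivity, $q\forces(a\rho,b\rho)\fpr=(a\sigma,b\sigma)\fpr\typ F\rho$; similarly $q\forces(a\rho,b\rho)\spr=(a\sigma,b\sigma)\spr\typ G\rho[a\rho]$, which Lemma~\ref{LEM:equivalentTypesHaveEqualPers} moves to $G\rho[(a\rho,b\rho)\fpr]$ using $q\forces(a\rho,b\rho)\fpr=a\rho\typ F\rho$ and $\Gamma,x\typ F\vDash_p G$. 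The $\lfor_\Sigma$ clause for term equality then delivers $q\forces(a,b)\rho=(a,b)\sigma\typ(\Sigma(x\typ F)G)\rho$, and the syntactic side conditions $\vdash_q\cdots$ needed to apply these clauses follow from the $\vdash$-components of the premises and Lemma~\ref{LEM:evalissub}.

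The main obstacle is the dependency bookkeeping: the third premise types $b$ at $G[a]$, whereas the $\lfor_\Sigma$ clause demands the second component be computable at $G[t\fpr]$, and $(a\rho,b\rho)\fpr$ is not literally $a\rho$ but only reduces to it. One must therefore carry the computability equality $q\forces(a\rho,b\rho)\fpr=a\rho\typ F\rho$, derive $q\forces G\rho[(a\rho,b\rho)\fpr]=G\rho[a\rho]$ from $\Gamma,x\typ F\vDash_p G$, and transport along it with Lemma~\ref{LEM:equivalentTypesHaveEqualPers} — exactly the kind of juggling already seen in Lemma~\ref{soundLambdaIntro}. Everything else is the routine interplay of the reduction lemmas (Lemma~\ref{LEM:terms_onestepreductionimplyequiv}, Corollary~\ref{COR:terms_reductionimplyequiv}) with reflexivity, symmetry and transitivity from Theorem~\ref{properties}.
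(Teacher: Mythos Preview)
Your proposal is correct and follows essentially the same approach as the paper: both arguments unfold the $\lfor_\Sigma$ clause, use the projection reductions $(a\rho,b\rho)\fpr\red a\rho$ and $(a\rho,b\rho)\spr\red b\rho$ together with Lemma~\ref{LEM:terms_onestepreductionimplyequiv}, then feed the resulting equality $(a\rho,b\rho)\fpr=a\rho$ into $\Gamma,x\typ F\vDash_p G$ and transport via Lemma~\ref{LEM:equivalentTypesHaveEqualPers}. Your version is slightly more explicit about the side conditions $\Gamma\vDash_p\Sigma(x\typ F)G$ and the syntactic $\vdash_q$ hypotheses, and in the equality clause you work directly over the $\rho$-instantiated type rather than first establishing $q\forces(a,b)\sigma\typ(\Sigma(x\typ F)G)\sigma$ and then transporting, but these are cosmetic differences rather than a different route.
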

\begin{proof}
Let $q\leqslant p$ and $q\forces \rho \typ \Gamma$. By the typing rules $\Gamma\vdash_q (a,b)\fpr = a \typ F$ and $\Gamma\vdash_q (a,b)\spr = b\typ F[a]$. But $\vdash_q \rho \typ \Gamma$. By substitution we have $\vdash_q ((a,b)\fpr)\rho = a\rho \typ F\rho$ and $\vdash_q ((a,b)\spr)\rho = b\rho \typ G[a]\rho$. But $((a,b)\fpr)\rho \red_q a\rho$ and $((a,b)\spr)\rho \red_q b\rho$. Thus $q\vdash ((a,b)\fpr)\rho \red  a\rho \typ F\rho$ and $q\vdash ((a,b)\spr)\rho \red  b\rho\typ G[a]\rho$. From the premise $q\forces a\rho \typ F\rho$ and $q\forces b\rho \typ G[a]\rho$. By Lemma~\ref{LEM:terms_onestepreductionimplyequiv}  $q\forces ((a,b)\fpr)\rho \typ F\rho$ and $q\forces ((a,b)\spr)\rho \typ G[a]\rho$. By Lemma~\ref{LEM:terms_onestepreductionimplyequiv} $q\forces ((a,b)\fpr)\rho = a\rho \typ F\rho$, thus $q\forces (\rho,a\rho)=(\rho,((a,b)\fpr)\rho)\typ (\Gamma,x\typ F)$. Hence $q\forces G[a]\rho = G[(a,b)\fpr]\rho$. By Lemma~\ref{LEM:equivalentTypesHaveEqualPers} $q\forces ((a,b)\spr)\rho \typ G[(a,b)\fpr]\rho$. By the definition we have then that $q\forces (a,b)\rho \typ (\Sigma(x\typ F) G)\rho$.

Let $q\forces \rho = \sigma \typ \Gamma$. Similarly we can show  $q\forces (a,b)\sigma \typ (\Sigma(x\typ F) G)\sigma$. We have that $q\forces a\rho = a\sigma \typ F\rho$ and $q\forces b\rho = b\sigma \typ G[a]\rho$. We have also $q\forces (\rho,a\rho)=(\sigma,a\sigma) \typ (\Gamma,x\typ F)$ we thus have $q\forces G[a]\rho = G[a]\sigma$. By Lemma~\ref{LEM:terms_onestepreductionimplyequiv} $q\forces ((a,b)\spr)\sigma = b\sigma \typ G[a]\sigma$. By Lemma~\ref{LEM:equivalentTypesHaveEqualPers} $q\forces ((a,b)\spr)\sigma = b\sigma \typ G[a]\rho$. But we also have by Lemma~\ref{LEM:terms_onestepreductionimplyequiv} that $q\forces ((a,b)\spr)\sigma = a\sigma \typ F\sigma$. Hence, by Lemma~\ref{LEM:equivalentTypesHaveEqualPers}, we have $q\forces ((a,b)\spr)\sigma=a\sigma\typ F\rho$. By symmetry and transitivity we then have that $q\forces ((a,b)\fpr)\rho = ((a,b)\fpr)\sigma \typ F\rho$ and $q\forces ((a,b)\spr)\rho = ((a,b)\spr)\sigma \typ G[(a,b)\fpr]\rho$. Thus we have that $q\forces (a,b)\rho = (a,b)\sigma \typ (\Sigma(x\typ F) G)\rho$.
\end{proof}

\begin{lem}
\label{soundproj} \leavevmode\\[2mm]
{\upshape (1)}
\AxiomC{$\Gamma, x\typ F\vDash_p G$}
\AxiomC{$\Gamma\vDash_p t\typ F$}
\AxiomC{$\Gamma\vDash_p u\typ G[t]$}
\TrinaryInfC{$\Gamma \vDash_p (t,u)\fpr=t \typ  F$ }
\DisplayProof\;
{\upshape (2)}
\AxiomC{$\Gamma, x\typ F\vDash_p G$}
\AxiomC{$\Gamma\vDash_p t\typ F$}
\AxiomC{$\Gamma\vDash_p u\typ G[t]$}
\TrinaryInfC{$\Gamma \vDash_p (t,u)\spr=u \typ  G[t]$ }
\DisplayProof
\end{lem}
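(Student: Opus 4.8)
The plan is to proceed exactly as in the proofs of Lemma~\ref{soundbeta} and Lemma~\ref{soundpairing}, using that $((t,u)\fpr)\rho$ and $((t,u)\spr)\rho$ reduce in one step to $t\rho$ and $u\rho$ respectively. I treat part~(1); part~(2) is identical with $\spr,u,G[t]$ in place of $\fpr,t,F$.

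First I would fix $q\leqslant p$ and $q\forces\rho\typ\Gamma$. By Lemma~\ref{LEM:evalissub}, $\vdash_q\rho\typ\Gamma$, and the premises give $q\forces F\rho$, $q\forces t\rho\typ F\rho$ and $q\forces u\rho\typ G[t]\rho$ (note $G[t]\rho\coloneqq G\rho[t\rho]$, and $\Gamma\vDash_p u\typ G[t]$ carries $\Gamma\vDash_p G[t]$ with it). The surjective-pairing rule yields $\Gamma\vdash_p (t,u)\fpr = t\typ F$, so by substitution $\vdash_q ((t,u)\fpr)\rho = t\rho\typ F\rho$; since $((t,u)\fpr)\rho\coloneqq(t\rho,u\rho)\fpr\red t\rho$ we obtain $q\vdash ((t,u)\fpr)\rho\red t\rho\typ F\rho$. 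Now Lemma~\ref{LEM:terms_onestepreductionimplyequiv}, applied at the forced type $F\rho$, gives $q\forces ((t,u)\fpr)\rho\typ F\rho$ and $q\forces ((t,u)\fpr)\rho = t\rho\typ F\rho$, which is what is needed for the equality clause.

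It remains to supply the $\rho=\sigma$ data implicit in $\Gamma\vDash_p (t,u)\fpr = t\typ F$, that is $\Gamma\vDash_p (t,u)\fpr\typ F$ (the other components, $\Gamma\vDash_p t\typ F$ and the judgment $\Gamma\vdash_p (t,u)\fpr = t\typ F$, being given or immediate). For $q\forces\rho=\sigma\typ\Gamma$ the premise $\Gamma\vDash_p t\typ F$ gives $q\forces t\rho = t\sigma\typ F\rho$ and $q\forces F\rho = F\sigma$; combining the one-step reductions $q\vdash ((t,u)\fpr)\rho\red t\rho\typ F\rho$ and $q\vdash ((t,u)\fpr)\sigma\red t\sigma\typ F\sigma$ (the latter transported to type $F\rho$ via Lemma~\ref{LEM:equivalentTypesHaveEqualPers}) with symmetry and transitivity of forcing yields $q\forces ((t,u)\fpr)\rho = ((t,u)\fpr)\sigma\typ F\rho$; the analogous computation handles $\spr$. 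The argument is entirely routine; the only point requiring care is the bookkeeping of the substitution/reduction identities $((t,u)\fpr)\rho\red t\rho$, $((t,u)\spr)\rho\red u\rho$ and $G[t]\rho\coloneqq G\rho[t\rho]$, so that Lemma~\ref{LEM:terms_onestepreductionimplyequiv} is invoked at a type that is genuinely forced at $q$.
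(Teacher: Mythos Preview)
Your proof is correct and follows essentially the same approach as the paper: establish the one-step reduction $q\vdash ((t,u)\fpr)\rho\red t\rho\typ F\rho$ and invoke Lemma~\ref{LEM:terms_onestepreductionimplyequiv}. The paper's version is terser and only writes out the $q\forces\rho\typ\Gamma$ clause, leaving the $\rho=\sigma$ data needed for $\Gamma\vDash_p (t,u)\fpr\typ F$ implicit; your explicit treatment of that case, transporting along $q\forces F\rho=F\sigma$ via Lemma~\ref{LEM:equivalentTypesHaveEqualPers} and closing with symmetry and transitivity, is a welcome bit of extra care rather than a different argument.
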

\begin{proof}
Let $q\leqslant p$ and $q\forces \rho \typ \Gamma$.

\begin{enumerate}
\item We have $\vdash_q t\rho \typ F\rho$ and $\vdash_q u\rho \typ G[t]\rho$. By substitution we get $\vdash_q ((t,u)\fpr)\rho = t\rho \typ F\rho$. But $((t,u)\fpr)\rho \red_q t\rho$, thus $q\vdash ((t,u)\fpr)\rho \red_q t\rho \typ F\rho$. We have that $q\forces t\rho \typ F\rho$. Thus by Lemma~\ref{LEM:terms_onestepreductionimplyequiv} $q\forces ((t,u)\fpr)\rho \typ F\rho$ and $q\forces ((t,u)\fpr)\rho = t\rho \typ F\rho$.

\item Similarly we have $q\forces (t,u)\rho \spr \red u\rho \typ G[t\rho]$. Since $q\forces u\rho \typ G[t]\rho$, by Lemma~\ref{LEM:terms_onestepreductionimplyequiv}, we have that $q\forces ((t,u)\spr)\rho \typ G[t]\rho$ and $q\forces ((t,u)\spr)\rho = u\rho \typ G[t\rho]$. \qedhere
\end{enumerate}
\end{proof}

\begin{lem}\leavevmode\medskip
\label{soundprojtypeq}
\\[2mm]
{\upshape (1) }
\AxiomC{$\Gamma\vDash_p t\typ  \Sigma (x\typ F) G$ }
\UnaryInfC{$\Gamma \vDash_p t\fpr \typ  F$ }
\DisplayProof
\,
\AxiomC{$\Gamma\vDash_p t\typ  \Sigma (x\typ F) G$ }
\UnaryInfC{$\Gamma \vDash_p t\spr \typ  G[t\fpr]$ }
\DisplayProof
\,
{\upshape (2)}
\hspace{-2mm}
\AxiomC{$\Gamma \vDash_p t =u \typ  \Sigma (x\typ F) G$}
\UnaryInfC{$\Gamma \vDash_p t\fpr=u\fpr\typ F$}
\DisplayProof
\,
\AxiomC{$\Gamma \vDash_p t =u \typ  \Sigma (x\typ F) G$}
\UnaryInfC{$\Gamma \vDash_p t\spr=u\spr\typ G[t\fpr]$}
\DisplayProof 
\end{lem}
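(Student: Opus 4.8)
The plan is to unwind the definition of $\vDash_p$ in each of the four cases and reduce every obligation to the semantic clauses governing types of the form $\Sigma(x\typ F)G$, together with the purely syntactic observation that substitution commutes with the projections: $(t\fpr)\rho \coloneqq (t\rho)\fpr$, $(t\spr)\rho\coloneqq(t\rho)\spr$, and $(G[t\fpr])\rho\coloneqq (G\rho)[(t\rho)\fpr]$. Note also that $(\Sigma(x\typ F)G)\rho \coloneqq \Sigma(x\typ F\rho)(G\rho)$ is its own $q$-whnf for every $q$, so by Lemma~\ref{LEM:uniquenessofwhnf} any derivation of $q\forces(\Sigma(x\typ F)G)\rho$ is by $\lfor_\Sigma$, and — unlike for the base types — both membership and equality at a $\Sigma$-type are always witnessed directly by the projection clauses (there is no locality alternative for $\lfor_\Sigma$).

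For (1a) and (1b): fix $q\leqslant p$ with $q\forces\rho\typ\Gamma$. From $\Gamma\vDash_p t\typ\Sigma(x\typ F)G$ we obtain $q\forces t\rho\typ\Sigma(x\typ F\rho)(G\rho)$, and the $\lfor_\Sigma$ membership clause yields $q\forces(t\rho)\fpr\typ F\rho$ and $q\forces(t\rho)\spr\typ(G\rho)[(t\rho)\fpr]$; rewriting the substitutions this is exactly $q\forces(t\fpr)\rho\typ F\rho$ and $q\forces(t\spr)\rho\typ(G[t\fpr])\rho$. For $q\forces\rho=\sigma\typ\Gamma$ we instead start from $q\forces t\rho=t\sigma\typ\Sigma(x\typ F\rho)(G\rho)$ and use the $\lfor_\Sigma$ term-equality clause to read off $q\forces(t\fpr)\rho=(t\fpr)\sigma\typ F\rho$ and $q\forces(t\spr)\rho=(t\spr)\sigma\typ(G[t\fpr])\rho$. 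The syntactic side conditions $\Gamma\vdash_p t\fpr\typ F$ and $\Gamma\vdash_p t\spr\typ G[t\fpr]$ are the projection rules, $\Gamma\vdash_p F$ and $\Gamma\vdash_p G[t\fpr]$ follow by inversion and the admissible rule $\Gamma\vdash a\typ A \Rightarrow \Gamma\vdash A$, and the semantic parts of $\Gamma\vDash_p F$ and $\Gamma\vDash_p G[t\fpr]$ are extracted from the $\lfor_\Sigma$ data witnessing $q\forces\Sigma(x\typ F\rho)(G\rho)$: substitution invariance of $F$ from the equality clause witnessing $q\forces\Sigma(x\typ F\rho)(G\rho)=\Sigma(x\typ F\sigma)(G\sigma)$, and that of $G[t\fpr]$ by combining the same clause with the congruence premise of the $\lfor_\Sigma$ formation rule for $G\rho$ and transitivity.

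For (2a) and (2b): fix $q\leqslant p$ with $q\forces\rho\typ\Gamma$; from $\Gamma\vDash_p t=u\typ\Sigma(x\typ F)G$ we get $q\forces t\rho=u\rho\typ\Sigma(x\typ F\rho)(G\rho)$, and the $\lfor_\Sigma$ term-equality clause immediately gives $q\forces(t\fpr)\rho=(u\fpr)\rho\typ F\rho$ and $q\forces(t\spr)\rho=(u\spr)\rho\typ(G[t\fpr])\rho$, which are the two desired conclusions at the substitution level (the case $q\forces\rho=\sigma\typ\Gamma$ is identical). The remaining $\vDash_p$ side conditions reduce to (1a) and (1b) — giving $\Gamma\vDash_p t\fpr\typ F$, $\Gamma\vDash_p u\fpr\typ F$, $\Gamma\vDash_p t\spr\typ G[t\fpr]$ — except that for (2b) we must produce $\Gamma\vDash_p u\spr\typ G[t\fpr]$ whereas (1b) only gives $\Gamma\vDash_p u\spr\typ G[u\fpr]$; we bridge this by showing $q\forces(G[t\fpr])\rho=(G[u\fpr])\rho$ at each $q,\rho$ (feed $q\forces(t\fpr)\rho=(u\fpr)\rho\typ F\rho$ into the congruence premise of the $\lfor_\Sigma$ formation rule for $G\rho$) and then transporting $(u\spr)\rho$ along this equality via Lemma~\ref{LEM:equivalentTypesHaveEqualPers}; the corresponding syntactic conversion uses the projection and congruence rules of the type system.

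I expect the only real friction to be this dependent second component in (1b) and (2b): keeping straight the identity $(G[t\fpr])\rho = (G\rho)[(t\rho)\fpr]$, verifying the substitution-invariance part of $\Gamma\vDash_p G[t\fpr]$ (which needs the $\Sigma$-type-equality clause fed with $q\forces(t\fpr)\rho=(t\fpr)\sigma\typ F\rho$ together with the congruence premise for $G\rho$ and a transitivity step), and the type-transport of $u\spr$ in (2b). Everything else is a direct unfolding of the definitions of $\vDash_p$ and of the $\lfor_\Sigma$ clauses.
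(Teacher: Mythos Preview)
Your proposal is correct and follows essentially the same approach as the paper: unfold the $\lfor_\Sigma$ clauses after observing that $(\Sigma(x\typ F)G)\rho$ is already in canonical whnf, so the projection data are read off directly. You are in fact more thorough than the paper, which simply writes ``by the definition'' for each step and does not spell out the auxiliary obligations $\Gamma\vDash_p F$, $\Gamma\vDash_p G[t\fpr]$, or the type transport of $u\spr$ from $G[u\fpr]$ to $G[t\fpr]$ in part (2b) that you correctly flag and handle.
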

\begin{proof}

Let $q\leqslant p$ and $q\forces \rho \typ \Gamma$.
\begin{enumerate}
\item We have $q\forces t\rho \typ (\Sigma(x\typ F) G)\rho$. By the definition we have $q\forces (t\fpr)\rho \typ F\rho$ and $q\forces (t\spr)\rho \typ G[t\fpr]\rho$. Let $q\forces \rho = \sigma \typ \Gamma$. We have that $q\forces t\rho = t\sigma \typ (\Sigma(x\typ F) G)\rho$. By the definition $q\forces (t\fpr)\rho = (t\fpr)\sigma \typ F\rho$ and $q\forces (t\spr)\rho = (t\spr)\sigma \typ G[t\fpr]\rho$. 

\item We have $q\forces t\rho = u\rho \typ (\Sigma(x\typ F) G))\rho$. By the definition $q\forces (t\fpr)\rho=(u\fpr)\rho\typ F\rho$ and $q\forces (t\spr)\rho = (u\spr)\rho \typ G[t\fpr]\rho$.\qedhere
\end{enumerate}
\end{proof}

\begin{lem}
  \ \\[-4mm]
\label{soundpairext}
\begin{prooftree}
\AxiomC{$\Gamma\vDash_p t\typ  \Sigma(x\typ  F) G$}
\AxiomC{$\Gamma\vDash_p u\typ  \Sigma(x\typ  F) G$}
\AxiomC{$\Gamma\vDash_p t\fpr=u\fpr\typ  F$ }
\AxiomC{$\Gamma\vDash_p t\spr=u\spr\typ  G[t\fpr]$ }
\QuaternaryInfC{$\Gamma \vDash_p t=u \typ  \Sigma(x\typ F)G$ }
\end{prooftree}
\end{lem}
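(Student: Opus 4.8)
The plan is to simply unfold the definition of $\Gamma\vDash_p t=u\typ \Sigma(x\typ F)G$ and discharge each of its four components from the four hypotheses. Recall that this validity judgment asks for (i) the syntactic judgment $\Gamma\vdash_p t=u\typ \Sigma(x\typ F)G$, (ii) $\Gamma\vDash_p t\typ \Sigma(x\typ F)G$, (iii) $\Gamma\vDash_p u\typ \Sigma(x\typ F)G$, and (iv) for all $q\leqslant p$, whenever $q\forces \rho\typ \Gamma$ then $q\forces t\rho = u\rho\typ (\Sigma(x\typ F)G)\rho$. Components (ii) and (iii) are literally the first two premises. For (i), each premise $\Gamma\vDash_{p} J$ contains the corresponding $\Gamma\vdash_p J$; extracting these we obtain $\Gamma\vdash_p t\typ \Sigma(x\typ F)G$, $\Gamma\vdash_p u\typ \Sigma(x\typ F)G$, $\Gamma\vdash_p t\fpr=u\fpr\typ F$ and $\Gamma\vdash_p t\spr=u\spr\typ G[t\fpr]$, and an application of the surjective-pairing extensionality rule of the type system yields $\Gamma\vdash_p t=u\typ \Sigma(x\typ F)G$.

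It remains to prove (iv). Fix $q\leqslant p$ and $q\forces\rho\typ\Gamma$. By Lemma~\ref{LEM:evalissub} and monotonicity of the judgment system we get $\vdash_q\rho\typ\Gamma$, hence substituting into the judgment from (i) gives $\vdash_q t\rho=u\rho\typ(\Sigma(x\typ F)G)\rho$. Instantiating the first two premises at $q$ with $\rho$ yields $q\forces t\rho\typ(\Sigma(x\typ F)G)\rho$ and $q\forces u\rho\typ(\Sigma(x\typ F)G)\rho$. Now $(\Sigma(x\typ F)G)\rho$ is $\Sigma(x\typ F\rho)(G\rho)$, which is already a canonical $q$-whnf, so by Lemma~\ref{LEM:uniquenessofwhnf} it is its only $q$-whnf; consequently the derivation of $q\forces(\Sigma(x\typ F)G)\rho$ must be by $\lfor_\Sigma$ (it cannot be by $\lfor_\mathrm{Loc}$, which would require reduction to a term of the form $\mathbb{E}[\f\,\xoverline k]$). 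Therefore, by the clause of the definition of the forcing relation governing equality at a $\lfor_\Sigma$-type, it suffices to check $q\forces(t\rho)\fpr=(u\rho)\fpr\typ F\rho$ and $q\forces(t\rho)\spr=(u\rho)\spr\typ(G\rho)[(t\rho)\fpr]$. Since substitution commutes with the projections we have $(t\rho)\fpr=(t\fpr)\rho$, $(t\rho)\spr=(t\spr)\rho$, and $(G\rho)[(t\rho)\fpr]=(G[t\fpr])\rho$, so the first requirement is exactly premise~3 instantiated at $q$ with $\rho$, and the second is premise~4 instantiated at $q$ with $\rho$. This establishes (iv), and hence $\Gamma\vDash_p t=u\typ\Sigma(x\typ F)G$.

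This lemma is essentially bookkeeping, parallel to Lemma~\ref{soundpairing}; the only point that needs a moment's care is the observation that $q\forces(\Sigma(x\typ F)G)\rho$ is necessarily witnessed by $\lfor_\Sigma$ (so that the $\Sigma$-clause of the equality definition is the applicable one), which follows from uniqueness of the $q$-whnf, together with keeping the metalevel identities $(t\rho)\fpr=(t\fpr)\rho$ and $(G[t\fpr])\rho=(G\rho)[(t\rho)\fpr]$ straight so that premises~3 and~4 match the requirements of the definition on the nose.
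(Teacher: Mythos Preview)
Your proof is correct and follows essentially the same approach as the paper's own proof, which simply instantiates the premises at $q\leqslant p$ with $q\forces\rho\typ\Gamma$ and appeals ``by the definition'' to conclude $q\forces t\rho=u\rho\typ(\Sigma(x\typ F)G)\rho$. You are more explicit than the paper about why the $\lfor_\Sigma$ clause is the applicable one and about the substitution identities for projections, but the argument is the same.
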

\begin{proof}
Let $q\leqslant p$ and $q\forces \rho \typ \Gamma$. We have $q\forces t\rho \typ (\Sigma(x\typ F) G)\rho$ and $q\forces u\rho \typ (\Sigma(x\typ F) G)\rho$. We also have $q\forces (t\fpr)\rho = (u\fpr) \rho \typ F\rho$ and $q\forces (t\spr) \rho = (u\spr) \rho \typ G[t\fpr]\rho$. By the definition $q\forces t\rho = u\rho \typ (\Sigma(x\typ F)G)\rho$.
\end{proof}

\begin{lem}
\label{soundnat}

{\upshape (1)} \AxiomC{$\Gamma\vdash_p$}
\UnaryInfC{$\Gamma\vDash_p \tnat$}
\DisplayProof
\,
{\upshape (2)} \AxiomC{$\Gamma\vdash_p$}
\UnaryInfC{$\Gamma \vDash_p 0 \typ  \tnat$}
\DisplayProof
\,
{\upshape (3)} \AxiomC{$\Gamma \vDash_p n \typ  \tnat$}
\UnaryInfC{$\Gamma \vDash_p \Suc\, n \typ  \tnat$}
\DisplayProof 
\,
{\upshape (4)} \AxiomC{$\Gamma \vDash_p n=m \typ \tnat$}
\UnaryInfC{$\Gamma \vDash_p \Suc\,n = \Suc\, m \typ \tnat$}
\DisplayProof
\end{lem}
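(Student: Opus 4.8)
The plan is to treat all four items uniformly, as soundness of the $\tnat$-rules. In each case I would unfold the relevant clause of the definition of $\vDash_p$, dispatch the syntactic side conditions ($\Gamma\vdash_p\tnat$, $\Gamma\vdash_p\zer\typ\tnat$, $\Gamma\vdash_p\Suc\,n\typ\tnat$, and the congruence rule for $\Suc$) by the corresponding typing rules of the forcing extension, and then check the remaining conditions about the forcing relation $\forces$ directly from its defining clauses together with Monotonicity and Locality (Corollaries~\ref{COR:monotonicity} and~\ref{LEM:localCharacter}). Throughout I would use that $\tnat$, $\zer$ and $\Suc\,n$ commute with substitution, so that a substitution $\rho$ enters only through its action on $n$ (and, in~(4), on $m$): $\tnat\rho\coloneqq\tnat$, $\zer\rho\coloneqq\zer$, $(\Suc\,n)\rho\coloneqq\Suc\,(n\rho)$.

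For~(1) and~(2) there is essentially nothing to do. Fixing $q\leqslant p$ with $q\forces\rho\typ\Gamma$, I note for~(1) that $q\forces\tnat$ holds by $\lfor_\tnat$ since $q\vdash\tnat\rtred\tnat$ trivially, and $q\forces\tnat=\tnat$ by the corresponding reflexivity clause; the case $q\forces\rho=\sigma\typ\Gamma$ is identical. For~(2), $q\forces\zer\typ\tnat$ holds because $q\vdash\zer\rtred\xoverline{0}\typ\tnat$ (recall $\xoverline{0}\coloneqq\zer$), and $q\forces\zer=\zer\typ\tnat$ by the reflexivity clause for $\tnat$-equality.

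For~(3) and~(4) the central tool will be Lemma~\ref{LEM:BaseTypeCanonicity}. Given $q\leqslant p$ and $q\forces\rho\typ\Gamma$, the hypothesis $\Gamma\vDash_p n\typ\tnat$ yields $q\forces n\rho\typ\tnat$, whence Lemma~\ref{LEM:BaseTypeCanonicity} provides a partition $q\covt S$ with $r\vdash n\rho\rtred\xoverline{k}\typ\tnat$ for each $r\in S$ (the numeral depending on $r$). Since reduction proceeds under $\Suc$ and $\Suc\,\xoverline{k}\coloneqq\xoverline{k+1}$, I obtain $r\vdash\Suc\,(n\rho)\rtred\xoverline{k+1}\typ\tnat$, hence $r\forces\Suc\,(n\rho)\typ\tnat$, and Locality gives $q\forces(\Suc\,n)\rho\typ\tnat$. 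For the equality clause of $\vDash_p$, from $q\forces\rho=\sigma\typ\Gamma$ I get $q\forces n\rho=n\sigma\typ\tnat$, and Lemma~\ref{LEM:BaseTypeCanonicity} now yields a partition on which $n\rho$ and $n\sigma$ reduce to the \emph{same} numeral $\xoverline{k}$, so $\Suc\,(n\rho)$ and $\Suc\,(n\sigma)$ reduce to the same $\xoverline{k+1}$ and $r\forces\Suc\,(n\rho)=\Suc\,(n\sigma)\typ\tnat$; Locality concludes. Statement~(4) follows by the same partition argument applied to $q\forces n\rho=m\rho\typ\tnat$, using part~(3) to supply the two side conditions $\Gamma\vDash_p\Suc\,n\typ\tnat$ and $\Gamma\vDash_p\Suc\,m\typ\tnat$ in the definition of $\Gamma\vDash_p\Suc\,n=\Suc\,m\typ\tnat$.

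I expect the only real obstacle to be the one familiar from the rest of this development: because of the generic point $\f$, a closed term like $n\rho$ need not reduce to a numeral at $q$ itself but only after refining $q$ to a suitable partition. Lemma~\ref{LEM:BaseTypeCanonicity} packages precisely this fact, and Locality lets me reassemble the local conclusions, so beyond reducing under $\Suc$ no genuinely new argument is needed.
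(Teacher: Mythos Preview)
Your proposal is correct and follows essentially the same approach as the paper: items~(1) and~(2) are read off the defining clauses of the forcing relation, while~(3) and~(4) are obtained from Lemma~\ref{LEM:BaseTypeCanonicity} (to pass to a partition on which $n\rho$, and in~(4) also $m\rho$, reduce to numerals), the fact that reduction goes under $\Suc$, and Locality. The paper's proof is simply the one-line summary of exactly this argument.
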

\begin{proof}
(1) and (2) follow directly from the definition while (3) and (4) follow from Lemma~\ref{LEM:BaseTypeCanonicity} and locality.
\end{proof}

\begin{lem}
\label{soundnessnatrec}

\AxiomC{$\Gamma, x\typ \tnat \vDash_p F$}
\AxiomC{$\Gamma \vDash_p a_0 \typ  F[\zer]$}
\AxiomC{$\Gamma \vDash_p g\typ  \Pi (x\typ \tnat) (F[x]\rightarrow F[\Suc\,x])$}
\TrinaryInfC{$\Gamma \vDash_p \natrec(\lambda x. F)\,a_0\, g\typ  \Pi (x\typ \tnat) F$}
\DisplayProof
\end{lem}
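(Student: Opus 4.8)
The plan is to unfold the definition of $\Gamma\vDash_p \natrec(\lambda x. F)\,a_0\,g\typ \Pi(x\typ\tnat)F$ and reduce it to a statement about the recursor applied to \emph{numerals}, which is then settled by induction on the numeral. Write $N\coloneqq\natrec(\lambda x. F)\,a_0\,g$ and, choosing $x$ fresh for $\rho$, note $N\rho = \natrec(\lambda x. F\rho)\,a_0\rho\,g\rho$ and $(\Pi(x\typ\tnat)F)\rho = \Pi(x\typ\tnat)(F\rho)$. The typing part $\Gamma\vdash_p N\typ\Pi(x\typ\tnat)F$ is immediate from the corresponding typing rule, and $\Gamma\vDash_p\Pi(x\typ\tnat)F$ follows from Lemma~\ref{soundPiIntro} together with $\Gamma\vDash_p\tnat$ (Lemma~\ref{soundnat}). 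So it remains, for every $q\leqslant p$ with $q\forces\rho\typ\Gamma$, to prove $q\forces N\rho\typ\Pi(x\typ\tnat)(F\rho)$, and for every $q\leqslant p$ with $q\forces\rho=\sigma\typ\Gamma$ to prove $q\forces N\rho = N\sigma\typ\Pi(x\typ\tnat)(F\rho)$. By the computability clause for $\Pi$-types, the first reduces to showing, for all $r\leqslant q$: (i) if $r\forces n\typ\tnat$ then $r\forces N\rho\,n\typ F\rho[n]$; and (ii) if $r\forces n=m\typ\tnat$ then $r\forces N\rho\,n = N\rho\,m\typ F\rho[n]$.

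For (i), since $r\forces n\typ\tnat$, Lemma~\ref{LEM:BaseTypeCanonicity} gives a partition $r\covt S$ with $s\vdash n\rtred \xoverline{k}\typ\tnat$ for some numeral $k=k(s)$ at each $s\in S$; by locality (Corollary~\ref{LEM:localCharacter}) it suffices to treat each $s$. Fix such an $s$. From $s\vdash n\rtred\xoverline{k}\typ\tnat$ we get $s\forces n=\xoverline{k}\typ\tnat$ (Corollary~\ref{COR:terms_reductionimplyequiv}), hence $s\forces(\rho,n)=(\rho,\xoverline{k})\typ(\Gamma,x\typ\tnat)$, and since $\Gamma,x\typ\tnat\vDash_p F$ this yields $s\forces F\rho[n] = F\rho[\xoverline{k}]$. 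Moreover, since reduction in the last argument of $\natrec$ is permitted by the evaluation contexts, $s\vdash N\rho\,n \rtred N\rho\,\xoverline{k}\typ F\rho[n]$. Thus, transporting along $s\forces F\rho[\xoverline{k}] = F\rho[n]$ by Lemma~\ref{LEM:equivalentTypesHaveEqualPers} and then applying Corollary~\ref{COR:terms_reductionimplyequiv}, it is enough to prove $s\forces N\rho\,\xoverline{k}\typ F\rho[\xoverline{k}]$ for every numeral $\xoverline{k}$.

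This I prove by induction on $k$. For $k=0$: $N\rho\,\zer\red a_0\rho$, and $s\forces a_0\rho\typ F\rho[\zer]$ by $\Gamma\vDash_p a_0\typ F[\zer]$ and monotonicity (Corollary~\ref{COR:monotonicity}); Lemma~\ref{LEM:terms_onestepreductionimplyequiv} then gives $s\forces N\rho\,\zer\typ F\rho[\zer]$. For the successor case: $N\rho\,(\Suc\,\xoverline{k})\red g\rho\,\xoverline{k}\,(N\rho\,\xoverline{k})$. From $\Gamma\vDash_p g\typ\Pi(x\typ\tnat)(F[x]\rightarrow F[\Suc\,x])$ and monotonicity, $s\forces g\rho\typ\Pi(x\typ\tnat)(F\rho[x]\rightarrow F\rho[\Suc\,x])$; applying this to $\xoverline{k}$ and then — using the induction hypothesis $s\forces N\rho\,\xoverline{k}\typ F\rho[\xoverline{k}]$ — to $N\rho\,\xoverline{k}$, we obtain $s\forces g\rho\,\xoverline{k}\,(N\rho\,\xoverline{k})\typ F\rho[\Suc\,\xoverline{k}]$, and Lemma~\ref{LEM:terms_onestepreductionimplyequiv} gives $s\forces N\rho\,(\Suc\,\xoverline{k})\typ F\rho[\Suc\,\xoverline{k}]$. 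This completes (i). For (ii), given $r\forces n=m\typ\tnat$, Lemma~\ref{LEM:BaseTypeCanonicity} yields a partition on which $n$ and $m$ reduce to a common numeral $\xoverline{k}$; then $N\rho\,n$ and $N\rho\,m$ both reduce to $N\rho\,\xoverline{k}$, which is computable by (i), so reflexivity together with Corollary~\ref{COR:terms_reductionimplyequiv} and Lemma~\ref{LEM:equivalentTypesHaveEqualPers} give $s\forces N\rho\,n = N\rho\,m\typ F\rho[n]$ on each piece, and locality concludes.

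The remaining case $q\forces\rho=\sigma\typ\Gamma$ is handled the same way: reduce (via the $\Pi$-clause) to showing $r\forces N\rho\,n = N\sigma\,n\typ F\rho[n]$ for $r\forces n\typ\tnat$, partition so that $n\rtred\xoverline{k}$, and prove $s\forces N\rho\,\xoverline{k} = N\sigma\,\xoverline{k}\typ F\rho[\xoverline{k}]$ by induction on $k$ — the base case from $s\forces a_0\rho = a_0\sigma\typ F\rho[\zer]$ and the successor step from $s\forces g\rho = g\sigma\typ\Pi(x\typ\tnat)(F\rho[x]\rightarrow F\rho[\Suc\,x])$ plus the induction hypothesis — again keeping the codomain type fixed to $F\rho[\,\cdot\,]$ via Lemma~\ref{LEM:equivalentTypesHaveEqualPers}. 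The only real subtlety, and the step I expect to be the main obstacle, is the nesting of the argument: a locality step over the partition produced by Lemma~\ref{LEM:BaseTypeCanonicity}, inside which one runs an induction on the numeral value, all while transporting goals along the computable family $F\rho$; once that organisation is fixed, each step is a direct appeal to Lemma~\ref{LEM:terms_onestepreductionimplyequiv}, monotonicity, and the $\Pi$-clause of the computability relation.
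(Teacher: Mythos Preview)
Your proposal is correct and follows essentially the same approach as the paper: partition via Lemma~\ref{LEM:BaseTypeCanonicity} so that the argument reduces to a numeral, then induct on that numeral using Lemma~\ref{LEM:terms_onestepreductionimplyequiv}/Corollary~\ref{COR:terms_reductionimplyequiv} at each step and transporting along $F\rho[\,\cdot\,]$ via Lemma~\ref{LEM:equivalentTypesHaveEqualPers}, finishing with locality. The only presentational difference is that the paper handles your items (i) and (ii) simultaneously---taking $r\forces a=b\typ\tnat$ from the outset so that $a$ and $b$ reduce to the \emph{same} numeral on each piece of the partition---whereas you first establish (i) and then derive (ii) from it by reflexivity; both organisations are fine.
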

\begin{proof}
Let $q\leqslant p$ and $q\forces \rho \typ \Gamma$. We have then that $q\vdash \rho \typ \Gamma$, hence, $\vdash_q (\natrec (\lambda x. F)\rho\,a_0\,g)\rho \typ (\Pi(x\typ \tnat) F)\rho$. Let $r\leqslant q$. Let $r\forces a \typ \tnat$, $r\forces b\typ \tnat$ and $r\forces a=b \typ \tnat$. By Lemma~\ref{LEM:BaseTypeCanonicity} there is a partition $r\covt S$ such that for each $s\in S$, $s\vdash a\rtred \xoverline{n}\typ \tnat$ and $s\vdash b\rtred \xoverline{n}\typ \tnat$. In order to show that $q\forces (\natrec (\lambda x. F)\,a_0\,g)\rho \typ (\Pi(x\typ \tnat) F)\rho$ we need to show that $r\forces (\natrec(\lambda x. F)\,a_0\, g)\rho\,a \typ F\rho[a]$, $r\forces (\natrec(\lambda x. F)\,a_0\, g)\rho\,b \typ F\rho[b]$, and $r\forces \natrec(\lambda x. F)\,a_0\, g)\rho\,a =(\natrec(\lambda x. F)\,a_0\, g)\rho\,b \typ F\rho[a]$. By locality it will be sufficient to show that for each $s\in S$ we have $s\forces (\natrec(\lambda x. F)\,a_0\, g)\rho\,a \typ F\rho[a]$, $s\forces (\natrec(\lambda x. F)\,a_0\, g)\rho\,b \typ F\rho[b]$, and $s\forces (\natrec(\lambda x. F)\,a_0\, g)\rho\,a= (\natrec(\lambda x. F)\,a_0\, g)\rho\,b\rho\typ F\rho[a]$.  We have that 
\begin{align*}
& s\vdash (\natrec(\lambda x. F)\,a_0\, g)\rho\,a  \rtred (\natrec(\lambda x. F)\,a_0\, g)\rho\,\xoverline{n} \typ F\rho[a]\\
& s\vdash (\natrec(\lambda x. F)\,a_0\, g)\rho\,b  \rtred (\natrec(\lambda x. F)\,a_0\, g)\rho\,\xoverline{n} \typ F\rho[b]
\end{align*}
Let $\xoverline{n} \coloneqq \Suc^{k}\,0$. By induction on $k$. If $k=0$ then 
\begin{align*}
& s\vdash (\natrec(\lambda x. F)\,a_0\, g)\rho\,a  \rtred (\natrec(\lambda x. F)\,a_0\, g)\rho\,\zer\red a_0\rho \typ F\rho[a]\\
& s\vdash (\natrec(\lambda x. F)\,a_0\, g)\rho\,b  \rtred (\natrec(\lambda x. F)\,a_0\, g)\rho\,\zer\red a_0\rho \typ F\rho[b]
\end{align*}
By Lemma~\ref{LEM:equivalentTypesHaveEqualPers} we have then that 
\begin{align*}
s\forces (\natrec(\lambda x. F)\,a_0\, g)\rho\,a  = a_0\rho \typ F\rho[a] &\hphantom{space}
s\forces (\natrec(\lambda x. F)\,a_0\, g)\rho\,b  = a_0\rho \typ F\rho[b]
\end{align*}

Since $s\forces a=b\typ \tnat$ we have $s\forces (\rho,a)=(\rho,b)\typ (\Gamma,x\typ \tnat)$ and thus $s\forces F\rho[a]=F\rho[b]$. By Lemma~\ref{LEM:equivalentTypesHaveEqualPers}, symmetry and transitivity 
$s\forces (\natrec(\lambda x. F)\,a_0\, g)\rho\,a= (\natrec(\lambda x. F)\,a_0\, g)\rho\,b\typ F\rho[a]$.

Assume the statement holds for $k\leq \ell$. Let $\xoverline{n} = \Suc\,\xoverline{\ell}$. We have then
\begin{align*}
& s\vdash (\natrec(\lambda x. F)\,a_0\, g)\rho\,a  \rtred  g\rho \,\xoverline{\ell}\, ((\natrec(\lambda x. F)\,a_0\, g)\rho\,\xoverline{\ell}) \typ F\rho[\Suc\,\xoverline{\ell}]\\
& s\vdash (\natrec(\lambda x. F)\,a_0\, g)\rho\,b  \rtred  g\rho \,\xoverline{\ell}\, ((\natrec(\lambda x. F)\,a_0\, g)\rho\,\xoverline{\ell}) \typ F\rho[\Suc\,\xoverline{\ell}]
\end{align*}
By IH $s\forces ((\natrec(\lambda x. F)\,a_0\, g)\rho\,\xoverline{\ell}) \typ F\rho[\xoverline{\ell}]$. But we have $\Gamma \vDash_p g\typ  \Pi (x\typ \tnat) (F[x]\rightarrow F[\Suc\,x])$ and thus
$s\forces  (g\rho)\,\xoverline{\ell}\,((\natrec(\lambda x. F)\,a_0\, g)\rho\,\xoverline{\ell})\typ F\rho[\Suc\,\xoverline{\ell}]$. By Corollary~\ref{COR:terms_reductionimplyequiv}, symmetry and transitivity we get that $s\forces (\natrec(\lambda x. F)\,a_0\, g)\rho\,a \typ F\rho[\Suc\,\xoverline{\ell}]$, 
$s\forces (\natrec(\lambda x. F)\,a_0\, g)\rho\,b \typ F\rho[\Suc\,\xoverline{\ell}]$, and
$s\forces (\natrec(\lambda x. F)\,a_0\, g)\rho\,a = (\natrec(\lambda x. F)\,a_0\, g)\rho\,b \typ F\rho[\Suc\,\xoverline{\ell}]$. But $s\forces a = \Suc\,\xoverline{\ell} \typ \tnat$, thus, $s\forces F\rho[a]=F\rho[\Suc\,\xoverline{\ell}]$. By Lemma~\ref{LEM:equivalentTypesHaveEqualPers} we get then that
$s\forces (\natrec(\lambda x. F)\,a_0\, g)\rho\,a = (\natrec(\lambda x. F)\,a_0\, g)\rho\,b \typ F\rho[a]$.

As indicated above, this is sufficient to show $q\forces (\natrec(\lambda x. F)\,a_0\, g)\rho \typ  \Pi (x\typ \tnat) F\rho$. 

Given $q\forces \rho=\sigma\typ \Gamma$. Similarly we can show $q\forces (\natrec(\lambda x. F)\,a_0\, g)\sigma \typ  \Pi (x\typ \tnat) F\sigma$. 

To show that $q\forces (\natrec(\lambda x. F)\,a_0\, g)\rho=(\natrec(\lambda x. F)\,a_0\, g)\sigma \typ  \Pi (x\typ \tnat) F\rho$ we need to show that whenever $r\forces a\typ F\rho$ for some $r\leqslant q$ we have $r\forces  (\natrec(\lambda x. F)\,a_0\, g)\rho\,a= (\natrec(\lambda x. F)\,a_0\, g)\sigma\,a\typ F\rho[a]$. Let $r\forces a\typ F$ for $r\leqslant q$. By Lemma~\ref{LEM:BaseTypeCanonicity} we have a partition $r\covt S$ where for each $s\in S$ we have $s\vdash a\rtred \xoverline{n}\typ \tnat$. As  above it is  sufficient to show $s\forces  (\natrec(\lambda x. F)\,a_0\, g)\rho\,a= (\natrec(\lambda x. F)\,a_0\, g)\sigma\,a\typ F\rho[a]$ for all $s\in S$.

Let $\xoverline{n} \coloneqq \Suc^{k}\,\zer$. By induction on $k$. If $k=0$ then as above
\begin{align*}
s\forces (\natrec(\lambda x. F)\,a_0\, g)\rho\,a  = a_0\rho \typ F\rho[a] &\hphantom{space}
s\forces (\natrec(\lambda x. F)\,a_0\, g)\sigma\,a  = a_0\sigma \typ F\sigma[b]
\end{align*}

Since $r\forces \rho =\sigma\typ \Gamma$ we have $s\forces (\rho,a)=(\sigma,a)\typ (\Gamma,x\typ \tnat)$. We have then that $s\forces F\rho[a]=F\sigma[a]$. But we also have that $s\forces a_0\rho =a_0\sigma\typ F\rho[a]$. By Lemma~\ref{LEM:equivalentTypesHaveEqualPers}, symmetry and transitivity it then follows that $s\forces (\natrec(\lambda x. F)\,a_0\, g)\rho\,a=(\natrec(\lambda x. F)\,a_0\, g)\sigma\,a\typ F\rho[a]$.

Assume the statement holds for $k\leq \ell$. Let $\xoverline{n} = \Suc\,\xoverline{\ell}$. As before we have that
\begin{align*}
& s\vdash (\natrec(\lambda x. F)\,a_0\, g)\rho\,a  \rtred  g\rho \,\xoverline{\ell}\, ((\natrec(\lambda x. F)\,a_0\, g)\rho\,\xoverline{\ell}) \typ F\rho[\Suc\,\xoverline{\ell}]\\
& s\vdash (\natrec(\lambda x. F)\,a_0\, g)\sigma\,a  \rtred  g\sigma \,\xoverline{\ell}\, ((\natrec(\lambda x. F)\,a_0\, g)\sigma\,\xoverline{\ell}) \typ F\sigma[\Suc\,\xoverline{\ell}]
\end{align*}
By IH 
$s\forces (\natrec(\lambda x. F)\,a_0\, g)\rho\,\xoverline{\ell} =(\natrec(\lambda x. F)\,a_0\, g)\sigma\,\xoverline{\ell} \typ F\rho[\xoverline{\ell}]$.
But $s\forces g\rho = g\sigma \typ (\Pi (x\typ \tnat) (F[x]\rightarrow F[\Suc\,x]))\rho$, thus
 $s \forces g\rho\,\xoverline{\ell}\,((\natrec(\lambda x. F)\,a_0\, g)\rho\,\xoverline{\ell}) = g\sigma\,\xoverline{\ell}\,((\natrec(\lambda x. F)\,a_0\, g)\sigma\,\xoverline{\ell}) \typ F\rho[\Suc\,\xoverline{\ell}]$

But $s\forces F\rho[\Suc\,\xoverline{\ell}]=F\sigma[\Suc\,\xoverline{\ell}]$ and $r_i\forces F\rho[a]=F\rho[\Suc\,\xoverline{\ell}]$. By Lemma~\ref{LEM:equivalentTypesHaveEqualPers}, symmetry and transitivity we have then that 
$s\forces (\natrec(\lambda x. F)\,a_0\, g)\rho\,a  = (\natrec(\lambda x. F)\,a_0\, g)\sigma\,a \typ F\rho[a]$

Which is sufficient to show 
$q\forces (\natrec(\lambda x. F)\,a_0\, g)\rho= (\natrec(\lambda x. F)\,a_0\, g)\sigma \typ \Pi(x\typ \tnat) F\rho$
\end{proof}

\begin{lem}
\label{soundnatrec0}

\AxiomC{$\Gamma, x\typ \tnat \vdash F$}
\AxiomC{$\Gamma \vDash_p a_0 \typ  F[\zer]$}
\AxiomC{$\Gamma \vDash_p g \typ  \Pi (x\typ \tnat) (F[x] \rightarrow F[\Suc\,x])$}
\TrinaryInfC{$\Gamma \vDash_p \natrec (\lambda x. F)\,a_0\, g \,\zer = a_0 \typ  F[\zer]$}
\DisplayProof
\end{lem}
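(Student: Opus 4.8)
The plan is to unfold the definition of $\Gamma\vDash_p t=u\typ A$ and reduce everything to a single $\iota$-reduction of $\natrec$ at $\zer$, then invoke Lemma~\ref{LEM:terms_onestepreductionimplyequiv}. Concretely, we must verify: (i) the bare judgment $\Gamma\vdash_p \natrec(\lambda x. F)\,a_0\,g\,\zer = a_0\typ F[\zer]$; (ii) $\Gamma\vDash_p \natrec(\lambda x. F)\,a_0\,g\,\zer\typ F[\zer]$ (whose underlying typing judgment follows from (i)); (iii) $\Gamma\vDash_p a_0\typ F[\zer]$, which is a hypothesis (and also supplies $\Gamma\vDash_p F[\zer]$); and (iv) that $q\forces (\natrec(\lambda x. F)\,a_0\,g\,\zer)\rho = a_0\rho\typ F[\zer]\rho$ for every $q\leqslant p$ with $q\forces\rho\typ\Gamma$. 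For (i), from $\Gamma\vDash_p a_0\typ F[\zer]$ and $\Gamma\vDash_p g\typ\Pi(x\typ\tnat)(F[x]\rightarrow F[\Suc\,x])$ we extract the underlying judgments $\Gamma\vdash_p a_0\typ F[\zer]$ and $\Gamma\vdash_p g\typ\Pi(x\typ\tnat)(F[x]\rightarrow F[\Suc\,x])$, while $\Gamma,x\typ\tnat\vdash F$ gives $\Gamma,x\typ\tnat\vdash_p F$ by Lemma~\ref{LEM:extension} and monotonicity; the computation rule for $\natrec$ at $\zer$ then yields (i).

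For (iv), fix $q\leqslant p$ and $\rho$ with $q\forces\rho\typ\Gamma$. Since $\zer$ is closed, $(\natrec(\lambda x. F)\,a_0\,g\,\zer)\rho$ is literally $\natrec(\lambda x. F\rho)\,a_0\rho\,g\rho\,\zer$, and the reduction rule $\natrec(\lambda x. C)\,c_z\,g\,\zer\red c_z$ — which is an unconditioned $\red$, hence available at every condition — gives $\natrec(\lambda x. F\rho)\,a_0\rho\,g\rho\,\zer\red a_0\rho$. Instantiating the judgment from (i) along $\rho$ via Lemma~\ref{LEM:evalissub} (and monotonicity), we obtain $q\vdash (\natrec(\lambda x. F)\,a_0\,g\,\zer)\rho \red a_0\rho \typ F[\zer]\rho$. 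From $\Gamma\vDash_p a_0\typ F[\zer]$ we have $q\forces a_0\rho\typ F[\zer]\rho$ (hence also $q\forces F[\zer]\rho$), so Lemma~\ref{LEM:terms_onestepreductionimplyequiv} gives both $q\forces (\natrec(\lambda x. F)\,a_0\,g\,\zer)\rho\typ F[\zer]\rho$ and $q\forces (\natrec(\lambda x. F)\,a_0\,g\,\zer)\rho = a_0\rho\typ F[\zer]\rho$. This proves (iv) and the object-computability half of (ii).

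It remains to check the congruence half of (ii): given $q\forces\rho=\sigma\typ\Gamma$, that $q\forces (\natrec(\lambda x. F)\,a_0\,g\,\zer)\rho = (\natrec(\lambda x. F)\,a_0\,g\,\zer)\sigma\typ F[\zer]\rho$. The same reductions give $q\vdash(\natrec(\lambda x. F)\,a_0\,g\,\zer)\rho\red a_0\rho\typ F[\zer]\rho$ and $q\vdash(\natrec(\lambda x. F)\,a_0\,g\,\zer)\sigma\red a_0\sigma\typ F[\zer]\sigma$, while $\Gamma\vDash_p a_0\typ F[\zer]$ furnishes $q\forces a_0\rho = a_0\sigma\typ F[\zer]\rho$ and $\Gamma\vDash_p F[\zer]$ furnishes $q\forces F[\zer]\rho = F[\zer]\sigma$. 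Chaining: $q\forces (\natrec\ldots\zer)\rho = a_0\rho\typ F[\zer]\rho$ by Lemma~\ref{LEM:terms_onestepreductionimplyequiv}, then $q\forces a_0\rho = a_0\sigma\typ F[\zer]\rho$, then $q\forces a_0\sigma = (\natrec\ldots\zer)\sigma\typ F[\zer]\sigma$ by Lemma~\ref{LEM:terms_onestepreductionimplyequiv} and symmetry, transported to type $F[\zer]\rho$ by Lemma~\ref{LEM:equivalentTypesHaveEqualPers}; transitivity (Theorem~\ref{properties}) then closes the chain.

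This lemma is essentially routine, and there is no genuine obstacle. The only points demanding care are that the $\natrec$-at-$\zer$ reduction is unconditioned (so it is usable at every $q$) and that one must track the type index when moving equalities between $\rho$ and $\sigma$, which is precisely where Lemma~\ref{LEM:equivalentTypesHaveEqualPers} enters; the whole argument is the base-case-only specialization of the reasoning already carried out in Lemma~\ref{soundnessnatrec}.
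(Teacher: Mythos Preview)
Your proof is correct and follows the same approach as the paper: use the unconditioned $\iota$-reduction $\natrec(\lambda x. F\rho)\,a_0\rho\,g\rho\,\zer\red a_0\rho$ together with the substituted conversion judgment to obtain a typed one-step reduction, then apply Lemma~\ref{LEM:terms_onestepreductionimplyequiv}. The paper's own proof is terser---it only spells out clause~(iv) and leaves the well-typedness and congruence obligations implicit---so your version is more complete but not different in substance.
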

\begin{proof}
Let $q\leqslant p$ and $q\forces \rho \typ \Gamma$. We have $\vdash_q \rho \typ \Gamma$ and thus we get that $\vdash_q (\natrec(\lambda x. F)\,a_0\,g\,\zer)\rho = a_0\rho \typ F\rho[\zer]$. But $(\natrec(\lambda x. F)\,a_0\,g\,\zer)\rho \red a_0\rho$. Thus $q\vdash (\natrec(\lambda x. F)\,a_0\,g\,\zer)\rho \red a_0\rho\typ F\rho[\zer]$. But $q\forces a_0\rho \typ F\rho[\zer]$. By Lemma~\ref{LEM:terms_onestepreductionimplyequiv} we have $q\forces (\natrec(\lambda x. F)\,a_0\,g\,\zer)\rho =a_0\rho\typ F\rho[\zer]$.\smallskip
\end{proof}

\begin{lem}
\label{soundnatrecsuc}
\AxiomC{$\Gamma, x\typ \tnat \vDash_p F$}
\AxiomC{$\Gamma \vDash_p a_0 \typ  F[\zer]$}
\AxiomC{$\Gamma \vDash_p n \typ  \tnat$}
\AxiomC{$\Gamma \vDash_p g \typ  \Pi (x\typ \tnat) (F[x] \rightarrow F[\Suc \,x])$}
\QuaternaryInfC{$\Gamma \vDash_p \natrec (\lambda x. F)\,a_0\, g \,(\Suc\, n) = g\,n \,(\natrec (\lambda x. F)\,a_0\,g\,n) \typ  F[\Suc\,n]$}
\DisplayProof

\end{lem}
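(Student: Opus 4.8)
The plan is to proceed as in Lemmas~\ref{soundbeta} and~\ref{soundnatrec0}. The three ``typing'' sub-parts of the conclusion $\Gamma\vDash_p\,\natrec(\lambda x. F)\,a_0\, g \,(\Suc\, n) = g\,n \,(\natrec (\lambda x. F)\,a_0\,g\,n) \typ  F[\Suc\,n]$ --- namely $\Gamma\vdash_p$ of the equation, $\Gamma\vDash_p\natrec(\lambda x. F)\,a_0\,g\,(\Suc\,n)\typ F[\Suc\,n]$ and $\Gamma\vDash_p g\,n\,(\natrec(\lambda x. F)\,a_0\,g\,n)\typ F[\Suc\,n]$ --- follow from the typing rules together with Lemmas~\ref{soundnessnatrec}, \ref{soundnat} and \ref{soundapptyp}. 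So it remains, for each $q\leqslant p$ with $q\forces\rho\typ\Gamma$, to show
\[
 q\forces (\natrec(\lambda x. F)\,a_0\,g\,(\Suc\,n))\rho \;=\; (g\,n\,(\natrec(\lambda x. F)\,a_0\,g\,n))\rho \;\typ\; F[\Suc\,n]\rho .
\]
From the premise $\Gamma\vDash_p n\typ\tnat$ we get $q\forces n\rho\typ\tnat$, so by Lemma~\ref{LEM:BaseTypeCanonicity} there is a partition $q\covt S$ with $s\vdash n\rho\rtred\xoverline{k}\typ\tnat$ for each $s\in S$ (for some $k=k(s)$), hence $s\forces n\rho=\xoverline{k}\typ\tnat$ by Corollary~\ref{COR:terms_reductionimplyequiv}. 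By locality (Corollary~\ref{LEM:localCharacter}) it suffices to establish the displayed equality at each such $s$.

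Fix such an $s$ and write $N\coloneqq(\natrec(\lambda x. F)\,a_0\,g)\rho$. By Lemma~\ref{soundnessnatrec} (its hypotheses are among those of the present lemma) and monotonicity, $s\forces N\typ(\Pi(x\typ\tnat)F)\rho$, and from the premise $\Gamma\vDash_p g\typ\Pi(x\typ\tnat)(F[x]\rightarrow F[\Suc\,x])$ we get $s\forces g\rho\typ(\Pi(x\typ\tnat)(F[x]\rightarrow F[\Suc\,x]))\rho$. Using $s\forces n\rho=\xoverline{k}\typ\tnat$ (and $s\forces\Suc\,(n\rho)=\Suc\,\xoverline{k}\typ\tnat$, since $\Suc\,(n\rho)\rtred_s\Suc\,\xoverline{k}$) in the $\Pi$-clauses of the computability relation yields
\begin{align*}
& s\forces N\,(\Suc\,(n\rho)) = N\,(\Suc\,\xoverline{k}) \typ F\rho[\Suc\,(n\rho)],\\
& s\forces g\rho\,(n\rho) = g\rho\,\xoverline{k} \typ F\rho[n\rho]\rightarrow F\rho[\Suc\,(n\rho)],\qquad
 s\forces N\,(n\rho) = N\,\xoverline{k}\typ F\rho[n\rho].
\end{align*}
Now $N\,(\Suc\,\xoverline{k})\red_s (g\rho\,\xoverline{k})\,(N\,\xoverline{k})$ by the $\iota$-rule for $\natrec$, a single well-typed step; since $g\rho\,\xoverline{k}$ is a computable function and $N\,\xoverline{k}$ a computable argument, $(g\rho\,\xoverline{k})\,(N\,\xoverline{k})$ is computable of type $F\rho[\Suc\,\xoverline{k}]$, so Corollary~\ref{COR:terms_reductionimplyequiv} gives $s\forces N\,(\Suc\,\xoverline{k}) = (g\rho\,\xoverline{k})\,(N\,\xoverline{k})\typ F\rho[\Suc\,\xoverline{k}]$. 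Applying the $\Pi$-congruences to the other two equalities identifies $g\rho\,(n\rho)\,(N\,(n\rho))$ with the same term $(g\rho\,\xoverline{k})\,(N\,\xoverline{k})$. Finally, $s\forces n\rho=\xoverline{k}\typ\tnat$ gives $s\forces F\rho[\Suc\,(n\rho)]=F\rho[\Suc\,\xoverline{k}]$ (and likewise for the other types appearing), so Lemma~\ref{LEM:equivalentTypesHaveEqualPers} transports all of these to the type $F[\Suc\,n]\rho = F\rho[\Suc\,(n\rho)]$, and symmetry and transitivity (Theorem~\ref{properties}) give $s\forces (\natrec(\lambda x. F)\,a_0\,g\,(\Suc\,n))\rho = (g\,n\,(\natrec(\lambda x. F)\,a_0\,g\,n))\rho\typ F[\Suc\,n]\rho$.

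The main obstacle is that, unlike in Lemma~\ref{soundnatrec0}, the two sides have no common reduct: in $(g\,n\,(\natrec(\lambda x. F)\,a_0\,g\,n))\rho = g\rho\,(n\rho)\,(N\,(n\rho))$ the subterm $n\rho$ occurs in argument positions carrying no evaluation context, so it cannot be run by reduction and one must go through the computability relation, exploiting that $g\rho$ and $N$ are computable $\Pi$-elements. The remaining fiddly part is the bookkeeping of the various ``native'' types --- $F\rho[n\rho]$, $F\rho[\xoverline{k}]$, $(F[x]\rightarrow F[\Suc\,x])\rho[n\rho]$, $F\rho[\Suc\,\xoverline{k}]$ --- at which the computable (in)equalities are produced, moving between them via $\Gamma,x\typ\tnat\vDash_p F$ and Lemma~\ref{LEM:equivalentTypesHaveEqualPers}; everything else is routine once $n\rho$ is, on each block, a numeral.
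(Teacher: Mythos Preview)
Your proof is correct and in fact more careful than the paper's own argument. The paper proceeds by claiming that \emph{both} sides reduce to the common term $g\rho\,\xoverline{m}\,((\natrec(\lambda x. F)\,a_0\,g)\rho\,\xoverline{m})$ and then invokes Corollary~\ref{COR:terms_reductionimplyequiv} together with symmetry and transitivity. For the left-hand side this works, using the evaluation contexts $\Suc\,\mathbb{E}$ and $\natrec(\lambda x. C)\,c_z\,g\,\mathbb{E}$. For the right-hand side, however, the paper asserts $s\vdash (g\,n\,(\natrec(\lambda x. F)\,a_0\,g\,n))\rho \rtred g\rho\,\xoverline{m}\,((\natrec(\lambda x. F)\,a_0\,g)\rho\,\xoverline{m})$, which---as you correctly observe---is not licensed by the reduction relation: the first occurrence of $n\rho$ in $g\rho\,(n\rho)\,(N\,(n\rho))$ sits in an argument position of an ordinary application, and there is no $t\,\mathbb{E}$ evaluation context. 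Your detour through the $\Pi$-clauses of the computability relation (using $s\forces g\rho$ and $s\forces N$ as computable functions together with $s\forces n\rho=\xoverline{k}\typ\tnat$) is exactly what is needed to bridge $g\rho\,(n\rho)\,(N\,(n\rho))$ and $g\rho\,\xoverline{k}\,(N\,\xoverline{k})$; the paper's shortcut via a common reduct is, strictly speaking, only valid for the left-hand side.
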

\begin{proof}
Let $q\leqslant p$ and $q\forces \rho \typ \Gamma$. We have $q\forces n\rho \typ \tnat$. By Lemma~\ref{LEM:BaseTypeCanonicity} there is a partition $q\covt S$ such that for each $s\in S$ there is $m\in \nats$ and $s\vdash n\rho \rtred \xoverline{m}\typ \tnat$. Thus  $s\vdash \Suc\,n\rho \rtred \Suc\,\xoverline{m}\typ \tnat$. We have then that 
\begin{align*}
s\vdash &(\natrec (\lambda x. F)\,a_0\, g \,(\Suc\, n))\rho  \rtred  (\natrec (\lambda x. F)\,a_0\, g)\rho\,(\Suc\, \xoverline{m})\\
&\rtred g\rho\,\xoverline{m}\,((\natrec\,(\lambda x. F)\,a_0\,g)\rho\,\xoverline{m})\typ F\rho[\Suc\,\xoverline{m}]
\end{align*}

But $s \vdash (g\,n \,(\natrec (\lambda x. F)\,a_0\,g\,n))\rho \rtred g\rho\,\xoverline{m}\,((\natrec\,(\lambda x. F)\,a_0\,g)\rho\,\xoverline{m})\typ F\rho[\Suc\,\xoverline{m}]$. By Corollary~\ref{COR:terms_reductionimplyequiv}, symmetry and transitivity 
$s \forces (\natrec (\lambda x. F)\,a_0\, g \,(\Suc\, n))\rho =(g\,n \,(\natrec (\lambda x. F)\,a_0\,g\,n))\rho \typ F\rho[\Suc\,\xoverline{m}]$.

Since $s\forces n\rho =\xoverline{m} \typ \tnat$ we have that $s\forces F\rho[\Suc\,\xoverline{m}] = F\rho[\Suc\, n\rho]$.
By Lemma~\ref{LEM:equivalentTypesHaveEqualPers} we thus have that
$ s \forces (\natrec (\lambda x. F)\,a_0\, g \,(\Suc\, n))\rho =(g\,n \,(\natrec (\lambda x. F)\,a_0\,g\,n))\rho \typ F[\Suc\,n]\rho$.

By locality  
$q \forces (\natrec (\lambda x. F)\,a_0\, g \,(\Suc\, n))\rho =(g\,n \,(\natrec (\lambda x. F)\,a_0\,g\,n))\rho \typ F[\Suc\,n]\rho$   
\end{proof}

\begin{lem}
\label{soundnatreccong}
\AxiomC{$\Gamma, x\typ \tnat\vDash_p F = G$}
\AxiomC{$\Gamma \vDash_p a_0=b_0 \typ  F[\zer]$}
\AxiomC{$\Gamma \vDash_p g=h \typ  \Pi (x\typ \tnat) (F[x] \rightarrow F[\Suc \,x])$}
\TrinaryInfC{$\Gamma \vDash_p \natrec (\lambda x. F)\,a_0\, g = \natrec (\lambda x. G)\,b_0\, h\typ  \Pi(x\typ \tnat) F$}
\DisplayProof\medskip
\end{lem}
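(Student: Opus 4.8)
The plan is to follow the proof of Lemma~\ref{soundnessnatrec} almost verbatim, threading the type-equality hypothesis $F=G$ through every step by means of Lemma~\ref{LEM:equivalentTypesHaveEqualPers}. From the three hypotheses I first recover the premises of Lemma~\ref{soundnessnatrec} for the $F$-side, namely $\Gamma, x\typ\tnat \vDash_p F$, $\Gamma\vDash_p a_0\typ F[\zer]$ and $\Gamma\vDash_p g\typ \Pi(x\typ\tnat)(F[x]\rightarrow F[\Suc\,x])$, and likewise for the $G$-side — here $\Gamma\vDash_p b_0\typ G[\zer]$ and $\Gamma\vDash_p h\typ\Pi(x\typ\tnat)(G[x]\rightarrow G[\Suc\,x])$ are obtained from $\Gamma\vDash_p a_0=b_0\typ F[\zer]$ and $\Gamma\vDash_p g=h\typ\Pi(x\typ\tnat)(F[x]\rightarrow F[\Suc\,x])$ by coercing along $F=G$ with Lemma~\ref{LEM:equivalentTypesHaveEqualPers}. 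Lemma~\ref{soundnessnatrec} then yields $\Gamma\vDash_p\natrec(\lambda x. F)\,a_0\,g\typ\Pi(x\typ\tnat)F$ and $\Gamma\vDash_p\natrec(\lambda x. G)\,b_0\,h\typ\Pi(x\typ\tnat)G$, and from the premise $F=G$, symmetry, and Lemma~\ref{soundpieq} one has $\Gamma\vDash_p\Pi(x\typ\tnat)G=\Pi(x\typ\tnat)F$, so Lemma~\ref{LEM:equivalentTypesHaveEqualPers} (applied pointwise) upgrades the latter to $\Gamma\vDash_p\natrec(\lambda x. G)\,b_0\,h\typ\Pi(x\typ\tnat)F$. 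It then remains to fix $q\leqslant p$ and $q\forces\rho\typ\Gamma$ and prove $q\forces(\natrec(\lambda x. F)\,a_0\,g)\rho=(\natrec(\lambda x. G)\,b_0\,h)\rho\typ(\Pi(x\typ\tnat)F)\rho$.

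By the $\mathrm{F}_\Pi$-clause for term equality this reduces to fixing $r\leqslant q$ with $r\forces a\typ\tnat$ and showing $r\forces(\natrec(\lambda x. F)\,a_0\,g)\rho\,a=(\natrec(\lambda x. G)\,b_0\,h)\rho\,a\typ F\rho[a]$. Exactly as in Lemma~\ref{soundnessnatrec}, Lemma~\ref{LEM:BaseTypeCanonicity} gives a partition $r\covt S$ with $s\vdash a\rtred\xoverline{n}\typ\tnat$ for each $s\in S$; by locality (Corollary~\ref{LEM:localCharacter}) it suffices to argue at a fixed such $s$, and writing $\xoverline{n}\coloneqq\Suc^{k}\,\zer$ I induct on $k$. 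For $k=0$ the two $\natrec$-terms applied to $\zer$ one-step reduce to $a_0\rho$ and $b_0\rho$; from $\Gamma\vDash_p a_0=b_0\typ F[\zer]$ we have $s\forces a_0\rho=b_0\rho\typ F\rho[\zer]$, and using $s\forces F\rho[\zer]=G\rho[\zer]$ (from $\Gamma, x\typ\tnat\vDash_p F=G$ applied to $(\rho,\zer)$), Corollary~\ref{COR:terms_reductionimplyequiv}, Lemma~\ref{LEM:equivalentTypesHaveEqualPers}, symmetry and transitivity, we obtain the desired equality at type $F\rho[\zer]$, hence at $F\rho[a]$ since $s\forces a=\zer\typ\tnat$. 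For $k=\ell+1$ the two terms applied to $\Suc\,\xoverline{\ell}$ reduce to $g\rho\,\xoverline{\ell}\,((\natrec(\lambda x. F)\,a_0\,g)\rho\,\xoverline{\ell})$ and $h\rho\,\xoverline{\ell}\,((\natrec(\lambda x. G)\,b_0\,h)\rho\,\xoverline{\ell})$; the induction hypothesis gives $s\forces(\natrec(\lambda x. F)\,a_0\,g)\rho\,\xoverline{\ell}=(\natrec(\lambda x. G)\,b_0\,h)\rho\,\xoverline{\ell}\typ F\rho[\xoverline{\ell}]$, and $\Gamma\vDash_p g=h\typ\Pi(x\typ\tnat)(F[x]\rightarrow F[\Suc\,x])$ gives $s\forces g\rho\,\xoverline{\ell}=h\rho\,\xoverline{\ell}\typ F\rho[\xoverline{\ell}]\rightarrow F\rho[\Suc\,\xoverline{\ell}]$. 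An application-congruence argument (combining the ``$t$ respects equality of arguments'' and ``$t=u$'' parts of the $\mathrm{F}_\Pi$ definition via transitivity) then yields $s\forces g\rho\,\xoverline{\ell}\,((\natrec(\lambda x. F)\,a_0\,g)\rho\,\xoverline{\ell})=h\rho\,\xoverline{\ell}\,((\natrec(\lambda x. G)\,b_0\,h)\rho\,\xoverline{\ell})\typ F\rho[\Suc\,\xoverline{\ell}]$; Corollary~\ref{COR:terms_reductionimplyequiv} with symmetry and transitivity transports this across the two reductions, and $s\forces a=\Suc\,\xoverline{\ell}\typ\tnat$ together with Lemma~\ref{LEM:equivalentTypesHaveEqualPers} rewrites the type from $F\rho[\Suc\,\xoverline{\ell}]$ to $F\rho[a]$.

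The whole argument is routine granted the template of Lemma~\ref{soundnessnatrec}; the only thing demanding attention is the type-bookkeeping, since each reduct of $\natrec(\lambda x. G)\,b_0\,h$ is naturally typed over a $G$-instance while the target equality is stated over an $F$-instance, so Lemma~\ref{LEM:equivalentTypesHaveEqualPers}, fed by $\Gamma, x\typ\tnat\vDash_p F=G$, must be invoked at essentially every step. The one genuinely non-trivial point is the successor case, where one must make the application-congruence explicit: that $g\rho\,\xoverline{\ell}=h\rho\,\xoverline{\ell}$ applied to genuinely \emph{equal} arguments (not merely arguments of equal type) produces equal results. Note finally that the $q\forces\rho=\sigma\typ\Gamma$ obligations never arise here, since the definition of $\Gamma\vDash_p t=u\typ A$ only quantifies over substitutions $q\forces\rho\typ\Gamma$, and the $\rho=\sigma$ clauses are already discharged in establishing $\Gamma\vDash_p\natrec(\lambda x. F)\,a_0\,g\typ\Pi(x\typ\tnat)F$ and $\Gamma\vDash_p\natrec(\lambda x. G)\,b_0\,h\typ\Pi(x\typ\tnat)F$ through Lemma~\ref{soundnessnatrec}.
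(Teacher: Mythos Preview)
Your proposal is correct and takes essentially the same approach as the paper: the paper's entire proof of this lemma is the single sentence ``The proof follows by an argument similar to that used to prove Lemma~\ref{soundnessnatrec},'' and you have carried out precisely that similar argument in detail, including the necessary type-coercion bookkeeping via Lemma~\ref{LEM:equivalentTypesHaveEqualPers} that distinguishes this case from Lemma~\ref{soundnessnatrec}. Your closing observation that the $q\forces\rho=\sigma\typ\Gamma$ obligations are absorbed into the two invocations of Lemma~\ref{soundnessnatrec} (since the definition of $\Gamma\vDash_p t=u\typ A$ only quantifies over $q\forces\rho\typ\Gamma$) is also accurate and a helpful clarification.
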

\begin{proof}
The proof follows by an argument similar to that used to prove Lemma~\ref{soundnessnatrec}.
\end{proof}

For the congruence rules, soundness follows from Theorem ~\ref{properties}. Soundness for rules of $\tempt$, $\tunit$ and $\tbool$ follow similarly to those of $\tnat$. Soundness for the rules of $\tuniv$ follows similarly to soundness of typing rules. We have then the following corollary:
\begin{cor}[Soundness]
\label{LEM:soundness}
If $\Gamma \vdash_p J$ then $\Gamma\vDash_p J$
\end{cor}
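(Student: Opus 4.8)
The plan is to prove the corollary by induction on the derivation of $\Gamma\vdash_p J$, with one case per inference rule of the forcing extension, showing in each case that the rule is \emph{sound} for $\vDash$: from the $\vDash$-versions of its premises one obtains the $\vDash$-version of its conclusion. Since $\Gamma\vDash_p J$ contains $\Gamma\vdash_p J$ as a conjunct, that syntactic part is supplied by the hypothesis itself (and by the induction hypotheses on the premises); the content of each case is the semantic clause ``for all $q\leqslant p$, $q\forces\rho\typ\Gamma\Rightarrow q\forces J\rho$'' (and, where applicable, the clause for $q\forces\rho=\sigma\typ\Gamma$), together with the built-in well-formedness sub-judgments (e.g.\ $\Gamma\vDash_p A$ inside $\Gamma\vDash_p t\typ A$), which must be threaded through in the right order because the $\vDash$-predicates are defined by mutual reference.

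Almost every case is already discharged by a lemma above. For the variable/assumption rule, $q\forces x_i\rho\typ A_i\rho$ is literally the $i$-th component of $q\forces\rho\typ\Gamma$. The formation, introduction, computation and extensionality rules for $\Pi$ are Lemmas~\ref{soundPiIntro}, \ref{soundpieq}, \ref{soundLambdaIntro}, \ref{soundapptyp}, \ref{soundappeq}, \ref{soundbeta} and~\ref{soundfunext}; those for $\Sigma$ are Lemmas~\ref{soundPiIntro}, \ref{soundpieq}, \ref{soundpairing}, \ref{soundprojtypeq}, \ref{soundproj} and~\ref{soundpairext}; those for $\tnat$ are Lemmas~\ref{soundnat}, \ref{soundnessnatrec}, \ref{soundnatrec0}, \ref{soundnatrecsuc} and~\ref{soundnatreccong}, and the rules for $\tempt$, $\tunit$, $\tbool$ and $\tuniv$ are handled by the same arguments specialised to those types (for $\tuniv$, the clauses for the type constructors mirror the corresponding typing-rule proofs, using Lemma~\ref{LEM:BaseTypeCanonicity} for the base types). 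The congruence rules and the four admissible rules follow directly from Theorem~\ref{properties} applied pointwise along $\rho$ — for the $G[a]=G[b]$ and $t[a]=t[b]$ rules one uses that $q\forces a\rho=b\rho\typ F\rho$ yields $q\forces(\rho,a\rho)=(\rho,b\rho)\typ(\Gamma,x\typ F)$, and $G[a]\rho=G\rho[a\rho]$, etc. The locality rule \textsc{loc} is exactly the locality lemma for $\vDash$ proved above, which is itself a consequence of Corollary~\ref{LEM:localCharacter}.

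It remains to treat the three rules specific to the extension. For \textsc{$\f$-I} we must check $\Gamma\vDash_p\f\typ\tnat\rightarrow\tbool$: the conjunct $\Gamma\vDash_p\tnat\rightarrow\tbool$ comes from $\Pi$-formation, and for $q\leqslant p$ with $q\forces\rho\typ\Gamma$ we need $q\forces\f\typ\tnat\rightarrow\tbool$ (note $\f$ and its type are closed, so $\f\rho=\f$), which is Corollary~\ref{COR:genericwelltyped}. For \textsc{$\f$-eval}, with $n\in\dom p$, the required equality is $q\forces\f\,\xoverline{n}=p(n)\typ\tbool$ for each $q\leqslant p$: since $q\leqslant p$ we have $n\in\dom q$ and $q(n)=p(n)$, so $q\vdash\f\,\xoverline{n}\red p(n)\typ\tbool$, and as $p(n)\in\{\zer,\one\}$ we have $q\forces p(n)\typ\tbool$ and $q\forces\tbool$, whence Lemma~\ref{LEM:terms_onestepreductionimplyequiv} gives $q\forces\f\,\xoverline{n}=p(n)\typ\tbool$; the well-formedness conjuncts $\Gamma\vDash_p\f\,\xoverline{n}\typ\tbool$ and $\Gamma\vDash_p p(n)\typ\tbool$ and the $\rho=\sigma$ clauses follow the same way (using Corollary~\ref{COR:terms_reductionimplyequiv} and reflexivity), everything being closed. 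For \textsc{$\w$-term}, $\Gamma\vDash_p\w\typ\neg\neg(\Sigma(x\typ\tnat)\,\iszero\,(\f\,x))$ reduces, again because $\w$ and its type are closed, to $\forces\w\typ\neg\neg(\Sigma(x\typ\tnat)\,\iszero\,(\f\,x))$, which is Lemma~\ref{LEM:markovpremisewitnesswelltyped}. This exhausts the rules and completes the induction.

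I expect no single step to be the main obstacle: the depth of the argument lies entirely in the lemmas already established. The only point requiring care is the bookkeeping of the mutual induction — ensuring that, whenever a rule's conclusion is a typing or an equality judgment, the ambient well-formedness conjuncts of the corresponding $\vDash$-predicate are produced first (from the $\vDash$-premises together with Theorem~\ref{properties}), and that each substitution which changes a type only up to computable equality is transported through via Lemma~\ref{LEM:equivalentTypesHaveEqualPers} at precisely the right places.
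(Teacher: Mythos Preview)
Your proposal is correct and follows exactly the paper's approach: the corollary is obtained by induction on the derivation of $\Gamma\vdash_p J$, with each rule's soundness discharged by the corresponding lemma already established (Lemmas~\ref{soundPiIntro}--\ref{soundnatreccong} for the structural rules, Theorem~\ref{properties} for the congruence rules, Corollary~\ref{COR:genericwelltyped} and Lemma~\ref{LEM:markovpremisewitnesswelltyped} for the extension-specific rules, and Corollary~\ref{LEM:localCharacter} for \textsc{loc}). Your write-up is in fact more explicit than the paper's, which simply gestures at the preceding lemmas and notes that the remaining cases ($\tempt$, $\tunit$, $\tbool$, $\tuniv$, congruence) follow analogously.
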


\begin{thm}[Fundamental Theorem]
\label{THM:fundamentalThm}
If $\vdash_p J$ then $p\forces J$.
\end{thm}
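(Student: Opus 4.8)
The plan is to obtain the Fundamental Theorem as an immediate specialisation of the Soundness corollary (Corollary~\ref{LEM:soundness}) to the empty context. Recall that $\vdash_p J$ is just the judgment $\Gamma \vdash_p J$ in the case where $\Gamma$ is the empty context (i.e.\ $n = 0$ in the notation $\Gamma \coloneqq x_1\typ A_1,\dots,x_n\typ A_n$). So first I would invoke Corollary~\ref{LEM:soundness} with $\Gamma$ empty to get that $\vdash_p J$ implies the corresponding semantic judgment $\vDash_p J$ for the empty context.

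Next I would unfold the definition of $\vDash_p J$ for the empty context. In each of the four forms of $J$ (namely $A$, $t\typ A$, $A=B$, or $t=u\typ A$) this asserts, in addition to the typing judgment itself, that for every $q\leqslant p$ and every substitution $\rho$ with $q\forces \rho\typ\Gamma$ one has $q\forces A\rho$ (respectively $q\forces t\rho\typ A\rho$, and so on). Now the empty context admits exactly one substitution, the empty list, and the requirement $q\forces \rho\typ\Gamma$ is satisfied vacuously at every condition $q$ since there are no components $a_i$ to verify; moreover substituting the empty substitution leaves a closed expression unchanged, so $A\rho \coloneqq A$, $t\rho \coloneqq t$, etc. Instantiating the universally quantified statement at $q\coloneqq p$ and $\rho$ the empty substitution therefore yields exactly $p\forces J$, and this argument is uniform across the four shapes of $J$ (the equality cases additionally using the $\rho=\sigma$ clauses of the definition of $\vDash_p$, which specialise the same way).

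The only remaining point is that this is genuinely all that is needed: the entire weight of the argument is carried by Corollary~\ref{LEM:soundness}, which is established by checking, rule by rule, that each derivation rule of the forcing extension is validated under the forcing semantics (the lemmas \ref{soundPiIntro}--\ref{soundnatreccong} together with Theorem~\ref{properties} for the congruence rules). The delicate cases there are the recursion rules, in particular $\natrec$ (Lemma~\ref{soundnessnatrec}), which needs an auxiliary induction on the numeral value of the argument together with base-type canonicity (Lemma~\ref{LEM:BaseTypeCanonicity}) and locality (Corollary~\ref{LEM:localCharacter}) to glue the behaviour across a partition, and the locality rule itself, which is discharged directly by Corollary~\ref{LEM:localCharacter}. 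Once Soundness is in hand, the Fundamental Theorem follows by the one-line specialisation described above, so I expect no further obstacle in this last step.
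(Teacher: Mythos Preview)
Your proposal is correct and matches the paper's approach exactly: the paper's proof is the single line ``Follows from Corollary~\ref{LEM:soundness},'' and your elaboration of how the empty context and empty substitution specialise the definition of $\vDash_p J$ to yield $p\forces J$ is precisely the unpacking that makes this one-line proof go through.
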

\begin{proof}
Follows from Corollary~\ref{LEM:soundness}.
\end{proof}
\section{Markov's principle}
\label{SEC:markov}
Now we have enough machinery to show the independence of $\MP$ from type theory. The idea is that if a judgment $J$ is derivable in type theory (i.e.\ $\vdash J$) then it is derivable in the forcing extension (i.e.\ $\vdash_{\emptyset} J$) and by Theorem~\ref{THM:fundamentalThm} it holds in the interpretation (i.e.\ $\forces J$). It thus suffices to show that there no $t$ such that $\forces t\typ \tMP$ to establish the independence of $\MP$ from type theory. First we recall the formulation of $\MP$.
\begin{align*}
\tMP \coloneqq\Pi (h\typ \tnat\rightarrow \tbool)[\neg\neg (\Sigma (x\typ \tnat)\,\iszero\,(h\,x)) \rightarrow \Sigma (x\typ \tnat)\,\iszero\,(h\,x)]
\end{align*}
where $\iszero \typ \tbool \rightarrow \tuniv$ is given by $\iszero \coloneqq \lambda y.\boolrec (\lambda x. \tuniv)\, \tunit\,\tempt\,y$.
 

\begin{lem}
\label{LEM:markovConclusionNoWitness}
There is no term $t$ such that $\forces t\typ \Sigma(x\typ \tnat)\,\iszero\,(\f\,x)$.
\end{lem}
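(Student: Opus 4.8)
The plan is to argue by contradiction: suppose $\forces t\typ \Sigma(x\typ \tnat)\,\iszero\,(\f\,x)$ for some term $t$. The type $\Sigma(x\typ\tnat)\,\iszero\,(\f\,x)$ is forced at $\emptyset$ by $\lfor_\Sigma$ (it is already a canonical whnf, with $F\coloneqq\tnat$ and $G\coloneqq\lambda x.\iszero\,(\f\,x)$), so by the $\lfor_\Sigma$ clause for terms we get $\emptyset\forces t\fpr\typ\tnat$ and $\emptyset\forces t\spr\typ\iszero\,(\f\,(t\fpr))$. First I would apply Lemma~\ref{LEM:BaseTypeCanonicity} to $t\fpr\typ\tnat$: there is a partition $\emptyset\covt S$ such that for each $q\in S$ we have $q\vdash t\fpr\rtred\xoverline{n}_q\typ\tnat$ for some numeral $\xoverline{n}_q$. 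Since it suffices to derive a contradiction at a single condition (any $q\in S$ will do, as $q\forces J$ follows from $\emptyset\forces J$ by monotonicity), I fix one such $q$ and write $n\coloneqq n_q$, so $q\vdash t\fpr\rtred\xoverline{n}\typ\tnat$.

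Next, since $\emptyset\forces t\spr\typ\iszero\,(\f\,(t\fpr))$, monotonicity gives $q\forces t\spr\typ\iszero\,(\f\,(t\fpr))$. I want to identify the type $\iszero\,(\f\,(t\fpr))$ at a suitable refinement of $q$. Using $q\vdash t\fpr\rtred\xoverline{n}\typ\tnat$ we get $q\vdash \f\,(t\fpr)\rtred\f\,\xoverline{n}\typ\tbool$, hence $q\vdash\iszero\,(\f\,(t\fpr))\rtred\iszero\,(\f\,\xoverline{n})$, and by unfolding $\iszero$, $q\vdash\iszero\,(\f\,\xoverline{n})\rtred\boolrec(\lambda x.\tuniv)\,\tunit\,\tempt\,(\f\,\xoverline{n})$. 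Now pick $m$ with $m\notin\dom{q}$ and $m\geq n$ — actually we want to evaluate $\f\,\xoverline{n}$, so the relevant point is $n$ itself; if $n\in\dom{q}$ already, then $q\vdash\f\,\xoverline{n}\red q(n)$ and $\iszero\,(\f\,\xoverline{n})$ reduces to $\tunit$ or $\tempt$ depending on $q(n)$. If $n\notin\dom{q}$, I refine to $q(n\mapsto\one)$: then $q(n\mapsto\one)\vdash\f\,\xoverline{n}\red\one$, so $q(n\mapsto\one)\vdash\iszero\,(\f\,(t\fpr))\rtred\tempt$. In the case $n\in\dom q$ with $q(n)=\one$, the same conclusion holds at $q$ directly; and the case $q(n)=\zer$ is handled by first replacing $q$ throughout by $q(n\mapsto\one)$ when choosing the partition member — more cleanly, by incompatibility of partition members I may assume at the outset that $S$ is refined so that for each $q\in S$ the value $q(n_q)$ is either undefined or equal to $\one$; this is legitimate because forcing is local and monotone, so we may always pass to such a refinement before fixing $q$.

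So, having fixed $q$ and set $q'\coloneqq q(n\mapsto\one)$ (or $q'\coloneqq q$ if $n\in\dom q$ with $q(n)=\one$), we have $q'\vdash\iszero\,(\f\,(t\fpr))\rtred\tempt$ and, by monotonicity, $q'\forces t\spr\typ\iszero\,(\f\,(t\fpr))$. By Lemma~\ref{LEM:types_reductionimplyequiv}, $q'\forces \iszero\,(\f\,(t\fpr))=\tempt$, so by extensionality (Theorem~\ref{properties}, or Lemma~\ref{LEM:equivalentTypesHaveEqualPers}) $q'\forces t\spr\typ\tempt$. But the $\lfor_\tempt$ clause stipulates $p\nforces u\typ A$ for all $u$ whenever $p\forces A$ by $\lfor_\tempt$; since $q'\forces\tempt$ by $\lfor_\tempt$ and $q'\forces\iszero\,(\f\,(t\fpr))=\tempt$, we get $q'\forces\iszero\,(\f\,(t\fpr))$ by $\lfor_\tempt$ as well (via Lemma~\ref{LEM:types_reductionimplyequiv}), contradicting $q'\forces t\spr\typ\iszero\,(\f\,(t\fpr))$. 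This contradiction establishes the lemma.

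The main obstacle is the bookkeeping around which condition to evaluate $\f\,\xoverline{n}$ at: the numeral $n$ is only revealed locally (different members of the partition $S$ may yield different $n_q$), and once $n$ is known one must ensure $n\notin\dom{q}$ before forcing the value $\one$. Both issues are resolved by exploiting monotonicity and locality of the forcing relation — passing freely to refinements — so that without loss of generality we work at a single condition $q$ with $q\vdash t\fpr\rtred\xoverline{n}\typ\tnat$ and $n\notin\dom q$, after which the argument is a direct unfolding of the $\lfor_\tempt$ and $\lfor_\Sigma$ clauses together with Lemma~\ref{LEM:types_reductionimplyequiv} and extensionality.
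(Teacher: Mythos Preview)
Your argument has a real gap in the case analysis on $q(n)$. You fix an \emph{arbitrary} $q\in S$ with $q\vdash t\fpr\rtred\xoverline{n}\typ\tnat$, and then need $\f\,\xoverline{n}$ to evaluate to $\one$ at $q$ or an extension of $q$. If $n\notin\dom q$ you can extend to $q(n\mapsto\one)$; if $n\in\dom q$ with $q(n)=\one$ you are already done. But if $n\in\dom q$ with $q(n)=\zer$, you are stuck: at $q$ and at \emph{every} extension of $q$, the value $\f\,\xoverline{n}$ is $\zer$, so $\iszero\,(\f\,(t\fpr))$ reduces to $\tunit$, not $\tempt$, and no contradiction arises along this branch. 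Your proposed fix --- ``refine $S$ so that for each $q\in S$ the value $q(n_q)$ is either undefined or equal to $\one$'' --- does not work: refining a partition means further subdividing conditions, which can only \emph{add} points to their domains, never alter an already-assigned value. Moreover, since reduction is monotone and deterministic, $t\fpr$ still reduces to the same numeral $\xoverline{n}$ at every refinement of $q$, so the bad value $q(n)=\zer$ persists.

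The missing idea, which the paper supplies, is that one must not pick $q\in S$ arbitrarily but rather pick a \emph{specific} member of the partition. By induction on the definition of $\emptyset\covt S$, every partition of $\emptyset$ contains a condition $r$ with $r(k)=\one$ for all $k\in\dom r$ (at each splitting step, follow the $(n\mapsto\one)$ branch; the base case $\emptyset\covt\{\emptyset\}$ is vacuous). For this particular $r$, the bad case $r(n)=\zer$ is impossible, and your argument then goes through: either $n\in\dom r$ and $r(n)=\one$, or $n\notin\dom r$ and one passes to $r(n\mapsto\one)$.
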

\begin{proof}
Assume $\forces t\typ \Sigma(x\typ \tnat)\,\iszero\,(\f\,x)$ for some $t$. We then have $\forces t\fpr \typ \tnat$ and $\forces t\spr\typ \iszero\,(\f\,t\fpr)$. By Lemma~\ref{LEM:BaseTypeCanonicity} one has a partition $\emptyset \covt S$ where for each $q\in S$, $q\vdash t\fpr \rtred \xoverline{m}\typ \tnat$. Hence $q\vdash \iszero\,(\f\,t\fpr) \rtred \iszero\,(\f\,\xoverline{m})$ and by Lemma~\ref{LEM:types_reductionimplyequiv} $q \forces \iszero\,(\f\,t\fpr)= \iszero\,(\f\,\xoverline{m})$. But, by definition, the partition $S$ must contain a condition, say $r$, such that $r(k) = 1$ whenever $k \in \dom{r}$ (this holds vacuously for $\emptyset\covt \{\emptyset\}$). Let $r\vdash t\fpr \rtred \xoverline{n}\typ \tnat$. Assume $n \in \dom{r}$, then $r\vdash \iszero\,(\f\,t\fpr) \rtred \iszero\,(\f\,\xoverline{n}) \rtred \tempt$. By monotonicity, from $\forces t\spr \typ \iszero\,(\f\,t\fpr)$ we get $r\forces t\spr \typ \iszero\,(\f\,t\fpr)$. But $r \vdash \iszero\,(\f\,t\fpr) \rtred \tempt$ thus $r \forces  \iszero\,(\f\,t\fpr) = \tempt$. Hence, by Lemma~\ref{LEM:equivalentTypesHaveEqualPers}, $r\forces t\spr \typ \tempt$ which is impossible, thus contradicting our assumption. If on the other hand $n \notin \dom{r}$ then, since $r\covt\{ r(n\mapsto \zer),r(n\mapsto \one)\}$, we can apply the above argument with $r(n\mapsto \one)$ instead of $r$.
\end{proof}

\begin{lem}
\label{LEM:markovNotForcible}
There is no term $t$ such that $\forces t\typ \tMP$.
\end{lem}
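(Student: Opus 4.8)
The plan is to reach a contradiction with Lemma~\ref{LEM:markovConclusionNoWitness} by applying a hypothetical inhabitant of $\tMP$ first to the generic point $\f$ and then to the witness $\w$. So suppose, for contradiction, that $\forces t\typ \tMP$ for some term $t$. Since $\tMP$ is literally the type $\Pi(h\typ \tnat\rightarrow\tbool)\,[\neg\neg(\Sigma(x\typ\tnat)\,\iszero\,(h\,x))\rightarrow\Sigma(x\typ\tnat)\,\iszero\,(h\,x)]$, it is already in canonical $\emptyset$-whnf, so $\emptyset\forces t\typ \tMP$ can only hold by the clause $\lfor_\Pi$. Unwinding that clause, for every $q\leqslant\emptyset$ and every $a$ with $q\forces a\typ \tnat\rightarrow\tbool$ we get $q\forces t\,a\typ \neg\neg(\Sigma(x\typ\tnat)\,\iszero\,(a\,x))\rightarrow\Sigma(x\typ\tnat)\,\iszero\,(a\,x)$.

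First I would instantiate this with $q\coloneqq\emptyset$ and $a\coloneqq\f$. By Corollary~\ref{COR:genericwelltyped} we have $\forces\f\typ\tnat\rightarrow\tbool$, so
\[
\forces t\,\f\typ \neg\neg(\Sigma(x\typ\tnat)\,\iszero\,(\f\,x))\rightarrow\Sigma(x\typ\tnat)\,\iszero\,(\f\,x).
\]
The type on the right is again an arrow type in canonical whnf, hence forced only by clause $\lfor_\Pi$; unwinding once more, for every $q\leqslant\emptyset$ and every $u$ with $q\forces u\typ \neg\neg(\Sigma(x\typ\tnat)\,\iszero\,(\f\,x))$ we obtain $q\forces (t\,\f)\,u\typ \Sigma(x\typ\tnat)\,\iszero\,(\f\,x)$. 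Taking $q\coloneqq\emptyset$ and $u\coloneqq\w$ --- legitimate since $\forces\w\typ\neg\neg(\Sigma(x\typ\tnat)\,\iszero\,(\f\,x))$ by Lemma~\ref{LEM:markovpremisewitnesswelltyped} --- yields $\forces (t\,\f)\,\w\typ \Sigma(x\typ\tnat)\,\iszero\,(\f\,x)$, contradicting Lemma~\ref{LEM:markovConclusionNoWitness}. Hence no such $t$ exists.

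There is essentially no obstacle here: all of the substantive content was established in Corollary~\ref{COR:genericwelltyped}, Lemma~\ref{LEM:markovpremisewitnesswelltyped}, and Lemma~\ref{LEM:markovConclusionNoWitness}, and the present argument is just bookkeeping around the definition of $\forces$. The only points that call for a word of justification are the implicit typing side-conditions of the $\lfor_\Pi$ clause, namely $\vdash_\emptyset t\,\f\typ\neg\neg(\Sigma(x\typ\tnat)\,\iszero\,(\f\,x))\rightarrow\Sigma(x\typ\tnat)\,\iszero\,(\f\,x)$ and $\vdash_\emptyset (t\,\f)\,\w\typ\Sigma(x\typ\tnat)\,\iszero\,(\f\,x)$; these follow from the application rule together with $\vdash_\emptyset\f\typ\tnat\rightarrow\tbool$ (rule $\f$-I) and $\vdash_\emptyset\w\typ\neg\neg(\Sigma(x\typ\tnat)\,\iszero\,(\f\,x))$ (rule $\w$-term), noting that $\vdash_\emptyset\tMP$ holds by Lemma~\ref{LEM:extension}.
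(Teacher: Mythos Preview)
Your proof is correct and follows essentially the same route as the paper's own argument: assume $\forces t\typ\tMP$, apply $t$ to $\f$ (via Corollary~\ref{COR:genericwelltyped}) and then to $\w$ (via Lemma~\ref{LEM:markovpremisewitnesswelltyped}) using the $\lfor_\Pi$ clause, and contradict Lemma~\ref{LEM:markovConclusionNoWitness}. You add a little extra justification about why $\lfor_\Pi$ is the relevant clause and about the typing side-conditions, which the paper leaves implicit, but the structure is identical.
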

\begin{proof}
Assume $\forces t\typ \tMP$ for some $t$. From the definition, whenever $\forces g \typ \tnat\rightarrow \tbool$ we have $\forces t\,g \typ \neg \neg (\Sigma(x\typ \tnat)\,\iszero\,(g\,x))\rightarrow \Sigma (x\typ \tnat)\,\iszero\,(g\,x)$. Since by Corollary~\ref{COR:genericwelltyped}, $\forces \f \typ \tnat\rightarrow \tbool$ we have $\forces t\,\f \typ \neg \neg (\Sigma(x\typ \tnat)\,\iszero\,(\f\,x))\rightarrow \Sigma(x\typ \tnat)\,\iszero\,(\f\,x)$. Since by Lemma~\ref{LEM:markovpremisewitnesswelltyped} $\forces \w\typ \neg\neg( \Sigma(x\typ \tnat)\,\iszero\,(\f\,x))$ we have $\forces (t\,\f)\,\w \typ \Sigma (x\typ \tnat)\,\iszero\,(\f\,x)$ which is impossible by Lemma~\ref{LEM:markovConclusionNoWitness}.
\end{proof}

From Theorem~\ref{THM:fundamentalThm}, Lemma~\ref{LEM:markovNotForcible}, and Lemma~\ref{LEM:extension} we can then conclude:
\mainresult*

\subsection{Many Cohen reals}

We extend the type system in Section~\ref{SEC:Forcingextensionoftypetheory} further by adding a generic point $\f_q$ for each condition $q$. The introduction and conversion rules for $\f_q$ are given by:\smallskip

\begin{center}
\AxiomC{$\Gamma\vdash_p$}
\UnaryInfC{$\Gamma\vdash_p \f_q \typ \tnat\rightarrow \tbool$}
\DisplayProof
\,
\AxiomC{$\Gamma\vdash_p$}
\RightLabel{$n\in\dom{q}$}
\UnaryInfC{$\Gamma\vdash_p \f_q\,\xoverline{n} = 1$}
\DisplayProof\,
\AxiomC{$\Gamma\vdash_p$}
\RightLabel{$n\notin\dom{q}, n\in \dom{p}$}
\UnaryInfC{$\Gamma\vdash_p \f_q\,\xoverline{n} = p(n)$}
\DisplayProof\smallskip
\end{center}

\noindent
With the reduction rules
\begin{center}
\AxiomC{$n\in \dom{q}$}
\UnaryInfC{$\f_q\,\xoverline{n} \rightarrow 1$}
\DisplayProof
\,
\AxiomC{$n\notin \dom{q}, n\in \dom{p}$}
\UnaryInfC{$\f_q\,\xoverline{n} \rightarrow_p p(n)$}
\DisplayProof\medskip
\end{center}
We observe that with these added rules the reduction relation is still monotone.
For each $\f_q$ we add a term:
\begin{center}
\AxiomC{$\Gamma \vdash_p$}
\UnaryInfC{$\Gamma \vdash_p \w_q\typ  \neg \neg (\Sigma(x\typ \tnat)\,\iszero\,(\f_q\,x)) $}
\DisplayProof\smallskip
\end{center}
Finally we add a term $\mw$ witnessing the negation of $\MP$\smallskip
\begin{center}
\AxiomC{$\Gamma \vdash_p$}
\UnaryInfC{$\Gamma \vdash_p \mw\typ \neg \tMP$}
\DisplayProof 
\smallskip
\end{center}
By analogy to Corollary~\ref{COR:genericwelltyped} we have 
\begin{lem}
\label{LEM:qGenericPointForced}
$\forces \f_q \typ \tnat\rightarrow \tbool$ for all $q$.
\end{lem}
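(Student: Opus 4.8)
The plan is to mirror the proof of Corollary~\ref{COR:genericwelltyped} almost verbatim; the only genuinely new ingredient is that $\f_q\,\xoverline{n}$ has two reduction behaviours, depending on whether $n$ lies in the fixed finite set $\dom{q}$ or only in the domain of the ambient condition. First I would observe that $\forces \tnat\rightarrow\tbool$ holds immediately from the clause $\lfor_\Pi$ together with $\lfor_\tnat$ and $\lfor_\tbool$. Then, to establish $\forces \f_q\typ\tnat\rightarrow\tbool$, by the definition of $\lfor_\Pi$ and monotonicity it suffices to check two things for every condition $p$: that $p\forces n\typ\tnat$ entails $p\forces\f_q\,n\typ\tbool$, and that $p\forces a=b\typ\tnat$ entails $p\forces\f_q\,a=\f_q\,b\typ\tbool$.

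For the first claim I would apply Lemma~\ref{LEM:BaseTypeCanonicity} to obtain a partition $p\covt S$ with $s\vdash n\rtred\xoverline{m}\typ\tnat$ for each $s\in S$ (where $m$ depends on $s$), so that $s\vdash\f_q\,n\rtred\f_q\,\xoverline{m}\typ\tbool$; then split into three cases on $m$. If $m\in\dom{q}$, the rule $\f_q\,\xoverline{m}\red\one$ applies at every condition, so $s\vdash\f_q\,n\rtred\one\typ\tbool$ and hence $s\forces\f_q\,n\typ\tbool$. If $m\notin\dom{q}$ but $m\in\dom{s}$, then $\f_q\,\xoverline{m}\red_s s(m)$, giving $s\vdash\f_q\,n\rtred s(m)\typ\tbool$ and again $s\forces\f_q\,n\typ\tbool$. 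If $m\notin\dom{q}$ and $m\notin\dom{s}$, then at each of $s(m\mapsto\zer)$ and $s(m\mapsto\one)$ we have $\f_q\,\xoverline{m}\red b$ for the corresponding $b\in\{\zer,\one\}$, so both extensions force $\f_q\,n\typ\tbool$, whence $s\forces\f_q\,n\typ\tbool$ by the $\tbool$ clause of the forcing definition. Locality (Corollary~\ref{LEM:localCharacter}) then yields $p\forces\f_q\,n\typ\tbool$. The equality statement is handled the same way: on the common partition from Lemma~\ref{LEM:BaseTypeCanonicity}, in the first two cases both $\f_q\,a$ and $\f_q\,b$ reduce to the same $p$-whnf ($\one$, respectively $s(m)$), so $s\forces\f_q\,a=\f_q\,b\typ\tbool$ follows from Corollary~\ref{COR:terms_reductionimplyequiv} together with symmetry and transitivity; in the third case one again splits on $m$ and appeals directly to the definition; locality then gives $p\forces\f_q\,a=\f_q\,b\typ\tbool$.

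I do not anticipate any real obstacle here: the only point requiring care is keeping the fixed parameter $q$ of $\f_q$ notationally separate from the conditions appearing in the partition, and remembering that the reduction $\f_q\,\xoverline{m}\red\one$ for $m\in\dom{q}$ is level-independent whereas $\f_q\,\xoverline{m}\red_p p(m)$ for $m\notin\dom{q}$ is not — this is precisely what forces the three-way case split instead of the two-way split in the proof of Corollary~\ref{COR:genericwelltyped}.
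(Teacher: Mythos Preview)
Your proposal is correct and follows exactly the approach the paper intends: the paper's proof is simply ``By analogy to Corollary~\ref{COR:genericwelltyped}'', and you have spelled out precisely that analogy, including the necessary three-way case split on $m$ arising from the two distinct reduction rules for $\f_q\,\xoverline{m}$. Your care in separating the fixed parameter $q$ from the conditions $s$ in the partition is also warranted, since the paper's proof of Corollary~\ref{COR:genericwelltyped} used the letter $q$ for partition elements.
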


\begin{lem}
\label{LEM:qMarkovPermiseWitnessForced}
$\forces \w_q \typ  \neg \neg (\Sigma(x\typ \tnat)\,\iszero\,(\f_q\,x))$ for all $q$.
\end{lem}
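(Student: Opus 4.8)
The plan is to transcribe, almost line for line, the proof of Lemma~\ref{LEM:markovpremisewitnesswelltyped}, the only new ingredient being the reduction behaviour of $\f_q$. Writing $A \coloneqq \neg(\Sigma(x\typ\tnat)\,\iszero\,(\f_q\,x))$, so that the target type is $\neg A$, I would first record that the term rule for $\w_q$ supplies $\vdash_p \w_q\typ\neg A$ for every $p$, and that $p\forces A$ holds: by Lemma~\ref{LEM:qGenericPointForced} the function $\f_q$ is computable, so for any $r\leqslant p$ and any computable $r\forces a\typ\tnat$ the boolean $\f_q\,a$ is computable, whence $\iszero\,(\f_q\,a)$ reduces locally to $\tunit$ or $\tempt$ and is a computable type; thus $\Sigma(x\typ\tnat)\,\iszero\,(\f_q\,x)$ is computable and so is $A$. (This is the one point appealing to the computability/soundness machinery of Section~\ref{SEC:semantics}, which Lemma~\ref{LEM:qGenericPointForced} already presumes to carry over to the new constant.)

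Having checked the hypotheses, Lemma~\ref{LEM:forcingTermNegation} reduces the claim to showing that for every $r\leqslant p$ there is no term $u$ with $r\forces u\typ A$. I would argue by contradiction. Suppose $r\forces u\typ\neg(\Sigma(x\typ\tnat)\,\iszero\,(\f_q\,x))$ for some $u$, and choose a natural number $m$ lying outside $\dom{r}\cup\dom{q}$ (possible since both sets are finite). Then $m\notin\dom{q}$ and $m\in\dom{r(m\mapsto\zer)}$, so the reduction rule $\f_q\,\xoverline{n}\red_{p'} p'(n)$ for $n\notin\dom{q},\,n\in\dom{p'}$ applies and gives $\f_q\,\xoverline{m}\red_{r(m\mapsto\zer)}\zer$; consequently $r(m\mapsto\zer)\vdash\iszero\,(\f_q\,\xoverline{m})\rtred\tunit$, so $\zer\typ\iszero\,(\f_q\,\xoverline{m})$ and $(\xoverline{m},\zer)\typ\Sigma(x\typ\tnat)\,\iszero\,(\f_q\,x)$, with $r(m\mapsto\zer)\forces(\xoverline{m},\zer)\typ\Sigma(x\typ\tnat)\,\iszero\,(\f_q\,x)$. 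Applying the forced negation $u$ then yields $r(m\mapsto\zer)\forces u\,(\xoverline{m},\zer)\typ\tempt$, which is impossible because no term is forced to inhabit $\tempt$.

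The step that needs a little care — and the only place the argument genuinely departs from the $\f$ case — is the choice of the witness index $m$: since $\f_q$ is pinned to $\one$ on all of $\dom{q}$, one must pick $m$ outside $\dom{q}$ as well as outside $\dom{r}$ in order to force $\f_q\,\xoverline{m}$ to $\zer$ at the extension $r(m\mapsto\zer)$. Apart from that there is no real obstacle; everything else is routine and parallels the proof of Lemma~\ref{LEM:markovpremisewitnesswelltyped} essentially verbatim.
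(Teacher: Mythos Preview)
Your argument is correct and follows essentially the same route as the paper: assume some condition forces a term into $\neg(\Sigma(x\typ\tnat)\,\iszero\,(\f_q\,x))$, choose $m$ outside both $\dom{q}$ and the domain of that condition, and derive a forced inhabitant of $\tempt$ via $(\xoverline{m},\zer)$. The paper's version is terser (it does not spell out the appeal to Lemma~\ref{LEM:forcingTermNegation} or the computability of the ambient type), but the content is the same, and you have correctly isolated the one genuinely new point, namely that $m$ must avoid $\dom{q}$.
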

\begin{proof}
Assume $p\forces t\typ \neg (\Sigma(x\typ \tnat) \iszero\,(\f_q\,x))$ for some $p$ and $t$. Let $m \notin \dom{q} \cup \dom{p}$, we have $p(m\mapsto \zer)\forces \f_q\,\xoverline{m}=\zer$. Thus $p(m \mapsto \zer)\forces (\xoverline{m},\zer) \typ  \Sigma(x\typ \tnat)\,\iszero\,(\f_q\,x)$ and $p(m\mapsto \zer) \forces t\,(\xoverline{m},\zer) \typ \tempt$ which is impossible.
\end{proof}

\begin{lem}
\label{LEM:qMarkovConclusionNoWitness}
There is no term $t$ for which $q\forces t\typ \Sigma(x\typ \tnat)\,\iszero\,(\f_q\,x)$.
\end{lem}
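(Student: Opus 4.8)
The plan is to mimic the proof of Lemma~\ref{LEM:markovConclusionNoWitness}, exploiting that $\f_q\,\xoverline{k}$ reduces (unconditionally) to $\one$ as soon as $k\in\dom{q}$, so that $\iszero\,(\f_q\,\xoverline{k})$ reduces to $\tempt$ for such $k$ at \emph{any} condition. The only new subtlety compared with Lemma~\ref{LEM:markovConclusionNoWitness} is that here the base condition is $q$ rather than $\emptyset$, and one must locate inside the relevant partition a branch on which all coordinates \emph{not already fixed by $q$} carry the value $\one$.

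First I would assume for contradiction that $q\forces t\typ \Sigma(x\typ \tnat)\,\iszero\,(\f_q\,x)$ for some $t$. Then $q\forces t\fpr\typ\tnat$ and $q\forces t\spr\typ\iszero\,(\f_q\,t\fpr)$. By Lemma~\ref{LEM:BaseTypeCanonicity} there is a partition $q\covt S$ such that for each $s\in S$ there is $m_s\in\nats$ with $s\vdash t\fpr\rtred\xoverline{m}_s\typ\tnat$, and hence $s\vdash \iszero\,(\f_q\,t\fpr)\rtred\iszero\,(\f_q\,\xoverline{m}_s)$ and by Lemma~\ref{LEM:types_reductionimplyequiv} $s\forces\iszero\,(\f_q\,t\fpr)=\iszero\,(\f_q\,\xoverline{m}_s)$. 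By the inductive definition of $\covt$, the partition $S$ contains a condition $r$ obtained by always following the $\one$-branch at each split, so that $r(k)=\one$ for every $k\in\dom{r}\setminus\dom{q}$ (this holds vacuously when $S=\{q\}$). Fix such an $r$ and let $r\vdash t\fpr\rtred\xoverline{n}\typ\tnat$.

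Next I would argue by cases. If $n\in\dom{q}$, the reduction rule for $\f_q$ gives $\f_q\,\xoverline{n}\red\one$, so $\iszero\,(\f_q\,\xoverline{n})\rtred\tempt$ and thus $r\vdash\iszero\,(\f_q\,t\fpr)\rtred\tempt$, giving $r\forces\iszero\,(\f_q\,t\fpr)=\tempt$. If $n\notin\dom{q}$ but $n\in\dom{r}$, then $r(n)=\one$ by the choice of $r$, so $\f_q\,\xoverline{n}\red_r\one$ and the same conclusion holds. In either case, by monotonicity $r\forces t\spr\typ\iszero\,(\f_q\,t\fpr)$, hence by Lemma~\ref{LEM:equivalentTypesHaveEqualPers} $r\forces t\spr\typ\tempt$, which is impossible. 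Finally, if $n\notin\dom{r}$ (so in particular $n\notin\dom{q}$), then $r\covt\{r(n\mapsto\zer),r(n\mapsto\one)\}$ and the preceding argument applies verbatim with $r(n\mapsto\one)$ in place of $r$, since then $r(n\mapsto\one)(n)=\one$. In all cases we reach a contradiction, so no such $t$ exists.

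I do not anticipate a serious obstacle: the argument is a direct adaptation of Lemma~\ref{LEM:markovConclusionNoWitness}, and the only point requiring care is recording that Lemma~\ref{LEM:BaseTypeCanonicity}'s partition always admits the ``all-$\one$ above $q$'' branch, which is immediate from the inductive clause for $\covt$.
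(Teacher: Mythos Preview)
Your proposal is correct and follows essentially the same approach as the paper's own proof: both assume a witness, use Lemma~\ref{LEM:BaseTypeCanonicity} to obtain a partition of $q$ on which $t\fpr$ has a numeral value, pick the member of the partition that assigns $\one$ to every new coordinate, and then argue by cases on whether that numeral lies in $\dom{q}$, in the extended domain, or in neither, reaching $r\forces t\spr\typ\tempt$ via Lemmas~\ref{LEM:types_reductionimplyequiv} and~\ref{LEM:equivalentTypesHaveEqualPers}. Your case split is slightly more explicit than the paper's, but the structure and the cited lemmas are identical.
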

\begin{proof}
Assume $q\forces t\typ \Sigma(x\typ \tnat)\,\iszero\,(\f_q\,x)$ for some $t$. We then have $q\forces t\fpr \typ \tnat$ and $q\forces t\spr\typ \iszero\,(\f_q\,t\fpr)$. By Lemma~\ref{LEM:BaseTypeCanonicity} one has a partition $q \covt \{q_1,\dots,q_n\}$ where for each $i$, $t\fpr \rtred_{q_i} \xoverline{m}_i$ for some $\xoverline{m}_i \in \nats$. Hence $q_i\vdash \iszero\,(\f_q\,t\fpr) \rtred \iszero\,(\f_q\,\xoverline{m}_i)$ . But any partition of $q$ contain a condition, say $q_j$, where $q_j(k) = 1$ whenever $k \notin \dom{q}$ and $k\in \dom{q_j}$. Assume $m_j \in \dom{q_j}$. If $m_j \in\dom{q}$ then $q_j\vdash \f_q\,m_j \red \one\typ \tbool$ and if $m_j\notin\dom{q}$ then $q_j\vdash \f_q\,\xoverline{m}_j\red q_j(k)\coloneqq \one\typ \tbool$. Thus $q_j\vdash \iszero\,(\f_q\,t\fpr) \rtred \tempt$ and by Lemma~\ref{LEM:types_reductionimplyequiv} $q_j \forces \iszero\,(\f\,t\fpr) =\tempt$. From $\forces t\spr \typ \iszero\,(\f\,t\fpr)$ by monotonicity and Lemma~\ref{LEM:equivalentTypesHaveEqualPers} we have $q_j\forces t\spr \typ \tempt$ which is impossible. If on the other hand $m_j \notin \dom{q_j}$ then since $q_j\covt \{q_j(m_j\mapsto \zer),q_j(m_j\mapsto \one)\}$ we can apply the above argument with $q_j(m_j\mapsto \one)$ instead of $q_j$.
\end{proof}
\begin{lem}
$\forces \mw\typ \neg \tMP$
\end{lem}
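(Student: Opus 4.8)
The plan is to mirror the proof of Lemma~\ref{LEM:markovNotForcible}, but now relative to an arbitrary condition and using the tuned generic point $\f_q$ in place of $\f$. First I record that $\tMP$ is forced at the empty condition. Indeed $\tMP$, and hence $\neg\tMP\coloneqq\Pi(z\typ\tMP)\tempt$, is a well-formed closed type of standard type theory, so $\vdash_{\emptyset}\neg\tMP$ by Lemma~\ref{LEM:extension} and therefore $\forces\neg\tMP$ by the Fundamental Theorem (Theorem~\ref{THM:fundamentalThm}); in particular $\forces\tMP$. The rule for $\mw$ gives $\vdash_{\emptyset}\mw\typ\neg\tMP$, so by Lemma~\ref{LEM:forcingTermNegation} applied with $A\coloneqq\tMP$ and $p\coloneqq\emptyset$ it suffices to prove that for every condition $q$ there is no term $u$ with $q\forces u\typ\tMP$.

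So fix $q$ and suppose, for contradiction, that $q\forces u\typ\tMP$ for some $u$. By Lemma~\ref{LEM:qGenericPointForced} and monotonicity (Corollary~\ref{COR:monotonicity}) we have $q\forces\f_q\typ\tnat\rightarrow\tbool$, and unfolding the $\lfor_\Pi$ clause for $u$ at $q$ gives
$q\forces u\,\f_q\typ\neg\neg(\Sigma(x\typ\tnat)\,\iszero\,(\f_q\,x))\rightarrow\Sigma(x\typ\tnat)\,\iszero\,(\f_q\,x)$.
By Lemma~\ref{LEM:qMarkovPermiseWitnessForced} and monotonicity $q\forces\w_q\typ\neg\neg(\Sigma(x\typ\tnat)\,\iszero\,(\f_q\,x))$, and unfolding the $\lfor_\Pi$ clause once more (the codomain of this arrow type is not dependent on its argument) yields $q\forces (u\,\f_q)\,\w_q\typ\Sigma(x\typ\tnat)\,\iszero\,(\f_q\,x)$. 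This contradicts Lemma~\ref{LEM:qMarkovConclusionNoWitness}, which states that no term is forced of this type at $q$. Hence no such $u$ exists, and we conclude $\forces\mw\typ\neg\tMP$.

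The argument is routine given the machinery already in place; the only points to be careful about are checking the hypotheses of Lemma~\ref{LEM:forcingTermNegation} (that $\tMP$ itself is forced at $\emptyset$) and that Lemma~\ref{LEM:qMarkovConclusionNoWitness} is invoked at exactly the condition $q$ for which the hypothetical inhabitant is assumed. This last point is precisely why the $q$-tuned generic point $\f_q$ is needed rather than the original $\f$: at every partition of $q$ some branch still forces $\f_q$ to take the value $\one$ outside $\dom{q}$, so the $\Sigma$-type $\Sigma(x\typ\tnat)\,\iszero\,(\f_q\,x)$ has no computable witness at $q$.
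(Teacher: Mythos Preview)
Your proof is correct and follows essentially the same route as the paper's: assume some condition forces an inhabitant of $\tMP$, instantiate at the tuned generic $\f_q$, apply $\w_q$, and contradict Lemma~\ref{LEM:qMarkovConclusionNoWitness}. You are more explicit than the paper in invoking Lemma~\ref{LEM:forcingTermNegation} and verifying its hypothesis $\forces\tMP$, and in being clear that the witness is refuted at the very condition $q$ where it is assumed; the paper's version leaves these points implicit but the argument is the same.
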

\begin{proof}
Assume $p \forces t\typ \tMP$ for some $p$ and $t$. Thus whenever $q\leqslant p$ and $q\forces u \typ \tnat\rightarrow \tbool$ then $q\forces t\,u \typ \neg \neg (\Sigma(x\typ \tnat)\,\iszero\,(u\,x))\rightarrow (\Sigma(x\typ \tnat)\,\iszero\,(u\,x))$. But we have $q\forces \f_q \typ \tnat\rightarrow \tbool$ by Lemma~\ref{LEM:qGenericPointForced}. Hence $q\forces t\,\f_q \typ \neg \neg (\Sigma(x\typ \tnat) \iszero\,(\f_q\,x))\rightarrow (\Sigma(x\typ \tnat) \iszero\,(\f_q\,x))$. But $q\forces \w_q \typ \neg \neg (\Sigma(x\typ \tnat) \iszero\,(\f_q\,x))$ by Lemma~\ref{LEM:qMarkovPermiseWitnessForced}. Thus $q\forces (t\,\f_q)\,\w_q \typ \Sigma(x\typ \tnat)\,\iszero\,(\f_q\,x)$ which is impossible by Lemma~\ref{LEM:qMarkovConclusionNoWitness}.
\end{proof}

We have then that this extension is sound with respect to the interpretation. Hence we have shown the following statement.

\begin{thm}
There is a consistent extension of $\MLTT$ where $\neg \tMP$ is derivable.
\end{thm}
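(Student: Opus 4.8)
The plan is to assemble the statements already proved in this subsection into the three ingredients demanded by the definition of a \emph{consistent extension}: that the calculus at hand extends $\MLTT$, that $\neg\tMP$ is inhabited in it, and that $\tempt$ is uninhabited in it.

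First I would note that the system obtained from the forcing extension by adjoining the constants $\f_q$, $\w_q$, $\mw$ together with their introduction, conversion and reduction rules is an extension of $\MLTT$ in the sense fixed earlier: the proof of Lemma~\ref{LEM:extension} applies unchanged, so $\Gamma\vdash J$ in $\MLTT$ implies $\Gamma\vdash_\emptyset J$ here. Moreover $\vdash_\emptyset \mw\typ\neg\tMP$ is an instance of the $\mw$-introduction rule, so $\neg\tMP$ is derivable in the extension. It therefore only remains to prove consistency, and for that I would prove soundness of the enlarged calculus with respect to the computability interpretation.

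Concretely I would re-run the development culminating in Theorem~\ref{THM:fundamentalThm}. The point is that each constant $\f_q\,\xoverline n$ reduces unconditionally to $\one$ when $n\in\dom q$ and otherwise behaves \emph{exactly} as $\f\,\xoverline n$ does — stuck at $p$ unless $n\in\dom p$, in which case it reduces to $p(n)$ — so all the reduction lemmas (monotonicity and locality of $\red$, Lemmas~\ref{untypedredislocal}--\ref{properwhnflocal}), with the terms $\mathbb{E}[\f_q\,\xoverline k]$ for $k\notin\dom q\cup\dom p$ now also counted as proper $p$-whnfs, carry over verbatim, as do the forcing relation, Theorem~\ref{properties}, and Corollary~\ref{COR:terms_reductionimplyequiv}. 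The only new soundness obligations concern the rules for $\f_q$, $\w_q$ and $\mw$: the $\f_q$-introduction rule and the $\f_q$-evaluation rules are handled by Lemma~\ref{LEM:qGenericPointForced} together with Corollary~\ref{COR:terms_reductionimplyequiv}, the $\w_q$-term rule by Lemma~\ref{LEM:qMarkovPermiseWitnessForced}, and the $\mw$ rule by the preceding lemma $\forces\mw\typ\neg\tMP$. Hence $\vdash_p J$ implies $p\forces J$ in the extension. Finally, since $\emptyset\vdash\tempt\rtred\tempt$ and $\tempt$ is canonical, $\emptyset\forces\tempt$ holds by $\lfor_\tempt$, and by the defining clause for $\lfor_\tempt$ we have $\emptyset\nforces t\typ\tempt$ for every $t$; so if $\vdash_\emptyset t\typ\tempt$ held, soundness would give $\emptyset\forces t\typ\tempt$, a contradiction. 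Thus $\tempt$ is uninhabited, the extension is consistent, and together with $\vdash_\emptyset\mw\typ\neg\tMP$ this proves the theorem.

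The main obstacle is not the final assembly but the bookkeeping in the previous paragraph: one has to be sure that replacing the single symbol $\f$ by the family $\{\f_q\}_q$ does not disturb the locality arguments for reduction and for the forcing relation, nor the well-foundedness of the $\lfor_\mathrm{Loc}$ clause. Because the reduction behaviour of $\f_q\,\xoverline n$ for $n\notin\dom q$ is literally identical to that of $\f\,\xoverline n$, and the extra reductions $\f_q\,\xoverline n\red\one$ for $n\in\dom q$ are context-independent (hence trivially monotone and local), this amounts to a mechanical re-check of each lemma rather than any new argument.
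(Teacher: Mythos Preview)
Your proposal is correct and follows the same approach as the paper: the paper's proof is essentially the one-line remark ``this extension is sound with respect to the interpretation,'' relying on the preceding lemmas about $\f_q$, $\w_q$ and $\mw$, and you have spelled out exactly what that remark entails (re-running the computability development with the new constants, then deducing consistency from $\emptyset\nforces t\typ\tempt$). Your explicit identification of the only nontrivial point---that the reduction behaviour of $\f_q\,\xoverline n$ for $n\notin\dom q$ mirrors that of $\f\,\xoverline n$, so the locality and monotonicity lemmas go through unchanged---is precisely the verification the paper leaves implicit.
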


Recall that $\mathsf{dne}\coloneqq \Pi(A\typ \tuniv)(\neg\neg A\rightarrow A)$. We have then a term $\vdash t\typ \mathsf{dne} \rightarrow \tMP$. Thus in this extension we have a term $\vdash_\emptyset \lambda x. \mw\,(t\, x)\typ \neg \mathsf{dne}$.

\section*{Acknowledgement}
We would like to thank Simon Huber, Thomas Streicher, Mart\'{\i}n Escard\'{o} and Chuangjie Xu.
\bibliographystyle{alpha}
\bibliography{chapMarkovIndependence.bbl}

\end{document}